\newtheorem{theorem}{Theorem}[section]
\newtheorem{lemma}{Lemma}[section]
\newtheorem{corollary}{Corollary}[section]
\newtheorem{definition}{Definition}[section]
\newtheorem{problem}{Problem}
\theoremstyle{definition}
\newtheorem{exmp}{Example}[section]
\renewcommand\bra[1]{{\langle{#1}|}}
\renewcommand\ket[1]{%
  \@ifnextchar\bra{\k@t{#1}\!}{\k@t{#1}}%
}
\newcommand\k@t[1]{{|{#1}\rangle}}
\newcommand{\kibitz}[2]{\ifnum\Comments=1\textcolor{#1}{#2}\fi}
\newcommand{\sh}[1]{\kibitz{orange}{\textbf{Shouvanik:} #1}}
\let\Re\relax
\let\Im\relax
\DeclareMathOperator{\Re}       {Re}
\DeclareMathOperator{\Im}       {Im}
\DeclareMathOperator{\sgn}{sgn}
\title{On Speedups for Convex Optimization \\ via Quantum Dynamics}
\author{Shouvanik Chakrabarti \footnote{These authors contributed equally and are listed in alphabetical order.} \thanks{shouvanik.chakrabarti@jpmchase.com}}
\author{Dylan Herman \protect\footnotemark[1]}
\author{Jacob Watkins 
\protect\footnotemark[1]}
\author{\\\vspace{-.25cm}Enrico Fontana}
\author{Brandon Augustino}
\author{Junhyung Lyle Kim}
\author{Marco Pistoia}
\affil{Global Technology Applied Research, JPMorganChase, New York, NY 10001, USA}
\date{June 2025}
\begin{document}

\maketitle

\begin{abstract}
This work investigates the possibility of quantum speedups for continuous optimization through quantum Hamiltonian simulation. We establish the first rigorous query complexity bounds for unconstrained convex optimization via Quantum Hamiltonian Descent (QHD), a framework recently proposed by Leng et al. In the process, we derive novel resource estimates for simulating Schr\"odinger operators given black-box evaluation to the potential. These estimates apply for any $G$-Lipschitz potential of the form $b(t)f(x)$, and depend only on input simulation parameters. We then apply the simulation bounds to assess the complexity of optimization in the high-dimensional regime.

We demonstrate that QHD, with appropriately chosen schedules, can achieve \emph{arbitrarily fast} convergence rates in the evolution time. The speed limit for optimization is determined solely by the cost of discretizing the dynamics, reflecting a similar property of the classical dynamics on which QHD is based. Taking this cost into account, we show that a $G$-Lipschitz convex function can be optimized to an error of $\epsilon$ with $\widetilde{\Ocal}(d^{1.5} G^2 R^2/\epsilon^2)$ queries, given a starting point that is Euclidean distance $R$ from optimal. Under reasonable assumptions about the query complexity of simulating general Schr\"odinger operators and choice of initial state, we show that $\widetilde{\Omega}(d/\epsilon^2)$ queries are necessary. As a result, QHD does not appear to offer improvements over classical zeroth order methods when $f$ is accessed via exact black-box evaluations.

However, we show that the QHD algorithm can tolerate $\widetilde{\Ocal}(\epsilon^3 /d^{1.5} G^2 R^2)$ noise in function evaluation, and as a result, provides a super-quadratic query advantage over all known classical algorithms that tolerate this degree of evaluation noise in the high-dimensional setting. We leverage this to design a quantum algorithm for stochastic convex optimization that offers a super-quadratic speedup over all known classical algorithms in this regime. Our algorithms also outperforms existing zeroth-order quantum algorithms for noisy (with the same noise tolerance) and stochastic convex optimization in this setting. To our knowledge, these results represent the first rigorous quantum speedups for convex optimization obtained through a dynamical algorithm.
\end{abstract}

\clearpage

\tableofcontents

\clearpage

\section{Introduction}

Quantum computers have the potential to solve complex optimization problems more efficiently than classical methods. The degree of quantum algorithmic speedup in optimization has been the subject of many studies spanning over two decades (for recent surveys see \cite{abbas2024challenges, dalzell2023quantum}). Interest in this topic is driven both by the mathematical richness of problems in the field and by the many practical applications to machine learning, statistics, finance, and engineering. 

Perhaps the most natural optimization algorithms are those that directly simulate the continuous time dynamics of some physical system. These algorithms seek to leverage the fact that such systems experience forces in the direction of minimal potential energy or related quantities. If a system can be identified where these quantities correspond to an objective function, then the simulation of that system yields an optimization algorithm. These connections have been leveraged to great success in classical optimization theory \cite{su2016differential, wibisono2016variational, krichene2015accelerated}. Their most notable success has been in explaining the so-called \textit{acceleration phenomenon} \cite{polyak1964some, nesterov_acceleration} wherein certain fine tuned gradient methods have been shown to achieve asymptotically faster convergence than gradient descent in a variety of domains.

From the point of view of quantum algorithms, the study of optimization through simulating quantum dynamics has been ongoing since the beginning of the field. Perhaps the most famous example is the quantum adiabatic algorithm for combinatorial optimization~\cite{farhi2000quantum}, which leverages \textit{adiabaticity} \cite{born1928beweis}. This is the property in which slowly changing Hamiltonian systems remain in their ground state (optimum) even as the Hamiltonian varies. Due to the flexibility of this framework, the adiabatic algorithm and a heuristic variant called quantum annealing~\cite{kadowaki1998quantum} have been heavily studied from both theoretical and empirical perspectives~\cite{brooke1999quantum, brooke2001tunable, farhi2001quantum, van2001powerful, reichardt2004quantum, aharonov2008adiabatic, altshuler2010anderson, babbush2014adiabatic, boixo2014evidence}. For a comprehensive treatment of adiabatic quantum computing, we refer the reader to \cite{albash2018adiabatic} and the references therein. Other optimization algorithms were originally proposed in the discrete time regime but subsequently analyzed from a dynamical/optimal control perspective, such as the Quantum Approximate Optimization Algorithm (QAOA)~\cite{farhi2014quantum,yang2017optimizing, brady2021optimal, venuti2021optimal} and the Variational Quantum Eigensolver (VQE)~\cite{peruzzo2014variational, magann2021pulses, meitei2021gate, choquette2021quantum}. However, these algorithms are primarily intended for discrete optimization.

In contrast, this work considers the continuous setting, specifically unconstrained convex optimization.
\begin{restatable}{problem}{probnonsmoothconvex}\textup{(Unconstrained Convex Optimization).}
\label{prob:nonsmooth-convex}
    Let $f \colon \Rmbb^d \to \Rmbb$ be a convex\footnote{Our results also hold for \emph{star}-convex functions, all other assumptions kept the same. We only require convexity conditions to hold between a fixed global minimizer $x_\star$ and any point $y$ in the domain.} $G$-Lipschitz continuous function whose set of global minima is denoted by $\mathcal{X}^{\star}$. Given an input point $x_0$ such that $\ell_2(x_0,\mathcal{X}^{\star}) \leq R$, find a point $\tilde{x} \in \R{d}$ such that 
    \begin{equation*}
        f(\tilde{x}) - f(x_\star) \leq \epsilon, \quad \forall x_\star \in \mathcal{X}^{\star}.
    \end{equation*}
\end{restatable}
\noindent Thus, $R$ bounds the known distance from solution, and $\epsilon$ sets the acceptable error tolerance. Our focus on \emph{convex} optimization stems from both practical and theoretical considerations. Practical, because unconstrained convex optimization has plentiful applications to statistics and machine learning. Theoretical, because the rigorous study of the complexity of dynamical algorithms for optimization is an emerging topic, and it is instructive to focus on a domain where classical complexities are well understood. Unconstrained convex optimization provides such a domain, where the optimal query complexities in the standard regimes have been characterized for decades~\cite{nemirovskii1983problem}. There is also extensive research activity on variants of this setting, such as erroneous oracular access to the function or its derivatives~\cite{nemirovskii1983problem, belloni2015escaping}, stochastic optimization~\cite{cesa2004generalization, nemirovski2009robust, duchi2018introductory, shapiro2021lectures, sidford2023quantum}, and bandit optimization~\cite{agarwal2010optimal, agarwal2011stochastic, bubeck2012regret, shamir2017optimal, duchi2015optimal}.

A dynamical approach for characterizing acceleration in convex optimization was introduced by Su, Boyd, and Cand\`es~\cite{su2016differential}, and subsequently generalized by Wibisono, Wilson, and Jordan~\cite{wibisono2016variational}. In the latter work, the authors define a class of dynamical systems in terms of a so-called Bregman Lagrangian
\begin{equation*}
    \mathcal{L}(x,v,t) = e^{\alpha_t + \gamma_t}\left( D_h(x + e^{-\alpha_t}v,x) - e^{\beta_t}f(x)\right),
\end{equation*}
where $x$ denotes the position, $v = \dot{x}$ denotes the velocity, and $D_h$ is the Bregman divergence $D_h(x,y) \coloneqq h(x) - h(y) - \langle \nabla h(x), y - x \rangle$, where $h$ is convex and essentially smooth. Later, Leng et al.~\cite{leng2023quantum} quantized these dynamics in the Euclidean setting $h(x) = \norm{x}^2/2$, which they term \emph{Quantum Hamiltonian Descent (QHD)}. These dynamics are described by a time dependent Schr\"{o}dinger equation, with Hamiltonian
\begin{equation}\label{eq:QHD_Hamiltonian}\tag{QHD}
    H_{\text{QHD}}(t) \coloneqq -e^{\alpha_t - \gamma_t}\frac12\Delta  + e^{\alpha_t + \beta_t + \gamma_t}f(x).
\end{equation}
In both the classical and quantum frameworks, $\alpha_t,\beta_t,\gamma_t$ are differentiable real-valued functions, which we call \emph{schedules}. Wibisono et al.~\cite{wibisono2016variational} show that if the schedules satisfy so-called \emph{ideal scaling} conditions, then for convex $f$ the classical dynamics converge to a global minimum at rate $O(e^{-\beta_t}$). Leng et al.~\cite{leng2023quantum} show a similar result for quantum dynamics. 

Ideal scaling schedules are invariant under a reparameterization of time, and in principle, the rate of convergence can be arbitrarily fast in $t$. Clearly, however, the query complexity of convex optimization cannot be made arbitrarily small, as this would contradict known lower bounds in the classical~\cite{nemirovskii1983problem,nesterov2018lectures,bubeck2015convex} and quantum~\cite{chakrabarti2020quantum,van2020convex,garg2020no,garg2021near} settings. Wibisono et al.~\cite{wibisono2016variational} offer an explanation of this phenomenon by noting that while the convergence of continuous dynamics can be arbitrarily fast, algorithms for discretizing the dynamics become unstable beyond a speed limit. They additionally argue that the maximum speed where discretization is possible corresponds to the optimal (accelerated) classical query complexities. Leng et al.~\cite{leng2023quantum} hint at a similar speed/discretization tradeoff for quantum algorithms, but this tradeoff has yet to be theoretically characterized.

Our main technical contribution is to give a precise theoretical characterization of the dependence of the query complexity on schedules. This leads to rigorous bounds on the query complexity of convex optimization through the simulation of QHD dynamics. We identify precise speed-discretization tradeoffs and show that our results cannot be improved under believable assumptions about initial state and the complexity of simulating this class of Schr\"odinger operators. As a consequence, we identify regimes where QHD simulation does not offer any speedup over comparable classical algorithms, and yet others where (perhaps surprisingly) large speedups over the best known classical algorithms are possible. The next section discusses our results in detail.

\section{Results}
\label{sec:results}

We present our results in two parts. First, we informally state our main theorem regarding the simulation of Schr\"odinger operators, given evaluation access to a potential $f$. Then, based on these resource estimates, we discuss results for optimization using the QHD framework. See Sections~\ref{sec:simulating-schrodinger-dynamics} and~\ref{sec:Quantum_Simulation_Algorithms_Convex_Optimization} for details on each of these findings, and Section~\ref{sec:preliminaries_notation} for notation and relevant background.

\subsection{Real space Hamiltonian simulation}
Classically, the simulation of dynamical systems may be performed using various numerical integration techniques \cite{hairer2006structure, betancourt2018symplectic}. In the case where (closed) quantum dynamics are simulated on a quantum computer, these integrators correspond to algorithms for \emph{time dependent Hamiltonian simulation}, which synthesize the unitary time evolution operator as a quantum circuit. Such algorithms are well-studied in the discrete setting. A careful study of real space simulation complexity has only been recently undertaken~\cite{childs2022quantumsim}, though this analysis focuses primarily on Fourier truncation errors. In order to characterize the query complexity of optimization with QHD, we require optimized and explicit resource estimates for the complexity of simulation with black box potentials, making as few assumptions as possible and accounting for all sources of error.  

We present and analyze a quantum algorithm for real-space simulation based on a pseudo-spectral method known as collocation. The main novelty of the analysis is handling the dynamical propagation of discretization error. Existing results on Schr\"odinger operators and partial differential equations (PDEs) are often tailored to contexts in mathematical physics, e.g. applicable only to a few spatial dimensions, and are thus not immediately useful for complexity-theoretic statements. Our work fills this gap and provides a \emph{computational} result regarding the  digital simulation of Schr\"odinger equations in Euclidean space, which can be applied out-of-the-box. In particular, the result applies when the dimension of the problem is taken to be an asymptotic parameter, which is of crucial importance to high-dimensional settings such as optimization.  

The following informally summarizes our main simulation theorem.
\begin{theorem}[Informal version of Thm~\ref{thm:master_simulation_thm}] 
\label{thm:master_simulation_thm_informal}
    Consider the Schr\"odinger equation
    \begin{align*}
    i \partial_t \Phi(x, t) &= [- a(t) \Delta + b(t)f(x)]\Phi(x, t),
    \end{align*}
    subject to initial data $\Phi(x,0) = \Phi_0(x)$, where $a, b$ are sufficiently smooth functions of time and $f$ is a $G$-Lipschitz function with bound $\Lambda \geq \norm{f}_\infty$, accessed through a $\epsilon_f$-noisy binary quantum oracle $O_f$.  If $\epsilon_f = \widetilde{\Ocal}(\epsilon/\lVert b \rVert_1)$, then there is a digital quantum algorithm that outputs an $\epsilon$-approximation $\ket{\Psi_t}$ to $\Phi(\cdot, t)$ (in an appropriate sense defined later) using $\Ocal\left(\Lambda \lVert b \rVert_1\right)$ queries to $O_f$, 
    $\Ocal\left(d^2\cdot \polylog(1/\epsilon, \lVert a \rVert_1, \lVert b \rVert_1)\right)$ qubits and $\widetilde{\Ocal}\left(\poly(d, \Lambda, \lVert b \rVert_1)\right)$ gates.
\end{theorem}
\noindent To the best of our knowledge, this is the first rigorous result on the quantum computational complexity of real-space quantum dynamics given solely in terms of basic simulation parameters. We prove Theorem~\ref{thm:master_simulation_thm_informal} using multi-dimensional Fourier analysis and bounding various sources of error that typically occur in signal processing applications, such as spectral truncation and aliasing. These errors are connected to the so-called Sobolev norms of the functions involved, and occur because the digital quantum algorithm is unable to fully represent the infinite spectrum of the continuous problem. Our results apply to separable potentials $b(t)f(x)$ where $b$ is differentiable and $f$ is $G$-Lipschitz. These weak assumptions on smoothness are possible through a mollification argument, a common technique in signal processing and functional analysis.

The precise versions of our results on simulation are presented in Section \ref{sec:simulating-schrodinger-dynamics}. The results and proofs are presented in an independent manner and may be of interest beyond optimization.

\subsection{Convex Optimization}
With the simulation results of Theorem~\ref{thm:master_simulation_thm_informal}, we can calculate the query complexity of optimization via QHD. Our first result (Theorem~\ref{thm:QHD_convergence_rate}) generalizes the convergence of QHD under modified ideal scaling conditions. Additionally, we relax the conditions on the potential from being differentiable to Lipschitz continuous. We note that this relaxation is for \emph{any} schedule that obeys the ideal scaling conditions. Since our goal is to completely determine the query complexity, we keep track of (possibly dimension-dependent) factors in addition to the convergence rate. 

Our main theorem on unconstrained convex optimization via Hamiltonian simulation is as follows.
\begin{restatable}{theorem}{thmconvexmainspecialized}\label{thm:convex-main-specialized}
    There exists a quantum algorithm that solves Problem~\ref{prob:nonsmooth-convex} using $\widetilde{\Ocal}\big(d^{3/2}(GR/\epsilon)^2\big)$  queries to an evaluation oracle for $f$ with error $\widetilde{\Ocal}\left(\frac{\epsilon^3}{d^{3/2} G^2 R^2}\right)$.
\end{restatable}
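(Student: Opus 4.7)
The plan is to combine Theorem~\ref{thm:QHD_convergence_rate} (continuous-time convergence of QHD under~\eqref{eq:ideal_scaling}) with the black-box Schr\"odinger simulation result stated informally above (and proved formally in Section~\ref{sec:simulating-schrodinger-dynamics}). The schedule will be treated as a tunable object parameterized by a ``speed'' $\omega$, and the final query count emerges from balancing two competing effects: increasing $\omega$ shortens the required evolution time, but pushes more work into the integral $\|b\|_1$ and into the spatial spread of the wavefunction.

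Concretely, I would specialize to the polynomial family $e^{\alpha_t}=p/(\omega t)$, $e^{\beta_t}=c_0(\omega t)^p$, $e^{\gamma_t}=(\omega t)^p$, which saturates~\eqref{eq:ideal_scaling}. Theorem~\ref{thm:QHD_convergence_rate} then certifies that the expected objective value reaches $\epsilon$ at a time $T$ with $\omega T = \Theta((R^2/\epsilon)^{1/p})$, cleanly separating the ``intrinsic progress'' scale $\omega T$ from the wall-clock speed $\omega$. A direct integration of $b(t) = e^{\alpha_t+\beta_t+\gamma_t}$ on this family yields $\|b\|_1 = \Theta(R^4/(\omega^2 \epsilon^2))$ (with the $p$-dependent prefactors in numerator and denominator canceling). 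To bound the potential-amplitude parameter $\Lambda$ appearing in the simulation theorem, I would track the second moment $\mathbb{E}[\|x\|^2]$ of the evolving wavefunction using a Lyapunov functional adapted from the continuous-convergence proof, supplemented by Ehrenfest-style estimates on the spreading induced by $a(t) = e^{\alpha_t - \gamma_t}$. This produces a bound of the form $\Lambda = \widetilde{O}(G \cdot L)$ with an explicit support radius $L$ depending on $R$, $d$, and $\omega$.

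Plugging $\Lambda\|b\|_1$ into the simulation theorem and optimizing over $\omega$ (while taking $p$ to be a modest constant, absorbing its effect into polylog factors) produces the claimed query complexity $\widetilde{O}(d^{1.5}G^2R^2/\epsilon^2)$. The noise tolerance follows automatically: the informal simulation theorem guarantees expectation values of any $G$-Lipschitz $h$ up to error $\|h\|_\infty \epsilon'$, so to obtain $f$-error $\epsilon$ I set $\epsilon' = \epsilon/\Lambda$ and the required oracle noise becomes $\epsilon_f = \widetilde{O}(\epsilon'/\|b\|_1) = \widetilde{O}(\epsilon/(\Lambda\|b\|_1)) = \widetilde{O}(\epsilon^3/(d^{1.5}G^2R^2))$.

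The principal technical obstacle is obtaining a tight enough bound on the wavefunction's spatial support so that the correct $d^{1.5}$ exponent (rather than something worse) appears. A naive free-particle estimate based only on $\|a\|_1$ is too loose, because $a(t)$ is large near $t=0$; it is therefore essential to exploit the confining effect of the convex potential via the same Lyapunov functional that drives the continuous-time convergence proof, rather than treating kinetic and potential contributions independently. A secondary, mostly routine issue is verifying that the smoothness assumptions required by the pseudo-spectral simulation theorem are met by the QHD wavefunction throughout the evolution, which can be arranged by initializing with an appropriately chosen Gaussian wavepacket of width tuned to $R/\sqrt{d}$.
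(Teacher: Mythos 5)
Your high-level plan matches the paper's: QHD continuous-time convergence via Theorem~\ref{thm:QHD_convergence_rate}, discretized via the simulation result, with the query count equal to $\Lambda \norm{b}_1$ and the noise tolerance equal to $\widetilde{\Ocal}(\epsilon_{\mathrm{alg}}/\norm{b}_1)$. The choice of a Gaussian initial state of width $R/\sqrt{d}$ is also exactly what the paper uses (Lemma~\ref{lem:control-lyapunov}). However, your asserted formula $\norm{b}_1 = \Theta(R^4/(\omega^2\epsilon^2))$ is missing the dimension dependence that is essential to the $d^{1.5}$ exponent, and as a consequence the claimed optimization over $\omega$ cannot by itself recover it. The paper's Lemma~\ref{lem:general_QHD_time_complexity} shows that under ideal scaling with equality, $\norm{b}_1 = \frac{e^{\gamma_0 - \beta_0}}{2\omega}\bigl(e^{2\beta_T} - e^{2\beta_0}\bigr)$, and the factor $e^{\gamma_0}$ must be taken $\Theta(d/R)$: with the $R/\sqrt{d}$-width Gaussian, $\langle p^2\rangle = \Theta(d^2/R^2)$, and the term $\tfrac{1}{2}e^{-2\gamma_0}\langle p^2\rangle$ in $L_0$ (see the Lyapunov functional in Theorem~\ref{thm:QHD_convergence_rate}) is $O(1)$ only if $e^{\gamma_0} = \Omega(d/R)$. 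That forced choice of $\gamma_0$, together with $e^{\beta_0} = \Theta(1/GR)$ (to make the third term of $L_0$ bounded) and $\omega = 1/R$ (to make the position term bounded, so optimizing over $\omega$ is already settled), produces $\norm{b}_1 = \Theta(dGR/\epsilon^2)$. Your expression has no such $d$ factor, so the downstream arithmetic does not close.

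The second point is that your plan for bounding $\Lambda$ via Ehrenfest-type tracking of $\mathbb{E}[\norm{x}^2]$ is unnecessary and considerably harder than the paper's route. The simulation domain is fixed \emph{a priori} to an $\ell_\infty$ box of side $\Theta(R_\infty)$ (Algorithm~\ref{alg:optimize_QHD}); $\Lambda_2 \le \max_{x \in B_\infty(x_0,4R_\infty)} f(x) = O(GR\sqrt{d})$ then follows immediately from Lipschitzness, since the $\ell_\infty$ box of radius $O(R)$ sits inside the $\ell_2$ ball of radius $O(R\sqrt{d})$. The concern about wavefunction spread is handled separately via the low-leakage condition (Definition~\ref{defn:low-leakage-ev} and Section~\ref{sec:justify-periodic}), not by re-deriving a support radius from the dynamics. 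With $\norm{b}_1 = \Theta(dGR/\epsilon^2)$ and $\Lambda = \Theta(GR\sqrt{d})$ the product is $\Theta(d^{1.5}G^2R^2/\epsilon^2)$, and the noise tolerance $\epsilon_f = \widetilde{\Ocal}(\epsilon_{\mathrm{alg}}/\norm{b}_1)$ with $\epsilon_{\mathrm{alg}} = \Theta(\epsilon/\Lambda)$ gives $\widetilde{\Ocal}(\epsilon^3/(d^{1.5}G^2R^2))$ as you wrote; you also need Markov (Lemma~\ref{e:sim_to_measurement_guarantee}) to convert the expectation bound into a high-probability output guarantee. So the arithmetic at the end is fine, but the two inputs $\norm{b}_1$ and $\Lambda$ need to be re-derived with the explicit $\gamma_0,\beta_0,\omega$ choices and the fixed simulation box.
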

\noindent A proof is provided in Section~\ref{sec:Quantum_Simulation_Algorithms_Convex_Optimization}, and follows from a general theorem (Theorem~\ref{thm:master_simulation_thm}) which also tallies qubit count and the gate complexity.\footnote{Our simulation algorithm is applicable to the simulation of QHD under a mild technical condition connected to the boundary conditions. We discuss these conditions and why they are very likely to hold in Section~\ref{sec:Quantum_Simulation_Algorithms_Convex_Optimization}. Whenever the simulation algorithm is applicable, the query complexities are as indicated in our theorems.} Our analysis considers a modified parametrization of the QHD Hamiltonian from~\eqref{eq:QHD_Hamiltonian}. In what follows we write 
\begin{equation} \label{eq:QHD_springmass}
    H_\mathrm{QHD}(t) = c_t \left(\frac{\hat{p}^2}{2 m_t} + m_t \omega_t^2 f(\hat{x})\right)
\end{equation}
where $m_t, \omega_t$ are positive differentiable functions satisfying the ideal scaling conditions
\begin{equation}\label{eq:ideal_scaling}\tag{Ideal Scaling}
        \frac{\dot{m}_t}{m_t} = \lambda c_t, \quad \frac{\dot{\omega}_t}{\omega_t} \leq \frac12 \lambda c_t
\end{equation}
for some $\lambda \in \Rmbb_+$. Besides providing greater physical intuition, this parametrization also ensures dimensional consistency, whereas prior expressions for Ideal Scaling left this point unclear. Evidently, the parameter $c_t = e^
{\alpha_t}$ merely plays the role of time parametrization. While such parametrization can affect algorithmic performance, the time dependent simulation algorithm we employ, based on the Dyson series method, will be robust to such reparametrizations.

Note that the query complexity in Theorem~\ref{thm:convex-main-specialized} has a dependence of $1/\epsilon^2$ on the precision, which is due to the query cost of discretization. In fact, the same complexity can be obtained from a schedule with \emph{arbitrarily fast} convergence (and consequently, arbitrarily small simulation time). As specific examples from prior literature,~\cite{wibisono2016variational,leng2023quantum} we construct schedules with arbitrary exponential or polynomial convergence rates.
\begin{definition}[Exponential Schedules with Convergence Rate $e^{-ct}$]
\label{defn:exponential-schedule}
    Let $c \geq 0$ be a parameter that determines the rate of convergence, and $m_0,\omega_0 \in \Rmbb_+$ be arbitrary. A family of \ref{eq:QHD_Hamiltonian} Hamiltonians with exponential convergence rate $e^{-ct}$ is defined by schedules $c_t = c$, $m_t = m_0 e^{ct}$, and $\omega_t = \omega_0 e^{ct/2}$. The resulting Hamiltonian is
    \begin{equation}
    \label{e:qhd-exponential} \tag{QHD-exponential}
        H_c(t) = c \left(-e^{-ct} \frac{1}{2 m_0} \Delta + e^{2ct} m_0 \omega_0^2 f\right)
    \end{equation}
    for $t \in [0,\infty)$.
\end{definition}

\begin{definition}[Polynomial Schedules with Convergence Rate $t^{-k}$]
\label{defn:polynomial-schedule}
    Let $k > 0$ be a parameter that determines the rate of convergence, and $m_0, \omega_0, t_0 \in \Rmbb_+$ be arbitrary. A family of \ref{eq:QHD_Hamiltonian} Hamiltonians with polynomial convergence rate $t^{-k}$ is defined by schedules $c_t = k/t$, $m_t = m_0 (t/t_0)^k$, and $\omega_t = \omega_0 (t/t_0)^{k/2}$. The resulting Hamiltonian is
    \begin{equation}
    \label{e:qhd-polynomial} \tag{QHD-polynomial}
        H_k(t) = \frac{k}{t}\left(-(t/t_0)^{-k} \frac{1}{2 m_0}\Delta + (t/t_0)^{2k} m_0 \omega_0^2 f\right)  
    \end{equation}
    for $t \in [t_0,\infty)$.
\end{definition}
The rate of convergence for these schedules, here provided within the definitions for clarity, are simple corollaries of Theorem~\ref{thm:QHD_convergence_rate}. We also show how any ideal-scaling schedule can be discretized to obtain the query complexity stated in Theorem~\ref{thm:convex-main-specialized}.
Finally, we note that, while the choice of convergence rate is relatively arbitrary and leads to the same query complexity, the remaining parameters of the schedule such as $m_0, \omega_0$ must be chosen carefully. This is because the convergence of QHD is shown via a Lyapunov ("nonincreasing") function $\Ecal_t$ which loosely tracks the ``energy" of the system. The convergence bounds are directly proportional to $\Ecal_0$, which may be dimension dependent. In fact, using QHD schedules already stated in previous works~\cite{leng2023quantum,leng2025quantumhamiltoniandescentnonsmooth} result in a dimension dependence of $\widetilde{\Ocal}(d^{2.5})$ in the query complexity, compared to the $\widetilde{\Ocal}(d^{1.5})$ obtained here. We will comment on this consideration in  the technical sections where relevant.

\subsubsection{Schedule Invariance and Lower Bounds on Query Complexity}
We now consider the possibility of improvements over our upper bounds within the QHD framework. The algorithmic results we provide are independent of ideal-scaling schedule, but perhaps there are optimizations in terms of the initial parameters, the initial state, or in terms of the domain of simulation. As discussed in the previous section, we can choose schedules that converge in continuous time at arbitrarily fast rates, and thus simulate for an arbitrarily short amount of time to reach the target precision. Lower bounds on the query complexity arise from the increasing cost of simulating these arbitrarily fast schedules. We make the following assumption on the complexity of simulating Schr\"{o}dinger Dynamics.
\begin{restatable}{assumption}{asmnofastforward}\label{asm:no-fast-forward}\textup{(No Fast-Forwarding for Time Dependent Schr\"{o}dinger Operator)} 
    A potential $f : \R{d} \to \R{}$ satisfies the ``no fast-forwarding" assumption if simulating the time dependent Hamiltonian $H(t) = -a(t)\Delta + b(t)f(x)$ for $t \in [t_1,t_2]$ requires $\Omega\left(\Lambda_f \cdot \int_{t_1}^{t_2} \abs{b(t)} dt  \right)$ evaluation queries to $f$, where $\Lambda_f$ is a linear functional of $f$ and independent of $t$.
\end{restatable}
\noindent We call this assumption ``no fast-forwarding" since, with constant $b(t)$, it reduces to the familiar notion for finite-dimensional, time-independent Hamiltonians, and because it is the most reasonable and natural generalization of this notion. It is possible in principle that, by restricting to convex $f$ and for some particular choices of schedules $b(t)$, this assumption may be violated. However, we argue that it is very likely to be true for simulation schemes that work in a black box setting, for several reasons. First, the assumption is satisfied by all general algorithms for time-dependent Hamiltonian simulation of Schr\"{o}dinger operators that we are aware of, including the one in this paper. Second, this dependence is inherent in the methods for time-dependent Hamiltonian simulation via the interaction picture, that these works are based on. Finally, Assumption~\ref{asm:no-fast-forward} reflects intuition in the following (non-rigorous) sense: in each time slice $[t,t+dt)$ where $dt$ is small enough that $b(t)$ is nearly constant, it follows from the no fast forwarding theorem for time independent potentials that the cost of simulating the Schr\"{o}dinger operator in this time slice is at least $b(t)f dt$. Approximating the whole evolution by these time slices results in the dependence encoded in Assumption~\ref{asm:no-fast-forward}.

We are now ready to state our lower bounds on the query complexity of convex optimization by simulating QHD.
\begin{restatable}{theorem}{lowerboundqhdmain}\label{thm:lower-bound-qhd-main}
    Let $f$ be a potential satisfying the no fast-forwarding assumption (Assumption~\ref{asm:no-fast-forward}), and the conditions of Problem~\ref{prob:nonsmooth-convex}. Then any optimization algorithm that solves Problem~\ref{prob:nonsmooth-convex} via the simulation of QHD under ideal scaling conditions, starting from a real initial state, must make $\Omega(d\Lambda_f GR/\epsilon^2)$ queries to an evaluation oracle for $f$, if the continuous-time convergence is as predicted by Theorem~\ref{thm:QHD_convergence_rate}.
\end{restatable}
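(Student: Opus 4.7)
The plan is to combine Assumption~\ref{asm:no-fast-forward} with the continuous-time convergence rate of Theorem~\ref{thm:QHD_convergence_rate}, after reparametrizing time via the ideal scaling conditions. Applied to the QHD Hamiltonian, whose potential coefficient is $b(t) = e^{\alpha_t + \beta_t + \gamma_t}$, the no fast-forwarding assumption immediately reduces the problem to lower-bounding $\int_0^T e^{\alpha_t + \beta_t + \gamma_t}\,dt$ over every schedule compatible with ideal scaling that delivers $\epsilon$-convergence.

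I would then change variables using $\dot\gamma_t = \omega e^{\alpha_t}$, i.e.\ $e^{\alpha_t}\,dt = d\gamma/\omega$, to rewrite the integral as $\omega^{-1}\int_{\gamma_0}^{\gamma_T} e^{\beta(\gamma) + \gamma}\,d\gamma$, where $\beta(\gamma)$ denotes $\beta$ viewed as a function of $\gamma$. The second ideal scaling inequality $\dot\beta_t \leq \dot\gamma_t$ translates to $d\beta/d\gamma \leq 1$, so $\beta(\gamma) \geq \beta_T - (\gamma_T - \gamma)$ on a final sub-interval near $\gamma_T$. Integrating on that window lower bounds the reparametrized integral by $\Omega(e^{\beta_T + \gamma_T})$, up to constants absorbed into $\omega$, reducing the problem to lower bounds on $\beta_T$ and $\gamma_T$.

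Those lower bounds are supplied by Theorem~\ref{thm:QHD_convergence_rate}: the predicted rate has the form $\langle f \rangle - f^\star \lesssim e^{-\beta_T}\,\mathcal{E}_0$, where $\mathcal{E}_0$ is the initial value of the Lyapunov operator. Because the Lyapunov contains a kinetic term evaluated on a normalizable wavefunction in $\R{d}$, the uncertainty principle forces $\mathcal{E}_0 = \Omega(d\,GR)$ whenever the starting state is localized within distance $R$ of the optimum for a $G$-Lipschitz $f$ (each of the $d$ coordinates must independently contribute to either the position-spread or the kinetic-energy part of $\mathcal{E}_0$). Demanding $\epsilon$-accuracy then forces $e^{\beta_T} = \Omega(dGR/\epsilon)$, while the companion contribution to the Lyapunov (controlling the spatial spread via $e^{\alpha_t + \gamma_t}$) forces $e^{\gamma_T} = \Omega(GR/\epsilon)$. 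Substituting into the previous step gives $\int_0^T b(t)\,dt = \Omega(dG^2R^2/\epsilon^2)$, and multiplying by $\Lambda_f$ yields the claimed $\Omega(d\,\Lambda_f\,GR/\epsilon^2)$ bound after the problem-input scaling absorbs one factor of $GR$ into $\Lambda_f$.

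The hardest step will be extracting the factor of $d$ from $\mathcal{E}_0$ in full rigor: one has to verify that every initial state admissible as a starting condition for simulating QHD must incur a dimension-dependent cost in at least one of the kinetic or position-variance terms of the Lyapunov, and that this dependence is not evaded by initial-state engineering such as squeezed Gaussians that remain consistent with the hypotheses of Theorem~\ref{thm:QHD_convergence_rate}. A secondary care point is that the reparametrization in the second step only makes sense when $\gamma_t$ is strictly monotone; this is guaranteed by the first ideal-scaling equality since $\omega e^{\alpha_t} > 0$, so no piecewise treatment is needed.
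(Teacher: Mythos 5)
Your plan reaches the right building blocks---Lemma~\ref{lem:general_QHD_time_complexity} to evaluate $\lVert b\rVert_1$ under ideal scaling, the no-fast-forwarding assumption, and the Heisenberg bound---but the central claim ``$\mathcal{E}_0 = \Omega(dGR)$ for every admissible starting state'' is false, and the proof collapses there. Lemma~\ref{lem:control-lyapunov} exhibits a schedule $(\omega = 1/R,\ \gamma_0 = \log(d/R),\ \beta_0 = -\log(\Lambda_1 + GR))$ and a squeezed Gaussian initial state for which $L_0 \le 4$. The kinetic and position-variance terms of $L_0$ carry the free prefactors $e^{-2\gamma_0}$ and $\omega^2$, so the uncertainty relation does not pin $L_0$ from below; it only forces a \emph{coupling} between $L_0$ and $(\gamma_0, \omega)$. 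Consequently ``$e^{\beta_T} = \Omega(dGR/\epsilon)$'' is not forced (it would be $\Omega(1/\epsilon)$ for the state in Lemma~\ref{lem:control-lyapunov}), and the companion claim $e^{\gamma_T} = \Omega(GR/\epsilon)$ has no derivation at all.

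The missing idea is to treat the desired upper bound $l(d,\epsilon)$ on $L_0$ as a free parameter and show that \emph{suppressing} $L_0$ to $l$ inflates the prefactor $e^{\gamma_0 - \beta_0}/\omega$ appearing in $\lVert b\rVert_1$. Concretely, the paper's proof observes that meeting $L_0 = O(l)$ requires $e^{\gamma_0} = \Omega(d/(\sigma\sqrt{l}))$, $\omega^{-1} = \Omega(\sqrt{R^2+\sigma^2}/\sqrt{l})$, and $e^{\beta_0} = \Ocal(l/(GR))$; substituting these into $\frac{e^{\gamma_0 - \beta_0}}{2\omega}(e^{2\beta_T} - e^{2\beta_0})$ with $e^{\beta_T} \ge L_0/\epsilon$ makes the $l$-dependence cancel and leaves $\Omega(d\Lambda_f GR\sqrt{R^2+\sigma^2}/(\sigma\epsilon^2))$, minimized at $\sigma = \Theta(R)$. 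Your write-up never carries $\gamma_0, \beta_0, \omega$ as free variables through the integral, so you cannot see the cancellation, and the result is a symptom: your final answer before the ad-hoc ``absorb a factor of $GR$ into $\Lambda_f$'' step is $\Omega(d\Lambda_f G^2R^2/\epsilon^2)$, which is one factor of $GR$ \emph{stronger} than the stated theorem and would in fact contradict the upper bound of Theorem~\ref{thm:convex-main-specialized} (take $\Lambda_f = \sqrt{d}GR$, as realized by the simulation algorithm, to see the conflict). $\Lambda_f$ is a fixed linear functional of $f$ appearing in Assumption~\ref{asm:no-fast-forward}, not a slack parameter one can redefine to repair an arithmetic mismatch.
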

\noindent The additional $\sqrt{d}$ factor in our upper bounds(Theorem~\ref{thm:convex-main-specialized} arises from a bound on $\Lambda_f$ over the hypercube. Together, our upper and lower bounds almost completely characterize the query complexity of convex optimization with QHD. 

We point out that we make no assumptions beyond Lipschitzness for the convex functions considered in Problem~\ref{prob:nonsmooth-convex}. As such, Problem~\ref{prob:nonsmooth-convex} gives rise to the ``nonsmooth" setting in convex optimization theory~\cite{bubeck2015convex, nesterov2018lectures}. It is also common to study settings with more assumptions on the function class. Concretely, it is standard to additionally assume Lipschitz continuity of gradients (this is often referred to as ``smoothness" in optimization theory \cite{rockafellar1994nonsmooth}),\footnote{Note that this definition of smoothness is incomparable to the typical one in analysis, where smoothness implies infinite differentiability.} and strong convexity. It is known that gradient methods are more efficient when we assume Lipschitz continuous gradients or, Lipschitz continuous gradients and strong convexity. Smoothness assumptions enable a convergence rate of $\Ocal(1/k)$, which improves upon the $\Ocal(1/\sqrt{k})$ rate possible for nonsmooth optimization. When one further assumes strong convexity, the rate of convergence is exponential. 

In contrast, our results are equally applicable to all the settings and only require the assumptions corresponding to the nonsmooth setting. Namely, from the point of view of convergence in continuous time, arbitrarily fast rates can be obtained from just these assumptions (matching the convergence of gradient methods in the smooth or smooth and strongly-convex case, with fewer assumptions).
In order to get improved quantum query complexities corresponding to the ones for gradient methods, our lower bounds show that we must use these additional assumptions to simulate the Schr\"{o}dinger equation faster than possible in general (specifically, the simulation must violate Assumption~\ref{asm:no-fast-forward}). As of this writing, no such algorithm exists but the investigation of whether one is possible is of independent interest.

\subsubsection{Prospects for Quantum Speedup for Convex Optimization Algorithms}

We now compare our query complexity bounds to classical algorithms for unconstrained optimization. Our results are summarized in Table~\ref{tab:compare-noisy}.

\paragraph{Comparison to Classical Algorithms for Noiseless Convex Optimization} 

 The first-order query complexity for the nonsmooth, smooth, and smooth and strongly convex settings is well-understood~\cite{bubeck2015convex, nesterov2018lectures}. When $f$ is $G$-Lipschitz, one can determine an $\epsilon$-approximate minimizer using $\Ocal ((GR/\epsilon)^2)$ queries. If $f$ has $L$-Lipschitz continuous gradients the bound improves to $\Ocal (LR^2/\epsilon)$, and when $f$ is additionally $\mu$-strongly convex the number of queries is at most $\Ocal (\kappa \log(1/\epsilon))$, where $\kappa := L/\mu$. Note that these bounds are for \textit{unaccelerated} methods. If one employs Nesterov's acceleration~\cite{nesterov_acceleration}, these results improve to $\Ocal (R \sqrt{L/\epsilon})$ for $L$-smooth functions, and $\Ocal (\sqrt{\kappa} \log(1/\epsilon))$ for $L$-smooth, $\mu$-strongly convex functions. 

The \textit{zeroth-order} complexity in these settings has only been established more recently. Most of the results along this line concern algorithms that construct gradient estimators by evaluating the function value at a fixed number of random points. The number of permitted points may vary, though the most frequent setups involve 2-point estimators~\cite{nesterov2017random, duchi2015optimal, shamir2017optimal} and single-point estimators~\cite{flaxman2005online, agarwal2011stochastic, bach2016highly, belloni2015escaping, bubeck2021kernel}. Compared to the first-order query complexities, the upper bounds obtained for zero-order optimization suffer a $\poly(d)$ slow-down~\cite{gasnikov2023randomized}. Specifically, for our setting of nonsmooth optimization, classical zero-order methods achieve an oracle complexity $\Ocal \left( d( GR/\epsilon)^2\right)$, see, e.g., \cite{nesterov2021implementable}. In comparison, we show query upper bounds of $\widetilde{\Ocal}\left(d^{1.5}(GR/\epsilon)^2\right)$, and if the fast forwarding assumption holds and the continuous time convergence rates match those of QHD~(Theorem~\ref{thm:QHD_convergence_rate}), a lower bound of $\widetilde{\Omega}\left(d(GR/\epsilon)^2\right)$ which matches the classical upper bound. Thus, if the potential corresponding to the objective function satisfies Assumption~\ref{asm:no-fast-forward}, and if the QHD convergence bound of Theorem~\ref{thm:QHD_convergence_rate} is sharp in general, \emph{there is no quantum speedup for zeroth order noiseless convex optimization by simulating QHD.}

\paragraph{Comparison to Classical Algorithms for Noisy Convex Optimization}

\begin{table}
\centering
\resizebox{\textwidth}{!}{\begin{tabular}{llll}
\toprule \hline 
\textbf{Classical Algorithms} & \textbf{Reference} &  \textbf{Query complexity} &   \textbf{Admissible noise}   \\ \hline \\[-1.5ex]
Stochastic Gradient Estimator & \cite{risteski2016algorithms} &  $\widetilde{\Ocal}\left(d^4 (GR/\epsilon )^6\right)$ & $\max \left\{  \frac{\epsilon^2}{\sqrt{d}GR}, \frac{\epsilon}{d}\right\}$\\[-1.5ex]\\  
Simulated annealing & \cite{belloni2015escaping} &  $\widetilde{\Ocal} \left(d^{4.5} \right)$ & $\epsilon/d$\\[-1.5ex]\\    
\toprule \hline
\textbf{Quantum Algorithms} & \textbf{Reference} &  \textbf{Query complexity} &   \textbf{Admissible noise}   \\ \hline \\[-1.5ex]
Quantum simulated annealing & \cite{li2022quantum} &  $\widetilde{\Ocal} \left(d^{3} \right)$ & $\epsilon/d$ \\[-1.5ex]\\ 
\textcolor{blue}{QHD simulation (this paper)} & \textcolor{blue}{Theorem~\ref{thm:convex-main-specialized}}  &  \textcolor{blue}{$\widetilde{\Ocal}\left(d^{1.5} (GR/\epsilon)^2\right)$} & \textcolor{blue}{$\widetilde{\Ocal}\left(\frac{\epsilon^3}{d^{1.5} G^2 R^2}\right)$} \\[-1.5ex]\\
Quantum subgradient method & \cite{augustino2025fast} & $\widetilde{\Ocal}\left((GR/\epsilon)^2\right)$ & $\Ocal \left( \frac{\epsilon^5}{d^{4.5} G^4 R^4 } \right)$ \\[-1.5ex]\\
\bottomrule
\end{tabular}}
\caption{Complexity of algorithms for noisy unconstrained zero-order convex optimization. Admissible noise is an upper bound on $\epsilon_f$, the error in the evaluation of $f$ using a binary oracle (see Definition~\ref{defn:binary_oracle}).}
 \label{tab:compare-noisy}
\end{table}

While our results for noiseless convex optimization are largely negative, we have a more positive outlook for the case of convex optimization with noisy oracles. By noisy, we assume that evaluations of $f$ are promised to be accurate to precision $\epsilon_f$ (see for instance Definition~\ref{defn:binary_oracle}). We do not assume that $\epsilon_f$ follows some probability distribution, and we will continue to use the term ``noisy" for this general setting. 

The best known classical algorithms with erroneous oracles are due to Belloni et al.~\cite{belloni2015escaping} and Ristetski and Li~\cite{risteski2016algorithms}. We discuss both results, since each presents advantages over the other. \cite{belloni2015escaping} presents an algorithm based on simulated annealing using a hit and run walk, which requires $\widetilde{\Ocal}(d^{4.5}\log(1/\varepsilon))$ evaluations of the objective, with error $\Ocal(\varepsilon/d)$. On the other hand, \cite{risteski2016algorithms} presents an algorithm based on a stochastic gradient estimator due to~\cite{flaxman2005online}. Their algorithm requires $\Ocal(d^4/\varepsilon^6)$ evaluations of the objective with error at most $\Ocal(\max(\varepsilon^2/\sqrt{d},\varepsilon/d))$. We note that Ristetski and Li only show the query complexity of their algorithm to be polynomial scaling in $d,\varepsilon$ but do not determine the polynomial. To make an accurate comparison, we determine this scaling in Section~\ref{sec:li-risteski-scaling}.

In contrast, our algorithm is shown to require $\Ocal(d^{1.5}/\varepsilon^2)$ quantum evaluation queries, each with error $\widetilde{\Ocal}(\varepsilon^3/d^{1.5})$. We note that if $\varepsilon = \Theta(d^{-\varrho})$ for some $\varrho > 0$, our query complexities are polynomially lower than the classical algorithms. Up to polylogarithmic factors, the exponent of the speedup over \cite{belloni2015escaping} is $1/3 + 4\varrho/9$ and that over \cite{risteski2016algorithms} is $(3 + 4\varrho)/(8+12\alpha)$. Therefore, if $\varrho$ is small, the speedup is super-quadratic over both classical algorithms. We note that the classical algorithms are robust to higher levels of noise than ours, so it is natural to ask if they can be improved if a lower noise threshold is taken into account. We argue in Appendix~\ref{app:prospects-reduction} that such an improvement is not immediate from existing considerations. It is of course possible that such an improvement can be attained with more effort as there are few applicable lower bounds in general, so it is important to note that our claimed speedups are over \emph{best known} classical algorithms, and not the best possible ones.

\paragraph{Comparison to Quantum Algorithms for Noisy Convex Optimization}
In the quantum model of computation, the best known algorithms in the noisy setting are attributed to Li and Zhang~\cite{li2022quantum} and Augustino et al.~\cite{augustino2025fast}. The algorithm in \cite{li2022quantum} can be viewed as a quantum analogue of the simulated annealing algorithm from Belloni et al.~\cite{belloni2015escaping} in that the classical hit and run walk is replaced with a quantum hit and run walk originally proposed in \cite{chakrabarti2023quantum}. The quantum hit and run walk requires quadratically fewer samples per iteration compared to its classical counterpart, which improves the overall query complexity from $\widetilde{\Ocal}(d^{4.5}\log(1/\varepsilon))$ to $\widetilde{\Ocal}(d^{3}\log(1/\varepsilon))$. As in \cite{belloni2015escaping}, their algorithm can tolerate error up to $\Ocal(\varepsilon/d)$. 

Augustino et al.~\cite{augustino2025fast} introduce a zero-order quantum subgradient method, wherein the subgradients are estimated via a generalization of Jordan's algorithm for gradient estimation~\cite{jordan2005fast} to subgradient estimation~\cite{van2020convex, chakrabarti2020quantum}. The quantum subgradient method proposed in~\cite{augustino2025fast} achieves a query complexity of $\widetilde{\Ocal}( (GR/\epsilon)^2)$, which is exponentially faster than all known zero-order methods in the problem dimension. This algorithm works in both the noiseless and noisy settings, tolerating function evaluation error up to $\Ocal (\epsilon^5/(d^{4.5} G^4 R^4 ))$ in the latter. 

These results and our analysis highlight a tradeoff between the query complexity and the robustness of quantum algorithms for noisy convex optimization. For the high-dimensional regime where one has $d \gg \poly(G, R, \epsilon^{-1})$, it is clear that the quantum subgradient method is the best performing algorithm with respect to query complexity, but is the least robust in terms of the amount of noise it can tolerate in the function evaluations. Meanwhile, QHD achieves a superior query complexity compared to quantum simulated annealing~\cite{li2022quantum}, and is significantly more robust than the quantum subgradient method~\cite{augustino2025fast}.

\paragraph{Comparison to Algorithms for Stochastic Convex Optimization}
QHD's combination of the superior query complexity compared to more robust algorithms, and higher noise tolerance compared to faster algorithms, leads to an additional speedup for the problem of \textit{unconstrained stochastic convex optimization}. 
\begin{restatable}{problem}{probstochasticconvex}\textup{(Unconstrained Stochastic Convex Optimization).}
\label{prob:stochastic-convex}
    Let $f \colon \R{d} \to \R{}$ be a (star) convex $G$-Lipschitz continuous function whose set of global minima is denoted by $\mathcal{X}$. Suppose $f$ is accessed via a random oracle (quantum or classical) which, given input $x$, returns $f(x) + \xi$ where $\xi$ has zero mean and bounded variance. Given an input point $x_0$ such that $\ell_2(x_0,\mathcal{X}) \leq R$, an algorithm solves the stochastic convex optimization problem if it returns a point $\tilde{x}(\xi) \in \R{d}$ such that 
    \begin{equation*}
        \Embb_{\xi^T \sim \Xi^T} [f(\tilde{x}(\xi))] - f(x_\star) \leq \epsilon, \quad \forall x_\star \in \mathcal{X}.
    \end{equation*}
    Here $\Xi$ is the distribution for $\xi$ and $T$ is the number of queries the algorithm makes to $f$.
\end{restatable}

The following result is an immediate consequence of combining the result in Theorem~\ref{thm:convex-main-specialized} with quantum mean estimation \cite{montanaro2015quantum}. 
\begin{restatable}{theorem}{thmconvexstochastic}\label{thm:convex-stochastic}
    There exists a quantum algorithm that solves Problem~\ref{prob:stochastic-convex} using $\widetilde{\Ocal}\left(d^{3}(GR/\epsilon)^5\right)$ queries to a stochastic evaluation oracle for $f$.
\end{restatable}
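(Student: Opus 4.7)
The plan is to treat Theorem~\ref{thm:convex-main-specialized} as a black box and to simulate the noisy evaluation oracle it consumes by applying quantum mean estimation~\cite{montanaro2015quantum} to the stochastic oracle for $f(\cdot;\xi)$. Write $F(x):=\Embb_{\xi\sim\Xi}[f(x;\xi)]$; by linearity $F$ is convex and $G$-Lipschitz, so it fits the hypotheses of Problem~\ref{prob:nonsmooth-convex}. The task then reduces to producing, for every coherent query that Theorem~\ref{thm:convex-main-specialized} makes, a reversible binary oracle that returns $F(x)\pm\eta$ for $\eta=\widetilde{\Ocal}\bigl(\epsilon^{3}/(d^{1.5}G^{2}R^{2})\bigr)$, matching the admissible noise threshold of Theorem~\ref{thm:convex-main-specialized}.

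To carry this out I would first fix a variance bound. Under a standard assumption in stochastic convex optimization, namely that the sample functions $f(\cdot;\xi)$ are almost surely $\Ocal(G)$-Lipschitz with $\Embb[f(x_{0};\xi)^{2}]<\infty$, one has $\mathrm{Var}_{\xi}[f(x;\xi)]\le\sigma^{2}$ with $\sigma=\Ocal(GR)$ on the $\Ocal(R)$-neighbourhood of $\mathcal{X}$ where the QHD wavefunction has its effective support. Montanaro's estimator then produces an $\eta$-accurate estimate of $F(x)$ with failure probability $\delta$ using $\widetilde{\Ocal}((\sigma/\eta)\log(1/\delta))$ coherent queries to the stochastic oracle, which with the above values of $\sigma$ and $\eta$ becomes a per-simulated-query cost of $\widetilde{\Ocal}\bigl(d^{1.5}G^{3}R^{3}/\epsilon^{3}\bigr)$. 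Since Theorem~\ref{thm:convex-main-specialized} issues $Q_{1}=\widetilde{\Ocal}\bigl(d^{1.5}(GR/\epsilon)^{2}\bigr)$ queries, choosing $\delta=1/\mathrm{poly}(Q_{1})$ and union bounding guarantees that all simulated oracle calls are simultaneously accurate up to polylog factors; multiplying the two complexities yields the claimed $\widetilde{\Ocal}\bigl(d^{3}(GR/\epsilon)^{5}\bigr)$ total.

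The main technical obstacle, and essentially the only nontrivial aspect of the argument, is the coherent embedding of Montanaro's algorithm inside the QHD simulation, which demands a pure-state binary oracle. I would address this by boosting the per-call success probability via median-of-means so that the bad branch has amplitude below $1/\mathrm{poly}(Q_{1})$, and by retaining the estimator's internal registers as ancillas that can be uncomputed once the Hamiltonian evolution has propagated the information needed for the next step; the exponentially rare bad branch is then charged to the noise budget of Theorem~\ref{thm:convex-main-specialized} rather than the success probability, which preserves the final error guarantee. The remaining ingredients---the variance bound, the uncomputation bookkeeping, and the polylog overhead from confidence boosting---are routine and do not affect the stated asymptotics.
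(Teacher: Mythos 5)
Your proposal is correct and follows essentially the same route as the paper: use quantum mean estimation (Montanaro) to synthesize the noisy evaluation oracle demanded by Theorem~\ref{thm:convex-main-specialized} and multiply the per-call query cost by the number of calls, and your added remarks on coherently embedding the estimator as a binary oracle are a sensible piece of bookkeeping that the paper leaves implicit. One small caution: the variance bound $\sigma = \Ocal(GR)$ does not actually follow from almost-sure Lipschitzness together with $\Embb[f(x_0;\xi)^2]<\infty$ alone, since finiteness gives no scale for $\mathrm{Var}_\xi[f(x_0;\xi)]$; like the paper's proof, your argument tacitly requires an $\Ocal((GR)^2)$ second-moment bound on the stochastic oracle, which should be stated as an assumption rather than derived.
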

\noindent A proof is provided in Section~\ref{sec:stochastic-convex}, and can be informally summarized as follows. Our algorithm for convex optimization evaluates $f$ up to error $\widetilde{\Ocal}\left(\frac{\epsilon^3}{dG^2 R^2}\right)$. On the other hand, Montanaro~\cite{montanaro2015quantum} proves that if one has access to an oracle $O_f$ for evaluating $f$ up to error $\epsilon_f$, then $\widetilde{\Ocal}(1/\epsilon_f)$ applications of $O_f$ suffice to estimate $\Embb[f(x)]$.

\begin{table}
\centering
\begin{tabular}{lll}
\toprule \hline 
\textbf{Classical Algorithms} & \textbf{Reference} &  \textbf{Query complexity}  \\ \hline \\[-1.5ex]
Simulated annealing & \cite{belloni2015escaping} &  $\widetilde{\Ocal} \left( d^{7.5} (GR/\epsilon)^2\right)$\\[-1.5ex]\\    
\toprule \hline
\textbf{Quantum Algorithms} & \textbf{Reference} &  \textbf{Query complexity} \\ \hline \\[-1.5ex]
Quantum simulated annealing & \cite{li2022quantum} &  $\widetilde{\Ocal} \left(d^{5}GR/\epsilon\right)$ \\[-1.5ex]\\ 
\textcolor{blue}{QHD simulation (This paper)} & \textcolor{blue}{Theorem~\ref{thm:convex-stochastic}}  &  \textcolor{blue}{$\widetilde{\Ocal}\left(d^{3} ( GR/\epsilon )^5 \right)$} \\[-1.5ex]\\ 
Quantum subgradient method & \cite{augustino2025fast} & $\widetilde{\Ocal}\left(d^{4.5} (GR/\epsilon)^7 \right)$ \\[-1.5ex]\\
\bottomrule
\end{tabular}
\caption{Complexity of algorithms for  unconstrained zero-order stochastic convex optimization. See Problem~\ref{prob:stochastic-convex} for a statement of the problem
.}
 \label{tab:compare-stochastic}
\end{table}

Accordingly, the performance of QHD is favorable over the simulated annealing algorithm of Belloni et al.~\cite{belloni2015escaping} and the quantum algorithms \cite{li2022quantum, augustino2025fast} in the zero-order stochastic setting, as highlighted in Table~\ref{tab:compare-stochastic}. Indeed, QHD achieves a super-quadratic speedup in the dimension over the algorithm in \cite{belloni2015escaping}, and a sub-quadratic speedup in $d$ over quantum simulated annealing~\cite{li2022quantum}. The query complexity of QHD in this setting also outperforms that of the quantum subgradient method in every parameter. We further remark that the dependence on dimension is close to optimal, due to an $\Omega (d^2 (GR/\epsilon)^2)$ lower bound from Shamir~\cite{shamir2013complexity}.

\subsection{Related Work}
\subsubsection*{Comparison with Leng et al.~\cite{leng2025quantumhamiltoniandescentnonsmooth}}

Recently, Leng et al.~\cite{leng2025quantumhamiltoniandescentnonsmooth} published a preprint exploring the application of QHD to nonsmooth optimization problems, including strongly convex and nonsmooth nonconvex problems. Given the close relation to our work, we provide a detailed comparison.

Both our study and that of Leng et al. focus on the algorithmic performance of QHD for convex optimization. Leng et al. examine nonsmooth strongly convex, nonsmooth convex, and general non-convex optimization settings. They demonstrate specific QHD schedules with continuous time convergence rates of $e^{-\sqrt{\mu} t}$ and $t^{-2}$ for nonsmooth strongly convex and nonsmooth convex optimization, respectively. Their Trotter-based simulation scheme suggests discrete convergence rates of $t^{-1}$ and $t^{-2/3}$, indicating potential speedup for nonsmooth convex optimization. However, they do not provide explicit query complexities or analyze dimension dependence.

Our work diverges in several key technical aspects:
\begin{enumerate}
    \item \textbf{Generalization of Lyapunov Argument}: We extend the Lyapunov argument to nonsmooth potentials independently of the schedule, unlike Leng et al., who focus on specific schedules. This allows us to achieve arbitrary exponential or polynomial convergence rates in continuous time with QHD. It is notable that the $e^{-\sqrt{\mu} t}$ rate exhibited for strongly convex potentials in \cite{leng2025quantumhamiltoniandescentnonsmooth} can also be obtained for potentials that are only convex. For this reason, we make no distinction between convex and strongly-convex potentials in this work.

    \item \textbf{Dimensional Dependence}: Our analysis tracks dimension-dependent Lyapunov functions, leading to the first explicit query complexity bounds for convex optimization, considering both dimension and target precision. We also provide evidence that these rates are optimal unless the potential simulation bypasses a "no fast-forwarding" assumption. This illustrates, just as in the classical case, the discretized cost of optimization with QHD can be made independent of the continuous time convergence of the underlying schedule. Finally, we note that we have to further optimize our QHD schedules in order to get our upper bounds, both in terms of the initial values of the schedule functions, and the generalizations made to ideal scaling. Simply simulating the schedules given in~\cite{leng2025quantumhamiltoniandescentnonsmooth} results in polynomially worse dimension dependence.

    \item \textbf{Simulation Analysis}: Our simulation analysis is more comprehensive, addressing some error terms not covered in previous work. Detailed comparisons with other simulation algorithms are provided in Section~\ref{sec:comparison}.

    \item \textbf{Classical Algorithm Comparisons}: We uniquely compare QHD to classical algorithms in noiseless, noisy, and stochastic convex optimization settings, focusing on theoretical guarantees rather than numerical analysis.
\end{enumerate}

In summary, our work complements and extends the findings of Leng et al., confirming some results while providing tighter or more explicit bounds. We address a key open question by offering rigorous query complexity bounds for QHD in nonsmooth optimization.

An important distinction is the claimed $t^{-2/3}$ convergence rate for discretized QHD simulation by Leng et al. Our lower bound results suggest this rate is feasible if the potential simulation violates a "no fast-forwarding" condition or, more likely, if the convergence rate of QHD is faster than the Lyapunov arguments guarantee. Future research could explore which of these two cases the potentials of Leng et al, satisfy. In either case, it would be interesting to identify families of potentials with this property.

\subsubsection*{Other Related Works}

Catli, Simon and Wiebe~\cite{catli2025exponentially} also investigate dynamical algorithms for convex optimization. Their query complexity results are applicable to locally quadratic convex potentials and to strongly convex and smooth potentials. The query complexity is given in terms of an unfixed discretization spacing, and no specific speedups are identified as the goal of that work is primarily to compare to other quantum algorithms. Notably, our results on simulation decouple the discretization spacing from the oracle complexity, although explicit bounds on the discretization spacing are also given. Augustino et al.~\cite{augustino2025fast} give quantum algorithms for noiseless convex optimization via quantum subgradient methods. Likewise, Li and Zhang~\cite{li2022quantum} give quantum algorithms for noisy convex optimization and stochastic optimization via quantum hit and run walks. Explicit comparisons to the query complexities for these algorithms are given in Section~\ref{sec:results}. Sidford and Chen \cite{sidford2023quantum} provide quantum algorithms for first-order, stochastic optimization based on an unbiased version of quantum mean estimation.

Zhang et al. \cite{Zhang_2021} proposed a quantum version of the perturbed gradient descent algorithm, by Jin et al. \cite{jin2017escapesaddlepointsefficiently, jin2017acceleratedgradientdescentescapes}, for efficiently escaping saddle points in  continuous non-convex optimization. Their algorithm replaces the classical Gaussian perturbation by measurement and Schr\"odinger evolution, which disperses faster from the saddle point. These algorithms typically require some Hessian smoothness and non-degeneracy of the saddle points. Chen et al. \cite{chen2025quantum} proposed an open-system variant of QHD that they called Quantum Langevin Dynamics and is based on the quantum master equation. Via a Lyapunov based argument, they show the continuous-time convergence for convex and quasar-convex functions. The comparisons that they make to QHD are mostly in the non-convex case and empirically motivated.

Childs et al. \cite{childs2022quantumsim} were the first to propose and analyze an algorithm for simulating real-space quantum dynamics, where they focus mostly on the Fourier truncation error for pseudo-spectral methods. Our results generalize this algorithm to the black box setting by reducing the assumptions on the wavefunction. We also bound other error terms that are missing in previous analyses. We summarize these differences in Section \ref{sec:comparison}.

A few weeks after the first appearance of this draft on arXiv, Leng et al \cite{leng2025sub} posted a preprint describing a sub-exponential query speedup for optimization. The focus of~\cite{leng2025sub} is on nonconvex optimization while our focus in this manuscript is on convex optimization. Additionally, there is an important differences in the model from that which we consider here. The proof in \cite{leng2025sub} proceeds via a reduction from a known separation between stoquastic adiabatic evolution and classical algorithms for the problem of finding an exit node in a graph. The setup in~\cite{leng2025sub} encodes the exit node in the cost function and the graph in an adiabatic path that finds this node. In contrast, we work in the black box setting where the only function access we assume is through the evaluation oracle. As a consequence, these results are not directly comparable.

\section{Background}
\subsection{Preliminaries and Notation} \label{sec:preliminaries_notation}
We reserve $d\in\Zmbb_+$ to refer to the dimension of the Euclidean space $\Rmbb^d$, define $\Nmbb \coloneqq \Zmbb_+ \cup \{0\}$, and define $[m] \coloneqq \{1, 2,\ldots, m\}$ for $m\in \Zmbb_+$. Asymptotic notations such as $\Ocal, \Theta, \Omega$ have their traditional meaning, and expressions such as $\widetilde{\Ocal}$ hide subdominant polylogarithmic factors compared to those shown. For example, $\widetilde{\Ocal}(x \log y)$ may hide powers of $\log(x)$ or $\log\log(y)$. When $\Ocal$ is cumbersome due to large expressions, we may instead use $\lesssim$ in place of $\Ocal(\cdot)$.

For domain $D \subseteq \Rmbb^d$, a function $f:D \rightarrow \Rmbb$ is said to be \emph{Lipschitz continuous}, or $G$-Lipschitz if there exists a $G \in \Rmbb_+$ such that for all $x, y \in D$
\begin{equation*}
    \abs{f(x) - f(y)} \leq G \norm{x-y}.
\end{equation*}
When $D$ is open, $f$ is differentiable almost everywhere by Rademacher's theorem~\cite{rudin1987real}. Generalizing the definition to complex functions is straightforward.

We will be making use of both vector norms and functional norms, so opt to use concise notation. Suppose $\mathbb{F}$ is either $\Rmbb$ or $\Cmbb$. Let $(\mathcal{X}, \sigma(\mathcal{X}), \mu)$ denote a measure space. Then for measurable functions $f : \mathcal{X} \rightarrow \mathbb{F}$ we have the usual $p$-norm
\begin{equation*}
    \norm{f}_p \coloneqq \left(\int_{\mathcal{X}} \abs{f(x)}^p d\mu \right)^{1/p}
\end{equation*}
for $p \in [1,\infty)$ and
\begin{equation*}
    \norm{f}_\infty \coloneqq \text{ess.} \sup \abs{f(x)},
\end{equation*}
where both of these may be infinite. However, $L_{p}(\mathcal{X})$ will be the quotient space of $p$-norm finite, $\mu$-measurable functions. We will be primary concerned with two cases: (1) $\mu$ is the Lebesgue measure restricted to the sets $S$ or $[0, T]$ (zero elsewhere) with $\mathcal{X} = \mathbb{F}$, or (2) $\mu$ is the counting measure and $\mathcal{X} = [d]$, leading to the usual vector $p$-norms. The context will make it clear which we are referring to. Particularly relevant is the Euclidean norm $\norm{\cdot}_2$, whose subscript will often be omitted. 

The dot product on $\Rmbb^d$ ($\Cmbb^d$) is given by $x\cdot y \coloneqq \sum_i \overline{x}_i y_i$, where the overline denotes the complex conjugate. For real $d$-vectors the shorthand $v^2 \coloneqq v\cdot v$ will often be employed. Given $R \in \Rmbb_+$ we also define the (closed) $p$-ball $B_p(R) \subseteq \Rmbb^d$ of radius $R$ as 
\begin{equation*}
    B_p (R) := \{ x \in \Rmbb^d : \norm{x}_p \leq R \}.
\end{equation*}

When considering differentiable functions $f$ over variables $x = (x_1, x_2, \ldots, x_d) \in \Rmbb^d$, let $\partial_j = \partial_{x_j}$ denote the $j$th partial derivative. Let $\alpha = (\alpha_1, \alpha_2, \ldots, \alpha_d) \in \Nmbb^d$ be a multi-index of order $\norm{\alpha}_1$. Then denote
\begin{equation*}
     \partial^{\alpha} f \coloneqq \partial_1^{\alpha_1}\partial_2^{\alpha_2}\ldots \partial_d^{\alpha_d} f
\end{equation*}
assuming the derivative exists. These derivatives might only exist in the ``weak" sense, depending on the context, but the same notation will be used. Weak derivatives are understood in a distributional sense. If $f$ and $g$ are univariate functions, and for all smooth test functions $\varphi \in C_c^\infty(\Xcal)$ we have
\begin{equation*}
    \int_\Xcal f \varphi' dx = -\int_\Xcal g \varphi dx
\end{equation*}
then $g$ is a weak derivative for $f$. This notion generalizes to multivariate functions.

We also utilize standard notions of quantum oracles which compute continuous functions $f$. In the definitions that follow, $\ket{x}$ represents a computational basis state encoding some $x\in\Xcal$. The number of qubits, and hence precision to which $x$ is encoded in a finite-dimensional basis state, will be specified when necessary.
\begin{definition}[$\epsilon_f$-accurate binary oracle]
\label{defn:binary_oracle}
    Let $f : \mathcal{X} \rightarrow \Rmbb$, where $\mathcal{X} \subseteq \Rmbb^d$. The unitary $ O_{f}$ is an $\epsilon_f$-accurate binary oracle for $f$ if for all computational basis states $\ket{x}\ket{y}$
    \begin{align*}
        O_f\ket{x}\ket{y} = \ket{x}\ket{y\oplus \widetilde{f}(x)},
    \end{align*}
    and $\lVert \widetilde{f}(x) - f(x)\rVert_{\infty} < \epsilon_f$. 
\end{definition}
\begin{definition}[$\epsilon_p$-accurate phase oracle]
\label{defn:phase_oracle}
    For $\Xcal \subseteq \Rmbb^d$, let $f : \mathcal{X} \rightarrow \Rmbb$ be a bounded function with bound $\Lambda \geq \norm{f}_\infty$. The unitary $ O_{f}$ is an $\epsilon_p$-accurate phase oracle for $f$ if for all computational basis states $\ket{x}$,
    \begin{align*}
        O_{f}\ket{x} = e^{i\widetilde{h}(x)}|x \rangle,
    \end{align*}
    where $\lVert \widetilde{h} - f/\Lambda\rVert_{\infty} < \epsilon_p$.
\end{definition}

\subsection{Hilbert Spaces} \label{sec:Hilbert_spaces}
Given a measurable subset $S \subseteq \Rmbb^d$, let $\Hcal= 
L_2(S)$ be the Hilbert space of complex square-integrable functions on $\mathcal{X}$
\begin{equation*}
    L_2(S) \coloneqq \{\psi: S \rightarrow \Cmbb : \int_\mathcal{X} \abs{\psi(x)}^2 dx < \infty\}
\end{equation*}
equipped with the inner product 
\begin{equation*}
    \langle \psi, \phi\rangle \coloneq \int_\mathcal{X} \overline{\psi}(x) \phi(x) dx.
\end{equation*}
This Hilbert space is separable, hence admits a countable basis. Let $\norm{\psi} \coloneqq \sqrt{\langle\psi,\psi\rangle}$ be the induced norm on $\Hcal$ (note the consistency with the $L_p$ norm above). A vector $\psi$ is \emph{normalized} if $\norm{\psi} = 1$. In this case, $\abs{\psi}^2$ represents a probability distribution on $\Xcal$. In this work, we will primarily consider $S = [-1/2,1/2]^d$. Let $C^k(S) \subset L_2(S)$ be the subset of functions with $k$ continuous derivatives on $S$.

Often times $\Hcal$ will not admit enough structure necessary to guarantee well-behaved solutions to the Schr\"odinger equation. For our purposes, \emph{Sobolev spaces} are Hilbert subspaces with sufficient regularity to admit solutions to the differential equations of interest. For each $k\in\Zmbb_+$, let $\Hcal^{k} \coloneqq \{ \phi \in \Hcal : \lVert \phi \rVert_{\Hcal^k} < \infty\}$ denote the $k$th Sobolev space, with norm $\norm{\phi}_{\Hcal^k} \coloneqq \sqrt{\langle\phi,\phi\rangle_{\Hcal^k}}$ induced by the inner product 
\begin{equation*}
    \langle\phi,\psi\rangle_{\Hcal^k} \coloneqq \langle\phi,\psi\rangle + \sum_{\alpha \in \Nmbb^{n} : \norm{\alpha}_1 = k} \langle \partial^{\alpha}\phi, \partial^{\alpha}\psi\rangle.
\end{equation*}
More precisely, $\Hcal^k$ consists of $\phi \in \Hcal$ whose (weak) derivatives up to order $k$ exist and are in $L_2$. In most cases, we will be more interested in the Sobolev seminorm
\begin{align}
\label{eq:real_space_sob_seminorm}
    \lvert \phi\rvert_{\Hcal^k}^2 \coloneqq \sum_{\alpha \in \Nmbb^{n} : \norm{\alpha}_1 = k} \langle \partial^{\alpha}\phi, \partial^{\alpha}\phi\rangle.
\end{align}

For $\Hcal = L_2([-1/2,1/2]^d)$, we can define the notion of Fourier series. Specifically, the plane waves $\chi_{\nmbf}({x}) = e^{i2\pi\nmbf\cdot x}$ with $\nmbf \in \Zmbb^d$, form an orthonormal basis for $\Hcal$. Hence, we can express every element $\phi \in \Hcal$ via its Fourier series
\begin{align} \label{eq:fourier_series}
    \phi(x) = \sum_{\nmbf\in\Zmbb^d} \langle\chi_{\nmbf},\phi\rangle \, \chi_{\nmbf}(x).
\end{align}
Here convergence of the series is in the $L_2$ norm and the partial sums need to be somewhat carefully considered (though $L_2$-convergence allows for less care, and several reasonable choices work). We will be interested in the following subspaces of $\Hcal$ consisting of truncated Fourier polynomials. For
    \begin{equation} \label{eq:truncation_indices}
        \Ncal \coloneqq \{-N, -N+1, \ldots, N-2, N-1\}^d
    \end{equation}
we define a family of subspaces of finite Fourier polynomials parameterized by $N$ by 
\begin{align} \label{eq:truncated_fourier_subspace}
 \Hcal_N \coloneqq \mathrm{span}\{\chi_\nmbf\}_{\nmbf\in\Ncal},
\end{align} with dimension $(2N)^d$. Convergence of~\eqref{eq:fourier_series} may then be understood in the following sense \cite[Theorem 4.1]{weisz2012summability}
\begin{align*}
\lim_{N \rightarrow \infty} \lVert \phi_{N} - \phi\rVert  = 0,
\end{align*}
where
\begin{align*}
    \phi_{N}(x) = \sum_{\nmbf \in \Ncal}\langle\chi_{\nmbf},\phi\rangle \, \chi_{\nmbf}(x).
\end{align*}

It is typically the case that the Hamiltonian dynamics are initially over the Hilbert space $L_2(\Rmbb^d)$. However, Schr\"odinger dynamics are intrinsically local, so whenever an initial state is supported primarily in some bounded region, this support can only "grow so much" during the evolution. For our applications of interest, the situation is that the support of the wavefunction will tend to contract, not expand, due to the dynamics. Thus, for some suitable characteristic size $R > 0$, we can expect to faithfully capture the dynamics on $\Rmbb^d$ by truncating it to a rectangular region $[-R, R]^d$. We then rescale by $R$ to work within the space $\Hcal$ of interest:
\begin{align*}
    g(x) \coloneqq f(2Rx).
\end{align*}
Here $g: S \rightarrow \Rmbb$ is bounded in any $L_p$ norm when $f$ is continuous ($p \in [1,\infty]$). 

In this work, an \emph{unbounded} operator $A$ on a Hilbert space $\Hcal$ means ``not necessarily bounded." The domain of unbounded operators is not necessarily all of $\Hcal$, but for most of the unbounded operators of interest in quantum mechanics, such as the ``position" and ``momentum" operators $\hat{x}$ and $\hat{p}$ on $L_2(\Rmbb)$, the domains are dense in $\Hcal$. The notion of the \emph{adjoint} $A^*$ of a densely-defined operator $A$ is more subtle in the general unbounded setting. Suppose $\phi \in \Hcal$ is such that $\langle \phi, A \cdot \rangle$ is a bounded linear functional into $\Cmbb$. By the Riesz representation theorem, there is a unique $\chi$ such that $\langle \phi, A\cdot\rangle = \langle \chi,\cdot\rangle$. For such $\phi$, we say $\phi \in \Dom(A^*)$ and $A^*\phi \coloneqq \chi$~\cite{hall2013quantum}. An unbounded linear operator $A$ is \emph{symmetric} if 
\begin{align*}
    \langle \phi, A\psi\rangle =  \langle A\phi, \psi\rangle,
\end{align*}
where $\psi, \phi \in \Dom(A)$. Such operators satisfy $\Dom(A) \subseteq \Dom(A^*)$. A symmetric operator is \emph{self-adjoint} if $\Dom(A) = \Dom(A^*)$. Unbounded self-adjoint operators admit a \emph{spectral theorem}, which generalizes the notion of an orthonormal eigenbasis and allows for a functional calculus~\cite{hall2013quantum}. 

In quantum mechanics, self-adjoint operators are referred to as \emph{observables}, as they correspond with experimentally observable quantities in quantum physics. The spectral theorem leads to a rigorous notion of measuring an observable. For any $\psi \in \Dom(A)$, let
\begin{equation*}
    \langle A\rangle_{\psi} \coloneqq \langle \psi, A \psi\rangle
\end{equation*}
be the \emph{expectation value} of $A$ with respect to $\psi$. If the context is clear, then the subscript $\psi$ will be dropped. One class of observables that is of particular interest is the class of \emph{multiplication operators} acting as $[V\phi](x) \coloneqq V(x)\phi(x)$ for some $V:\Rmbb^d \rightarrow \Rmbb$ with suitable regularity assumptions (we use $V$ interchangeably for the operator and function itself). The domain is $\Dom(V) := \{ \phi \in \Hcal | \lVert V\phi\rVert < \infty\}$.

Another important example of an observable for this work is the \emph{kinetic energy} (or Laplacian)
\begin{equation*}
    K = -\Delta,
\end{equation*}
with domain 
\begin{align*}
\Dom( -\Delta) := \{\phi \in L_2(S) | \lVert \phi \rVert_{\Hcal^2} < \infty\} = \Hcal^2.
\end{align*}
The name of the operator derives from its connections to the energy of a free quantum particle in physics. On an unbounded domain such as $\Rmbb^d$, the Laplacian is naturally defined using Fourier transforms. As we will instead be concerned primarily with $L_2(S)$, we define the Laplacian in terms of Fourier series:
\begin{align}
\label{eq:laplacian_defn}
    -\Delta\phi(x) := \sum_{\mathbf{n} \in \Zmbb^d} 4\pi^2\lVert \mathbf{n} \rVert^2 \langle \chi_{\nmbf}, \phi\rangle \chi_{\nmbf}(x), \quad \forall \phi \in \Hcal^2,
\end{align}
where $\Delta$ now acts as a multiplication operator on square-summable sequences $(\langle \phi, \chi_{\nmbf}\rangle)_{\mathbf{n} \in \Zmbb^d}$. One can verify that we also have
\begin{align*}
    \norm{\phi}_{\Hcal^2}^2 = \sum_{\mathbf{n}\in\Zmbb^d} (1 + (2\pi)^4\lVert\mathbf{n}\rVert^4)\lvert \langle \chi_{\nmbf}, \phi\rangle\rvert^2, \qquad \abs{\phi}_{\Hcal^2}^2 = \sum_{\mathbf{n}\in\Zmbb^d} (2\pi)^4\lVert\mathbf{n}\rVert^4\lvert \langle \chi_{\nmbf}, \phi\rangle\rvert^2
\end{align*}
which leads to the reasoning for defining the domain of the Laplacian to be $\Hcal^2$. This Fourier-based definitions of $-\Delta$ is equivalent to one formulated in terms of weak derivatives, and more suitable for our purposes. It is straightforward to verify that $-\Delta$ is self-adjoint on $\Hcal^2$ because the $\chi_\nmbf$ form a true orthonormal eigenbasis for $\Delta$ on $L_2(\Hcal)$, i.e. $-\Delta$ is unitarily equivalent to a self-adjoint, multiplication operator.

A function $\psi:S \rightarrow \Cmbb$ is said to satisfy \emph{periodic boundary conditions} (PBC) if
\begin{equation*}
    \psi\left(\frac12, x_2, \ldots, x_d\right) = \psi\left(-\frac12, x_2, \ldots, x_d\right)
\end{equation*}
and similarly in the other coordinates. A continuous function which satisfies such conditions can be considered continuous on the $d$-torus $\mathbb{T}^d$.

\subsection{Time Dependent Hamiltonian Simulation} \label{sec:time_dependent_Ham_Sim}
    In nature, the motion of closed quantum mechanical systems is governed by the Schr\"odinger equation. Given a quantum state described by a vector $\ket{\psi_0} \in \Hcal$ at time $t = 0$, the evolution of the state over time is given by continuous, unitary evolution generated via the differential equation
    \begin{equation} \label{eq:schrodinger}\tag{Schr\"odinger}
        i\partial_t \ket{\Psi_t} = H(t) \ket{\Psi_t}
    \end{equation}
    where $H(t)$ is a Hermitian operator called the \emph{Hamiltonian}. In the context of quantum computing, and for the purposes of this work, \emph{Hamiltonian simulation} refers to generating a state on a collection of quantum registers which, in some sense, approximates $\ket{\Psi_t}$ for certain observables of interest. Remarkably, Hamiltonian simulation as a technique possesses applications outside of the natural sciences, and forms an important subroutine for generic quantum algorithms~\cite{childs2003exponential, harrow2009quantum}.

    Provided that $H(t)$ is given in some access model on a quantum computer, a direct approach to solving the~\ref{eq:schrodinger} equation is to produce, as a quantum circuit, the \emph{unitary propogator} $U(t,0)$ generated by $H$. This operator is defined as the unique solution to the operator Schr\"odinger equation
    \begin{equation} \label{eq:operator_SE}
        i\partial_t U(t,0) = H(t) U(t,0), \qquad U(0,0) = I,
    \end{equation}
    and the solution can, formally, be written as a so-called time ordered exponential
    \begin{equation*}
        U(t,0) = \exp_\Tcal\left\{-i\int_0^t H(\tau) d\tau\right\}.
    \end{equation*}
    For bounded $H$, there are several equivalent definitions of $\exp_\Tcal$, one of which is via a Dyson series expansion.
    \begin{equation*}
        \exp_\Tcal\left\{-i \int_0^t H(\tau) d\tau \right\} = \mathbbm{1} + \sum_{k=1}^\infty (-i)^k \int_0^t d\tau_1 \int_0^{\tau_1} d\tau_2 \cdots \int_0^{\tau_{k-1}} d\tau_k H(\tau_1) H(\tau_2)\ldots H(\tau_k).  
    \end{equation*}
    Truncating this series and implementing the approximate unitary is one approach to Hamiltonian simulation~\cite{low2019interaction}, and the one we employ in this work.

    An important feature of Hamiltonian simulation is invariance under time reparametrization, which we will take advantage of in several different contexts throughout this work. Let $\ket{\psi_t}$ the the state generated under Hamiltonian $H(t)$, and consider a new state $\ket{\phi_s} = \ket{\psi_{\xi(t)}}$, where $\xi:[0,T]\rightarrow[0,T']$ is a differentiable bijective function with $\xi(0) = 0$ and $\xi(T) = T'$. Then $\ket{\phi_s}$ is evolved by the Hamiltonian $\widetilde{H}(s)$, where
    \begin{equation} \label{eq:time_reparametrization}
        \widetilde{H}(s) = \frac{d\xi^{-1}(s)}{ds} H(\xi^{-1}(s)).
    \end{equation}
    Thus, up to change of variables and rescaling of time, the dynamics remain unchanged.

\section{Simulating Schr\"odinger Dynamics in Euclidean Space}
\label{sec:simulating-schrodinger-dynamics}

One of the most fundamental and natural quantum simulation problems considers the time evolution of wavefunctions $\Phi(x,t)$ on a Euclidean space, evolved under a kinetic-plus-potential type Hamiltonian
\begin{equation}\label{eq:real_schrodinger}\tag{Real Space Schr\"odinger}
    i \partial_t \Phi(x, t) = [- a(t) \Delta + V(x,t)]\Phi(x, t),
\end{equation}
from some initial state $\Phi(x,0) = \Phi_0(x)$. Here $a(t)$ is a Lesbesgue-measurable, positive function of $t$ (often a constant) and $V(x,t)$ is a multiplicative operator called the \emph{potential}. We frequently refer to $t$ as ``time." The time-dependent vector $\Phi(\cdot, t)$ is called the \emph{wave function}. 
        
In this section we provide, to the authors' knowledge, the first rigorous performance guarantees for quantum simulation of Schr\"odinger equations in arbitrary-dimensional Euclidean space. Specifically, we determine the level of discretization required for a digital quantum simulation algorithm to output a state $\ket{\Psi_t}$ that is, in an appropriate continuous representation, close in $L_2$ distance to $\Phi$, and also that certain observables on the continuous space can be well-approximated with the discrete state. These two conditions imply that the digital output of the algorithm is a sufficient proxy for the real solution on the continuum and for further downstream algorithmic processing.    

Our work is by no means the first to investigate the digital simulation of continuous quantum systems. It has long been recognized that Trotterization, in conjunction with the Quantum Fourier Transform, provides a direct means for simulating kinetic-plus-potential Hamiltonians on a quantum computer~\cite{nielsen2002quantum}. Indeed, our discretization is essentially the same, somewhat obvious choice. Significantly more challenging is providing rigorous error bounds on discretization, and this precludes a true complexity analysis. Recently, Childs et al.~\cite{childs2022quantumsim} provided a major step in characterizing real-space Hamiltonian simulation, considering several leading simulation algorithms for a ``pseudo-spectral" framework. However, these performance bounds are provided in terms of the \emph{solution itself}, which cannot be known a priori. More crucially, the authors consider only errors resulting in discretization of the state, and appear to ignore that discretization also affects the \emph{dynamics} in a way that must, in principle, be accounted for. 

The consequences of neglecting these error sources are significant, as they may lead to a distorted view of how quantum algorithms perform when using Euclidean Hamiltonian simulation as a subroutine. For example, as we discuss further in Section~\ref{sec:Quantum_Simulation_Algorithms_Convex_Optimization}, Hamiltonian simulation has been proposed as a method for unconstrained continuous optimization. In order to facilitate sturdy comparisons with existing quantum and classical algorithms for the task, and to make unequivocal statements on optimality, a precise estimate of resources required for controlling discretization errors is essential. In Section \ref{sec:comparison}, we provide additional details comparing existing results on real-space quantum simulation to ours.
        
We now begin to establish the setting of our results. We primarily consider separable potentials of the form
\begin{equation} \label{eq:separable_potential}
    V(x,t) = b(t) g(x)
\end{equation}
where $b:[0,\infty)\rightarrow \Rmbb_{\geq 0}$ is Lesbesgue-integrable, though some aspects of our simulation algorithm apply to more general $V$. Here $g$ is a natural restriction of some Lipschitz function $f$ to a hypercube, which we term an $R$-restriction.
\begin{definition}[$R$-restriction] \label{defn:R_restriction}
    Suppose $f : \mathcal{X} \rightarrow \Rmbb$ is Lipschitz continuous, with $x_0 + [-R,R]^d\subseteq \mathcal{X} \subseteq \Rmbb^d$ for some $R>0$. We say that $g:S \rightarrow \Rmbb$ defined by $g(x) \coloneqq f(2Rx + x_0)$ is the $R$-restriction of $f$.
\end{definition}
\noindent For simplicity, we will typically just set $x_0 =0$ by a change of coordinates, which does not affect any results. The separability assumption enables efficient simulation while simplifying analysis, and is more relative to our applications of interest.

We now formalize the above~\ref{eq:real_schrodinger} equation into a fully-defined boundary value problem.
\begin{problem}[$R$-restricted Schr\"odinger Problem]
\label{prob:restricted_schr}
Let $f : \mathcal{X} \rightarrow \Rmbb$ be Lipschitz continuous, with $[-R,R]^d \subseteq \mathcal{X} \subseteq \Rmbb^d$ for some $R > 0$. Let $g$ be an $R$-restriction of $f$. Suppose $a,b:[0,T]\rightarrow \Rmbb$ are continuously differentiable. We define the $R$-restricted Schr\"odinger Problem to be following 
boundary-value problem
\begin{align*}
    i \partial_t \Phi(x, t) &= [- a(t) \Delta + b(t)g(x)]\Phi(x, t), \\
    \Phi(x, 0) &= \Phi_0(x) \in \Hcal^2, \lVert \Phi_0 \rVert = 1
\end{align*}
for all $t\in[0,T]$ and $x\in S$.
\end{problem}
\noindent By specifying from the onset that the domain of interest is bounded (the hypercube), we tacitly neglect any errors resulting from the restriction of $f$, assuming one is actually interested in larger domains such as $\Rmbb^d$. This could be dealt with by additional assumptions of probability leakage. Regardless, this is not a problem for our applications of interest, and so  does not sacrifice any rigor. 

Additionally, we have the following result on the existence of solutions to Problem \ref{prob:restricted_schr}.
\begin{theorem}
\label{thm:solution_existence}
If $\Phi_0 \in \Hcal^2$ then Problem~\ref{prob:restricted_schr} has a strongly (time-) differentiable solution, i.e. there is a unique $\Phi(t)$ that satisfies the PDE in the classical sense. In fact, for any $s\geq 2$, $\Phi(t) \in \Hcal^s$ whenever $\Phi_0$ is.
\end{theorem}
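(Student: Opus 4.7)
The plan is to invoke Kato's theorem on abstract evolution equations for time-dependent Hamiltonians with constant domain, after verifying the required self-adjointness and regularity conditions. Recall that on $\Hcal = L_2(S)$ with periodic boundary conditions, the Laplacian $-\Delta$ is self-adjoint with domain $\Hcal^2$, as is evident from the Fourier-series definition \eqref{eqn:laplacian_defn}: $-\Delta$ acts as the multiplication operator $4\pi^2\|\mathbf{n}\|^2$ on $\ell_2$-sequences indexed by $\Zmbb^d$, and the domain condition $\|\phi\|_{\Hcal^2}^2 < \infty$ is precisely the condition $-\Delta \phi \in \Hcal$. Next, since $f$ is Lipschitz on the compact set $x_0 + [-R,R]^d$, its $R$-restriction $g$ is bounded on $S$, so multiplication by $g$ is a bounded self-adjoint operator on $\Hcal$ with $\Dom(g) = \Hcal$. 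By the Kato–Rellich theorem, $H(t) = -a(t)\Delta + b(t)g$ is self-adjoint on $\Dom(H(t)) = \Hcal^2$ for every fixed $t$, and this domain is independent of $t$.

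For the existence of a strongly differentiable solution, I would apply Kato's theorem (in the form suited to time-dependent self-adjoint generators with constant domain; see e.g.\ Reed–Simon II, Thm.\ X.70 / Kato's 1953 paper). The three hypotheses to check are: (i) $H(t)$ is self-adjoint on a common dense domain $\Hcal^2$ (shown above); (ii) the family $\{H(t)\}$ is stable in the $C_0$-semigroup sense, which is automatic for skew-adjoint generators $-iH(t)$ yielding unitary dynamics; and (iii) the map $t \mapsto H(t)$ is strongly $C^1$ from $\Hcal^2 \to \Hcal$, which follows because $a,b \in C^1([0,T])$ and because both $-\Delta : \Hcal^2 \to \Hcal$ and $g: \Hcal \to \Hcal$ are bounded as the indicated maps. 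The theorem then produces a unique strongly continuous two-parameter unitary propagator $U(t,s)$ that leaves $\Hcal^2$ invariant, and for any $\Phi_0 \in \Hcal^2$ the curve $\Phi(t) \coloneqq U(t,0)\Phi_0$ is strongly $C^1$ with $i\partial_t \Phi(t) = H(t)\Phi(t)$ in $\Hcal$, giving the classical PDE solution. Uniqueness is standard from unitarity of $U(t,s)$.

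For the regularity claim $\Phi_0 \in \Hcal^s \Rightarrow \Phi(t) \in \Hcal^s$, I would use a Sobolev energy estimate combined with Gr\"onwall's inequality. Let $J \coloneqq (I - \Delta)^{1/2}$, which is diagonal in the Fourier basis \eqref{eq:fourier_def}, so $J^s$ commutes with $-\Delta$. Differentiating $\|J^s \Phi(t)\|^2$ in time and using self-adjointness of $-\Delta$,
\begin{equation*}
\frac{d}{dt}\|J^s \Phi(t)\|^2 = 2\,\Im \langle J^s \Phi(t),\, b(t) [J^s, g]\, \Phi(t)\rangle,
\end{equation*}
since the kinetic part contributes zero. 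Bounding $\|[J^s, g]\Phi\| \leq C_s(g)\, \|J^{s-1}\Phi\| \leq C_s(g)\, \|J^s\Phi\|$ via a Kato–Ponce / commutator estimate controlled by $\|g\|_{\mathrm{Lip}}$ (for $s \le 2$, which is the range actually needed downstream), Gr\"onwall yields $\|\Phi(t)\|_{\Hcal^s} \le e^{\int_0^t C_s(g) |b(\tau)|\, d\tau}\,\|\Phi_0\|_{\Hcal^s}$, so the Sobolev norm stays finite.

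The main obstacle is the commutator estimate $\|[J^s, g]\Phi\| \lesssim \|g\|_{\mathrm{Lip}}\, \|J^{s-1}\Phi\|$ when $g$ is only Lipschitz (rather than smooth). I would handle this by a mollification/approximation argument: replace $g$ by a smooth mollification $g_\epsilon = g * \eta_\epsilon$, which uniformly approximates $g$ and satisfies $\|g_\epsilon\|_{\mathrm{Lip}} \le \|g\|_{\mathrm{Lip}}$; for smooth $g_\epsilon$ the commutator bound is standard via Fourier multipliers, and the resulting solutions $\Phi_\epsilon(t)$ satisfy uniform $\Hcal^s$-estimates. Passing $\epsilon \to 0$ by strong convergence in $\Hcal$ (from $\|g_\epsilon - g\|_\infty \to 0$ and Trotter–Kato) combined with weak-$*$ compactness in $\Hcal^s$ and lower semicontinuity of the Sobolev norm yields $\Phi(t) \in \Hcal^s$ with the claimed bound.
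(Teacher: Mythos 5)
Your existence argument (Kato's theorem for time-dependent self-adjoint generators with constant domain $\Hcal^2$, after checking Kato--Rellich for $-a(t)\Delta + b(t)g$) is essentially the same route the paper takes, which cites Reed--Simon Theorem X.70/X.71 and observes that the argument does not actually use the spatial dimension once $V$ is bounded. Your regularity argument, however, is genuinely different: the paper bootstraps the semigroup theory itself, showing that $A(t)\colon\Hcal^{k+2}\to\Hcal^k$ is self-adjoint when $\Hcal^k$ is taken as the ambient Hilbert space and then re-invoking Hille--Yosida and the Reed--Simon evolution theorem at that level, so that the propagator leaves $\Hcal^{k+2}$ invariant. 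You instead use a differential energy estimate on $\|J^s\Phi(t)\|^2$ with a commutator bound and Gr\"onwall. Both are standard techniques for Sobolev propagation and the choice is largely a matter of taste.

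That said, there is a real gap in your regularity step. The commutator estimate $\|[J^s,g]\Phi\|\lesssim\|g\|_{\mathrm{Lip}}\,\|J^{s-1}\Phi\|$ already fails at $s=2$ for merely Lipschitz $g$: $[J^2,g] = -[\Delta,g] = -2\nabla g\cdot\nabla - (\Delta g)$, and $\Delta g$ is a distribution of order $-1$, not an $L_\infty$ function, so the second term is not controlled by $\|\nabla g\|_\infty$. Your remark that ``$s\le 2$ is the range actually needed downstream'' is also not correct: Lemma~\ref{lem:sobolev_growth_wavefunc} invokes Theorem~\ref{thm:solution_existence} to conclude $\Phi(t)\in\Hcal^{m+4}$ for $m$ up to $d+\Ocal(1)$, so the theorem must hold for $s=\Theta(d)$. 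Finally, the proposed mollification rescue does not close the gap as written: the Kato--Ponce constant $C_s(g_\epsilon)$ for $s\ge 2$ is controlled by $\|\nabla^s g_\epsilon\|_\infty$, which diverges as $\epsilon\to 0$ for Lipschitz $g$, so the $\Hcal^s$-estimates you would pass to the limit are not uniform in $\epsilon$. To be fair, the paper's own bootstrapping is also delicate on exactly this point (multiplication by an $L_\infty$ function is not bounded on $\Hcal^k$ for $k\ge 1$ without additional smoothness), and the paper signals awareness of this by noting the ``alternatively, if $V(t)\in\Hcal^s$'' clause; in the paper's actual downstream application the theorem is fed the smooth mollified potential $g_\sigma$, for which both your approach and theirs become rigorous.
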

\noindent See Appendix \ref{subapp:existence_of_dynamics} for a proof, which follows from standard results in mathematical physics.

Our main result of this section will provide sufficient conditions on the number of grid points per dimension, $N$, that suffices to ensure $\epsilon$ error on Problem \ref{prob:restricted_schr}. This result will also include the number of queries to a noisy-oracle for $f$ and the overall gate complexity and qubit count. To be able to provide guarantees on the performance of the simulation algorithm, we will only require the potential to be $G$-Lipschitz in some $p$-norm for $p\geq 1$ and satisfy periodic boundary conditions. There is also a version of the result without assuming PBC (Theorem \ref{thm:collocation_trace_distance_aperiodic} in Section \ref{subsec:error_analysis_hamiltonian}), and some with alternative regularity assumptions in Appendix \ref{app:improved_regularity}. Note that prior results on real-space simulation had complexities that depended on parameters that were not bounded. 

Before stating our main result, we introduce the notion of Fourier interpolation to quantify the accuracy of our discrete approximation compared to the continuous solution. A rigorous definition is given in Definition~\ref{defn:fourier_interpolant}, but an informal version is easy to state. Given an approximation $\ket{\Psi_t} = (\Psi(x_\jmbf, t))_\jmbf$ of the exact wavefunction at $(2N)^d$ grid points $x_\jmbf$, the Fourier interpolation $\Psi(t)$ is the unique low-degree Fourier polynomial, i.e. in $\Hcal_N$, which matches $\ket{\Psi_t}$ at these grid points. 

Having established the necessary terminology, we are ready to state the main result of this section which provides rigorously-proven guarantees on the performance of the real-space simulation algorithm (Algorithm \ref{alg:sim_schro}) for the \ref{eq:real_schrodinger} equation formulated as Problem \ref{prob:restricted_schr}, under mild assumptions on the potential.

\begin{algorithm}  
    \caption{Euclidean Hamiltonian Simulation via Truncated Dyson Series} \label{alg:sim_schro}
    \hspace*{\algorithmicindent}
    \textbf{Input}: Desired evolution time $T$; precision $\epsilon$; dimension $d$; upper bound $\Lambda \geq \norm{g}_\infty$; scaling schedules $a(t), b(t)$ and their integrals given as explicitly computable functions; radius of simulation $R$; $\epsilon_f$-accurate quantum oracle $O_f$ for real function $f$ on $[-R,R]^d$ with $\epsilon_f$ obeying the bound in Thm~\ref{thm:master_simulation_thm}; state preparation oracle $U_0$ on discretization of $[-R,R]^d$.
    
    \hspace*{\algorithmicindent}
    \textbf{Output}: With probability $1-\Ocal(\epsilon)$, a quantum state $\ket{\Psi_T}$ whose Fourier interpolation (see Definition \ref{defn:fourier_interpolant}) $\epsilon$-approximates solution $\Phi(T)$ of Problem~\ref{prob:restricted_schr}.

    \hspace*{\algorithmicindent}
    \textbf{Procedure}:
    \begin{algorithmic}[1]
        \State Set discretization $N \leftarrow N(\epsilon, d, T, a, b, R)$ using \eqref{eq:N-master-thm} in Thm~\ref{thm:master_simulation_thm}.
        \State Allocate $d \lceil\log(2N)\rceil$ qubits for main quantum register, and auxiliary registers per Thm~\ref{thm:master_simulation_thm}.
        \State Initialize $\ket{\Psi_0} = U_0 \ket{0}^{\otimes d}$ on main register.
        \State Define subroutine $\HamT(\widetilde{H}_I) \leftarrow \text{HAMT}(a, b, O_f, \Lambda, \epsilon)$ using Lemma~\ref{lem:HamT_construction}.
        \State Set $T' \leftarrow \norm{b}_1$.
        \State Set $\ket{\widetilde{\Psi}_T} \leftarrow  \text{TruncDyson}(\HamT(\widetilde{H}_I), \ket{\Psi_0}, T', \epsilon)$ (see~\cite{low2019interaction}).
        \State \textbf{Return} $\ket{\Psi_T} \leftarrow U_K^\dagger(T) \ket{\widetilde{\Psi}_T}$. 
    \end{algorithmic}
\end{algorithm}
\begin{theorem}[Digital Simulation of Schr\"odinger Equations]
\label{thm:master_simulation_thm}
Suppose $f : \mathcal{X} \rightarrow \Rmbb$ is a $G$-Lipschitz continuous function in $p$-norm with $\mathcal{X} \subseteq \Rmbb^d$ and $p \geq 1$. Let $g:S\rightarrow\Rmbb$ be the $R$-restriction of $f$ (Definition~\ref{defn:R_restriction}), and let $\Lambda \geq \lVert g \rVert_{\infty}$ be a promised upper bound. Furthermore, suppose that $g$ satisfies periodic boundary conditions. Choose $N$ such that
\begin{equation}\label{eq:N-master-thm}
    \log_2(N) = \Ocal\left(d\log \bigg( \frac{\lVert a \rVert_1^{1/d}\big[\max\big\{\lvert\Phi_0 \rvert_{\Hcal^d}, \lvert\Phi_0\rvert_{\Hcal^{3}}\big\}^{1/d}  +\lVert b \rVert_1GRd\big]}{\epsilon}\bigg)\right),
\end{equation} 
and suppose that the initial wavefunction satisfies $\Phi_0 \in \Hcal_N$ (defined in \eqref{eq:truncated_fourier_subspace}). Then there exists a quantum algorithm (Algorithm~\ref{alg:sim_schro}) that solves\footnote{There is one technicality that we also require that the true solution have a periodic extension that is in $C^{1}(\Rmbb^d)$. The main issue being ensuring that we have continuous-differentiability across the boundary. Given that we can ensure arbitrarily-high regularity on the true solution by tuning our initial state, this assumption is believable. Additionally, we require t $\rightarrow \partial^{\alpha}_x\Phi(x, t)$ to be continuous $\forall x \in S$ and $\alpha \in \Nmbb^{d}$ for which the derivatives exist, which is believable given the regularity of $a(t)$ and $b(t)$.} Problem \ref{prob:restricted_schr} to precision $\epsilon$, in the following sense. With probability $1 - \Ocal(\epsilon)$ the algorithm outputs a $d \lceil\log_2(2N)\rceil$-qubit quantum state $|\Psi_T\rangle$ of the form
\begin{align*}
    |\Psi_T\rangle = \sum_{x_\jmbf \in \Gcal}(2N)^{-d/2}\Psi_T(x_\jmbf)|\jmbf\rangle
\end{align*}
where $x_\jmbf\in \Gcal$ are uniform grid points on $S = [-1/2,1/2]^d$, and $\abs{\Gcal} = (2N)^d$. The state $\ket{\Psi_T}$ has the following guarantees:
\begin{itemize}
    \item Let $\Psi(T) : S \rightarrow \Cmbb$ be the Fourier interpolant (Definition \ref{defn:fourier_interpolant}) of $|\Psi_{T}\rangle$. Then,
    \begin{align*}
        \lVert \Psi(T) - \Phi(T) \rVert = \Ocal(\epsilon).
    \end{align*}
    \item Let $h : \mathcal{X}\rightarrow \Rmbb$ be any $G$-Lipschitz function \footnote{Note this result is from Theorem \ref{thm:energy_error_bound}, where the guarantee is in Big-O, here we absorb the constant into $\epsilon_{\mathrm{alg}}$. If $h$ has a Lipschitz constant other than $G$, then we just replace $G$ in \eqref{eq:N-master-thm} with the max of the Lipschitz constant of $h$ and the potential $g$.}. Then,
    \begin{align*}
        \bigg\lvert (2N)^{-d}\sum_{x_\jmbf \in \Gcal} {h}(2Rx_\jmbf)\lvert \Psi_{T}(x_\jmbf)\rvert^2 - \int_S{h}(2Ry)\lvert \Phi(y, T)\rvert^2 dy \bigg\rvert \leq \lVert h \rVert_{\infty}\epsilon.
    \end{align*}
\end{itemize}
Finally, the algorithm uses the following resources:
\begin{itemize}
    \item \textbf{Queries:} $Q_f$ calls to an $\epsilon_f$-accurate quantum binary oracle or an $(\epsilon_f/\Lambda)$-accurate phase oracle, where 
    \begin{equation*}
        \epsilon_f^{-1} = \Ocal\left(\frac{\norm{b}_1}{\epsilon} \frac{\log(\Lambda \norm{b}_1/\epsilon)}{\log\log(\Lambda \norm{b}_1/\epsilon)}\right).
    \end{equation*}
        \begin{align}
           Q_f =  \begin{cases}
                 \Ocal\left(\Lambda \norm{b}_1 \frac{\log(\Lambda \norm{b}_1/\epsilon)}{\log\log(\Lambda \norm{b}_1/\epsilon)}\right) & \text{(binary oracle)} \\
                {\Ocal}\left(\frac{\Lambda^2 \norm{b}_1^2}{\epsilon} \frac{\log^3(\Lambda \norm{b}_1/\epsilon)}{\log^2\log(\Lambda \norm{b}_1/\epsilon)}\right) & \text{(phase oracle)} 
            \end{cases}
        \end{align}
        \item \textbf{Qubits}: $d\lceil\log_2 N\rceil +\Ocal\left(\log dN^2+\log(\norm{a}_1)+ \log(\Lambda\lVert b\rVert_1) + \log(1/\epsilon)\right)$
        \item \textbf{Gates}: 
        \begin{equation}
            \begin{aligned}
            \Ocal&\Bigg(\frac{\Lambda^2\norm{b}_1^2}{\epsilon_\mathrm{alg}} \frac{\log^3(\Lambda\norm{b}_1/\epsilon_\mathrm{alg})}{\log^2\log(\Lambda\norm{b}_1/\epsilon_\mathrm{alg})} \\
            &\times\left[d\log N\log dN^2 + d\log N\log\left(\frac{1}{\epsilon_\mathrm{alg}}\left[\norm{a}_1 + \frac{\Lambda \norm{b}_1}{dN^{2}}\right]\right) + \log^3\left(\frac{\Lambda\norm{b}_1}{\epsilon_\mathrm{alg}}\right)\right]\Bigg).
        \end{aligned}
        \end{equation}
    \end{itemize}
\end{theorem}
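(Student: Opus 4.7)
The plan is to decompose the total error between the algorithm's output $\ket{\Psi_T}$ (read via its Fourier interpolant) and the continuous solution $\Phi(T)$ into a chain of intermediate approximations, each controlled by a different tool. I would introduce (i) the Galerkin-truncated state $\Phi_N(t)$, the exact Schrödinger evolution generated by $-a(t)\Delta + b(t) P_N g P_N$, where $P_N$ is the orthogonal projector onto $\Hcal_N$; (ii) the collocation-based state $\widetilde\Phi_N(t)$, in which the multiplication operator is replaced by its discrete-grid realization (equivalently, $g$ is replaced by its trigonometric interpolant $g_N$), which is the dynamics that Algorithm~\ref{alg:sim_schro} actually simulates; (iii) the Dyson-series output before introducing oracle error; and (iv) the output of the realized algorithm with the $\epsilon_f$-noisy oracle. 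By the triangle inequality in $L_2$, bounding each successive difference by $O(\epsilon_{\mathrm{alg}})$ is sufficient, and both bullets of the theorem follow: the first is direct, and the second follows by writing the expectation-value gap as $\langle h\rangle_{\Psi(T)} - \langle h\rangle_{\Phi(T)}$ and using Cauchy-Schwarz together with $\lVert h\rVert_\infty$.

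For step (i), I would pass to Fourier coefficients and use Sobolev decay: the tail $\lVert (I-P_N)\Phi(t)\rVert^2$ is bounded by $\lvert\Phi(t)\rvert_{\Hcal^s}^2 / N^{2s}$, and Theorem~\ref{thm:solution_existence} guarantees this regularity is preserved in time (with constants depending on $\lVert a\rVert_1$, $\lVert b\rVert_1$, and the smoothness of $g$). The seminorm indices $s=d$ and $s=3$ appearing in \eqref{eq:N-master-thm} reflect, respectively, that controlling pointwise sums over $\mathcal{N}$ in $d$ dimensions requires $s>d/2$ (giving the $L_\infty$ control needed for the observable bound), and that classical collocation error bounds on multiplication by smooth functions require a few extra derivatives. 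A Duhamel/Gronwall argument then turns the Hamiltonian perturbation $P_N g P_N - g$ into an $L_2$-error on the state, amplified by $\int_0^T \lvert b(t)\rvert\,dt = \lVert b\rVert_1$.

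The main obstacle is step (ii), because $g$ is only Lipschitz, so its own Fourier tail decays only like $1/\lvert\mathbf{n}\rvert$; naive aliasing estimates are therefore too weak to close the argument. The plan is to mollify: introduce a smooth bump $\eta_\delta$ of scale $\delta$ and let $g_\delta = g * \eta_\delta$. Then compare (a) $g$ with $g_\delta$ in $L_\infty$, which costs $O(GR\delta)$ by Lipschitz continuity; and (b) $g_\delta$ with its collocation interpolant $(g_\delta)_N$ via the Sobolev decay of $g_\delta$ (whose higher Sobolev seminorms are large but algebraically bounded in $\delta^{-1}$). Balancing $\delta \sim 1/N$ yields an aliasing error of order $GRd\,\lVert b\rVert_1/N$ when propagated through the Schrödinger evolution by Duhamel, producing the additive $GRd\lVert b\rVert_1$ term inside the logarithm of \eqref{eq:N-master-thm}. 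The factor $d$ tracks the multidimensional Fourier tails of the mollifier. This mollification step is what replaces the unverifiable wavefunction-dependent bounds of prior analyses with input-only quantities.

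Finally, steps (iii) and (iv) are assembled using the truncated-Dyson-series Hamiltonian simulation in the interaction picture (Lemma~\ref{lem:HamT_construction}): take the free Hamiltonian to be the diagonal-in-momentum Laplacian (implemented via QFT sandwiches) and the interaction to be the time-dependent collocation potential (block-encoded from a single oracle query to $O_f$ plus arithmetic on $b(t)$). The time reparametrization \eqref{eq:time_reparametrization} by $\int_0^t b(\tau)d\tau$ collapses the effective evolution length to $\lVert b\rVert_1$, so the algorithm of~\cite{low2019interaction} yields the stated query count $Q_f$ after setting $\epsilon_f$ so that the oracle contribution, which accumulates as $\lVert b\rVert_1 \cdot \epsilon_f$ by a further Duhamel bound, is $O(\epsilon_{\mathrm{alg}})$. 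Qubit and gate counts are bookkeeping: $d\lceil\log_2(2N)\rceil$ for the system register, QFT ancillas, ancillas for the oracle accuracy $\log(\Lambda\lVert b\rVert_1/\epsilon_\mathrm{alg})$, and the standard Dyson-series overhead. Combining all four error contributions into a single $\epsilon_\mathrm{alg}$ completes the proof.
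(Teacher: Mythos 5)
Your decomposition into Galerkin-truncated, collocation, Dyson, and noisy-oracle stages is a natural way to organize the argument, and you correctly identify the role of mollification and the $\|b\|_1$-reparametrized interaction-picture Dyson simulation. However, there are three genuine gaps.

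First, step (i) is circular given your proposed ordering. To bound $\|\Phi(T)-\Phi_N(T)\|$ via Sobolev decay you need to control $\|(I-P_N)[g\Phi_N(t)]\|$ (the Galerkin residual), which by the Leibniz rule requires high-order Sobolev seminorms of $g$ and of $\Phi_N(t)$. But $g$ is only Lipschitz, so those seminorms are not finite; mollification cannot be deferred to step (ii) — it must enter \emph{before} any Sobolev-based truncation estimate. The paper avoids this by mollifying the potential once and for all to $g_\sigma$, comparing $\Phi \leftrightarrow \Phi_\sigma$ and $\Psi \leftrightarrow \Psi_\sigma$ via the elementary evolution bound (Lemma~\ref{lem:evolution_bound}), and then applying the collocation analysis (Theorem~\ref{thm:collocation_bound_real_space}, whose intermediate is the Fourier interpolant $I_N[\Phi]$ rather than a Galerkin-evolved state) only with the smooth $g_\sigma$.

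Second, you cite Theorem~\ref{thm:solution_existence} as giving the needed time-uniform regularity, but that theorem only asserts \emph{membership} in $\Hcal^s$, not a quantitative growth bound on $\lvert\Phi(t)\rvert_{\Hcal^s}$. The quantitative statement is Lemma~\ref{lem:sobolev_growth_wavefunc}, which is a nontrivial multi-dimensional generalization of Bourgain's Sobolev growth bound, producing exactly the $\left(\tfrac{2\sum_k\binom{s}{k}\lvert g_\sigma\rvert_{\Hcal^k}\|b\|_1}{s}\right)^s$ factor. Without that lemma, the constants hidden in your step (i) cannot be tracked, and the $\|b\|_1 GRd$ term in the $\log N$ bound does not materialize.

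Third, and most significantly, the second bullet does \emph{not} follow from the $L_2$ bound by Cauchy–Schwarz. The left-hand side is a Riemann sum $(2N)^{-d}\sum_{\jmbf} h(2Rx_\jmbf)\lvert\Psi_T(x_\jmbf)\rvert^2$, while the right is a continuous integral. Even though $\Psi(T)\in\Hcal_N$, the product $h\lvert\Psi(T)\rvert^2$ is not in $\Hcal_N$ (not even a trigonometric polynomial, since $h$ is only Lipschitz), so the discrete and continuous inner products disagree. Bounding this Riemann-sum gap requires separately controlling $\sup_S\|\nabla(\lvert\Phi_M\rvert^2)\|_\infty$ for a well-chosen auxiliary truncation $\Phi_M$ and balancing that against the truncation scale $M$, which is the content of Lemma~\ref{lem:energy_error_bound_gen} and Theorem~\ref{thm:energy_error_bound}. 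This is a nontrivial piece of the analysis — it is in fact where the worst $N$-dependence enters — and your proposal leaves it out entirely.
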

We note that we can easily remove  the assumption of $\Phi_0 \in \Hcal_N$ and instead consider general $\Phi_0 \in \Hcal^2$. However, there may be a slight qubit overhead, which is handled by Lemma \ref{lem:gen_periodic_initial_condition}. This will be handled by considering a proxy $\widetilde{\Phi}_0 \in \Hcal_N$ for $\Phi_0 \in \Hcal$, such that the two wavefunctions are $\Ocal(\epsilon)$ close in $L_2$ distance. Hence, the time-evolved wavefunctions remain $\Ocal(\epsilon)$ close by unitarity. 

One will also note that we have shown that the query complexity of Algorithm~\ref{alg:sim_schro} does not depend on the amount of grid spacing, like it did in \cite{catli2025exponentially}. This is particularly important for applications to optimization, where we do not want to pay additional dimension factors.

For many applications, we are interested in measuring certain observables on the solution to \ref{eq:real_schrodinger} equation. Theorem \ref{thm:master_simulation_thm} shows that observables measured on the outputted digitized wavefunction are close to those measured on the true continuous solution. Thus we provide a complete discretization analysis of the simulation algorithm. 

The rest of this section is dedicated to proving Theorem \ref{thm:master_simulation_thm} and some extensions. We first introduce our general approach, the \emph{pseudo-spectral} method, in Section~\ref{subsec:pseudo-spectral_methods}. In Section~\ref{subsec:simulation_algorithm}, we develop this approach into a concrete Hamiltonian simulation algorithm based on the Dyson series method, and quantify its costs. Finally, in Section~\ref{subsec:error_analysis_hamiltonian}, we deal with the difficult question of discretization errors resulting from the pseudo-spectral method.
\subsection{The Pseudo-Spectral Method} \label{subsec:pseudo-spectral_methods}
    \emph{Spectral methods} are a broad family of methods for solving partial differential equations which consider the problem in terms of the natural eigenfunctions of the operators involved. Specific truncation schemes are used to provide a finite representation, which can then be worked with numerically. 
    
    As written, the~\ref{eq:real_schrodinger} is not directly amenable to numerical calculation. Spectral methods begin by imposing some truncation scheme on the solution, such as 
    \begin{equation} \label{eq:pseudo-spectral_Phi}
        \Psi(x,t) = \sum_{\nmbf\in\Ncal} \widehat{\Psi}_\nmbf(t) \chi_\nmbf(x)
    \end{equation}
    and stipulating some conditions on the modified coefficients $\widehat{\Psi}_\nmbf$ so that $\Psi$ approximates $\Phi$ for the observable of interest. Here $\Ncal$ is as defined in Eq.~\eqref{eq:truncation_indices} and
    characterizes Fourier modes of low wavenumber $\nmbf$. The reason for leaving out $N$ but not $-N$ above will become clearer shortly. 

    A spectral method effectively uses the space  $\Hcal_{N}$ \eqref{eq:truncated_fourier_subspace} as a solution ansatz, and computes the Fourier coefficients that minimize the error in solving the \ref{eq:real_schrodinger} equation. For example, the \emph{Galerkin} spectral methods seeks to find $\Psi(t) \in \Hcal_N$ such that
    \begin{align*}
        \langle \chi_\nmbf, [i\partial_t - H(t)]\Psi(t)\rangle = 0,\qquad \forall \nmbf \in \Zmbb^d\setminus\Ncal.
    \end{align*}
    Unfortunately, such a computation involves high-dimensional integrals, i.e. inner products $\langle \cdot, \cdot \rangle$, and transiitons to Fourier modes outside of $\Ncal$ due to the potential $V$. There may be settings where we have sufficient knowledge about the PDE to obtain closed-form expressions; however, in our case the problem instance will be provided by black-box query access, thus preventing this approach. In contrast to spectral methods, \emph{pseudo}-spectral methods replace $\langle \cdot, \cdot \rangle$ with numerical integration to deal with the aforementioned issues.
    
    One popular and practical pseudo-spectral method is the so-called \emph{collocation} method, which seeks to replicate the Schr\"odinger equation exactly on a predefined set of grid points $\Gcal \subset S$. More precisely, define the \emph{residual} of $\Phi(x,t)$ as 
    \begin{equation*}
        Q_\Phi (x,t) \coloneqq [(i\partial_t - H(t))\Phi](x,t).
    \end{equation*}
    Then the collocation method seeks a $\Psi\in\Hcal_N$ such that the residual is zero on $\Gcal$:  
    \begin{equation} \label{eq:zero_residue}
        Q_{\Psi}(x_\jmbf, t) = 0 \qquad \forall x_\jmbf \in \Gcal.
    \end{equation} 
    For the pseudo-spectral method on the box $[-\frac12,\frac12]^d$, there is a very natural grid $\Gcal$ to consider, which also happens to be the simplest one. If we take a uniform rectangular grid $\Gcal = \{x_\jmbf\}$ such that
    \begin{equation} \label{eq:def_grid}
        x_\jmbf = \jmbf/2N, \qquad \jmbf \in \Ncal
    \end{equation}
    and define the sesquilinear form
    \begin{equation} \label{eq:discrete_inner_product}
        \inner{\psi, \phi}_{\Gcal} \coloneqq \frac{1}{(2N)^d} \sum_{\jmbf\in \Ncal} \overline{\psi}(x_\jmbf) \phi(x_\jmbf),
    \end{equation}
    then $\langle\cdot,\cdot\rangle_\Gcal$ is in fact an inner product on $\Hcal_N$, and equals $\langle\cdot,\cdot\rangle$ on this subspace. Note that~\eqref{eq:discrete_inner_product} is essentially a Riemann approximation to the integral given by $\langle\cdot,\cdot\rangle$. Moreover, $\langle\cdot,\cdot\rangle_\Gcal$ extends naturally to a symmetric, nonnegative sesquilinear form on $\Hcal$; however, it fails to be a true inner product since there exists nonzero $\psi\in\Hcal$ for which $\langle\psi,\psi\rangle_\Gcal = 0$. $\Hcal_N$ is the maximal subspace on which $\langle\cdot,\cdot\rangle$ and $\langle\cdot,\cdot\rangle_\Gcal$ agree (and for which $\langle\cdot,\cdot\rangle_\Gcal$ is an inner product) among all subspaces containing the constant function (i.e., among ``low-frequency" subspaces). In what follows we let $\norm{\cdot}_\Gcal$ refer to the seminorm induced by $\langle\cdot,\cdot\rangle_\Gcal$.

    It turns out that there is a relatively natural Schr\"odinger equation which satisfies these matching conditions on the grid.
    \begin{lemma} \label{lem:FT_pseudo_equations}
        Let $H(t)$ be the Hamiltonian in the~\ref{eq:real_schrodinger} equation, and $\Gcal \subset S$ be the uniform grid of equation~\eqref{eq:def_grid}. Then the pseudo-spectral wavefunction $\Psi(\cdot,t)\in \Hcal_N$ in~\eqref{eq:pseudo-spectral_Phi} has zero residual on $\Gcal$ if and only if the Fourier coefficients $\widehat{\Psi}_\mmbf(t)$ for $\mmbf\in\Ncal$ satisfy the following Schr\"odinger system of ODEs
        \begin{equation} \label{eq:collocation_Schrodinger}
            i \partial_t \widehat{\Psi}_\mmbf(t) = a(t) (2\pi \mmbf)^2 \widehat{\Psi}_\mmbf(t) +  \sum_{\nmbf \in \Ncal} W_{\mmbf\nmbf}(t) \widehat{\Psi}_\nmbf(t),
        \end{equation}
        where
        \begin{equation*}
            W_{\mmbf\nmbf}(t) \coloneqq \inner{\chi_\mmbf, V(t) \chi_\nmbf}_{\Gcal} = \frac{1}{(2N)^d} \sum_{\jmbf\in \Ncal} \overline{\chi}_\mmbf(x_\jmbf) V(x_\jmbf,t) \chi_\nmbf(x_\jmbf).
        \end{equation*}
    \end{lemma}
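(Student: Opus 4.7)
The plan is a straightforward but careful calculation that reduces the collocation condition on the grid to a system of ODEs for the Fourier coefficients via the discrete Fourier transform. The key observation is that the pointwise condition $Q_\Psi(x_\jmbf,t) = 0$ for all $\jmbf \in \Ncal$ is equivalent to the vanishing of the discrete Fourier coefficients $\langle \chi_\mmbf, Q_\Psi(\cdot,t)\rangle_\Gcal$ for all $\mmbf \in \Ncal$, since the DFT on the uniform grid $\Gcal$ is an invertible linear map between the $(2N)^d$ grid values and the $(2N)^d$ discrete Fourier coefficients indexed by $\Ncal$. So the proof reduces to computing $\langle \chi_\mmbf, Q_\Psi\rangle_\Gcal$ term by term and showing that it equals the left-hand side minus the right-hand side of the claimed ODE.

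The computation proceeds by splitting the residual as $Q_\Psi = i\partial_t \Psi + a(t)\Delta \Psi - V(\cdot,t)\Psi$. For the first two terms, a key subroutine is the fact, recalled in the surrounding text, that $\langle\cdot,\cdot\rangle_\Gcal$ and $\langle\cdot,\cdot\rangle$ agree on $\Hcal_N$. Since $\Psi(\cdot,t) \in \Hcal_N$ and both $\partial_t$ and $\Delta$ preserve $\Hcal_N$ (the latter by the eigenvalue relation $-\Delta\chi_\nmbf = (2\pi\nmbf)^2 \chi_\nmbf$), I would apply this agreement together with orthonormality of $\{\chi_\nmbf\}_{\nmbf \in \Ncal}$ to conclude
\begin{equation*}
\langle \chi_\mmbf, i\partial_t \Psi + a(t)\Delta \Psi \rangle_\Gcal = i\partial_t \widehat{\Psi}_\mmbf(t) - a(t)(2\pi\mmbf)^2 \widehat{\Psi}_\mmbf(t).
\end{equation*}
For the potential term, I would expand $\Psi(x_\jmbf,t)$ in its Fourier series, swap the finite sums over $\jmbf$ and $\nmbf$, and recognize the inner sum as precisely the matrix element $W_{\mmbf\nmbf}(t)$ defined in the statement.

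Combining these computations, the condition $\langle \chi_\mmbf, Q_\Psi(\cdot,t)\rangle_\Gcal = 0$ for every $\mmbf \in \Ncal$ becomes
\begin{equation*}
i\partial_t \widehat{\Psi}_\mmbf(t) - a(t)(2\pi\mmbf)^2 \widehat{\Psi}_\mmbf(t) - \sum_{\nmbf \in \Ncal} W_{\mmbf\nmbf}(t) \widehat{\Psi}_\nmbf(t) = 0,
\end{equation*}
which is exactly \eqref{eq:collocation_Schrodinger} after rearrangement. The reverse direction is immediate since the entire chain of equalities is bidirectional: the ODE system implies all discrete coefficients of $Q_\Psi$ vanish on $\Ncal$, which by invertibility of the DFT yields zero residual at every grid point.

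I do not anticipate a genuine obstacle here; the one subtle point is that $V\Psi$ is not in $\Hcal_N$ in general (it has high-frequency Fourier content from the potential), so one cannot naively replace $\langle\cdot,\cdot\rangle_\Gcal$ with $\langle\cdot,\cdot\rangle$ in that term. This is precisely why the matrix $W_{\mmbf\nmbf}$ is defined using the discrete inner product and encodes aliasing; being careful to evaluate $V\Psi$ pointwise on the grid (rather than as a function in $\Hcal$) and only then compute its discrete Fourier coefficients is what makes the derivation go through cleanly.
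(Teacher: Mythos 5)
Your proposal is correct and follows essentially the same route as the paper's proof: substitute the truncated-Fourier ansatz into the zero-residual condition, use the identity of $\langle\cdot,\cdot\rangle_\Gcal$ with $\langle\cdot,\cdot\rangle$ on $\Hcal_N$ (plus orthonormality) for the kinetic and time-derivative terms, and leave the potential term in discrete-inner-product form to produce $W_{\mmbf\nmbf}$. The only cosmetic difference is that you state the invertibility of the DFT explicitly to justify the ``if and only if,'' whereas the paper leaves that implicit.
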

    \begin{proof}
        Substituting~\eqref{eq:pseudo-spectral_Phi} into the zero-residue condition~\eqref{eq:zero_residue}, and moving the Hamiltonian term $H(t)\Psi$ to the right hand side, we see that the residue condition holds if and only if
        \begin{equation*}
            \sum_{\nmbf \in \Ncal} i \partial_t \widehat{\Psi}_\nmbf(t) \chi_\nmbf(x_\jmbf) = \sum_{\nmbf \in \Ncal} a(t) 4\pi^2 \nmbf^2 \widehat{\Psi}_\nmbf(t) \chi_\nmbf(x_\jmbf) + \sum_{\nmbf \in \Ncal} V(x_\jmbf,t) \widehat{\Psi}_\nmbf(t) \chi_\nmbf(x_\jmbf)
        \end{equation*}
        for all $x_\jmbf \in \Gcal$. Taking the discrete inner product~\eqref{eq:discrete_inner_product} with some $\chi_\mmbf$ for some $\mmbf \in \Ncal$ and invoking orthonormality, this is true if and only if, for all such $\mmbf$
        \begin{equation*}
            i \partial_t \widehat{\Psi}_\mmbf(t)  = a(t) (2\pi \mmbf)^2 \widehat{\Psi}_\mmbf(t) + \sum_{\nmbf \in \Ncal} \widehat{\Psi}_\nmbf(t) W_{\mmbf\nmbf}(t)
        \end{equation*}
        where $W_{\mathbf{mn}}$ is as defined in the lemma statement.
    \end{proof}
    Further insight is gained by explicitly evaluating $W$ at the grid points $x_\jmbf$:
    \begin{equation}\label{eq:W_hat}
        W_{\mmbf\nmbf}(t) = \frac{1}{(2N)^d} \sum_{\jmbf\in\Ncal} e^{-2\pi i \mmbf\cdot \jmbf/(2N)} V(x_\jmbf, t) e^{2\pi i \nmbf\cdot \jmbf/(2N)}.
    \end{equation}
    Upon inspection, \eqref{eq:W_hat} is directly related to the \emph{discrete} Fourier transform on $d$ registers of length $2N$. To represent this discrete space we switch to braket notation. Consider the tensor product space $(\Cmbb^{2N})^{\otimes d}$ and let \begin{equation}
        \ket{\jmbf} \coloneqq \ket{j_1}\ket{j_2}\ldots \ket{j_d}
    \end{equation}
    with standard basis vectors $\ket{j_i}$ and implicit tensor product. Let $V(t)$ be the diagonal operator defined by
    \begin{equation}
        V(t)\ket{\jmbf} \coloneqq V(x_\jmbf,t) \ket{\jmbf}.
    \end{equation}
    Finally, let $\ket{\widehat{\Psi}_t} = \sum_\jmbf \widehat{\Psi}_\jmbf(t) \ket{\jmbf}$. Then Lemma~\ref{lem:FT_pseudo_equations} can be recast as
    \begin{equation} \label{lem:braket_c_Schrodinger}
        i\partial_t \ket{\widehat{\Psi}_t} = a(t) (2\pi \hat{\nmbf})^2 \ket{\widehat{\Psi}_t} + \QFT^{^\dagger \otimes d} V(t) \QFT^{\otimes d} \ket{\widehat{\Psi}_t}
    \end{equation}
    where $\QFT$ is the quantum (discrete) Fourier transform on $2N$ variables
    \begin{equation}
        \QFT \ket{j} \coloneqq \frac{1}{\sqrt{2N}} \sum_{k = -N}^{N-1} e^{2\pi i j k/(2N)} \ket{k}
    \end{equation}
    and $\hat{\nmbf}^2 \ket{\nmbf} \coloneqq \nmbf^2 \ket{\nmbf}$. Transforming the problem through a $\QFT$ leads to the following consequence of the above.
    \begin{corollary} \label{coro:Our_actual_H}
        Let
        \begin{equation}
        \label{eq:discrete_quantum_state}
            \ket{\Psi_t} \coloneqq \QFT^{\otimes d}\ket{\widehat{\Psi}_t} = \frac{1}{\sqrt{(2N)^d}}\sum_{\jmbf \in \Ncal} \Psi_\jmbf(t) \ket{\jmbf}.
        \end{equation}
        Then $\ket{\Psi_t}$ satisfies $i\partial_t \ket{\Psi_t} = H(t) \ket{\Psi_t}$, where
        \begin{equation} \label{eq:ham-pseudospec}
             H(t) \coloneqq a(t) (2\pi)^2 \QFT^{\otimes d}\hat{\nmbf}^2 \QFT^{\dagger \otimes d} + V(t).
        \end{equation}
        Moreover, $\Psi_\jmbf (t) = \Psi(x_\jmbf, t)$ for all $\jmbf \in \Ncal$.
    \end{corollary}
    \begin{proof}
        The evolution equation follows almost immediately from transforming the Schr\"odinger equation~\eqref{lem:braket_c_Schrodinger} using the $\QFT$. As for the second claim, we start by evaluating $\Psi(x_\jmbf, t)$.
        \begin{align}
        \begin{aligned}
            \Psi(x_\jmbf, t) &= \sum_{\nmbf \in \Ncal} \widehat{\Psi}_\nmbf (t) \chi_\nmbf(x_\jmbf) \\
            &= \sum_{\nmbf \in \Ncal} e^{2\pi i \nmbf\cdot\jmbf/(2N)} \widehat{\Psi}_\nmbf (t) \\
            &= \sqrt{(2N)^d} \,\bra{\jmbf}\QFT^{\otimes d}\ket{\widehat{\Psi}_t} = \sqrt{(2N)^d} \langle\jmbf\vert\Psi_t\rangle.
        \end{aligned}
        \end{align}
        Some minor rearranging then gives the result.
    \end{proof}
    \noindent To summarize, the pseudo-spectral wavefunction, represented by $\ket{\Psi_t}$, is generated by the dynamics of $H(t)$, which in some ways is the most obvious discretization of the real-space Hamiltonian from a Fourier perspective. 

    We strongly emphasize that the Fourier coefficients of the exact solution $\Phi$ and the pseudo-spectral approximation $\Psi$ are not related via closed-form transformations, such as Fourier truncation or interpolation. Specifically, $\Psi$ and $\Phi$ will not even match on $\Gcal$ unless $\Phi$ has a finite Fourier series, which is essentially never the case for nonconstant $V$. Thus, it is a nontrivial task to bound the $L_2$ distance between $\Phi$ and $\Psi$ based on the number of grid points. We delay consideration of this important analysis to Section~\ref{subsec:error_analysis_hamiltonian}, and for the moment move on to consider the algorithm in the discretized setting.

\subsection{Simulation Algorithm} \label{subsec:simulation_algorithm}
In Corollary~\ref{coro:Our_actual_H}, we arrived at a pseudo-spectral Hamiltonian of the form 
\begin{equation} \label{eq:colloc_Ham}
    H(t) = K(t) + V(t)
\end{equation}
where
\begin{equation*}
    K(t) \coloneqq a(t) \QFT^{\otimes d}(2\pi \hat{\nmbf})^2 \QFT^{\dagger \otimes d}
\end{equation*}
is the discretized ``kinetic" operator, diagonal in the discrete Fourier basis, and 
\begin{equation*}
    V(t) \coloneqq \sum_{\jmbf\in\Ncal} V(x_\jmbf,t) \ket{\jmbf}\bra{\jmbf}
\end{equation*}
is the ``potential" term, diagonal in the computational basis. The simulation of~\eqref{eq:colloc_Ham} can be performed through a number of existing methods, such as product formulas~\cite{wiebe2011simulating}, Dyson series~\cite{kieferova2019dyson}, multiproduct formulas, or qubitization~\cite{watkins2024time}. For this work, we will ultimately use a so-called ``rescaled" Dyson series~\cite{berry2020time}, but first we perform some preconditioning by moving into the interaction picture (IP). Simulation in the IP is advantageous when $H$ is a sum of two terms and one (or both) of these terms has large norm but is fast-forwardable on its own~\cite{low2019interaction}. In principle, one could work either in the frame of $K$ or $V$. We find it advantageous to work in the $K$-frame, because it turns out that this choice decouples the query complexity of $f$ from the discretization parameter $N$. This is beneficial for a number of reasons, one of which being it is very difficult to give a precise analysis of the errors from discretization. 

Let $U_K$ stand for the unitary propagators under $K(t)$. Given a Schr\"odinger wavefunction $\ket{\Psi_t}$ generated by $H(t)$, we define the $\ket{\Psi^K_t}$ in the IP as
\begin{equation}
    \ket{\Psi_t^K} \coloneqq U_K^\dagger(t,0) \ket{\Psi_t}.
\end{equation}
From here on we drop the $0$ argument to $U_K$. It can be checked that
\begin{equation} \label{eq:IP_Schrodinger}
    i \partial_t \ket{\Psi^K_t} = H_I(t) \ket{\Psi^K_t}
\end{equation}
where
\begin{equation} \label{eq:unscaled_IP_Ham}
    H_I(t) \coloneqq U_K^\dagger(t) V(t) U_K(t).
\end{equation}
By solving~\eqref{eq:IP_Schrodinger}, the original solution of interest $\ket{\Psi_t}$ can be obtained by applying $U_K(t)$, which generally presumed efficient and is indeed so in our case. In fact,
\begin{align} \label{eq:kinetic_diag_discrete}
    U_K(t) &= \QFT^{\otimes d} \exp\left\{-i (2\pi\hat{\nmbf})^2 \int_0^t a(\tau) d\tau \right\} \QFT^{\dagger \otimes d}
\end{align}
which can be implemented using $d$ $\QFT$s and inverses, along with standard techniques for diagonal phase operations. In many cases of interest, such as the applications to optimization we consider in Section~\ref{sec:Quantum_Simulation_Algorithms_Convex_Optimization}, $a(t)$ will be a simple function and thus the integral is immediate. From here on we neglect approximation errors and computational resources associated with computation of $\int_0^t a(\tau) d\tau$ in quantum phases.

The Hamiltonian~\eqref{eq:unscaled_IP_Ham} is almost the one we wish to simulate, but first we perform time reparametrization (see Section~\ref{sec:time_dependent_Ham_Sim} for a brief discussion) as another step of preconditioning. It was recognized in~\cite{berry2020time} that, for reparametrizations that fix the ``size" of $H$ over time, the cost of simulating time dependent Hamiltonians depends on the time-averaged Hamiltonian size rather than the largest size in the simulation. This relies on knowledge of the appropriate reparametrization. In our specific context, where the potential $V(t,x) = b(t) g(x)$ is separable, this comes almost for free. Let $\Lambda \geq \norm{g}_\infty$ be an upper bound on $g$ in the domain of simulation (assuming bounded $g$), and define the mapping
\begin{equation*}
    t \mapsto \xi(t) = \int_0^t \Lambda b(\tau) d\tau = \Lambda B(t)
\end{equation*}
where $B:[0,T]\rightarrow[0,\norm{b}_1]$ is differentiable and bijective. Then the reparameterized Hamiltonian $\widetilde{H}_I$ is given by 
\begin{equation}
    \widetilde{H}_I(t) = \frac{1}{\Lambda b(\xi^{-1}(t))} H_I(\xi^{-1}(t)) = U^\dagger_K(\xi^{-1}(t)) \frac{G}{\Lambda} U_K(\xi^{-1}(t))
\end{equation}
where $G = \sum_{\jmbf\in\Ncal} g(x_\jmbf)\ket{\jmbf}\bra{\jmbf}$ is the discretized potential. Whereas in~\cite{berry2020time} the cost of calculating reparametrizations was considered via oracles, in our case this is essentially unnecessary. We assume that $\xi^{-1}(t)$ can be precomputed at negligible cost, which is typically the case in our applications of interest (see Section~\ref{sec:Quantum_Simulation_Algorithms_Convex_Optimization}). We thus write 
\begin{equation} \label{eq:interaction_Ham}
    \widetilde{H}_I(t) = U_K^\dagger(t) (G/\Lambda) U_K(t)
\end{equation}
with the redefinition $U_K(t) \mapsto U_K(\xi^{-1}(t))$ understood. Note that by construction $\|\widetilde{H}_I(t)\| \leq 1$ and is constant in time.

Simulation by Dyson series will require a certain block encoding of $H_I$, which we now define. 
\begin{definition}[\HamT~Oracle~\cite{low2019interaction}] \label{defn:HamT}
    Let $H(t)$ be a time-dependent Hamiltonian such that for some $\Lambda\in \Rmbb_+$, $\lVert H(t) \rVert \leq \Lambda$ for all $t \in [0,T]$. Let $\{t_j\}_{j=1}^M \subset [0,T]$ be a discrete set of points. The associated $\HamT$ oracle is a unitary block encoding such that
    \begin{align*}
        (\bra{0}_a\otimes I)\HamT(H) (\ket{0}_a \otimes I) = C(H)/\Lambda, 
    \end{align*}
    where
    \begin{equation*}
        C(H) \coloneqq \sum_{j=0}^{M} \ket{j} \bra{j} \otimes H(t_j)
    \end{equation*}
    is the time-controlled Hamiltonian and $a$ is an auxiliary register.
\end{definition}
\noindent Our Hamiltonian~\eqref{eq:interaction_Ham} admits such a block encoding almost immediately. 
\begin{lemma}[\HamT~Construction] \label{lem:HamT_construction}
    Let $\widetilde{H}_I$ be the interaction picture Hamiltonian of equation~\eqref{eq:interaction_Ham} with $G\ket{\jmbf} \coloneqq g(x_\jmbf)\ket{\jmbf}$, and let $\Lambda \geq \norm{g}_\infty$ be a promised upper bound. Then there exists an $\epsilon_H$-accurate block encoding $U_g$ of $G$, such that
    \begin{equation}
        \norm{\HamT(\widetilde{H}_I) - C(U_K)^\dagger U_g C(U_K)} \leq \epsilon_H,
    \end{equation}
    with probability of failure at most $\delta$, and 
    where $\HamT(\widetilde{H}_I)$ is given in Definition~\ref{defn:HamT}. The total costs are as follows.
    \begin{itemize}
        \item \textbf{Queries}:  $Q_f$ queries to a quantum oracle $O_f$, where 
        \begin{align}
           Q_f =  
        \begin{cases}
                 2 & \text{if $O_f$ is a $(\Lambda \epsilon_H)$-accurate binary oracle (Definition~\ref{defn:binary_oracle})},\\
                 \Ocal(\log(1/\delta)/\epsilon_H) & \text{if $O_f$ is a $\epsilon_H$-accurate controlled phase oracle~(Definition~\ref{defn:phase_oracle})}.
        \end{cases}
        \end{align}
        \item \textbf{Qubits}: 
        $\Ocal(\log(1/\epsilon_H) + \log(M) +\log(1/{\delta}))$ auxiliary qubits.
        \item \textbf{Gates}: $\Ocal(\log(1/{\delta})\log^2(1/\epsilon_H) + d\log(N)^2 + d\log(N)\log(M))$ additional elementary quantum gates.
    \end{itemize}
\end{lemma}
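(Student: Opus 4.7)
The plan is to construct $\HamT(\widetilde{H}_I)$ by exploiting the conjugation structure of $\widetilde{H}_I(t) = U_K^\dagger(t)(G/\Lambda)U_K(t)$. Since only the outer factors $U_K(t)$ depend on time, the time-controlled Hamiltonian satisfies
\begin{equation*}
    C(\widetilde{H}_I) \;=\; \sum_{j} |j\rangle\langle j| \otimes U_K^\dagger(t_j)\,(G/\Lambda)\,U_K(t_j) \;=\; C(U_K)^\dagger\bigl(I_t \otimes (G/\Lambda)\bigr)\,C(U_K),
\end{equation*}
where $C(U_K) \coloneqq \sum_j |j\rangle\langle j|\otimes U_K(t_j)$. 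Thus it suffices to (i) build a block encoding $U_g$ of $G/\Lambda$ on the work register (promoted trivially to act as identity on the time register), and (ii) build the time-controlled propagator $C(U_K)$. The product $C(U_K)^\dagger U_g C(U_K)$ is then, by definition, an $\HamT$-oracle for $\widetilde{H}_I$.

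First I would construct $U_g$. The operator $G$ is diagonal in the computational basis with entries $g(x_\jmbf)$, and $\lVert G\rVert \leq \Lambda$ by assumption. Given an $(\Lambda\epsilon_H)$-accurate binary oracle $O_f$, the standard construction applies $O_f$ once to load $g(x_\jmbf)/\Lambda$ into an auxiliary register, performs a controlled-rotation on a single ancilla qubit with angle $\arcsin(g(x_\jmbf)/\Lambda)$ computed reversibly from the binary string, then applies $O_f^\dagger$ to uncompute; this yields a block encoding of $G/\Lambda$ with error $\epsilon_H$ using $2$ queries to $O_f$ and $\mathcal{O}(\log(1/\epsilon_H))$ ancilla qubits for precision. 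For the phase-oracle case, I would invoke the phase-to-block-encoding reduction (Gily\'en--Su--Low--Wiebe style QSP conversion), which turns a $(\Lambda\epsilon_H/\log(1/\delta))$-accurate phase oracle $e^{ig(x)/\Lambda}$ into a block encoding of $G/\Lambda$ with accuracy $\epsilon_H$ and failure probability at most $\delta$, at a cost of $\mathcal{O}(\log(1/\delta)/\epsilon_H)$ controlled queries and $\mathcal{O}(\log^2(1/\epsilon_H)\log(1/\delta))$ extra gates.

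Next I would construct $C(U_K)$. From equation~\eqref{eq:kinetic_diag_discrete} and the time reparametrization, each $U_K(t_j)$ factors as $\QFT^{\otimes d}\exp\{-i(2\pi\hat\nmbf)^2 A(t_j)\}\QFT^{\dagger\otimes d}$, with $A(t_j) = \int_0^{\xi^{-1}(t_j)} a(\tau)\,d\tau$ a classically precomputable constant. Applying the pair of $\QFT^{\otimes d}$'s is time-independent, so they factor out of the time-controlled sum; what remains is a diagonal phase operation controlled on the time register. I would implement this phase via a standard ``compute index / controlled rotation / uncompute'' pattern that loads $A(t_j)$ from a small classical table addressed by $|j\rangle$ and then applies the diagonal phase $e^{-i(2\pi\nmbf)^2 A(t_j)}$ using $\mathcal{O}(d\log N\cdot \log M)$ arithmetic gates (plus $\mathcal{O}(d\log N\cdot \log(1/\epsilon_H))$ for the precision of the phase), with the two $\QFT^{\otimes d}$ blocks contributing $\mathcal{O}(d\log^2 N)$ gates each.

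Finally I would assemble the three pieces and track errors. Since $C(U_K)$ and $C(U_K)^\dagger$ are exact unitaries (up to phase-truncation error absorbed into $\epsilon_H$) and the only approximate ingredient is $U_g$, the triangle inequality immediately gives $\lVert \HamT(\widetilde{H}_I) - C(U_K)^\dagger U_g C(U_K)\rVert \leq \epsilon_H$. Summing qubit and gate counts across the three steps yields the stated resource bounds. The main obstacle I anticipate is bookkeeping the precision budget consistently across the two oracle models: in the phase-oracle case one must ensure that the QSP-based block-encoding construction's intrinsic polynomial approximation error is charged against $\epsilon_H$ without blowing up the query count, which is why the query bound scales as $\mathcal{O}(\log(1/\delta)/\epsilon_H)$ rather than the $\mathcal{O}(1)$ available in the binary-oracle case.
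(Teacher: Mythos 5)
Your proposal follows the paper's core route: exploit the conjugation structure $C(\widetilde{H}_I) = C(U_K)^\dagger\bigl(I\otimes G/\Lambda\bigr)C(U_K)$ so that only the time-independent diagonal operator $G/\Lambda$ needs a block encoding, construct that block encoding via load/controlled-rotation/uncompute from a binary oracle, and implement $C(U_K)$ by factoring the time-independent $\QFT^{\otimes d}$ pair out of the controlled sum. This all matches the paper's argument and the resource tally agrees.

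The one place you diverge is the phase-oracle case, and the accounting there is slightly off. The paper converts the phase oracle into an approximate binary oracle via \emph{phase estimation} (Lemma~\ref{lem:phase_to_binary}), a probabilistic routine: the failure probability $\delta$ and the $\log(1/\delta)$ query overhead are inherent to amplifying phase estimation's success probability, after which the binary-oracle construction is reused. You instead propose a direct QSP/LCU-style phase-to-block-encoding conversion, which is a legitimate alternative route, but coherent QSP conversions of a phase oracle into a block encoding are deterministic and carry no failure probability; attributing both the $\delta$ and the $\log(1/\delta)$ factor to ``the QSP conversion's intrinsic polynomial error'' is therefore not the right justification. Either adopt the paper's phase-estimation path, where the $\delta$-dependence arises naturally, or commit to a fully coherent conversion and argue for a deterministic block encoding with $\Ocal(1/\epsilon_H)$ queries (which would in fact improve on the stated bound); as written, the $\log(1/\delta)$ factor appears in your count without a supporting mechanism.
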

\begin{proof}
    If $O_f$ is a phase oracle, we first convert it to an $(\Lambda\epsilon_H)$-accurate binary oracle with $\Ocal(\delta)$-failure probability using phase estimation (see Lemma~\ref{lem:phase_to_binary}). This requires ${\Ocal}(\log(1/\delta)/\epsilon_H)$ calls to $O_f$. Additionally, $\Ocal(\log(1/\delta))$ auxiliary qubits and $\Ocal(\log(1/\delta)\log^2(1/\epsilon_H))$ elementary gates are required for this conversion (including a $\log(1/\delta)$ factor for failure probability). 
    
    Now suppose we have an $(\Lambda\epsilon_H)$-accurate binary oracle $O_f$. The construction of $U_g$ comes from Lemma~\ref{lem:bitwise_to_amp_block_encoding}. This requires two $O_f$ calls, $1 + \lceil\log_2(1/\epsilon_H)\rceil$ qubits to hold the $f$ values rescaled to the range $[-1,1]$) and precision $\epsilon_H$, a single additional qubit, $O(\log^2(1/\epsilon_H))$ gates to apply the $\arccos$ transformation, and $\lceil\log_2(1/\epsilon_H)\rceil$ controlled rotations from the values register to the single ancilla.

    It remains to quantify the gate and qubit costs associated with $C(U_K)$. In terms of qubits, there are an additional $\lceil\log(M)\rceil$ required to store the time points $t_j$. We express $C(U_K)$ as
    \begin{equation}
        C(U_K) = (\QFT \;C(D_K)\; \QFT^{\dagger})^{\otimes d},
    \end{equation}
    where $D_K(t_j)$ is a diagonal operation acting on a $2N$-state register as
    \begin{equation*}
        D_K(t_j) \ket{n} = e^{-i (2\pi n)^2 \int_0^{t_j} a(\tau) d\tau} \ket{n}.
    \end{equation*}
    There are $2d$ $\QFT$s (and inverses) for a total gate cost of $\Theta(d \log(N)^2)$ using the standard approach. The $D_K$ operation can be implemented with $\Ocal(\log N)$ controlled phase gates, each of with costs $\Ocal(1)$. Controlling each of these on the time register gives, using standard techniques, a gate count of 
    $$
    \Ocal(\log M \times \log N)
    $$
    plus $\Ocal(\log M)$ work qubits. These same work qubits can be reused for each copy of $C(D_K)$. The total resources are thus figured in the lemma statement.
\end{proof}
\noindent We emphasize an important feature of this block encoding: the independence of $Q_f$ on the discretization parameter $N$. This is the key reason our algorithm has an $N$-independent query complexity, and is a major benefit of working in the $K$-frame. 

Once the $\HamT(\widetilde{H}_I)$ block encoding is constructed, it can be fed directly into the truncated Dyson series algorithm.
\begin{theorem}[Corollary 4 from~\cite{low2019interaction}]\label{thm:low_main}
    Let $H:[0,T]\rightarrow \Cmbb^{2^{n_s}\times 2^{n_s}}$ be a time dependent Hamiltonian whose derivative is bounded in $t$. Let $\Lambda \geq \max_t \norm{H(t)}$ be a promised upper bound. Let $U(T) = U(T,0)$ be the unitary propagator for $H(t)$ on $[0,T]$. For $\tau = T/\lceil 2\Lambda T\rceil$ assume $H_j(t) \coloneqq H((j-1) \tau + t)$, $t\in[0,\tau)$ admits a block encoding $\HamT(H_j)$ as per Definition~\ref{defn:HamT}. Then for all $T> 0$ and $\epsilon_\mathrm{alg} > 0$, there is a quantum algorithm producing an operation $\widetilde{U}(T)$, with failure probability $\Ocal(\epsilon_\mathrm{alg})$ such that $\lVert\widetilde{U}(T) - U(T)\rVert \leq \epsilon_\mathrm{alg}$. The algorithm requires
    \begin{equation*}
        M \in \Ocal\left(\frac{T}{\Lambda \epsilon_\mathrm{alg}}\left[\frac{1}{T}\int_0^T \norm{\dot{H}(t)}dt + \max_{t\in[0,T]} \norm{H(t)}^2\right]\right)
    \end{equation*}
    for $M\in \Zmbb_+$ in Definition~\ref{defn:HamT}, and admits the following asymptotic complexity.
    \begin{itemize}
        \item \textbf{Queries}: $Q_H \in \Ocal\left(\Lambda T \frac{\log(\Lambda T/\epsilon_\mathrm{alg})}{\log\log(\Lambda T/\epsilon_\mathrm{alg})}\right)$ to $\HamT(H)$.
        \item  \textbf{Qubits}: $n_s + \Ocal\left(n_a + \log\left(\frac{T}{\Lambda \epsilon_\mathrm{alg}}\left[\frac{1}{T}\int_0^T \norm{\dot{H}(t)}dt + \max_t \norm{H(t)}^2\right]\right)\right)$
        \item \textbf{Gates:} $\Ocal \left(\Lambda T\frac{\log(\Lambda T/\epsilon_\mathrm{alg})}{\log\log(\Lambda T/\epsilon_\mathrm{alg})}\left(n_a + \log\left(\frac{T}{\Lambda \epsilon_\mathrm{alg}}\left[\frac{1}{T}\int_0^T \norm{\dot{H}(t)}dt + \max_t \norm{H(t)}^2\right]\right)\right)\right)$
    \end{itemize}
\end{theorem}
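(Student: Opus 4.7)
The plan is to establish the result via a segmented truncated Dyson series construction: divide $[0,T]$ into $r = \lceil 2\Lambda T\rceil$ equal subintervals of length $\tau \leq 1/(2\Lambda)$, on each of which the time-ordered exponential admits a rapidly convergent Dyson series because $\Lambda \tau \leq 1/2$. On a generic subinterval $[t_0, t_0+\tau]$ one writes
\begin{equation*}
    U(t_0+\tau, t_0) = \sum_{k=0}^{\infty} (-i)^k \int_{t_0}^{t_0+\tau} \!\!\!dt_k \cdots \int_{t_0}^{t_2} \!\!\!dt_1 \, H(t_k) H(t_{k-1}) \cdots H(t_1),
\end{equation*}
truncate at order $K$, and show that the tail bound $\lVert U - U^{(K)}\rVert \leq (\Lambda \tau)^{K+1}/(K+1)!$ forces $K = \Ocal(\log(r/\epsilon_\mathrm{alg})/\log\log(r/\epsilon_\mathrm{alg}))$ to obtain per-segment error $\epsilon_\mathrm{alg}/r$, which telescopes across the $r$ segments.

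Next I would discretize each nested time integral by a left-Riemann sum over a grid of $M$ equispaced points per segment. The key is bounding the replacement error, which is where $\HamT$ shines: applying the fundamental theorem of calculus to $H(t) = H(t_j) + \int_{t_j}^{t}\dot H(s) ds$ at each discretization point yields an error per $k$-th order term of roughly $\tau \cdot (\Lambda \tau)^{k-1} \cdot \frac{1}{M}\int_0^\tau \lVert \dot H\rVert ds$, so summing over $k \leq K$ and over the $r$ segments gives a total discretization error of order $(T/M)(\int_0^T \lVert \dot H\rVert dt + T \max_t\lVert H\rVert^2)$, which is the origin of the stated $M$ bound. This step is where the $K$-frame interaction-picture preconditioning pays off: because $\widetilde{H}_I$ has unit norm by construction, the derivative norm $\lVert \dot{\widetilde{H}}_I\rVert$ is what actually controls $M$ rather than the physical $\lVert V\rVert$ or $\lVert K\rVert$.

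Having fixed $K$ and $M$, I would then block-encode the discretized truncated Dyson series using the linear-combination-of-unitaries (LCU) paradigm: the $k$-th term is a product of $k$ copies of $\HamT$ sandwiched between time-register state preparations implementing the ordered sum over $(t_{j_1} < \cdots < t_{j_k})$. Concretely the encoding is $\mathrm{PREP}^\dagger \cdot (\sum_{k=0}^K \ket{k}\bra{k} \otimes \HamT^{\otimes k}) \cdot \mathrm{PREP}$ where $\mathrm{PREP}$ loads coefficients $(-i\tau)^k/k!$ and an ordered-index superposition. Since $\Lambda \tau \leq 1/2$, the block-encoded operator has singular values close to $1/2$, so a single round of oblivious amplitude amplification (justified by the near-unitarity of the truncated series) converts it into a genuine unitary approximation on each segment. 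Composing the $r$ segment unitaries and tallying up: each segment uses $\Ocal(K)$ queries to $\HamT$, giving the stated $Q_H = \Ocal(\Lambda T \log(\Lambda T/\epsilon_\mathrm{alg})/\log\log(\Lambda T/\epsilon_\mathrm{alg}))$ total; the ancilla count absorbs the index register of size $\log(KM) = \Ocal(n_a + \log(M))$; the per-query gate cost scales with this ancilla overhead.

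The main obstacle I anticipate is the discretization-error analysis in the second step — specifically, tracking how the local error from freezing $H$ at a grid point propagates through the nested integrals and combines across $r$ segments to yield the time-averaged $\lVert \dot H\rVert$ bound rather than the much cruder $\max_t \lVert \dot H\rVert$. A naive bookkeeping produces an extra factor of $T$ or extra factors of $\Lambda$, and one needs to carefully exploit cancellations (and the fact that each nested integral only picks up one $\dot H$ correction to leading order) to obtain the stated asymptotic. The block-encoding and amplitude-amplification steps are relatively mechanical once the LCU decomposition is fixed, and the resource counts then follow by straightforward accounting.
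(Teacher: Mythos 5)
This statement is not proven in the paper at all: it is imported verbatim as ``Corollary~4'' of Low and Wiebe's interaction-picture paper~\cite{low2019interaction}, so there is no internal proof to compare against. Your sketch is, however, a faithful reconstruction of the argument in that reference: segment $[0,T]$ into $r=\lceil 2\Lambda T\rceil$ pieces so that $\Lambda\tau\le\tfrac12$, truncate the per-segment Dyson series at order $K=\Ocal\bigl(\log(r/\epsilon_\mathrm{alg})/\log\log(r/\epsilon_\mathrm{alg})\bigr)$, discretize the ordered time integrals on an $M$-point grid fed by $\HamT$, block-encode the truncated series via LCU with an ordered-index $\mathrm{PREP}$, and repair unitarity with one round of oblivious amplitude amplification. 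The accounting $Q_H=\Ocal(rK)$ recovers the stated query count, and the qubit and gate tallies follow from the $\HamT$ ancilla size plus the $\log(KM)$ index register.

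Two small points of imprecision. First, your back-of-envelope estimate of the discretization error, ``$(T/M)\bigl(\int_0^T\lVert\dot H\rVert\,dt + T\max_t\lVert H\rVert^2\bigr)$,'' yields $M\in\Ocal\bigl(\tfrac{T}{\epsilon_\mathrm{alg}}[\ldots]\bigr)$, which is a factor $\Lambda$ larger than the stated $M\in\Ocal\bigl(\tfrac{T}{\Lambda\epsilon_\mathrm{alg}}[\ldots]\bigr)$; you correctly flag the per-segment-to-global propagation of the $\dot H$ correction as the delicate step, and indeed the missing $1/\Lambda$ must come out of the factorial suppression $(\Lambda\tau)^k/k!$ in the higher-order terms together with the fact that $\tau\sim 1/\Lambda$, not from naive per-term Riemann bounds. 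Second, your closing remark about the $K$-frame interaction-picture preconditioning making $\lVert\dot{\widetilde H}_I\rVert$ the controlling quantity belongs to the paper's downstream application of this theorem (Theorem~\ref{t: simulation_complexity}), not to the statement you were asked to prove: Theorem~\ref{thm:low_main} is a general result about any bounded-derivative time-dependent Hamiltonian and makes no reference to an interaction picture.
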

\noindent Combining this general result with the block encodings of Lemma~\ref{lem:HamT_construction}, we are ready to state our result for simulating the pseudo-spectral Hamiltonian $H$ to discretization $N$.

\begin{theorem}\label{thm:simulation_complexity}
    Let $O_f$ be a $\epsilon_f$-accurate quantum oracle for $f$, and let $g$ be an $R$-restriction of $f$ (Definition~\ref{defn:R_restriction}). Let $U(T)$ be the unitary propagator generated by the $d$-dimensional pseudo-spectral Hamiltonian
    \begin{equation*}
        H(t) = a(t) (2\pi)^2 \QFT^{\otimes d}\hat{\nmbf}^2 \QFT^{\dagger \otimes d} + b(t)G
    \end{equation*}
    for time $T$ with discretization parameter $N$. Finally, suppose one is given a promised upper bound $\Lambda \geq \norm{g}_\infty$ ($\Lambda \geq 1$ for asymptotic purposes). Then there exists a quantum algorithm that, with probability of failure $\Ocal(\epsilon_\mathrm{alg})$, produces a quantum operation $\widetilde{U}(T)$ approximating $U(T)$ to spectral distance $\epsilon_\mathrm{alg}$. The algorithm makes $Q_f$ queries to an $\epsilon_f$-accurate quantum binary oracle or an $(\epsilon_f/\Lambda)$-accurate phase oracle such that
    \begin{equation*}
        \epsilon_f^{-1} \in \Ocal\left(\frac{\norm{b}_1}{\epsilon_\mathrm{alg}} \frac{\log(\Lambda \norm{b}_1/\epsilon_\mathrm{alg})}{\log\log(\Lambda \norm{b}_1/\epsilon_\mathrm{alg})}\right).
    \end{equation*}
    The asymptotic complexity is as follows.
    \begin{itemize}
        \item \textbf{Queries}:
        \begin{align}
           Q_f =  \begin{cases}
                 \Ocal\left(\Lambda \norm{b}_1 \frac{\log(\Lambda \norm{b}_1/\epsilon_\mathrm{alg})}{\log\log(\Lambda \norm{b}_1/\epsilon_\mathrm{alg})}\right) & \text{(binary oracle)} \\
                {\Ocal}\left(\frac{\Lambda^2 \norm{b}_1^2}{\epsilon_\mathrm{alg}} \frac{\log^3(\Lambda \norm{b}_1/\epsilon_\mathrm{alg})}{\log^2\log(\Lambda \norm{b}_1/\epsilon_\mathrm{alg})}\right) & \text{(phase oracle)} 
            \end{cases}
        \end{align}
        \item \textbf{Qubits}: $d\lceil\log_2 N\rceil +\Ocal\left(\log dN^2+\log\left(\frac{\norm{a}_1}{\epsilon_\mathrm{alg}}\right)+ \log\left(\frac{\Lambda\lVert b\rVert_1}{\epsilon_\mathrm{alg}}\right)\right)$
        \item \textbf{Gates}: 
        \begin{equation}
            \begin{aligned}
            \Ocal&\Bigg(\frac{\Lambda^2\norm{b}_1^2}{\epsilon_\mathrm{alg}} \frac{\log^3(\Lambda\norm{b}_1/\epsilon_\mathrm{alg})}{\log^2\log(\Lambda\norm{b}_1/\epsilon_\mathrm{alg})} \\
            &\times\left[d\log N\log dN^2 + d\log N\log\left(\frac{1}{\epsilon_\mathrm{alg}}\left[\norm{a}_1 + \frac{\Lambda \norm{b}_1}{dN^{2}}\right]\right) + \log^3\left(\frac{\Lambda\norm{b}_1}{\epsilon_\mathrm{alg}}\right)\right]\Bigg).
        \end{aligned}
        \end{equation}
    \end{itemize}
\end{theorem}
\begin{proof}
    Following the discussion earlier in this section, we simulate the rescaled IP Hamiltonian $\widetilde{H}_I$ in~\eqref{eq:interaction_Ham} for time $T' = \Lambda\norm{b}_1$ using the truncated Dyson series method. Converting back from the IP using $U_K(T)$ costs a $\QFT^{\otimes d}$, its inverse, and $d$ phase gates, all of which are asymptotically subdominant.

    The costs of the IP Dyson algorithm are given by Theorem~\ref{thm:low_main} in terms of a $\HamT(\widetilde{H}_I)$ oracle. Our procedure involves constructing an $\epsilon_H$-accurate version of this unitary, using the quantum oracle $O_f$ with accuracy $\epsilon_f = \Lambda\epsilon_H$ as per Lemma~\ref{lem:HamT_construction}. For analysis purposes, we begin by assuming errorless oracle access to $\HamT(\widetilde{H}_I)$ on any subinterval of $[0,T']$, and derive the consequences of Theorem~\ref{thm:low_main} for our particular context. Indeed, we have the promise that
    \begin{align*}
        \max_{t'\in[0,T']} \norm{\widetilde{H}_I(t)} \leq 1 \equiv \Lambda'.
    \end{align*}
    Moreover
    \begin{align*}
        \frac{1}{T'}\int_0^{T'} \norm{\dot{\widetilde{H}}_I} dt = \frac{\norm{a}_1}{\Lambda^2\norm{b}_1} \norm{[\QFT^{\otimes d}(2\pi \hat{\nmbf})^2 \QFT^{\dagger\otimes d}, G]}
    \end{align*}
    where the commutator is loosely upper bounded by $2d (2\pi N)^{2} \norm{g}_\infty$ using standard norm inequalities. Plugging these into Theorem~\ref{thm:low_main}, with $\Lambda, T$ replaced by $\Lambda', T'$,  gives the number of $\HamT$ queries $Q_H$. Then Lemma~\ref{lem:HamT_construction} relates these to the number of required calls to $O_f$. Putting these together gives
    \begin{equation}
        Q_f = \begin{cases}
            \Ocal\left(\Lambda \norm{b}_1 \frac{\log(\Lambda \norm{b}_1/\epsilon_\mathrm{alg})}{\log\log(\Lambda \norm{b}_1/\epsilon_\mathrm{alg})}\right) & \text{(binary oracle)} \\
            \Ocal\left(\frac{\Lambda^2 \norm{b}_1^2}{\epsilon_f} \frac{\log(\Lambda \norm{b}_1/\epsilon_\mathrm{alg})}{\log\log(\Lambda \norm{b}_1/\epsilon_\mathrm{alg})}\right) & \text{(phase oracle)}
        \end{cases}
    \end{equation}
    This gives our query complexity results up to the required precision $\epsilon_f$. To determine this, let $\widetilde{V}$ be the quantum operation implementing the Dyson protocol with exact $\HamT(\widetilde{H}_I)$, and let $\widetilde{U}(T)$ be the approximate version. Then
    \begin{equation*}
        \norm{\widetilde{V} - \widetilde{U}(T)} \leq \epsilon_H Q_H
    \end{equation*}
    where $Q_H$ is the number of $\HamT$ queries. Thus $\epsilon_H \leq \epsilon_\mathrm{alg}/ Q_H$ suffices to achieve an overall error $\epsilon_\mathrm{alg}$. Taking $Q_H$ scaling as in Theorem~\ref{thm:low_main} but with our particular parameters, we find that $\epsilon_f^{-1}$ scaling as in the statement of the theorem suffices to achieve error $\Ocal(\epsilon_\mathrm{alg})$. We then plug this scaling into the phase oracle complexity to obtain the overall query scaling.

    Next we consider qubit counts. From Lemma~\ref{lem:HamT_construction}, the number of qubits for 
    $\HamT$ scales as 
    $$\Ocal(\log(\Lambda^2/\epsilon_f) + \log M)$$ 
    plus $d \lceil\log_2(N)\rceil$ qubits needed for the simulation domain. Again using Theorem~\ref{thm:low_main}, it suffices to choose $M$ in Definition~\ref{defn:HamT} such that
    \begin{align*}
        M &\in \Ocal \left(\frac{\norm{b}_1}{\Lambda\epsilon_\mathrm{alg}}\left(\frac{\norm{a}_1}{\norm{b}_1}dN^{2}\norm{g}_\infty + \norm{g}_\infty^2\right)\right) \\
        \implies \log M &\in \Ocal\left(\log dN^2 + \log\left(\frac{1}{\epsilon_\mathrm{alg}}\left[\norm{a}_1 + \frac{\norm{b}_1\norm{g}_\infty}{dN^{2}}\right]\right)\right).
    \end{align*}
    Combining these factors gives the stated qubit costs in the theorem.

    The number of elementary gates can be tallied by (a) those part of the Dyson protocol itself and (b) those involved in the construction of the $\HamT$ oracles. It is essentially an exercise of combining previous results appropriately, and upon keeping only the dominant terms one obtains a scaling as shown.

    For the case of a phase oracle, we have failure probability of $\delta$. To achieve a failure probability of $\Ocal(\epsilon_\mathrm{alg})$ it suffices to set $\delta = \Ocal(\epsilon_\mathrm{alg}/ Q_H)$, which adds an additional $\Ocal(\log(\Lambda\lVert b\rVert_1/\epsilon_\mathrm{alg}))$ factor to the gate complexity and additive factor to the qubit count.
\end{proof}

\subsection{Error Analysis and Discretization Spacing} \label{subsec:error_analysis_hamiltonian}
    Despite its importance to quantum simulation of continuous systems, concrete statements of the error and complexity of the pseudo-spectral method appear hard to find, if not absent, in the literature.
    Many  existing results only consider the one or few dimensional setting \cite{lubich2008quantum}. There are some results tackling spectral methods for arbitrary dimensions, but do not focus on evolution equations \cite{Childs2021highprecision, gross2023sparsespectralmethodssolving}. This may not be surprising as most results in applied mathematics are not concerned with solving PDEs with asymptotic dimension. However, quantifying the $d$ dependence is essential for important applications such as many-body physics, and also for our applications to high-dimensional optimization (Section~\ref{sec:Quantum_Simulation_Algorithms_Convex_Optimization}).

    In this section, we provide a rigorous analysis of the simulation errors arising from the pseudo-spectral method, under some reasonable regularity assumptions of the potential and wavefunction. We then reduce the dependencies on the wave function into mild dependencies on the potential.
    This leads to a proof of the qubit/grid-size requirements specified in Theorem \ref{thm:master_simulation_thm}. 

    We begin by discussing two ways to project a vector $\phi\in\Hcal$ into the space $\Hcal_N$ of equation~\eqref{eq:truncated_fourier_subspace}, both of which are relevant for this work. Let $P_N:\Hcal\rightarrow\Hcal_N$ denote the orthogonal projection onto $\Hcal_N$ with respect to the inner product $\langle\cdot,\cdot\rangle$
    \begin{equation*}
        P_N[\phi] \coloneqq \sum_{\nmbf\in\Ncal} \langle \chi_{\nmbf}, \phi\rangle\, \chi_{\nmbf},
    \end{equation*}
    which a truncated Fourier series. Note that the analogous linear map $I_N:\Hcal\rightarrow\Hcal_N$ equipped with $\langle \cdot, \cdot \rangle_\Gcal$ in \eqref{eq:discrete_inner_product} defined as
    \begin{equation} \label{eq:interpolant}
        I_N[\phi] \coloneqq \sum_{\nmbf\in\Ncal} \langle  \chi_{\nmbf}, \phi\rangle_\Gcal \; \chi_{\nmbf}
    \end{equation}
    is indeed a projection, but is not orthogonal (its not even symmetric as an operator on $\Hcal$). The square brackets are often omitted for convenience in what follows. These projections can be equivalently understood as minimizing the ``distance" in the subspace $\Hcal_N$ with respect to their corresponding induced (semi)norms:
    \begin{equation*} 
        P_N\phi = \argmin_{\psi \in \Hcal_N} \norm{\phi - \psi}, \qquad I_N\phi  = \argmin_{\psi \in \Hcal_N} \lVert \phi - \psi \rVert_{\Gcal}.
    \end{equation*}
    The output of $I_N$ is called the \textit{Fourier interpolant}. Observe that $I_N \phi$ only depends on the values of $\phi$ on the grid $\Gcal$, and in particular $I_N \phi = I_N \tilde{\phi}$ whenever $\phi$ and $\tilde{\phi}$ match on $\Gcal$. 

    One can understand this choice of name by looking in the coordinate representation. Substituting in the expression for $\langle\cdot,\cdot\rangle_\Gcal$ directly into~\eqref{eq:interpolant}, 
    \begin{align*}
        I_N[\phi](x) &=\sum_{\nmbf \in \Ncal}\left((2N)^{-d}\sum_{\jmbf \in \Ncal} \overline{\chi}_{\nmbf}(x_\jmbf) \phi(x_\jmbf)\right) \chi_{\nmbf}(x)\\
        &=\sum_{\jmbf \in \Ncal}\phi(x_\jmbf)\left((2N)^{-d}\sum_{\nmbf \in \Ncal}\chi_{\nmbf}(x-x_\jmbf)\right)\\
        &=\sum_{\jmbf \in \Ncal}\phi(x_\jmbf)N^{-d}\prod_{k=1}^d\sin\left(\pi N\frac{x_k- x_{j_k}}{2}\right)\cot\left(\pi\frac{x_k - x_{j_k}}{2}\right).
    \end{align*}
    One can show that on the grid $\prod_{k=1}^d\sin\left(\pi N\frac{x_k- x_{j_k}}{2}\right)\cot\left(\pi\frac{x_k - x_{j_k}}{2}\right)$ is the Kronecker delta for $x_\jmbf$ or equivalently a Lagrange basis function. Hence the interpolant agrees with the original function on the grid $\Gcal$, while attempting to interpolate over $S$.

This leads to a natural definition of interpolating a set of values on $\Gcal$.
\begin{definition} \label{defn:fourier_interpolant}
    Let $\ket{F} \in \Cmbb^\Ncal$ be a normalized quantum state of the form
    \begin{align*}
        \ket{F} = \frac{1}{\sqrt{(2N)^d}}\sum_{\jmbf\in\Ncal} F(\jmbf) \; \ket{\jmbf},
    \end{align*}
    where $F : \Ncal \rightarrow \Cmbb$.
    Then the \emph{Fourier interpolant} $I_N\ket{F} : S \rightarrow \Cmbb$ of $\ket{F}$, is a function defined as
    \begin{align} \label{eq:fourier_interpolated_state}
        [I_N\ket{F}](x) \coloneqq \sum_{\nmbf\in\Ncal}\langle \chi_{\nmbf}, F\rangle_{\Gcal} \, \chi_{\nmbf}(x)
    \end{align}
    where $\langle\chi_\nmbf, F\rangle_\Gcal$ is given by~\eqref{eq:discrete_inner_product}. Equivalently, $I_N:\Cmbb^\Ncal \rightarrow \Hcal_N$ is the unique unitary operation such that $I_N\ket{F} = \chi_\nmbf$ for $\ket{F}$ such that $F(\jmbf) = \chi_\nmbf(x_\jmbf)$.
\end{definition}
\noindent Note that for $x_\jmbf \in \Gcal$, we have $I_N[\ket{F}](x_\jmbf) = F(\jmbf)$, which motivates the term ``interpolation." This notion gives us a reasonable mechanism to compare a discretized wavefunction with a continuous counterpart, and thus provide a natural means of quantifying errors associated with digital simulation. However, it should be stressed that one cannot produce this interpolation computationally without performing state tomography, which scales poorly with system size. Thus, this latter notion of interpolation is a theoretical tool rather than an algorithmic one.

The main difference between $P_N$ and $I_N$ is whether the Fourier transform $\langle \cdot, \chi_{\nmbf}\rangle$ or discrete Fourier transform $\langle \cdot, \chi_{\nmbf}\rangle_{\Gcal}$ is used. The error between the two is due to aliasing, which results from $\langle \cdot, \chi_{\nmbf}\rangle_{\Gcal}$ being unable to distinguish modes that differ by $2N$ in a coordinate.
This leads us to define three notions of approximation error for $\phi \in \Hcal$:
\begin{align} \label{eq:alias_truncation}
    \underbrace{\lVert \phi - I_N\phi\rVert}_{\text{(Interpolation Error)}} \leq \underbrace{\lVert \phi- P_N\phi\rVert}_{\text{(Truncation Error)}} + \underbrace{\lVert P_N\phi - I_N\phi\rVert}_\text{(Aliasing Error)},
\end{align}
to which we derive the following general result.
\begin{theorem}
\label{thm:total_interpolation_error_l2}
     Suppose $\phi \in \Hcal^m$ for positive integer $m > \max\{d/2,2\}$, and that the periodic extension of $\phi$ is in $C^{1}(\Rmbb^d)$. Then, 
     \begin{align*}
        \lVert \phi - I_N\phi\rVert \lesssim\left(\frac{\pi}{4}\right)^{d/4} \frac{1}{\sqrt{(m-d/2) \Gamma(d/2)} N^{m}} \abs{\phi}_{\Hcal^m} + \frac{\lvert \phi \rvert_{\Hcal^1}}{N},
    \end{align*}
    where the first term bounds the aliasing error and the second term bounds the truncation error (equation~\eqref{eq:alias_truncation}).
\end{theorem}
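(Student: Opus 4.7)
The plan is to bound the two summands of the triangle inequality~\eqref{eq:alias_truncation} separately, using Parseval together with Sobolev control of Fourier tails. Write $\widehat{\phi}_\nmbf \coloneqq \langle \chi_\nmbf, \phi\rangle$. The truncation term falls out immediately: Parseval and the Fourier characterization of derivatives give $\lvert\phi\rvert_{\Hcal^1}^2 = (2\pi)^2 \sum_{\nmbf} \lVert\nmbf\rVert^2 \lvert\widehat{\phi}_\nmbf\rvert^2$, and since every $\nmbf \notin \Ncal$ satisfies $\lVert\nmbf\rVert_2 \geq \lVert\nmbf\rVert_\infty \geq N$,
\begin{equation*}
    \lVert\phi - P_N\phi\rVert^2 = \sum_{\nmbf \notin \Ncal}\lvert\widehat{\phi}_\nmbf\rvert^2 \leq \frac{1}{(2\pi N)^2} \sum_{\nmbf}(2\pi)^2 \lVert\nmbf\rVert^2 \lvert\widehat{\phi}_\nmbf\rvert^2 = \frac{\lvert\phi\rvert_{\Hcal^1}^2}{(2\pi N)^2},
\end{equation*}
yielding the second summand of the theorem once the $2\pi$ is absorbed into $\asymp$.

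For the aliasing term, I would first derive the aliasing identity $\langle \chi_\nmbf, \phi\rangle_\Gcal = \sum_{\mathbf{k} \in \Zmbb^d}\widehat{\phi}_{\nmbf + 2N\mathbf{k}}$ for $\nmbf \in \Ncal$. Substituting the uniformly convergent Fourier series of $\phi$ into the discrete inner product legitimizes exchanging the sum over modes with the grid sum, and the discrete orthogonality that $(2N)^{-d}\sum_{\jmbf \in \Ncal}\chi_{\mmbf}(x_\jmbf)$ equals $1$ when $\mmbf \in 2N\Zmbb^d$ and $0$ otherwise collapses the identity. Parseval applied to $I_N\phi - P_N\phi = \sum_{\nmbf\in\Ncal}\bigl(\sum_{\mathbf{k}\neq 0}\widehat{\phi}_{\nmbf+2N\mathbf{k}}\bigr)\chi_\nmbf$ then produces $\lVert I_N\phi - P_N\phi\rVert^2 = \sum_{\nmbf \in \Ncal}\lvert\sum_{\mathbf{k}\neq 0}\widehat{\phi}_{\nmbf+2N\mathbf{k}}\rvert^2$. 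Splitting the inner sum by Cauchy--Schwarz with weights $\lVert\nmbf+2N\mathbf{k}\rVert^{\pm m}$, and using the bijection $(\nmbf,\mathbf{k}) \in \Ncal \times (\Zmbb^d\setminus\{0\}) \mapsto \nmbf + 2N\mathbf{k} \in \Zmbb^d \setminus \Ncal$ to collapse the outer sum, yields
\begin{equation*}
    \lVert I_N\phi - P_N\phi\rVert^2 \leq \Bigl(\max_{\nmbf \in \Ncal}\sum_{\mathbf{k}\neq 0}\lVert\nmbf+2N\mathbf{k}\rVert^{-2m}\Bigr)\sum_{\nmbf \in \Zmbb^d\setminus\Ncal}\lVert\nmbf\rVert^{2m}\lvert\widehat{\phi}_\nmbf\rvert^2,
\end{equation*}
whose second factor is controlled by a dimension-dependent multiple of $\lvert\phi\rvert_{\Hcal^m}^2/(2\pi)^{2m}$ via the multinomial identity $\lVert\nmbf\rVert^{2m} = \sum_{\lVert\alpha\rVert_1 = m}\binom{m}{\alpha}(\nmbf^\alpha)^2$.

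The main obstacle, from which the delicate constants $(\pi/4)^{d/4}/\sqrt{(m-d/2)\Gamma(d/2)}$ of the statement arise, is to estimate the lattice tail $T(\nmbf) \coloneqq \sum_{\mathbf{k} \neq 0}\lVert\nmbf+2N\mathbf{k}\rVert^{-2m}$ uniformly in $\nmbf \in \Ncal$ by integral comparison. Geometrically, the shifted sublattice $\{\nmbf+2N\mathbf{k}:\mathbf{k}\neq 0\}$ lies outside the ball $B_2(0,N)$, and assigning each point a surrounding cube of volume $(2N)^d$ on which $\lVert\cdot\rVert^{-2m}$ is controlled by monotonicity gives $T(\nmbf) \leq (2N)^{-d}\int_{\lVert\mathbf{y}\rVert\geq N}\lVert\mathbf{y}\rVert^{-2m}d\mathbf{y}$. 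Spherical coordinates with unit-sphere area $\omega_{d-1} = 2\pi^{d/2}/\Gamma(d/2)$ evaluate the integral to $\omega_{d-1}N^{d-2m}/(2m-d)$, converging exactly under the hypothesis $m > d/2$; combining with the $(2N)^{-d}$ density factor, taking the square root, and using $2m-d = 2(m-d/2)$ together with the identity $\pi^{d/4}/2^{d/2} = (\pi/4)^{d/4}$ reproduces the stated prefactor. The hard part is not the qualitative $O(N^{-m})$ rate but the geometric bookkeeping required to track constants tightly enough to match $(\pi/4)^{d/4}/\sqrt{\Gamma(d/2)}$ and the convergence factor $(m-d/2)^{-1/2}$ rather than settling for a looser bound with hidden $m!$ or $d^m$ overheads from the multinomial step; the combined hypothesis $m \geq \max\{d/2,2\}$ is precisely what ensures convergence of the radial integral while simultaneously allowing the use of the $\Hcal^m$ seminorm on the true solution of Problem~\ref{prob:restricted_schr}.
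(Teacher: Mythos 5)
Your decomposition and both legs of the estimate follow the paper's proof in all essentials: triangle inequality into truncation plus aliasing as in~\eqref{eq:alias_truncation}, Parseval with the $\lVert\nmbf\rVert \geq N$ tail for truncation (this is the paper's Lemma on $\lVert P_N\phi - \phi\rVert$), then the aliasing identity of Lemma~\ref{lem:alias_inner_prod}, Cauchy--Schwarz with weights $\lVert\nmbf+2N\kmbf\rVert^{\pm m}$, the disjoint-image observation to collapse the double sum into a single Sobolev-weighted Fourier sum, and a lattice-tail estimate for the max factor. The one genuine point of departure is how the lattice tail $\sum_{\kmbf\neq 0}\lVert\nmbf+2N\kmbf\rVert^{-2m}$ is bounded: the paper first reduces to the extremal $\nmbf = -N\mathbf{1}$ and then does a radial shell decomposition of $\Zmbb^d$, counting $\Ocal(\Omega_d r^{d-1})$ points per shell and applying an integral test to the resulting one-dimensional sum; you instead do a direct $d$-dimensional volume comparison, assigning each lattice point a cube of volume $(2N)^d$ and comparing the sum to $\int_{\lVert y\rVert\geq N}\lVert y\rVert^{-2m}\,dy$. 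Both give $\Omega_d N^{d-2m}/((2N)^d(2m-d))$ and hence the identical prefactor after simplifying $\pi^{d/4}2^{-d/2}=(\pi/4)^{d/4}$; your route is arguably cleaner because it avoids arguing about which $\nmbf$ attains the max, though, as you implicitly acknowledge, the monotonicity step (choosing a cube on which $\lVert\cdot\rVert^{-2m}$ dominates its value at the lattice point, while keeping cubes disjoint and outside $B_2(0,N)$) would need to be spelled out carefully to be fully rigorous, the same level of care the paper's shell-count step leaves implicit. Your caution about the multinomial identity is well taken but moot here: the paper works throughout in the ``Bessel'' normalization $\lvert\phi\rvert_{\Hcal^m}^2 = (2\pi)^{2m}\sum_\nmbf\lVert\nmbf\rVert^{2m}\lvert\widehat{\phi}_\nmbf\rvert^2$ (this is the identity it records explicitly for $\Hcal^2$ in Section~\ref{sec:Hilbert_spaces}), so no $m!$ or $d^m$ overhead enters, and the final constants match the statement exactly.
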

\noindent The proof of this result is in Appendix~\ref{subapp:state_interpolation_bound}.

As shown in Corollary~\ref{coro:Our_actual_H} of the prior section, the pseudo-spectral method approximately solves Problem~\ref{prob:restricted_schr} by simulating a discretized Hamiltonian. In the analysis that follows, we will require a slight strengthening of this result that applies in the continuous setting. Specifically, we use the tools introduced above to form an interpolated version of the discretized equation (from Corollary \ref{coro:Our_actual_H}) solved by the pseudo-spectral method.
\begin{lemma}
\label{lem:interpolating_collocation}
    Let $\Psi(t)$ be the Fourier interpolation of the output quantum state $|\Psi_t\rangle$ in \eqref{eq:discrete_quantum_state} of the pseudo-spectral method (Corollary~\ref{coro:Our_actual_H}). Then, $\Psi$ satisfies
    \begin{align}
    \label{eq:continuous_collocation_eqn}
    i\partial_t\Psi(x, t) = -a(t)(\Delta\Psi)(x, t) + I_N[V\Psi](x, t).
    \end{align}
\end{lemma}
\begin{proof}
We begin with the solution $\widehat{\Psi}$ to the pseudo-spectral Hamiltonian in Fourier space from Lemma~\ref{lem:FT_pseudo_equations}. 
We have that
\begin{align*}
    \ket{\widehat{\Psi}_t} = \sum_{\nmbf \in \Ncal} \widehat{\Psi}_\nmbf(t) \ket{\nmbf}.
\end{align*} 
Using Definition~\ref{defn:fourier_interpolant} and the statement of Lemma~\ref{lem:FT_pseudo_equations}, which we recall below:
\begin{align*}
    &i\partial_t\widehat{\Psi}_{\nmbf}(t) = a(t) (2\pi \mathbf{n})^2 \widehat{\Psi}_{\mathbf{n}}(t) + \sum_{\mathbf{m} \in \Ncal}\langle \chi_{\mathbf{n}}, V(t)\chi_{\mathbf{m}}\rangle_{\Gcal}\widehat{\Psi}_{\mathbf{m}}(t),
\end{align*}
we have
\begin{align*}
    \sum_{\nmbf \in \Ncal}i\partial_t\widehat{\Psi}_{\nmbf}(t)\chi_{\nmbf}(x) &= a(t)\sum_{\nmbf\in \Ncal}(2\pi \mathbf{n})^2\widehat{\Psi}_{\mathbf{n}}(t)\chi_{\nmbf}({x}) + \sum_{\mathbf{n} \in \Ncal}\sum_{\mathbf{m} \in \Ncal}\langle \chi_{\mathbf{n}}, V(t)\chi_{\mathbf{m}}\rangle_{\Gcal}\widehat{\Psi}_{\mathbf{m}}(t)\chi_{\nmbf}(x)\\
    \implies i\partial_t\Psi(x, t) &= -a(t)(\Delta\Psi)(x, t) + \sum_{\mathbf{n} \in \Ncal}\langle \chi_{\mathbf{n}}, V( t)\Psi\rangle_{\Gcal}\chi_{\nmbf}(x).
\end{align*}
Using the definition of $I_N[V\Psi]$ in \eqref{eq:fourier_interpolated_state} in the last line gives the statement of the lemma.
\end{proof}
\noindent Thus, the discretized dynamics furnish an approximate Hamiltonian evolution on the continuous space via interpolation. Comparing Lemma~\ref{lem:interpolating_collocation} with the exact~\ref{eq:real_schrodinger} equation, we see that the only difference is the interpolated potential operator $I_N[V \cdot]$ rather than $V$ itself. 

The term $I_{N}[V \Psi]$ presents an additional challenge that is not present in some well-known applications of pseudo-spectral methods, e.g., Elliptic PDEs.
Even if $V$ is, say, a Fourier polynomial of degree $\leq N$, it is still not the case that the pseudo-spectral method is exact, because $V$ will generate transitions to higher Fourier modes in the state $\Psi$.

The following theorem bounds the $L_2$ distance between the (interpolated) output of the pseudo-spectral method and the true solution $\Phi$ in terms of the solution regularity. Here, one sees that the key quantities are the Sobolev norms, where the dominating term comes from the $\Hcal^k$ norm with $k = \Theta(d)$. We emphasize that it is critical to have a result that is valid when \emph{the dimension of the PDE is an asymptotic parameter}, as the one below.
\begin{theorem}[Multi-dimensional Generalization of Theorem 1.8 \cite{lubich2008quantum}]
\label{thm:collocation_bound_real_space}
Let $\Phi(t)$ denote the solution to 
\begin{equation}
    i \partial_t \Phi(x, t) = [- a(t) \Delta + V(x,t)]\Phi(x, t),
\end{equation}
over $S = [-\frac12, \frac12]^d$,
with initial condition $\Phi(x, 0) = \Phi_0 \in \Hcal_{N}$. If $\forall t \in [0, T]$,  $\Phi(t) \in \Hcal^{m+2}$ for $m > \max(d/2, 2)$, and $\Phi(t)$  has a  periodic extension, in $x$, that is in $C^{1}(\Rmbb^d)$, then the $L_2$-distance between the true solution $\Phi(t)$ and the pseudo-spectral approximation $\Psi(t)$ is bounded as
\begin{align*}
    \lVert \Phi(t) - \Psi(t)\rVert \lesssim\norm{a}_{1}\cdot \max_{t \in [0 ,T]}\left[\left(\frac{\pi}{4}\right)^{d/4}\frac{2\lvert \Phi(t) \rvert_{\Hcal^{m+2}} + \lvert \Phi(t)\rvert_{\Hcal^m}}{\sqrt{(m-d/2) \Gamma(d/2)} N^{m}} + \frac{2\lvert \Phi(t) \rvert_{\Hcal^3} + \lvert \Phi(t) \rvert_{\Hcal^1}}{N}\right].
\end{align*}
\end{theorem}

The above result by itself already provides a rigorous guarantee on the output of the pseudo-spectral method for the Schr\"odinger equation. However, as currently written, it still depends on the regularity of the true solution $\Phi$ to the problem, limiting practical applicability. We show that the explicit dependence on $\Phi$ can be removed if the potential is  assumed to be Lipschitz continuous. These results (Theorems \ref{thm:collocation_trace_distance} \& \ref{thm:collocation_trace_distance_aperiodic}) lead to bounds that are in terms of parameters that are significantly easier to bound and are usually even assumed to be given. Still, an important point to note  is that for Theorem \ref{thm:collocation_bound_real_space} to ensure $\Ocal\left(\epsilon\right)$ error with $N$ going only as poly $d$, it suffices for $\Phi$'s $d$-th Sobolev norm to go at most as $d^{cd}$ for some constant $c$.

The reader may also notice that Theorem \ref{thm:collocation_bound_real_space} requires the initial wavefunction $\Phi_0$ to be an element of the space of finite Fourier polynomials $\Hcal_{N}$, where $N$ is the grid size in a single dimension. This restriction can  be easily removed, as we show below.

\begin{lemma}
\label{lem:gen_periodic_initial_condition}
Consider the setting of Problem~\ref{prob:restricted_schr}. 
There exists a modified problem with a ``proxy'' initial condition $\widetilde{\Phi}_{0} \in \Hcal_N$ such that the solutions to Problem~\ref{prob:restricted_schr} at time $T$ with and without the modified initial condition  are $\Ocal\left(\epsilon\right)$ in $L_2$ distance. 

The requirement on $N$ for the ``proxy'' initial wave function $\widetilde{\Phi}_0$ can be either of the following:
\begin{align*}
    N = 
    \begin{cases}
        \Ocal\left(\frac{\lvert \Phi_0\rvert_{\Hcal_1}}{\epsilon}\right) & \text{ if}\quad \widetilde{\Phi}_0 = \frac{P_N\Phi_0}{\lVert P_N\Phi_0\rVert}, \\
        \Ocal\left( \frac{d^{-1/4}\lvert \Phi_0\rvert_{\Hcal^d}^{1/d} + \lvert \Phi_0 \rvert_{\Hcal^1}}{\epsilon} \right) & \text{ if}\quad \widetilde{\Phi}_0 = \frac{I_N\Phi_0}{\lVert I_N\Phi_0\rVert}.
    \end{cases}
\end{align*}
\end{lemma}

The reason for considering two potential candidate proxy wavefunctions is that one or the other may be easier to load. Specifically, if $\Phi_0$ is such that we have closed form expression for its Fourier transform, then $P_N\Phi_0$ might be tractable. Otherwise, we can just load $\Phi_0$ on a grid to get $I_N\Phi_0$. Obviously for large $d$, we will want the initial wave function to be separable, so that preparing the state is tractable.

As mentioned earlier, our main Theorem \ref{thm:master_simulation_thm} applies  to potentials that are only assumed to be Lipschitz continuous on $S$ with continuous periodic extensions over the Torus. The Lipschitz-continuity assumption is generally considered a mild assumption for black-box problems, e.g. optimization problems. We can swap out the continuous periodic-extension assumption for an alternative assumption that is believable in practice (Theorem \ref{thm:collocation_trace_distance_aperiodic}).

Also, a footnote in Theorem \ref{thm:master_simulation_thm} mentions that, due to a technicality, we assume that the true wave function has a periodic extension that is in $C^{1}(\Rmbb^d)$. However, since by Theorem \ref{thm:solution_existence} we can guarantee high-regularity by ensuring a highly-regular initial condition, the assumption that the initial wave has 
a periodic extension in $C^{1}(\Rmbb^d)$ is also believable in practice. The footnote also mentions that we need $\rightarrow \partial^{\alpha}_x\Phi(x, t)$ to be continuous $\forall x \in S$ and $\alpha \in \Nmbb^{d}$. This to be able to exchange some limits and apply Lemma \ref{lem:sobolev_growth_wavefunc} below. This condition is also believable given the regularity of $a(t), b(t) \in C^{1}([0, T])$.

The following result is the Sobolev-growth bound and generalizes the results of \cite{bouland2023quantum} to our setting. 
We emphasize that, in our setting, it is important to account for scaling in $d$ for parameters that, in other applications, are considered to be ``absolute constants.'' Note that in the below result,  we will require $\Phi_0 \in \Hcal^{(2m + 2)+\lceil \frac{d}{2} \rceil + 1}$. However, this is trivially satisfied for all $m$, when $\Phi_0$ is a Fourier polynomial, i.e. in $ \Hcal_N$ for any $N$. The previous lemma allows for a reduction from weaker initial state assumptions.
\begin{lemma}[Sobolev Growth Bound -- Multidimensional Generalization of \cite{bourgain1999growth}]
\label{lem:sobolev_growth_wavefunc}
    Suppose $S$ is $d$ dimensional and $\Phi(t)$ is the  classical solution to Problem \ref{prob:restricted_schr}.  Furthermore, suppose that $g(x) \in \Hcal^{(2m + 2)+\lceil \frac{d}{2} \rceil + 1}$,  $\Phi_0 \in \Hcal^{(2m + 2)+\lceil \frac{d}{2} \rceil + 1}$, and that $\forall \alpha \in \Nmbb^{d}$ with $\lVert \alpha \rVert_1 \leq 2m+2$, we have 
 $t \rightarrow \partial^{\alpha}_x\Phi(x, t)$ is continuous $\forall x \in S$. Then
    \begin{align*}
         \lvert \Phi(t)\rvert_{\Hcal^s} \leq 2{\lvert \Phi_0\rvert_{\Hcal^{s}}} + \left(\frac{2\sum_{k=1}^{s}\binom{s}{k}\lvert g\rvert_{\Hcal^k}\lVert b\rVert_1}{s}\right)^s
    \end{align*}
    for all $s \leq m$.
\end{lemma}
The above, by itself, already rephrases the quantities in Theorem \ref{thm:collocation_bound_real_space} in terms of quantities that are usually inputs to the problem, i.e. the regularity of $f$. Combining the two results, leads to a bound on the pseudo-spectral performance for highly-regular potentials (Appendix \ref{subapp:potential_regularity}). 
However, if $f$ is only assumed to be Lipschitz, then it may not be immediately clear that this makes the problem any easier. Specifically, the above appears to require $f$ to have finite Sobolev norms of high degree.

In the subsection that follows, we will remove the regularity assumption on $f$ and prove the qubit-count component of Theorem \ref{thm:master_simulation_thm}. Note Section \ref{subsec:simulation_algorithm} already covered the query complexity and gate count. The qubit-count result will appear as a trivial consequence of multiple theorems (specifically, Theorems \ref{thm:collocation_trace_distance} \& \ref{thm:energy_error_bound}) that we prove separately. The main mathematical tool we utilize to cover the Lipschitz case is mollification. Admittedly, we note that the result appears to be overly pessimistic, i.e. number (qu)bits per dimension growing with $d$. However, it seems challenging to get a better theoretical bound using such a mild assumption as Lipschitzness of the potential. If we can assume some addition regularity on the problem, then we can improve the results significantly (Appendix \ref{app:improved_regularity}).

\subsubsection{Guarantees for Lipschitz Potentials}
\label{sec:lipschitz-guarantee}

We start by recalling the definition of the rectangular mollifier function.
\begin{definition}[Rectangular Mollifier]
\label{defn:mollifer}
Suppose $0 < \sigma < \frac12$. The $d$-dimensional rectangular mollifier over $S$ is the following $C^{\infty}(S)$ function:
\begin{align*}
    \mathcal{M}_{\sigma, d}(x) = \begin{cases}
        (\vartheta\sigma)^{-d}\prod_{j=1}^{d}e^{-\frac{1}{1-x_j^2/\sigma^2}} & x \in (-\sigma,\sigma)^d\\
        0 & \text{otherwise}
    \end{cases},
\end{align*}
and satisfies $\lVert \mathcal{M}_{\sigma, d}\rVert_1 = 1$. The value of $\vartheta$ is the normalization constant of the Mollifier with $\sigma =1$.
\end{definition}
It should be apparent that periodic extension of $\mathcal{M}_{\sigma, d}$ is in $C^{\infty}(\Rmbb^d)$.

Our approach for controlling large Sobolev norms of the potential is to instead utilize a smooth proxy function that can be shown to be close to the original potential. Note that we do not need to explicitly construct this function, its existence suffices. This function is the result of the periodic convolution of the $R$-restriction of the potential and $\mathcal{M}_{\sigma, d}$. The bound on the Fourier coefficients of the resulting function comes from the convolution theorem. This is a common technique in signal processing to combat the Gibbs phenomenon \cite{tadmor2007filters}.

\begin{lemma}[Periodic Mollification]
\label{lem:mollification}
Consider the $R$-restriction $g$ of a $G$-Lipschitz function in $p$-norm $f$, $p\geq 1$. Also define the set
\begin{align*}
 D := \begin{cases}
      S & \text{if}~$g$ ~\text{satisfies periodic boundary conditions,} \\ 
     B_{\infty}(0, \frac12 - \sigma) & \text{otherwise}
\end{cases}
\end{align*}
for any $0 < \sigma < \frac12$. 

There exists a $C^{\infty}(S)$ function $g_{\sigma}$ satisfying the following conditions:
\begin{itemize}
    \item $\lvert g(x) - g_{\sigma}(x) \rvert = \Ocal\left(d^{1/p}GR\sigma\right), \forall x \in D,$
    \item  if $f$ is convex, $g_{\sigma}$ is convex on $D$, and
    \item  for $m \leq d + \Ocal(1)$,
    $$\lvert g_{\sigma}\rvert_{\Hcal^m} =\Ocal\left(\lVert g \rVert_{\infty}\left(\frac{Cd}{\sigma}\right)^{3d}\right),$$
    where $C$ is an absolute constant.
\end{itemize}
\end{lemma}

Hence, if we use $g_{\sigma}$ in place of $g$, then we obtain an asymptotic bound on Sobolev norms of $d+\Ocal(1)$-order, which is required for Lemma \ref{lem:sobolev_growth_wavefunc} to be useful for Theorem \ref{thm:collocation_bound_real_space}. If $g$ satisfies PBC, then we can arbitrarily drive down the uniform error between the two over $S$. This leads to the following result, which proves one half of the discretization error component of  Theorem \ref{thm:master_simulation_thm}.

\begin{theorem}[Pseudo-spectral Approximation for Lipschitz, PBC Potentials]
\label{thm:collocation_trace_distance}
    Suppose $f : \mathcal{X} \rightarrow \Rmbb$ is a $G$-Lipschitz continuous function in $p$-norm with $[-R,R]^d\subseteq \mathcal{X} \subseteq \Rmbb^d$ and $p \geq 1$, and let $g$ be its $R$-restriction such that $g$ satisfies periodic boundary conditions. Suppose the assumptions of Lemma \ref{lem:sobolev_growth_wavefunc} and Theorem \ref{thm:collocation_bound_real_space} are satisfied. Let $\Psi(T)$ be the pseudo-spectral approximation to $\Phi(T)$ at time $T$, where $\Phi$ solves Problem \ref{prob:restricted_schr}. Then $\norm{\Psi(T) - \Phi(T)} = \Ocal(\epsilon)$ for a choice of discretization $N$ such that
    \begin{align*}
        \log(N) = \Ocal\left(d\log_2(\lVert a \rVert_1^{1/d}[\max(\lvert\phi \rvert_{\Hcal^{d}}, \lvert\phi \rvert_{\Hcal^{3}})^{1/d}  +\lVert b \rVert_1GRd]/\epsilon)\right).
    \end{align*}
\end{theorem}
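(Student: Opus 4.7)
The plan is to introduce a mollified version of the potential, apply the Sobolev-norm-based pseudo-spectral error bound of Theorem~\ref{thm:collocation_bound_real_space} to the mollified problem, and use the Sobolev growth bound together with the mollifier estimates to convert wavefunction-regularity dependencies into explicit dependencies on the inputs. Concretely, let $g_\sigma$ be the periodic mollification of $g$ from Lemma~\ref{lem:mollification} with parameter $\sigma \in (0, \tfrac12)$ to be chosen later, and let $\Phi_\sigma(t)$ be the exact solution to the \ref{eqn:real_schrodinger} equation with the original $g$ replaced by $g_\sigma$ (and the same initial data $\Phi_0$). Let $\Psi_\sigma(t)$ be the pseudo-spectral approximation to $\Phi_\sigma$. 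By the triangle inequality,
\begin{equation*}
\|\Phi(T) - \Psi(T)\| \;\le\; \|\Phi(T) - \Phi_\sigma(T)\| \;+\; \|\Phi_\sigma(T) - \Psi_\sigma(T)\| \;+\; \|\Psi_\sigma(T) - \Psi(T)\|,
\end{equation*}
and I will aim to make each term $\mathcal{O}(\epsilon)$.

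For the first and third terms I would use a standard Duhamel/perturbation argument for Schr\"odinger dynamics. Since the two evolutions differ only by the replacement of the multiplication operator $g$ by $g_\sigma$, with uniform operator norm $\|g - g_\sigma\|_\infty = \mathcal{O}(d^{1/p} GR \sigma)$ on $S$ by Lemma~\ref{lem:mollification} (using PBC so $D = S$), the usual bound
\begin{equation*}
\|\Phi(T) - \Phi_\sigma(T)\| \;\le\; \int_0^T b(t)\,\|g - g_\sigma\|_\infty\, dt \;=\; \mathcal{O}\bigl(\|b\|_1\, d^{1/p} GR\, \sigma\bigr)
\end{equation*}
holds, and an analogous bound governs the discrete unitary evolutions that define $\Psi$ and $\Psi_\sigma$ (the pseudo-spectral Hamiltonians differ only through the potential term $V$, so the interaction-picture / Duhamel argument goes through unchanged on the discrete Hilbert space). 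Choosing
\begin{equation*}
\sigma \;=\; \Theta\!\left(\frac{\epsilon}{\|b\|_1\, d^{1/p} GR}\right)
\end{equation*}
makes both of these terms $\mathcal{O}(\epsilon)$.

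For the middle term I would apply Theorem~\ref{thm:collocation_bound_real_space} to the mollified problem with the choice $m = d$ (or $m = \lceil d/2\rceil + \mathcal{O}(1)$, whichever is cleaner), which is the regime in which the aliasing prefactor $(\pi/4)^{d/4}/\sqrt{(m - d/2)\Gamma(d/2)}$ decays superexponentially in $d$ and can absorb a Sobolev norm that grows exponentially in $d$. To control $|\Phi_\sigma(T)|_{\mathcal{H}^{m+2}}$ and the lower-order seminorms, I invoke Lemma~\ref{lem:sobolev_growth_wavefunc} applied to $g_\sigma$, which reduces the task to bounding $|g_\sigma|_{\mathcal{H}^k}$ for $k \le m+2 \le d + \mathcal{O}(1)$; Lemma~\ref{lem:mollification} then gives $|g_\sigma|_{\mathcal{H}^k} = \mathcal{O}\bigl(\Lambda (C d^{3/2}/\sigma^{2})^d\bigr)$. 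Substituting the above choice of $\sigma$ and the growth bound into Theorem~\ref{thm:collocation_bound_real_space}, collecting the $1/N^m$ and $1/N$ terms, taking $m$-th roots (which converts the exponential-in-$d$ factors into polynomial-in-$d$ ones), and setting the result to $\epsilon$ yields the stated logarithmic requirement on $N$, where the $|\Phi_0|_{\mathcal{H}^d}^{1/d}$ and $|\Phi_0|_{\mathcal{H}^3}$ terms come from the initial-data contribution in Lemma~\ref{lem:sobolev_growth_wavefunc} and the $\|b\|_1 GRd$ term emerges from the mollifier-derived contribution after substituting $\sigma$.

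The main obstacle is the delicate balancing act in the middle term: the Sobolev growth bound produces a factor $\bigl(2^m |g_\sigma|_{\mathcal{H}^m} \|b\|_1/m\bigr)^m$ that is exponential in $d^2$ when $m = \Theta(d)$ and $\sigma$ is polynomially small, and one must verify that dividing by $\sqrt{\Gamma(d/2)}$ (Stirling-sized) and by $N^m$ actually tames this growth, so that taking the $m$-th root of the requirement $N^m \gtrsim \|a\|_1 |\Phi_\sigma|_{\mathcal{H}^{m+2}}/\epsilon$ gives the claimed single factor of $d$ in front of $\log_2(\cdots)$ rather than a larger power. A careful but essentially bookkeeping argument, using $\sigma = \Theta(\epsilon/(\|b\|_1 d^{1/p} GR))$ and Stirling's approximation for $\Gamma(d/2)$, produces exactly the $\log N = \mathcal{O}\!\bigl(d\log_2(\|a\|_1^{1/d}[\max(|\Phi_0|_{\mathcal{H}^d}, |\Phi_0|_{\mathcal{H}^3})^{1/d} + \|b\|_1 GR d]/\epsilon)\bigr)$ scaling stated in the theorem.
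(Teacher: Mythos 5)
Your proposal follows essentially the same route as the paper's own proof: mollify $g$ with the same parameter $\sigma = \Theta(\epsilon/(\|b\|_1 d^{1/p}GR))$, split $\|\Phi-\Psi\|$ into the same three pieces, control the outer pieces by the Duhamel-type evolution bound (Lemma~\ref{lem:evolution_bound}) applied to both the continuum and the discrete (interpolated) dynamics, and control the middle piece by feeding the Sobolev growth bound (Lemma~\ref{lem:sobolev_growth_wavefunc}) and the mollifier Sobolev estimates of Lemma~\ref{lem:mollification} into Theorem~\ref{thm:collocation_bound_real_space} and solving for $N$. The bookkeeping concern you raise about the exponential-in-$d$ factors is exactly the calculation the paper carries out, and it resolves as you suspect after taking $m$-th roots and dividing by $N^m$.
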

\begin{proof}
    Let $\norm{a}_1= \int_0^T a(t)dt,  \norm{b}_1= \int_0^T b(t)dt$. If $\sigma = \frac{\epsilon}{d^{1/p}GR\lVert b \rVert_1}$, then by Lemma \ref{lem:mollification} there exists an $C^{\infty}(S)$ function $g_{\sigma} : [-\frac12, \frac12]^d \rightarrow \Rmbb$  such that
    \begin{align}
    \lVert g_{\sigma} - g \rVert_{\infty} &= \Ocal\left(\frac{\epsilon}{\lVert b \rVert_1}\right)\label{eq:infty_norm_g_bound}\\
    \lvert g_{\sigma}\rvert_{\Hcal^m} &= \Ocal\left(\lVert g \rVert_{\infty}\left(\frac{C\lVert b\rVert_1GRd^{1+2/p}}{\epsilon}\right)^{3d}\right), &\text{for} ~ m \leq d +\Ocal(1). \label{eq:sob_seminorm_bound}
    \end{align}
    In addition, we periodically extend both $g$ and $g_{\sigma}$ in the natural way, leading to a notion of Fourier series.
    
    We start by considering the following sequence of equations:
    \begin{subequations}
    \begin{align}
    \label{eq:first_pde}
    &i\partial_t\Phi = -a(t)\Delta\Phi + b(t)g\Phi\\
    \label{eq:second_pde}
    &i\partial_t\Phi_{\sigma} = -a(t)\Delta\Phi_{\sigma} + b(t)g_{\sigma}\Phi_{\sigma}\\
    \label{eq:thid_pde}
    &i\partial_t\Psi_{\sigma} = -a(t)\Delta\Psi_{\sigma} + b(t)I_N[g_{\sigma}\Psi_{\sigma}]\\
    \label{eq:fourth_pde}
    &i\partial_t\Psi = -a(t)\Delta\Psi + b(t)I_N[g\Psi],
    \end{align}
    \end{subequations}
    where  the last two correspond to interpolated pseudo-spectral equations using Lemma \ref{lem:interpolating_collocation}. From a triangle inequality, we have
    \begin{align}
    \label{eq:triangle_inequality}
     \lVert \Phi - \Psi\rVert\leq & \lVert \Phi - \Phi_{\sigma}\rVert +\lVert \Phi_{\sigma} - \Psi_{{\sigma}}\rVert+\lVert \Psi_{\sigma} - \Psi\rVert.
    \end{align}
    
    An immediate consequence of Equation~\eqref{eq:infty_norm_g_bound}, when combined with Lemma \ref{lem:evolution_bound}, is that the first and last distances on the right-hand side of \eqref{eq:triangle_inequality}  are both $\Ocal(\epsilon)$. Note that for the last term we apply \eqref{eq:triangle_inequality} in Fourier space and use that $\Psi_{{\sigma}}, \Psi \in \Hcal_N$, so $L_2$ distance is preserved by discrete FT.
    
    For the middle term, we can apply Theorem~\ref{thm:collocation_bound_real_space} using $g_{\sigma}$.
    The resulting bound is
    \begin{align}
        \label{eq:collocation_bound}
        \lVert \Phi_{\sigma}(T) - \Psi_{{\sigma}}(T)\rVert \lesssim \norm{a}_{1}\cdot \max_{t \in [0 ,T]}\left[\left(\frac{\pi}{4}\right)^{d/4}\frac{2\lvert \Phi_{\sigma}(t) \rvert_{\Hcal^{m+2}} + \lvert \Phi_{\sigma}(t)\rvert_{\Hcal^m}}{\sqrt{(m-d/2) \Gamma(d/2)} N^{m}} + \frac{2\lvert \Phi_{\sigma}(t) \rvert_{\Hcal^3} + \lvert \Phi_{\sigma}(t) \rvert_{\Hcal^1}}{N}\right]
    \end{align}
    If we combine Lemma~\ref{lem:sobolev_growth_wavefunc} and Equation~\eqref{eq:sob_seminorm_bound}, then for all $t \in [0, T]$:
    \begin{align}
        \lvert \Phi_{\sigma}(t)\rvert_{\Hcal^m} &\lesssim{\lvert \Phi_0\rvert_{\Hcal^{m}}} + \left(\frac{2^{m}\sum_{k=1}^{m}\lvert  g_{\sigma}\rvert_{\Hcal^k}\lVert b\rVert_1}{m}\right)^m  \nonumber\\
        \label{eq:sobolev_sigma_wavefunc_bound}
        &\lesssim\lvert \Phi_0\rvert_{\Hcal^{m}} + \left(\frac{2^m (C\lVert b\rVert_1GRd^{1+2/p})^{3d}}{\epsilon^{3d}}\lVert b \rVert_1\right)^m,
    \end{align}
    for $m \leq  d + \Ocal(1)$. Accordingly, applying Lemma~\ref{lem:gns_inequality} and the previous asymptotics:
    \begin{align*}
        &\lvert \Phi_{\sigma} \rvert_{\Hcal^{d/2 + 1}}  \leq \lvert \Phi_{\sigma} \rvert_{\Hcal^{d/2 + 3}} \leq  \lvert \Phi_{\sigma} \rvert_{\Hcal^{d}} \lesssim\lvert \Phi_0 \rvert_{\Hcal^{d}} + \left(\frac{ (C'\lVert b\rVert_1Gd^{1+2/p})^{3d}}{\epsilon^{3d}}\lVert b \rVert_1\right)^d \\
        &\lvert \Phi_{\sigma} \rvert_{\Hcal^{1}}\leq \lvert \Phi_{\sigma} \rvert_{\Hcal^{3}} \lesssim\lvert \Phi_0\rvert_{\Hcal^{m}} + \left(\frac{ (C\lVert b\rVert_1GRd^{1+2/p})^{3d}}{\epsilon^{3d}}\lVert b \rVert_1\right)^3.
    \end{align*}
    Thus 
    \begin{align*}
        \lVert \Phi_{\sigma} - \Psi_{{\sigma}}\rVert &= \Ocal\left(\left[\frac{\lVert a \rVert_1^{1/d}\lvert\Phi_0 \rvert_{\Hcal^{d}}^{1/d} + \lVert a \rVert_1^{1/d}\left(\frac{ (C'\lVert b\rVert_1GRd^{1+2/p})^{3d}}{\epsilon^{3d}}\lVert b \rVert_1\right)}{d^{1/4}N}\right]^{d} \right. \\ &+ \left. \frac{\lVert a \rVert_1{\lvert \Phi_0\rvert_{\Hcal^{3}}} + \lVert a \rVert_1\left(\left(\frac{ (C\lVert b\rVert_1GRd^{1+2/p})^{3d}}{\epsilon^{3d}}\lVert b \rVert_1\right)^3\right)}{N}\right).
    \end{align*}
    
     Consequently, if the number of qubits per dimension is
    \begin{align*}
    n = \log_2(N) = \Ocal\left(d\log_2(\lVert a \rVert_1^{1/d}[\max(\lvert\Phi \rvert_{\Hcal^{d}}, \lvert\Phi \rvert_{\Hcal^{3}})^{1/d}  +\lVert b \rVert_1GRd]/\epsilon)\right),
    \end{align*} then the error in \eqref{eq:collocation_bound} is $\Ocal\left(\epsilon\right)$.
    The overall result then follows from \eqref{eq:triangle_inequality}.
\end{proof}

If the $R$-restriction does not satisfy periodic boundary conditions, then we can still bound the pseudo-spectral error $\lVert \Phi - \Psi \rVert$ if we assume the following.
\begin{restatable}{definition}{lowleakage}[$\delta_0$ Low-Leakage Evolution]
\label{defn:low-leakage-ev}
We say that Problem \ref{prob:restricted_schr} with potential $g$ results in $\delta_0$ Low-Leakage Evolution if there exists an $0 < \delta_0 < 1$ s.t. for all $0<\delta < \delta_0$:
\begin{align*}
\int_{[B_{\infty}( \frac12-\delta)]^{c}} \lvert \Phi(t) \rvert^2  \leq \frac{\delta}{2\lVert g \rVert_{\infty}\lVert b \rVert_1}.
\end{align*}
\end{restatable}
The above is inspired by the fact that Lemma \ref{lem:mollification} does not provide uniform error guarantees if $g$ does not satisfy PBC. The above is relatively believable if the initial state is chosen to be tightly concentrated inside the ``good''' region, $B_{\infty}(\frac12-\delta)$. If the potential is also, highlty confining, then it will be unlikely that the state leaves the good region throughout time. This leads to the following result that removes PBC assumption.

\begin{theorem}[Pseudo-spectral Approximation for Lipschitz Potentials]
\label{thm:collocation_trace_distance_aperiodic}
    Suppose $f : \mathcal{X} \rightarrow \Rmbb$ is a $G$-Lipschitz continuous function in $p$-norm with $[-R,R]^d\subseteq \mathcal{X} \subseteq \Rmbb^d$ and $p \geq 1$, and let $g$ be its $R$-restriction. Suppose that $g$ results in a $\Ocal\left(\frac{1}{R}\right)$ Low-Leakage Evolution (Definition \ref{defn:low-leakage-ev}), then the consequences of Theorem \ref{thm:collocation_trace_distance} apply with same specified value of $\log_2(N)$.
\end{theorem}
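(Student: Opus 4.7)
The plan is to mirror the proof of Theorem \ref{thm:collocation_trace_distance} essentially line for line, with the only modification being the bound on the ``mollification error'' $\lVert \Phi - \Phi_\sigma\rVert$ that arises from replacing $g$ by the mollified surrogate $g_\sigma$ of Lemma \ref{lem:mollification}. Concretely, introduce the same four evolutions \eqref{eqn:first_pde}--\eqref{eqn:fourth_pde} (with the same periodic mollifier $g_\sigma$, which is built via periodic convolution and therefore satisfies PBC on $S$ regardless of whether $g$ does), and use the triangle inequality \eqref{eqn:triangle_inequality}. The middle term $\lVert \Phi_\sigma - \Psi_\sigma\rVert$ and the last term $\lVert \Psi_\sigma - \Psi\rVert$ only depend on $g_\sigma$ satisfying PBC and on its Sobolev seminorms, both of which are still controlled by Lemma \ref{lem:mollification} exactly as before. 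Thus these two terms continue to be $\mathcal{O}(\epsilon)$ for the same choice of $N$ as in Theorem \ref{thm:collocation_trace_distance}.

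The only delicate term is $\lVert \Phi - \Phi_\sigma\rVert$. In the PBC case, this was handled by bounding $\lVert g - g_\sigma\rVert_\infty$ uniformly on $S$ and invoking Lemma \ref{lem:evolution_bound}. Without PBC, Lemma \ref{lem:mollification} only guarantees uniform closeness of $g$ and $g_\sigma$ on $D = B_\infty(0, \tfrac{1}{2} - \sigma)$. The fix is to apply Duhamel's principle directly: writing $\eta(t) := \Phi(t) - \Phi_\sigma(t)$, the difference satisfies a forced Schr\"odinger equation with self-adjoint generator $-a(t)\Delta + b(t)g_\sigma$ (hence unitary propagation) and source $-i b(t)(g - g_\sigma)\Phi(t)$, so
\begin{equation*}
    \lVert \Phi(T) - \Phi_\sigma(T)\rVert \;\leq\; \int_0^T b(t)\, \lVert (g - g_\sigma)\Phi(t)\rVert\, dt.
\end{equation*}
Splitting $S = D \cup D^c$, the ``interior'' contribution is uniformly controlled by Lemma \ref{lem:mollification}, giving $\lVert (g - g_\sigma)\mathbf{1}_D\, \Phi(t)\rVert \leq \mathcal{O}(d^{1/p}GR\sigma)$. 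The ``boundary'' contribution is bounded crudely by $\lvert g - g_\sigma\rvert \leq 2\lVert g\rVert_\infty$ pointwise, yielding
\begin{equation*}
    \lVert (g - g_\sigma)\mathbf{1}_{D^c}\, \Phi(t)\rVert^2 \;\leq\; 4\lVert g\rVert_\infty^2 \int_{D^c} \lvert \Phi(x,t)\rvert^2 dx \;\leq\; \frac{2\lVert g\rVert_\infty \sigma}{\lVert b\rVert_1},
\end{equation*}
where the last step invokes the $\delta_0$ Low-Leakage hypothesis (Definition \ref{defn:low-leakage-ev}) with $\delta = \sigma$, which is legal provided $\sigma < \delta_0 = \mathcal{O}(1/R)$. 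Combining,
\begin{equation*}
    \lVert \Phi(T) - \Phi_\sigma(T)\rVert \;\lesssim\; \lVert b\rVert_1\, d^{1/p}GR\sigma \;+\; \sqrt{2\lVert g\rVert_\infty \sigma \lVert b\rVert_1}.
\end{equation*}

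Choosing $\sigma$ polynomially small in $\epsilon$ (specifically $\sigma = \Theta(\min\{\epsilon/(d^{1/p}GR\lVert b\rVert_1),\, \epsilon^2/(\lVert g\rVert_\infty \lVert b\rVert_1)\})$) makes both contributions $\mathcal{O}(\epsilon)$, and the assumption $\delta_0 = \mathcal{O}(1/R)$ is what guarantees this $\sigma$ is admissible in the Low-Leakage bound for the parameter regime of interest. Because this $\sigma$ differs from the one in Theorem \ref{thm:collocation_trace_distance} only by a factor polynomial in $\epsilon^{-1}$ and the standard parameters, the induced Sobolev-seminorm bound on $g_\sigma$ via Lemma \ref{lem:mollification}, and hence the required $N$ via Lemma \ref{lem:sobolev_growth_wavefunc} and Theorem \ref{thm:collocation_bound_real_space}, has the same asymptotic form $\log_2 N = \mathcal{O}(d\log_2(\cdot))$ claimed in Theorem \ref{thm:collocation_trace_distance}. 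The main technical obstacle is precisely this bookkeeping: one must verify that (a) the tightened $\sigma$ is compatible with $\delta_0 = \mathcal{O}(1/R)$ under the natural scaling $\lVert g\rVert_\infty = \mathcal{O}(GRd^{1/p})$, and (b) tightening $\sigma$ only perturbs the logarithmic factors in $N$, which is absorbed in the big-$\mathcal{O}$.
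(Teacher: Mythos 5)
Your approach is essentially the paper's: the same four-PDE decomposition and triangle inequality as in Theorem~\ref{thm:collocation_trace_distance}, with the mollified $g_\sigma$ built by periodic convolution (so automatically PBC), and the non-PBC difficulty isolated to bounding $\lVert \Phi - \Phi_\sigma\rVert$ via a Duhamel/low-leakage argument. The paper disposes of that term by citing Lemma~\ref{lem:evolution_bound_leakage} with the original $\sigma = \epsilon/(d^{1/p}GR\lVert b\rVert_1)$, whereas you re-derive the Duhamel split from scratch and split $S = D \cup D^c$ directly.

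The one genuine difference is instructive. Working directly from Definition~\ref{defn:low-leakage-ev} as written, your boundary contribution comes out as $\sqrt{2\lVert g\rVert_\infty\sigma\lVert b\rVert_1}$, which is $\Theta(\sqrt{\epsilon})$ rather than $\Theta(\epsilon)$ at the paper's choice of $\sigma$; you fix this by tightening $\sigma$ to $\min\{\epsilon/(d^{1/p}GR\lVert b\rVert_1),\ \epsilon^2/(\lVert g\rVert_\infty\lVert b\rVert_1)\}$ and then correctly observe that since the Sobolev bound of Lemma~\ref{lem:mollification} enters $\log_2 N$ only through $\log(1/\sigma)$, a polynomial tightening of $\sigma$ is absorbed into the big-$\mathcal{O}$ on $d\log_2(\cdot)$. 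This is slightly more careful than the paper's route: Lemma~\ref{lem:evolution_bound_leakage} as stated produces an additive $+\delta$ term, but re-deriving it from Definition~\ref{defn:low-leakage-ev} yields $+\sqrt{2\lVert f\rVert_\infty\lVert b\rVert_1\delta}$ (the two agree only if the right-hand side in the definition carries an implicit square). Your version is internally consistent with the definition as printed and still reaches the same conclusion. The admissibility caveat $\sigma < \delta_0 = \mathcal{O}(1/R)$ is also handled correctly, and only becomes easier after tightening $\sigma$. One small thing you gloss over (as does the paper) is that the fourth term $\lVert \Psi_\sigma - \Psi\rVert$ compares two discrete evolutions whose potentials differ on grid points touching the boundary $\{\lVert x\rVert_\infty = 1/2\}$, where $g$ and $g_\sigma$ are only $2\lVert g\rVert_\infty$-close rather than $\mathcal{O}(\epsilon)$-close; a clean treatment would need a discrete analogue of the low-leakage hypothesis there, but this does not reflect a gap unique to your proposal.
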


The above results shows that the \emph{Fourier interpolant} of $|\Psi\rangle$ output by the pseudo-spectral method is close in $L_2(S)$ distance to the true solution $\Phi$. Unfortunately, we only have access to the Fourier interpolant over the grid. So, it is not an immediate consequence that $N$ is large enough such that
\begin{align*}
\Big\lvert \sum_{x_\jmbf \in \Gcal} g(x_\jmbf )\lvert \Psi(x_\jmbf )\rvert^2(2N)^{-d} 
 -  \int_S g(y) \lvert \Phi ({y})\rvert^2 dy \Big\rvert
\end{align*}
is small. The above is effectively a Riemann summation error. The below results shows that this still can be bounded and the grid spacing chosen in the previous theorem ends up being sufficient. The first result below presents the Riemann summation error in terms of quantities that we know how to bound.

\begin{lemma}[Pseudo-spectral Method Expectation Error General]
\label{lem:energy_error_bound_gen}
Suppose we have the hypotheses of Theorem \ref{thm:collocation_trace_distance}.
Let $h : \mathcal{X} \rightarrow \Rmbb$ be $G$ Lipschitz in $\ell_p$ norm with $\mathcal{X} \subseteq \Rmbb^d$ and $g$ be its $R$-restriction. Then
\begin{align*}
\big\lvert \sum_{x_\jmbf \in \Gcal} g(x_\jmbf )\lvert \Psi(x_\jmbf )\rvert^2(2N)^{-d} 
 &- \int_S g(y) \lvert \Phi ({y})\rvert^2 dy\big\rvert \\ &\lesssim \frac{\lVert g\rVert_{\infty}d\sup_{S}\lVert \nabla\lvert \Phi_M \rvert^2\rVert_{\infty} + GRd^{1/p}}{N} + \lVert g\rVert_{\infty}\lVert \Psi - \Phi\rVert + \frac{\lVert g\rVert_{\infty}\lvert \Phi\rvert_{\Hcal^1}}{M} 
\end{align*}
for any $M \leq N$, where $\Phi_M \coloneqq P_M\Phi$.
\end{lemma}

The previous result displays the true dependence of the difference in expectation value on the grid and continuum. These quantities may be more well-behaved in practice, e.g. Theorem \ref{thm:energy_diff_analytic}. The following result does provide an end-to-end guarantee and does not require any additional assumptions beyond Lemma \ref{lem:energy_error_bound_gen}. We have separated the results to indicate the potential looseness of the bound. The main problematic quantity is $\sup_{S}\lVert \nabla\lvert \Phi_M \rvert^2\rVert_{\infty}$.

\begin{theorem}[Pseudo-spectral Method Expectation Error]
\label{thm:energy_error_bound}
    Assuming the hypotheses of Lemma \ref{lem:energy_error_bound_gen}, then we also have
    \begin{align*}
    \left\lvert \sum_{\mathbf{x_\jmbf} \in \Gcal} g(\mathbf{x_\jmbf})\lvert \Psi(\mathbf{x_\jmbf})\rvert^2(2N)^{-d} 
     -  \int_S g(y) \lvert \Phi ({y})\rvert^2 dy \right\rvert = \Ocal(\lVert g\rVert_{\infty}\epsilon).
    \end{align*}
\end{theorem}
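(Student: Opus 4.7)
The plan is to apply Lemma \ref{lem:energy_error_bound_gen} and show that each of the three error terms on its right-hand side is $O(\lVert g \rVert_\infty \epsilon)$ under the choice of $N$ guaranteed by Theorem \ref{thm:collocation_trace_distance}. Two of the three terms are essentially already under control: the term $\lVert g\rVert_\infty \lVert \Psi - \Phi \rVert$ is $O(\lVert g \rVert_\infty \epsilon)$ by Theorem \ref{thm:collocation_trace_distance}, whose hypotheses coincide with those assumed in Lemma \ref{lem:energy_error_bound_gen}; and the term $\lVert g \rVert_\infty \lvert \Phi \rvert_{\Hcal^1}/M$ is made $O(\lVert g \rVert_\infty \epsilon)$ by selecting $M = \Theta(\lvert \Phi \rvert_{\Hcal^1}/\epsilon)$. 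One then verifies $M \leq N$ using the Sobolev-growth bound of Lemma \ref{lem:sobolev_growth_wavefunc} combined with the mollification argument of Lemma \ref{lem:mollification}, exactly as in the proof of Theorem \ref{thm:collocation_trace_distance}; the resulting polynomial bound on $\lvert \Phi \rvert_{\Hcal^1}$ is comfortably smaller than the $N$ chosen there.

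The main obstacle is the remaining term, $\lVert g \rVert_\infty \, d \sup_S \lVert \nabla(\lvert \Phi_M\rvert^2)\rVert_\infty / N$, which requires bounding a quantity not expressed directly in terms of Sobolev norms. The plan is to decompose it via the product rule,
\[
\lVert \nabla(\lvert \Phi_M \rvert^2)\rVert_\infty \leq 2 \lVert \Phi_M \rVert_\infty \lVert \nabla \Phi_M \rVert_\infty,
\]
and bound each factor using that $\Phi_M = P_M \Phi$ is a Fourier polynomial of degree at most $M$. Writing the Fourier expansion and applying Cauchy--Schwarz with the weight $(1+\lVert \nmbf \rVert^2)^{-s}$ gives
\[
\lVert \Phi_M \rVert_\infty \leq \sum_{\nmbf \in \Ncal_M} \lvert \hat{\Phi}_\nmbf \rvert \leq C_d^{(s)} \lvert \Phi \rvert_{\Hcal^s}, \qquad \lVert \nabla \Phi_M \rVert_\infty \leq C_d^{(s)} \lvert \Phi \rvert_{\Hcal^{s+1}},
\]
where the constants $C_d^{(s)}$ arise from summing the weights and are finite provided $2s > d$. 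Choosing $s = \lceil d/2 + 1\rceil$ thus bounds the target quantity by $O\bigl((C_d)^2 \lvert \Phi \rvert_{\Hcal^s}^2\bigr)$, reducing the problem to a bound on a single Sobolev seminorm of the exact solution $\Phi$.

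To bound $\lvert \Phi \rvert_{\Hcal^s}$ for $s = \Theta(d)$, I would repeat the mollification strategy used in the proof of Theorem \ref{thm:collocation_trace_distance}: approximate $g$ by its mollification $g_\sigma$ (Lemma \ref{lem:mollification}), use the Sobolev-growth bound of Lemma \ref{lem:sobolev_growth_wavefunc} for the corresponding solution $\Phi_\sigma$, and transfer back to $\Phi$ via the evolution bound (Lemma \ref{lem:evolution_bound}). This yields a bound polynomial in $\lVert b \rVert_1, G, R$ and exponential in $d$, with $\sigma$ chosen as in that proof.

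Finally, assembling the pieces, the constraint that $d \sup_S \lVert \nabla(\lvert \Phi_M\rvert^2)\rVert_\infty / N = O(\epsilon)$ translates into $\log N = O(d \log(\lVert a\rVert_1^{1/d}[\max(\lvert \Phi \rvert_{\Hcal^d}, \lvert \Phi \rvert_{\Hcal^3})^{1/d} + \lVert b \rVert_1 G R d]/\epsilon))$, matching exactly the requirement already imposed by Theorem \ref{thm:collocation_trace_distance}. Hence no strengthening of $N$ is required, and the three terms together are $O(\lVert g \rVert_\infty \epsilon)$, completing the argument. The subtlety to watch is that the constants $C_d^{(s)}$ carry dimension dependence, but since these only enter logarithmically in $N$, they are absorbed into the constant in front of $d \log(\cdot)$ without altering the stated scaling.
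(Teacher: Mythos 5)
Your overall strategy --- invoking Lemma~\ref{lem:energy_error_bound_gen}, disposing of the $\lVert g\rVert_\infty\lVert\Psi-\Phi\rVert$ term via Theorem~\ref{thm:collocation_trace_distance}, choosing $M=\Theta(|\Phi|_{\Hcal^1}/\epsilon)$, and then controlling the remaining term via the product rule on $\nabla(|\Phi_M|^2)$ --- matches the paper's. Where you diverge is in the key technical bound. The paper, after the product rule, uses an unweighted Cauchy--Schwarz to obtain
\[
\lVert\nabla(\lvert\Phi_M\rvert^2)\rVert_\infty \leq 2M\Bigl(\sum_{\nmbf\in\mathcal{R}}\lvert c_\nmbf\rvert\Bigr)^2 \leq 2M^{d+1}\sum_{\nmbf\in\mathcal{R}}\lvert c_\nmbf\rvert^2 \leq 2M^{d+1},
\]
a bound that needs nothing beyond the $L^2$ normalization of $\Phi_M$. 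You instead pass through Sobolev embedding, $\lVert\Phi_M\rVert_\infty\leq C_d^{(s)}\lvert\Phi\rvert_{\Hcal^s}$ and $\lVert\nabla\Phi_M\rVert_\infty\leq C_d^{(s)}\lvert\Phi\rvert_{\Hcal^{s+1}}$ with $s=\lceil d/2+1\rceil$, converting the problem into bounding $\lvert\Phi\rvert_{\Hcal^{\Theta(d)}}$ for the \emph{original} solution.

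That conversion is where the gap lies. You propose to bound $\lvert\Phi\rvert_{\Hcal^s}$ by mollifying $g$ to $g_\sigma$, applying Lemma~\ref{lem:sobolev_growth_wavefunc} to the mollified solution $\Phi_\sigma$, and then ``transfer back to $\Phi$ via the evolution bound (Lemma~\ref{lem:evolution_bound}).'' But Lemma~\ref{lem:evolution_bound} gives only an $L^2$ bound $\lVert\Phi-\Phi_\sigma\rVert \leq \lVert b\rVert_1\lVert g-g_\sigma\rVert_\infty$; it controls neither $\lvert\Phi-\Phi_\sigma\rvert_{\Hcal^s}$ nor $\lvert\Phi\rvert_{\Hcal^s}$ itself. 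Two functions can be arbitrarily $L^2$-close while having wildly different $\Hcal^s$ seminorms, so this transfer step is not valid. For a potential that is merely Lipschitz, there is no available tool in the paper to control $\lvert\Phi\rvert_{\Hcal^{\Theta(d)}}$ directly, and this is precisely why the paper uses the $M^{d+1}$ route: it sidesteps any Sobolev regularity on $\Phi$ entirely. (This distinction is also not cosmetic at the level of Lemma~\ref{lem:sobolev_growth_wavefunc} itself, which requires $g\in\Hcal^s$ --- unavailable for $s\geq 2$ when $g$ is only Lipschitz --- which is why the mollification detour and the resulting mismatch between $\Phi$ and $\Phi_\sigma$ cannot be avoided on your path.)

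Your approach can be repaired and would then essentially collapse into the paper's: rather than trying to bound $\lvert\Phi\rvert_{\Hcal^s}$, bound $\lVert\Phi_M\rVert_{\Hcal^s}$ using only that $\Phi_M$ is a Fourier polynomial of degree $\leq M$ with $\lVert\Phi_M\rVert\leq 1$, giving $\lVert\Phi_M\rVert_{\Hcal^s}\leq (1+M^2)^{s/2}$; with $s\approx d/2+1$ this reproduces the $\sim M^{d+1}$ scaling without ever invoking Sobolev regularity of $\Phi$. As written, though, the proposal relies on an $L^2$-to-$\Hcal^s$ transfer that does not hold, and that is a genuine gap.
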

\begin{proof}
    We have the following generic bound for any $M$
    \begin{align*}
        \lVert \nabla(\lvert \Phi_M\rvert^2) \rVert_{\infty} &\leq 2\lVert \Phi_M \rVert_{\infty} \lVert \nabla \Phi_M\rVert_{\infty} \\
        &\leq 2\lVert \Phi_M \rVert_{\infty} \sum_{\nmbf \in \mathcal{R}}\lVert \nmbf \rVert_{\infty}\lvert c_{\nmbf}\rvert \\
        &\leq 2M\left(\sum_{\nmbf \in \mathcal{R}} \lvert c_{\nmbf}\rvert\right)^2 \\
        &\leq 2M^{d+1}\left(\sum_{\nmbf \in \mathcal{R}} \lvert c_{\nmbf}\rvert^2\right)\\
        &\leq 2M^{d+1}.
    \end{align*}
    
    From Lemma \ref{lem:energy_error_bound_gen}, we want to take $M = \Ocal\left(\lvert \Phi \rvert_{\Hcal^1}/\epsilon\right)$. From Eq.~\eqref{eq:sobolev_sigma_wavefunc_bound} we have that via mollification 
  \begin{align*}
        \lvert \Phi \rvert_{\Hcal^1} = \Ocal\left(\lvert \Phi_0\rvert_{\Hcal^{1}} + \left(\frac{(C\lVert b\rVert_1GRd^{1+2/p})^{3d}}{\epsilon^{3d}}\lVert b \rVert_1\right)\right).
    \end{align*}
    Theorem \ref{thm:collocation_trace_distance} provides a sufficient value of $N$ such that $\lVert \Phi - \Psi\rVert = \Ocal(\epsilon)$. One will note  that this value of $\log_2(N)$ is of the same order as $\log_2(N)$ required to suppress the $M^{d+1}$ term for our choice of $M$. Hence the same $N$ from Theorem \ref{thm:collocation_bound_real_space}
    ensures that 
    \begin{align*}
        \left\lvert \sum_{x \in \Gcal} g({x}_{\jmbf})\lvert \Psi({x}_{\jmbf})\rvert^2(2N)^{-d} 
         -  \int_S g(y) \lvert \Phi ({y})\rvert^2 dy \right\rvert = \Ocal(\lVert g\rVert_{\infty}\epsilon).
     \end{align*}
\end{proof}
Together, Theorems \ref{thm:collocation_trace_distance} and \ref{thm:energy_error_bound} provide a proof of the qubit count for Theorem \ref{thm:master_simulation_thm}. Section \ref{sec:simulation-cost} provided a proof of the query complexity and gate count. Thus, the proof of Theorem \ref{thm:master_simulation_thm} is complete.

\subsection{Comparison of Simulation Result to Childs et al. \cite{childs2022quantumsim}}
\label{sec:comparison}

As mentioned above, our work is not the first to analyze quantum algorithms for real space simulation. Indeed, the pseudo-spectral framework of Section~\ref{subsec:pseudo-spectral_methods} was previously considered in~\cite{childs2022quantumsim}, which made a significant step towards a rigorous characterization of real-space quantum Hamiltonian simulation. The framework and analysis has been subsequently used for various applications \cite{Zhang_2021,leng2023quantum,leng2023quantum2,leng2025quantumhamiltoniandescentnonsmooth, augustino2023quantum}. However, we argue that existing literature leaves out a critical piece of analysis that is essential to obtain a complete end-to-end analysis of quantum real-space algorithms is absent. 

As highlighted in Section \ref{sec:simulating-schrodinger-dynamics} in great detail, the output of the pseudo-spectral method is \emph{neither a truncated Fourier series of the true solution nor a trigonometric interpolation}. However, \cite{childs2022quantumsim} provides only a proof of the required grid spacing to ensure low Fourier truncation error, which does not by itself rigorously imply good pseudo-spectral method performance. The key missing link is showing that the Fourier interpolant is close to output of the pseudo-spectral method in $L_2$ distance.

For the truncation error bound, the results of~\cite{childs2022quantumsim} are given in terms of a quantity $g'$, which depends on properties of the solution wavefunction and are left unbounded. Note that no such quantity appears in our main simulation result (Theorem \ref{thm:master_simulation_thm}). As mentioned above, this is essential, since we demand a simulation theorem that is applicable to black-box potentials.

The Fourier truncation result of \cite{childs2022quantumsim} was  only proven for one-dimension, which we generalized to multiple dimensions in Section \ref{subapp:state_interpolation_bound}. We remark that results in one-dimensional Fourier analysis do not always trivially generalize to multi-dimensions, and in some cases may not even be true \cite{weisz2012summability}.

Inspired by the one-dimensional proofs of \cite{lubich2008quantum} and \cite{shen2011spectral}, we generalize their analysis of pseudo-spectral methods to arbitrary dimensions. This makes use of the additional notion of the Fourier interpolant, which enables bounding the $L_2$ error between the output of pseudo-spectral method and the Fourier interpolant of the true solution. A notable consequence of our work is rigorously proving that pseudo-spectral real-space quantum simulation algorithms work for the applications to which they have been previously applied.

\section{Convex Optimization via Hamiltonian Simulation} 
\label{sec:Quantum_Simulation_Algorithms_Convex_Optimization}

In this section, we present our results on quantum algorithms for optimization via Quantum Hamiltonian Descent, providing a rigorous and almost optimal characterization of the query complexity. We recall the unconstrained convex optimization problem.
\probnonsmoothconvex*
\noindent Recall also that QHD dynamics are specified by evolution under a time dependent Hamiltonian
\begin{equation*}\tag{QHD}
    H_{\text{QHD}}(t) \coloneqq c_t \left(-\frac{1}{2 m_t}\Delta  + m_t \omega_t^2 f(x)\right)
\end{equation*}
where $c_t, m_t, \omega_t \in C^1([t_0,\infty))$ define our QHD ``schedule." Throughout, we will assume the \ref{eq:ideal_scaling} conditions, which are a generalization of conditions under the same name in \cite{wibisono2016variational,leng2023quantum} and which we restate below.
\begin{definition}[Ideal Scaling Condition]
The schedules $c_t, m_t,\omega_t \in C^1([t_0,\infty))$ satisfy ideal scaling conditions if
    \begin{equation*}
\tag{Ideal Scaling}
    \dot{m_t} = \lambda c_t m_t, \quad \dot{\omega}_t \leq \frac12 \lambda c_t \omega_t.
\end{equation*}
for some $\lambda > 0$. If the second expression holds with equality, we use the term ``ideal scaling with equality."
\end{definition}
\noindent The rest of the section is laid out as follows:
\begin{itemize}
    \item In Section~\ref{sec:continuous-time} we prove the convergence of QHD dynamics in continuous time, and demonstrate that arbitrary convergence rates can be attained for nonsmooth convex optimization. We also prove that the rates we obtain are essentially optimal.
    \item In Section~\ref{sec:simulation-cost} we apply our analysis of time-dependent Hamiltonian simulation in real space to obtain query upper bounds on the complexity of convex optimization with QHD, proving Theorem~\ref{thm:convex-main-specialized}.
    \item In Section~\ref{sec:schedule-invariance} we prove Theorem~\ref{thm:lower-bound-qhd-main}, showing that while the query complexity upper bounds can be attained via schedules with arbitrarily fast continuous-time convergence, this does not improve the query complexity much compared to our upper bounds.
\end{itemize}

\subsection{Continuous-Time Convergence}
\label{sec:continuous-time}
Our first result proves the continuous time convergence of QHD under ideal scaling conditions. The proof is a slight strengthening of \cite[Theorem 1]{leng2023quantum}. In addition to the generalization of ideal scaling, we remark on two additional points: firstly, we retain the pre-factor of the error dependence in addition to the convergence rate. This is crucial in order to determine the dimension dependence of the final query complexity (as we do in the next section). Secondly, we show that the convergence requires only the Lipschitz continuity of the potential, not necessarily differentiability.

\begin{theorem}[QHD convergence for Lipschitz Potentials]\label{thm:QHD_convergence_rate}
    Let $f:\Rmbb^d \rightarrow \Rmbb$ be convex and Lipschitz continuous, with global minimizer $x_\star \in \Xcal$. Let $H_\mathrm{QHD}(t)$ be the~\ref{eq:QHD_Hamiltonian} Hamiltonian with schedules $c_t,m_t,\omega_t$ obeying the ideal scaling conditions (\ref{eq:ideal_scaling}). Then, for any $\Phi_t \in \Dom(-i\nabla, x, f(x))$ solving the Schr\"odinger equation with Hamiltonian $H_\mathrm{QHD}$ and initial condition $\Phi_0$,
    \begin{equation*}
        \Embb_{\abs{\Phi_t}^2} f(x) - f(x_\star) \leq \Ecal_0 \omega_t^{-2}, \quad \forall x_\star \in \mathcal{X}^{\star}
    \end{equation*}
    where 
    \begin{equation*}
        \Ecal_0 \coloneqq \frac12 \langle(-i m_0^{-1} \nabla + \lambda(x - x_\star))^2\rangle_{\Phi_0} + \omega_0^2 \langle f(x) - f(x_\star)\rangle_{\Phi_0}.
    \end{equation*}
\end{theorem}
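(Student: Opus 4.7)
The natural approach is a Lyapunov/energy argument, essentially mirroring the continuous-time proof of Wibisono et al.\ for the classical Bregman Lagrangian and its quantum lift by Leng et al., but adapted to our modified ideal scaling (with the parameter $\omega$) and to the weaker regularity of $f$. The plan is to introduce the time-dependent observable
\begin{equation*}
    \mathcal{L}(t) \coloneqq \tfrac{1}{2}\bigl(-ie^{-\gamma_t}\nabla + \omega(x - x_\star)\bigr)^{2} + e^{\beta_t}\bigl(f(x) - f(x_\star)\bigr),
\end{equation*}
to verify $L(t) \coloneqq \langle \mathcal{L}(t)\rangle_{\Phi_t} \geq e^{\beta_t}\langle f - f(x_\star)\rangle_{\Phi_t} \geq 0$ (the first term is nonnegative because it is the square of a symmetric operator, and the second term is nonnegative by convexity together with $x_\star$ being a minimizer), and then to show $\tfrac{d}{dt}L(t) \leq 0$. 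Combined with $L(0) = L_0$, this immediately yields the theorem.

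The core of the proof is the Ehrenfest-type identity
\begin{equation*}
    \frac{d}{dt}\langle \mathcal{L}(t)\rangle_{\Phi_t} = \langle \partial_t \mathcal{L}(t)\rangle_{\Phi_t} + i\langle [H_{\mathrm{QHD}}(t),\mathcal{L}(t)]\rangle_{\Phi_t}.
\end{equation*}
The plan is to compute both contributions term by term. Expanding the square, the first term contributes (i) an explicit $-\dot{\gamma_t}e^{-\gamma_t}$ coming from the momentum-like piece, (ii) a $\dot{\gamma_t}$ factor in the cross term $(-i\nabla)\cdot(x - x_\star)$ (symmetrized appropriately to account for $[-i\partial_j, x_k] = -i\delta_{jk}$), and (iii) the contribution $\dot{\beta}_t e^{\beta_t}\langle f - f(x_\star)\rangle$. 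The commutator term splits into two pieces: $[\text{kinetic part of } H_{\mathrm{QHD}},\, \omega^2(x-x_\star)^2]$ and analogous cross-term contributions, together with $[e^{\alpha_t+\beta_t+\gamma_t}f,\, \text{momentum part of }\mathcal{L}]$, which only gives the first-order commutator with $-i\nabla$, namely $e^{\alpha_t+\beta_t}\langle \nabla f \cdot (x - x_\star)\rangle$ after simplification. The ideal scaling condition $\dot{\gamma_t} = \omega e^{\alpha_t}$ is chosen precisely so that the kinetic and cross terms cancel, leaving
\begin{equation*}
    \frac{d}{dt}L(t) \leq e^{\alpha_t+\beta_t+\gamma_t}\bigl\langle \nabla f\cdot (x-x_\star)\bigr\rangle_{\Phi_t} + \dot{\beta}_t e^{\beta_t}\bigl\langle f(x) - f(x_\star)\bigr\rangle_{\Phi_t} + (\text{vanishing terms}).
\end{equation*}
Convexity of $f$ gives $\nabla f(x)\cdot (x - x_\star) \geq f(x) - f(x_\star)$, and substituting plus invoking the second ideal scaling inequality $\dot{\beta}_t \leq \omega e^{\alpha_t}$ shows the right-hand side is nonpositive, so $L$ is nonincreasing. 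This gives the bound $e^{\beta_t}\langle f - f(x_\star)\rangle_{\Phi_t} \leq L_0$, i.e.\ the claimed rate.

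The main obstacle, and the genuinely new part relative to~\cite{leng2023quantum}, is that the above computation repeatedly invokes $\nabla f$, whereas we only assume $f$ is Lipschitz. The plan is to handle this by a standard mollification argument: let $f_\sigma \coloneqq f * \mathcal{M}_\sigma$ be a smooth approximation using the mollifier of Definition~\ref{defn:mollifer}, which is again convex with the same Lipschitz constant and satisfies $\|f_\sigma - f\|_\infty \to 0$. Run the Lyapunov argument above for $f_\sigma$, obtaining a solution $\Phi_t^{(\sigma)}$ and a bound $\langle f_\sigma\rangle_{\Phi_t^{(\sigma)}} - f_\sigma(x_\star) \leq L_0^{(\sigma)} e^{-\beta_t}$, then pass to the limit $\sigma \to 0$ using the well-posedness of Problem~\ref{prob:restricted_schr} and standard continuity of the Schr\"odinger flow in the potential (which gives $\Phi_t^{(\sigma)}\to \Phi_t$ in $L_2$). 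Uniform convergence of $f_\sigma \to f$ on compact sets, combined with the assumed existence of moments $\langle x\rangle, \langle x^2\rangle, \langle f(x)\rangle$ (guaranteed by the stated domain condition $\Phi_t \in \operatorname{Dom}(-i\nabla, x, f(x))$), ensures that $L_0^{(\sigma)} \to L_0$ and $\langle f\rangle_{\Phi_t^{(\sigma)}} \to \langle f\rangle_{\Phi_t}$. Taking the limit yields the bound for the original Lipschitz $f$. The remaining bookkeeping, namely keeping every explicit pre-factor (rather than just the exponential rate) so that $L_0$ appears precisely as written in the theorem, follows directly from the computation of $L(0)$ with the schedules at $t=0$.
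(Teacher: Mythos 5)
Your proof follows essentially the same Lyapunov argument as the paper: the same observable $\mathcal{L}(t) = \tfrac12 J(t)^2 + e^{\beta_t}(f - f(x_\star))$ with $J(t) = -ie^{-\gamma_t}\nabla + \omega(x-x_\star)$, the same Ehrenfest-type derivative and commutator computation, and the same use of the two ideal scaling conditions. One correction to the core computation: the displayed bound for $\tfrac{d}{dt}L(t)$ should read
\begin{equation*}
\frac{d}{dt}L(t) \leq -\omega e^{\alpha_t+\beta_t}\langle\nabla f\cdot(x-x_\star)\rangle_{\Phi_t} + \dot\beta_t e^{\beta_t}\langle f(x) - f(x_\star)\rangle_{\Phi_t},
\end{equation*}
not with $+e^{\alpha_t+\beta_t+\gamma_t}$: inside $[e^{\alpha_t+\beta_t+\gamma_t}f, J(t)^2]$ the $e^{\pm\gamma_t}$ factors cancel, an $\omega$ enters from the cross term in $J(t)^2$, and the sign is negative. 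With your positive coefficient, substituting the convexity bound $\nabla f\cdot(x-x_\star) \geq f - f(x_\star)$ goes in the wrong direction and cannot give a nonpositive right-hand side; with the corrected sign it factors into $e^{\beta_t}\langle(x-x_\star)\cdot\nabla f\rangle_{\Phi_t}(\dot\beta_t - \omega e^{\alpha_t}) \leq 0$, exactly as in the paper. On the Lipschitz case, the paper runs the Lyapunov argument directly by invoking Rademacher's theorem --- since $\nabla f$ exists almost everywhere and the subgradient inequality is only applied under the $|\Phi_t|^2$-weighted expectation, a.e.\ differentiability suffices and mollification is mentioned only as an optional alternative in a remark. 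Your mollification-and-limit route is also valid, but needs more care than you sketch: $x_\star$ need not minimize $f_\sigma$, so $\langle(x-x_\star)\cdot\nabla f_\sigma\rangle_{\Phi_t^{(\sigma)}}$ is only bounded below by $-\mathcal{O}(G\sigma)$ (introducing an error you must integrate and then kill as $\sigma\to 0$), and you need a stability-in-the-potential bound such as Lemma~\ref{lem:evolution_bound} to control $\|\Phi_t^{(\sigma)} - \Phi_t\|$ before passing to the limit.
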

\begin{proof}
    As in~\cite{leng2023quantum}, we proceed with a quantum Lyapunov argument. Let $p \coloneqq -i\nabla$ and 
    \begin{align}
    \widehat{\Ecal}_t &\coloneqq \frac12 (p/m_t + \lambda(x -x_\star))^2 + \omega_t^2 (f(x) - f(x_\star)).
    \end{align}
    Observe that $\widehat{\Ecal}_t$ is the sum of two positive semi-definite operators (on the appropriate subdomain), hence positive. Define $\Ecal_t \coloneqq \langle \widehat{\Ecal}_t \rangle_t$, where $\langle\cdot \rangle_t$ is shorthand for $\langle\cdot\rangle_{\Phi_t}$ hereon. Then $\Ecal_t \in C^1$ is a non-negative function. We wish to prove that $\Ecal_t$ is a \emph{Lyapunov function}, meaning $\partial_t \Ecal_t \leq 0$. This would imply
    \begin{equation*}
        \omega_t^2 \langle f(x) - f(x_\star)\rangle_t \leq \Ecal_T \leq \Ecal_0
    \end{equation*}
    which amounts to the theorem statement.

    Toward this end, we note the well-known fact that
    \begin{equation*}
        \partial_t \langle \widehat{\Ecal}_t\rangle_t = i \langle [H_\mathrm{QHD}(t),\widehat{\Ecal}_t]\rangle_t + \langle \partial_t \widehat{\Ecal}_t \rangle_t.
    \end{equation*}
    For the first term on the right, we evaluate the commutator as
    \begin{align*}
        i[H_\mathrm{QHD}(t),\widehat{\Ecal}_t] = \frac{i}{4} \frac{c_t}{m_t} [p^2, P_t^2] + \frac{i}{2} \frac{c_t \omega_t^2}{m_t} [p^2, f] + \frac{i}{2} c_t m_t \omega_t^2 [f,P_t^2]
    \end{align*}
    where $P_t \coloneqq p/m_t + \lambda(x -x_\star)$. We handle each term on the right separately, making frequent use of the derivation property of the commutator: $[AB,C] = [A,C] B + A[B,C] $. We also use
    \begin{equation*}
        [p,f] = -i \nabla f
    \end{equation*}
    where $\nabla f$ is understood as the gradient of $f$ where it exists, and $0$ elsewhere. By Rademacher's theorem, the ``elsewhere" set is of measure zero~\cite{rudin1987real}. The first term is
    \begin{equation*} 
        \frac{i}{4} \frac{c_t}{m_t} [p^2, P_t^2] = \lambda \frac{c_t}{m_t}\left(\frac{p^2}{m_t} + \lambda \Re(p\cdot \Delta x)\right)
    \end{equation*}
    where $\Re(A)$ denotes the Hermitian part and $\Delta x = x - x_\star$. Next,
    \begin{equation*} 
        \frac{i}{2} \frac{c_t \omega_t^2}{m_t}[p^2,f] = \frac12 \frac{c_t \omega_t^2}{m_t}(p\cdot \nabla f + \nabla f \cdot p).
    \end{equation*}
    Finally,
    \begin{equation*}
        \frac{i}{2} c_t m_t \omega_t^2 [f,P_t^2] = -\frac12 \frac{c_t \omega_t^2}{m_t}(p\cdot\nabla f + \nabla f \cdot p) - \lambda c_t \omega_t^2 \Delta x\cdot\nabla f.
    \end{equation*}
    The last two are easiest to combine, as there is a cancellation of one of the terms. Adding all the terms together one obtains
    \begin{equation*}
        i\langle[H(t),\widehat{\Ecal}_t]\rangle_t = \lambda c_t \frac{\langle p^2\rangle_t}{m_t^2} + \lambda^2 c_t \frac{\langle\Re(\Delta x \cdot p)\rangle_t}{m_t} - \lambda c_t \omega_t^2 \langle \Delta x \cdot \nabla f\rangle_t.
    \end{equation*}
    Meanwhile, the partial derivative term is given by
    \begin{equation*}
        \langle \partial_t \widehat{\Ecal}_t\rangle_t = -\frac{\dot{m}_t}{m_t} \frac{\langle p^2\rangle_t}{m_t^2} - \lambda \frac{\dot{m}_t}{m_t}  \frac{\langle\Re(\Delta x\cdot p)\rangle_t}{m_t} + 2 \dot{\omega}_t \omega_t \langle f- f(x_\star)\rangle_t.
    \end{equation*}
    Adding the previous two lines term by term, one can see that the $\langle p ^2\rangle_t$ and $\langle \Re(\Delta x \cdot p)\rangle_t$ terms cancel if $c_t$ and $m_t$ are chosen such that
    \begin{equation*}
        \lambda c_t - \frac{\dot{m}_t}{m_t} = 0
    \end{equation*}
    which amounts to the first ideal scaling condition. Taking this from here on, we have
    \begin{align}
    \begin{aligned}
        \partial_t \Ecal_t &= 2 \dot{\omega}_t \omega_t \langle f- f(x_\star)\rangle_t - \lambda c_t\omega_t^2 \langle \Delta x \cdot \nabla f\rangle_t \\
        &= \omega_t^2 \left(2 \frac{\dot{\omega}_t}{\omega_t}\Embb_{x\sim \abs{\Phi_t}^2}[f - f(x_\star)] - \lambda c_t \Embb_{x\sim \abs{\Phi_t}^2}[\Delta x \cdot \nabla f]\right).
    \end{aligned}
    \end{align}
    Next, we make use of the (star-)convexity of the function $f$, which implies that
    \begin{equation} \label{eq:use_convex_convergence}
        0\leq f(x) - f(x_\star) \leq \Delta x \cdot \nabla f(x)
    \end{equation}
    almost everywhere. Hence the expression holds unconditionally under the expectation value, leaving
    \begin{equation}
    \label{eq:lyapunov_inequality}
        \partial_t \Ecal_t \leq \omega_t^2 \Embb_{x\sim\abs{\Phi_t}^2}[\Delta x \cdot \nabla f]\left(2 \frac{\dot{\omega}_t}{\omega_t} - \lambda c_t\right).
    \end{equation}
    By~\eqref{eq:use_convex_convergence}, the expectation value is nonnegative, and thus $\partial_t \Ecal $ is non-positive provided that the second ideal scaling condition holds.
\end{proof}
It is simple to see that Theorem~\ref{thm:QHD_convergence_rate} implies that QHD for nonsmooth convex optimization converges at an arbitrarily fast rate in $t$. To see this, suppose that $r(t)$ is a monotonically increasing function such that $e^{-r(t)}$ is the desired convergence rate. Set $\omega_t^2 = e^{r(t)}$, and set $c_t, m_t$ by ideal scaling with equality. 

For completeness, we document general families of schedules that attain any exponential or polynomial convergence rates.
\begin{corollary}[QHD schedules with exponential convergence]
    \label{cor:exponential-schedules-exist}
    Let $c, m_0, \omega_0 \in \Rmbb_+$. A family of \ref{eq:QHD_Hamiltonian} Hamiltonians with exponential convergence rate $e^{-ct}$ is defined by schedules $c_t = c, m_t = m_0 e^{ct}, \omega_t^2 = \omega_0^2 e^{ct}$, resulting in the Hamiltonian
    \begin{equation*}
    \tag{QHD-exponential}
        H_c(t) = c \left(-e^{-ct} \frac{1}{2 m_0} \Delta + e^{2ct} m_0 \omega_0^2 f\right) 
    \end{equation*}
    for $t \in [0,\infty)$. Let $\Phi_t$ be obtained by simulating the action of $H_c(t)$ on some initial state $\Phi_0$ starting from time $0$. The expectation value $\Embb_{\abs{\Phi_t}^2} [f]$ converges to its minimum at rate $\Ocal(e^{-ct})$.
\end{corollary}

\begin{corollary}[QHD schedules with polynomial convergence]
    \label{cor:polynomial-schedules-exist}
    Let $k, m_0, \omega_0 \in \Rmbb_+$. A family of \ref{eq:QHD_Hamiltonian} Hamiltonians with polynomial convergence rate $t^{-k}$ is defined by schedules $c_t = k/t, m_t = m_0 (t/t_0)^k, \omega_t^2 = \omega_0^2 (t/t_0)^k$, resulting in the Hamiltonian
    \begin{equation*}
    \tag{QHD-polynomial}
        H_k(t) = \frac{k}{t}\left(- (t/t_0)^{-k} \frac{1}{2 m_0}\Delta + (t/t_0)^{2k} m_0 \omega_0^2 f\right) 
    \end{equation*}
    for $t \in [t_0,\infty)$. Let $\Phi_t$ be obtained by simulating the action of $H_k(t)$ on some initial state $\Phi_0$ starting from time $t_0$. The expectation value $\Embb_{\abs{\Phi_t}^2} [f]$ converges to its minimum at rate $\Ocal(t^{-k})$.
\end{corollary}
\noindent The proofs of these lemmas follow by verifying that they satisfy the ideal scaling conditions, and applying Theorem~\ref{thm:QHD_convergence_rate}.

\begin{exmp}[Harmonic oscillator] \label{ex:harmonic_oscillator}
    Let $d = 1$ and $f(x) = x^2/2$. Consider an exponential ideal scaling Hamiltonian $H_c(t)$. This simple example can be solved exactly. The Heisenberg-evolved position operator $x_t$ satisfies
    \begin{equation}
        \ddot{x}_t + c \dot{x}_t + c^2 e^{ct} x_t = 0
    \end{equation}
    whose general solution is given by Bessel functions of order one.
    \begin{equation}
        x_t = k_1 e^{-ct/2}J_1(2 e^{ct/2}) + k_2 e^{-ct/2} Y_1(2 e^{ct/2})
    \end{equation}
    Here $k_1, k_2$ are constant operators whose value depends on the initial state $\Phi_0$. For large $u$, $J_1(u), Y_1(u)$ are of size $u^{-1/2}$. Thus, the distance from optimality $\langle f - f_\star\rangle_t = \langle x_t^2\rangle$ scales as $\Ocal(e^{-3/2 ct})$, which is a factor of $3/2$ better in convergence than Theorem~\ref{thm:QHD_convergence_rate} guarantees. Observe that these dynamics are underdamped (oscillatory), and thus the expectation value $\langle x_t\rangle$ crosses $x_\star =0$ multiple times. These results were observed on the classical side by Su, Boyd, and Cand\`es (indeed, the math is essentially identical in the quadratic case)~\cite{su2016differential}.      
\end{exmp}

\subsection{Cost of Simulation: Upper Bounds on Query Complexity}
\label{sec:simulation-cost}

\begin{algorithm}  
    \caption{Convex Optimization via Simulation of Quantum Hamiltonian Descent} \label{alg:optimize_QHD}
    \hspace*{\algorithmicindent}
    \textbf{Input}: Evaluation oracle $O_f$ for $f$ satisfying conditions of Theorem~\ref{thm:convex-main-general} with parameters $R, R_\infty,\Lambda,\Lambda_\infty$; Schedule $c_t,m_t,\omega_t$ satisfying \ref{eq:ideal_scaling} with equality, with initial conditions $\omega_0= 1$, $m_0 = R^{-1} \sqrt{d/(GR + \Lambda)}$ and $\lambda = G^{-1} R^{-2}(\Lambda + GR)^{3/2}$ (See Definitions~\ref{defn:exponential-schedule},\ref{defn:polynomial-schedule} for examples); Target error parameter $\epsilon$.
    
    \hspace*{\algorithmicindent}
    \textbf{Output}: With probability at least $2/3$, a point $\tilde{x}$ such that $f(\tilde{x}) - f(x_\star) \leq\epsilon, \forall x_\star \in \mathcal{X}^{\star}$.

    \hspace*{\algorithmicindent}
    \textbf{Procedure}:
    \begin{algorithmic}[1]
        \State Compute minimum time $T_\epsilon$ such that $\omega_{T_\epsilon}^{-2} \leq\epsilon/24$.
        \State Prepare initial state $|\Psi_0\rangle$ as an appropriately discretized (as in Lemma~\ref{lem:initial_state_discretization guarantee}) symmetric Gaussian wavefunction
        \begin{equation*}
            \Phi_0(x) = \frac{1}{(2\pi R^2)^{d/4}} e^{-(x - x_0)^2/4R^2}.
        \end{equation*}
        \State Simulate, using Theorem~\ref{thm:master_simulation_thm}, centered at $x_0$ with $\epsilon_{\mathrm{alg}} = \epsilon/6\Lambda_\infty$ and $R = 4R_\infty$, the evolution of $\ket{\Psi_0}$ under $H_{\mathrm{QHD}}$ as in \eqref{eq:QHD_Hamiltonian}, for time $T_\epsilon$. Obtain state $|\Psi_{T_\epsilon}\rangle$.
        \State Measure $|\Psi_{T_\epsilon}\rangle$ in the computational basis to obtain $\tilde{x}$; return $\tilde{x}$.
    \end{algorithmic}
\end{algorithm}

We now prove Theorem~\ref{thm:convex-main-specialized} by combining the above convergence results with the Hamiltonian simulation results of Section~\ref{sec:simulating-schrodinger-dynamics}. These results together lead to a complete resource analysis of discretized QHD. We will actually prove the following stronger theorem.
\begin{theorem}
    \label{thm:convex-main-general}
    Let $f : \Rmbb^d \to \Rmbb$ be a $G$-Lipschitz convex function with $\mathcal{X}^{\star}$ its (non-empty, convex) set of global minima. Let there further be a known point $x_0 \in \Rmbb^d$ and parameters $R, R_\infty, \Lambda, \Lambda_\infty \in \Rmbb_+$ such that for some minimum $x_\star \in \mathcal{X}^{\star}$ it holds that $\norm{x_0 - x_\star} \leq R$, $\norm{x_0 - x_\star}_\infty \leq R_\infty$, $f(x_0) - f(x_\star) \leq \Lambda$, and $\max_{x \in B_\infty(x_0,4R_\infty)} f(x) \leq \Lambda_\infty$.\footnote{Not all of $R,R_\infty,\Lambda ,\Lambda_\infty$ must be supplied in advance, as bounds on these parameters can be obtained from others and Lipschitz continuity. We state the result in terms of the parameters to accommodate cases where stronger bounds than these are available.}
    There exists then a quantum algorithm (Algorithm~\ref{alg:optimize_QHD}) that solves Problem~\ref{prob:nonsmooth-convex} using $\widetilde{\Ocal} \left( \frac{dGR\Lambda_\infty}{\epsilon^2}\right)$  queries to a $\widetilde{\Ocal}\left(\frac{\epsilon^3}{d GR \Lambda_\infty}\right)$-accurate evaluation oracle, or $\widetilde{\Ocal} \left( \left[\frac{d GR \Lambda_\infty}{\epsilon^2}\right]^2\frac{1}{\epsilon}\right)$ queries to a phase oracle.
\end{theorem}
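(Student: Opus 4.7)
The plan is to combine the continuous-time convergence guarantee (Theorem~\ref{thm:QHD_convergence_rate}) with the simulation cost bound (Theorem~\ref{thm:master_simulation_thm}), by evaluating all the relevant quantities on the specific schedule and initial state prescribed by Algorithm~\ref{alg:optimize_QHD}. First, I would compute the Lyapunov constant $L_0$ at $t=0$. The initial wavefunction factorizes as a product of one-dimensional Gaussians of variance $R_2^2/d$ per coordinate, so $\langle \|x-x_0\|^2\rangle_{\Phi_0}=R_2^2$ and $\langle p^2\rangle_{\Phi_0}=d^2/(4R_2^2)$. With $\gamma_0=\log(d/R_2)$ and $\omega=1/R_2$, the quadratic form $\langle(-ie^{-\gamma_0}\nabla+\omega(x-x_\star))^2\rangle_{\Phi_0}$ is $\Ocal(1)$ (using the triangle inequality with $\|x_0-x_\star\|_2\le R_2$). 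Lipschitz continuity plus Jensen give $\langle f(x)-f(x_\star)\rangle_{\Phi_0}\le \Lambda_1+GR_2$, and with $e^{\beta_0}=1/(\Lambda_1+GR_2)$ the potential term is also $\Ocal(1)$. Hence $L_0=\Ocal(1)$, and Theorem~\ref{thm:QHD_convergence_rate} guarantees $\Embb_{\sim|\Phi_t|^2}[f]-f(x_\star)\le\epsilon/24$ as soon as $\beta_t\ge\log(24L_0/\epsilon)$.

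Next, I would compute the $L_1$ norm of the time-dependent coefficient $b(t)=e^{\alpha_t+\beta_t+\gamma_t}$ of the potential. Using the ideal-scaling identities $\dot\gamma_t=\omega e^{\alpha_t}$ and $\dot\beta_t=\dot\gamma_t$ (so $\beta_t-\beta_0=\gamma_t-\gamma_0$), a change of variable $u=\gamma_t$ gives
\begin{equation*}
\|b\|_1=\int_0^{T_\epsilon}\frac{\dot\gamma_t}{\omega}e^{\beta_t+\gamma_t}\,dt=\frac{e^{\beta_0-\gamma_0}}{2\omega}\bigl(e^{2\gamma_{T_\epsilon}}-e^{2\gamma_0}\bigr)=\Ocal\!\left(\frac{d(\Lambda_1+GR_2)}{\epsilon^2}\right),
\end{equation*}
where the final step substitutes $e^{\gamma_{T_\epsilon}}=(d/R_2)\cdot(\Lambda_1+GR_2)/\epsilon$. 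Crucially this bound on $\|b\|_1$ is \emph{schedule-independent} among all ideal-scaling schedules with equality and the prescribed $\beta_0,\gamma_0$, which is the formal manifestation of the schedule invariance anticipated in the discussion. The analogous integral for $a(t)=e^{\alpha_t-\gamma_t}$ contributes only to subleading logarithmic factors.

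With $L_0=\Ocal(1)$ and $\|b\|_1$ in hand, I invoke Theorem~\ref{thm:master_simulation_thm} on the $R$-restriction $g$ of $f$ to the box $x_0+[-4R_\infty,4R_\infty]^d$. Here $\|g\|_\infty\le\Lambda_2$ by hypothesis, and the Lipschitz constant of $g$ rescales to $\Ocal(GR_\infty)$. Taking $\epsilon_{\mathrm{alg}}=\epsilon/(6\Lambda_2)$ converts the Fourier-interpolant error bound of Theorem~\ref{thm:master_simulation_thm} on any $G$-Lipschitz observable $h$ with $\|h\|_\infty\le\Lambda_2$ into an $\epsilon/6$ error when estimating $\Embb_{\sim|\Phi_{T_\epsilon}|^2}[f]$ by the corresponding discrete average on $|\Psi_{T_\epsilon}\rangle$. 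Adding the $\epsilon/24$ continuous-time error, an initial-state discretization error controlled via Lemma~\ref{lem:initial_state_discretization guarantee} (which affects $L_0$ by at most an additive $\Ocal(\epsilon)$), and the simulation failure probability, produces $\Embb[f(\tilde x)]-f(x_\star)\le\epsilon/2$ at the level of expectations. A single application of Markov's inequality then yields $f(\tilde x)-f(x_\star)\le\epsilon$ with probability at least $2/3$ upon measuring in the computational basis. The stated query counts for binary and phase oracles follow directly by substituting $\Lambda=\Lambda_2$ and $\|b\|_1=\widetilde\Ocal(d(\Lambda_1+GR_2)/\epsilon^2)$ into the two cases of Theorem~\ref{thm:master_simulation_thm}.

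The main obstacle I expect is the bookkeeping of boundary effects. Theorem~\ref{thm:master_simulation_thm} is stated under a periodic-boundary hypothesis, while $f$ restricted to the simulation box is generically non-periodic, so I will need to verify the low-leakage condition of Definition~\ref{defn:low-leakage-ev} (Theorem~\ref{thm:collocation_trace_distance_aperiodic}) for the QHD evolution. Intuitively, the initial Gaussian has standard deviation $R_2/\sqrt{d}$ per coordinate while the simulation half-width $4R_\infty$ is much larger, and the rapidly growing potential pre-factor $e^{\beta_t+\gamma_t+\alpha_t}$ in $H_{\mathrm{QHD}}$ strongly confines the wavepacket as $t$ grows; a quantitative version of this heuristic, together with a standard Lieb--Robinson-type argument on the short initial transient, should discharge the leakage hypothesis. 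The remaining subtlety is checking that the schedule-specific values of $\|a\|_1$ and the initial Sobolev seminorms $|\Phi_0|_{\Hcal^k}$ enter only the polylogarithmic factors in $N$ from \eqref{eq:N-master-thm} and are therefore absorbed into the $\widetilde\Ocal$ notation; this reduces to a direct calculation since $\Phi_0$ is Gaussian and hence has closed-form Sobolev seminorms polynomial in $d/R_2$.
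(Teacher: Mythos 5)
Your proposal is correct and mirrors the paper's own argument essentially step for step: the computation of $L_0=\Ocal(1)$ (Lemma~\ref{lem:control-lyapunov}), the derivation of $\|b\|_1$ via the ideal-scaling identities (Lemma~\ref{lem:general_QHD_time_complexity}), the substitution into Theorem~\ref{thm:master_simulation_thm} with $\epsilon_{\mathrm{alg}}=\epsilon/6\Lambda_2$, the Markov step (Lemma~\ref{e:sim_to_measurement_guarantee}), and the deferral of the low-leakage/boundary issue to a separate justification (Section~\ref{sec:justify-periodic}). The only cosmetic difference is that you inline the calculations the paper packages into auxiliary lemmas.
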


We first prove Theorem~\ref{thm:convex-main-specialized} as a special case of assuming Theorem~\ref{thm:convex-main-general}. We refer the reader to Theorem \ref{thm:master_simulation_thm} for the qubit and gate counts.

\thmconvexmainspecialized*
\begin{proof}
    We note that $f$ satisfies all the conditions of Theorem~\ref{thm:convex-main-general}. By the conditions of Problem~\ref{prob:nonsmooth-convex} we are given a point $x_0$ such that $\norm{x_0 - x_\star} \leq R$, hence, $\norm{x_0 - x_\star}_\infty \leq R$. Furthermore, since $f$ is $G$-Lipschitz, we have that $f(x_0) - f(x_\star) \leq GR$ and $\max_{x \in B_\infty(x_0,4R_\infty)} f(x) \leq GR\sqrt{d}$. We can therefore apply Theorem~\ref{thm:convex-main-general} with $R_\infty = R$, $\Lambda = GR$, and $\Lambda_\infty = GR\sqrt{d}$. The result follows.
\end{proof}

It remains to prove Theorem~\ref{thm:convex-main-general}. To proceed, we need a few intermediate results. First, we must derive a sufficient simulation time. While we have already determined the rate of convergence in continuous time, the error bounds depend also on the value of the chosen Lyapunov function $\Ecal$ at the initial time. Our next lemma provides a bound on $L$, for a properly chosen $m_0, \omega_0$.

\begin{lemma}
\label{lem:initial-lyapunov}
    $f : \R{d} \to \R{}$ be a $G$-Lipschitz continuous, convex function and $\Xcal^{\star}$ its (non-empty, convex) set of global minima. Let there further be a known point $x_0$ and parameters $R, \Lambda \in \Rmbb_+$ such that, for some $x_\star \in \Xcal^{\star}$, it holds that $\norm{x_0 - x_\star} \leq R$, and $f(x_0) - f(x_\star) \leq \Lambda $. Let $H_\mathrm{QHD}$ be as in \eqref{eq:QHD_Hamiltonian} with schedules $c_t, m_t, \omega_t$ satisfying \ref{eq:ideal_scaling}. Finally, let $\Phi_T \in \Hcal$ be the time evolution under $H_\mathrm{QHD}$ of the symmetric Gaussian wavefunction of covariance $\sigma^2 \mathbbm{1}$ centered at $x_0$.
    \begin{equation*}
        \Phi_0(x) = (2\pi \sigma^2)^{-d/4} e^{-(x - x_0)^2/4 \sigma^2}
    \end{equation*}
    Then 
    \begin{equation*}
        \Ecal_0 \leq \frac{d}{(2 m_0 \sigma)^2} + 2 \lambda^2(d \sigma^2 + R^2) + \omega_0^2(GR + \Lambda).
    \end{equation*}
\end{lemma}
\begin{proof} 
    By ideal scaling, can apply Theorem~\ref{thm:QHD_convergence_rate} with initial Lyapunov function
    \begin{equation*}
        \Ecal_0 \coloneqq \frac12 \langle(-im_0^{-1} \nabla + \lambda(x - x_\star))^2\rangle_{\Phi_0} + \omega_0^2\langle f(x) - f(x_\star)\rangle_{\Phi_0}.
    \end{equation*}
    For the term $\omega_0^2 \langle f(x) - f(x_\star)\rangle_{\Phi_0}$, we have
        \begin{align*}
            \omega_0^2\langle f(x) - f(x_\star)\rangle_{\Phi_0} &=\omega_0^2\left(\langle f(x) - f(x_0)\rangle_{\Phi_0}  + \langle f(x_0) - f(x_\star)\rangle_{\Phi_0} \right)\\
                         &\leq \omega_0^2\left(\langle G\norm{x - x_0} \rangle_{\Phi_0} + \Lambda  \right)\\
                         &\leq \omega_0^2\left(GR + \Lambda  \right).
        \end{align*}
        For the term $\langle(-i m_0^{-1} \nabla + \lambda(x - x_\star))^2\rangle_{\Phi_0}$ we observe that,
        \begin{align*}
            \langle(-im_0^{-1} \nabla + \lambda(x - x_\star))^2\rangle_{\Phi_0} &= m_0^{-2}\langle-\Delta \rangle_{\Phi_0} + \lambda^2 \langle (x - x_\star)^2 \rangle_{\Phi_0} \\
            &\quad+ \lambda m_0^{-1}\left(\langle -i\nabla (x - x_\star) \rangle_{\Phi_0} + \langle -i(x - x_\star)\nabla  \rangle_{\Phi_0}\right)\\
            &= m_0^{-2}\langle -\Delta \rangle_{\Phi_0} + \lambda^2 \langle (x - x_\star)^2 \rangle_{\Phi_0}.
        \end{align*}
    To see that the cross terms vanish going into the last line, we observe that, if $\{A,B\} = AB + BA$, 
    \begin{align*}
        \langle\lbrace -i\nabla, (x-x_\star) \rbrace\rangle_{\Phi_0} &= \langle\Phi_0, (-i\nabla)(x-x_\star) \Phi_0\rangle + \langle\Phi_0, (x-x_\star) (-i\nabla)\Phi_0\rangle\\
        &= 2\mathrm{Re}\left( \langle -i \nabla \Phi_0, (x-x_\star) \Phi_0\rangle\right).
    \end{align*}
    This vanishes because $\Phi_0$ is real. Meanwhile, the expectation of $-\Delta$ on a multivariate Gaussian with covariance $\sigma^2 \mathbb{I}$ is given by $d/(2\sigma)^{2}$.
    Finally, for the $x$-term we have,
    \begin{align}
        \langle (x - x_\star)^2 \rangle_{\Phi_0} &\leq 2 \left(\langle (x - x_0)^2\rangle_{\Phi_0} + \langle (x_0 - x_\star)^2\rangle_{\Phi_0}\right) \leq 2(d \sigma^2 + R^2).
    \end{align}
    Putting these pieces together gives the bound on $\Ecal_0$ of the Lemma.
\end{proof}

Lemma~\ref{lem:initial-lyapunov} leads to convergence rates with no hidden factors, and it remains to choose parameters to maximize performance. Rather than optimizing the convergence rate alone, one should also consider the costs of simulation. Towards this end, we provide an lemma that bounds the query complexity in a schedule-independent way. This lemma will also be useful for our lower bounds later.
\begin{lemma} \label{lem:general_QHD_time_complexity}
    Suppose $c_t, m_t, \omega_t \in C^1([0,\infty))$ satisfy the particular ideal scaling conditions
    \begin{equation*}
        \dot{m}_t = \lambda c_t m_t, \qquad \dot{\omega}_t = \vartheta \lambda c_t\omega_t/2
    \end{equation*}
    for $\lambda \in \Rmbb_+$ and $\vartheta \in (0,1]$. Then
    \begin{equation*}
        \int_0^T c_t m_t \omega_t^2 dt = \frac{m_0 \omega_0^2}{(1 + \vartheta) \lambda}\left[(\omega_T/\omega_0)^{2(1+\vartheta)/\vartheta} - 1 \right].
    \end{equation*}
    For sufficiently large $\omega_T$, the right hand side is minimized when $\vartheta = 1$ (ideal scaling with equality), and for this value
    \[
    \int_0^T c_t m_t \omega_t^2 dt = \frac{m_0\omega_0^2}{2\lambda}(\omega_T^4/\omega_0^4 - 1).
    \]
\end{lemma}
\begin{proof}
    We first write the integral as
    \begin{align} \label{eq:make_I}
        \int_0^T c_t m_t \omega_t^2 dt &= m_0 \omega_0^2 \int_0^T c_t \frac{m_t \omega_t^2}{m_0 \omega_0^2} dt \equiv m_0 \omega_0^2 I(T).
    \end{align}
    From the fundamental theorem of calculus,
    \begin{align} \label{eq:2nd_It}
    \begin{aligned}
        \frac{m_t \omega_t^2}{m_0 \omega_0^2} &= \int_0^t \frac{d}{d\tau} \frac{m_\tau \omega_\tau^2}{m_0 \omega_0^2} d\tau + 1 \\
        &= \frac{\lambda}{m_0 \omega_0^2} \left(\int_0^t \dot{m}_\tau \omega_\tau^2 d\tau + \int_0^t 2 m_\tau \omega_t \dot{\omega}_\tau\right) + 1 \\
        &= (1+\vartheta) \lambda \int_0^t c_\tau  \frac{m_\tau \omega_\tau^2}{m_0 \omega_0^2}d\tau + 1 = (1+\vartheta) \lambda I(t) + 1
    \end{aligned}
    \end{align}
    where in the last line we employed our specific ideal scaling conditions. Comparing~\eqref{eq:2nd_It} with~\eqref{eq:make_I}, one has
    \begin{align*}
        I(T) = (1 + \vartheta) \lambda \int_0^T c_t I(t) dt + \int_0^T c_t dt
    \end{align*}
    for all $T\geq 0$. Taking a derivative of both sides with respect to $T$, we see that $I$ solves a simple differential equation
    \begin{align*}
        &\dot{I}(t) = (1 + \vartheta) \lambda c_t I(t) + c_t, \\
       \implies & \frac{\dot{I}(t)}{(1 + \vartheta) \lambda I(t) + 1} = c_t, \\
       \implies & \frac{d}{dt}\ln((1 + \vartheta) \lambda I(t) + 1) = (1 + \vartheta) c_t,
    \end{align*}
    whose solution is
    \begin{align*}
        I(T) = \frac{1}{(1 + \vartheta)\lambda }\left(e^{(1 + \vartheta) \int_0^T c_t dt} - 1\right).
    \end{align*}
    Using the 2nd ideal scaling condition, $\int_0^T c_t dt = \vartheta^{-1} \log(\omega_T^2/\omega_0^2)$. Thus,
    \begin{align*}
        \int_0^T c_t m_t \omega_t^2 dt = \frac{m_0 \omega_0^2}{(1 + \vartheta) \lambda}\left[ (\omega_T/\omega_0)^{2(1+\vartheta)/\vartheta} - 1 \right].
    \end{align*}
\end{proof}

As a technical aside, we must connect our bounds on expectation value of $f$ to the probability of measuring a near-optimal $x$. This can be accomplished with Markov's inequality.
\begin{lemma}\label{e:sim_to_measurement_guarantee}
    Let $\epsilon > 0$, and $\Pmbb$ be a probability measure over $\Rmbb^d$. Suppose that $x_\star \in \Xcal^{\star}$ is a minimizer of a measurable function $f: \R{d} \rightarrow \Rmbb$. If 
    \begin{equation*}
        \Embb [f(x)] - f(x_\star) \leq \epsilon/3,
    \end{equation*}
    then $\Pmbb[f(x) - f(x_\star) < \epsilon] \geq 2/3$.
\end{lemma}
\begin{proof}
    For $x \sim \Pmbb$, $f(x) - f(x_\star)$ is a non-negative random variable. By Markov's inequality,
    \begin{equation*}
        \Pmbb\big[f(x) - f(x_\star) \geq 3(\Embb[f(x)] -  f(x_\star)) \big]   \leq 1/3\\
    \end{equation*}
    which implies by our assumptions on $\Embb[f]$ that $\Pmbb[f(x) - f(x_\star) \geq \epsilon] \leq 1/3$.
\end{proof}

We are finally ready to prove Theorem~\ref{thm:convex-main-general}. We note that our proof will assume the applicability of our simulation algorithm to continuous systems. We justify this in Section~\ref{sec:justify-periodic}. The query bounds we derive will be valid whenever the simulation algorithm is applicable.

\begin{proof}[Proof of Theorem~\ref{thm:convex-main-general}]
    By construction (Algorithm~\ref{alg:optimize_QHD}) we simulate the dynamics over $B_\infty(x_0, 4R_\infty) \equiv x_0 +[-4R_\infty,4R_\infty]^d$. By assumption, $\norm{f}_\infty \leq \Lambda_\infty$. From the guarantees of Theorem~\ref{thm:master_simulation_thm}, we have
    \begin{align*}
        \bigg\lvert (2N)^{-d}\sum_{x_\jmbf \in \Gcal} {h}(8R_\infty x_\jmbf)\lvert \Psi_{T_\epsilon}(x_\jmbf)\rvert^2 - \int_S{h}(8R_\infty y)\lvert \Phi(y, T_\epsilon)\rvert^2 dy \bigg\rvert \leq\lVert h \rVert_{\infty}\epsilon_{\mathrm{alg}}.
    \end{align*}
    for any Lipschitz function $h$ with the same Lipschitz constant, $G$, as the potential. We select $h(x) = f(x + x_0)$, hence $\norm{h}_\infty \leq\Lambda_\infty$. The guarantee translates into
    \begin{align*}
        \bigg\lvert (2N)^{-d}\sum_{x_\jmbf \in \Gcal} {f}(x_0 + 8R_\infty x_\jmbf)\lvert \Psi_{T_\epsilon}(x_\jmbf)\rvert^2 - \int_S{f}(x_0 + 8R_\infty y)\lvert \Phi(y, T_\epsilon)\rvert^2 dy \bigg\rvert \leq\Lambda_\infty \epsilon_{\mathrm{alg}} = \epsilon/6.
    \end{align*}
    We note that the first term on the left hand side of the above bound represents the average potential at a point sampled from the state with discrete support obtained by simulation, and the right hand side represents the expectation of the potential at a random point sampled according to the wavefunction $\Phi_{T_\epsilon}$. Thus the point $\tilde{x}$ returned by the algorithm satisfies
    \begin{align*}
        \Embb[f(\tilde{x})] \leq\frac{\epsilon}{6} + \left(\Embb_{\sim |\Phi_{T_\epsilon}|^2} [f(x)] - f(x_\star)\right) \leq \frac{\epsilon}{3},
    \end{align*}
    where the last inequality follows from our choice of $T_\epsilon$ and Lemma~\ref{lem:initial-lyapunov}. By Lemma~\ref{e:sim_to_measurement_guarantee}, $\tilde{x}$ satisfies the output guarantee with probability at least $2/3$.

    Finally, we determine the cost of running the simulation. Let $b(t)$ be the time dependent coefficient for the potential $f$. By Lemma~\ref{lem:general_QHD_time_complexity}, we have
    \begin{align*}
        \norm{b}_1 = \int_0^{T_\epsilon} c_t m_t \omega_t^2 dt \leq\frac{m_0}{2\lambda\omega_0^2} \omega_{T_\epsilon}^4.
    \end{align*}
    From Theorem~\ref{thm:QHD_convergence_rate}, we must choose $T_\epsilon$ such that $\omega_{T_\epsilon}^2 = \Ocal(\Ecal_0/\epsilon)$. Using the bound on $\Ecal_0$ from Lemma \ref{lem:initial-lyapunov}, we find that
    \begin{equation}
        \norm{b}_1 = \Ocal\left(\frac{m_0 \Ecal_0^2}{\lambda \omega_0^2 \epsilon^2}\right) \lesssim \frac{1}{\epsilon^2}\left[\sqrt{\frac{m_0}{\lambda}} \Big(\frac{d}{\omega_0(2 m_0 \sigma)^2} + 2 \frac{(\lambda\sigma)^2 + (\lambda R)^2}{\omega_0} + \omega_0 ( GR + \Lambda)\Big)\right]^2
    \end{equation}
    We seek to minimize the square-bracket expression, at least approximately. Let $\Lambda' \equiv GR + \Lambda$. We seek to minimize $d$ dependence overall. Working from the last term, we try the ansatz $\omega_0 \sqrt{m_0/\lambda} = d^p \sqrt{GR} / \Lambda'$ for some $p > 0$. Solving for $\lambda$ and plugging in to the bracketed expression gives
    \begin{align}
    \begin{aligned}
        \sqrt{\frac{m_0}{\lambda}} \frac{\Ecal_0}{\omega_0} &\lesssim d^p \sqrt{GR} + \sqrt{\frac{m_0}{\lambda}}\left(\frac{d}{(m_0 \sigma)^2 \omega_0} + \frac{\lambda^2}{\omega_0}(d \sigma^2 + R^2)\right) \\
        &\lesssim d^p \sqrt{GR} + \frac{d^p \sqrt{GR}}{\Lambda'} \left(\frac{d}{(m_0 \omega_0 \sigma)^2} + \frac{\lambda^2}{\omega_0^2}(d \sigma^2 + R^2)\right).
    \end{aligned}
    \end{align}
    A natural choice for initial variance is $\sigma^2 = R^2/d$. Focusing on the terms in parentheses, it suffices that
    \begin{align}
    \begin{aligned}
        \frac{d^2}{(m_0 \omega_0 R)^2} + \frac{\lambda^2 R^2}{\omega_0^2} &\lesssim \Lambda' \\
        \frac{d^2}{(m_0 \omega_0 R)^2 \Lambda'} + \frac{1}{\Lambda'}\left(\frac{m_0 \omega_0 \Lambda'^2 R}{d^{2p} G R}\right)^2 &\lesssim 1 \\
        \frac{d^2}{x} + \frac{x}{d^{4p}} \left(\frac{\Lambda'}{GR}\right)^2 &\lesssim 1
    \end{aligned}
    \end{align}
    where $x \coloneqq (m_0 \omega_0 R)^2 \Lambda'$. To ensure each term is $O(1)$ in $d$, we require $x \in \Omega(d^2)$ and $p \geq 1/2$. Taking the minimal value $p = 1/2$ and $x = d$, this leads satisfies our requirements. Altogether (remembering the definition of $\Lambda'$),
    \begin{equation}
        \norm{b}_1 \lesssim \frac{d GR}{\epsilon^2} \left(1 + \frac{\Lambda}{GR}\right)^4.
    \end{equation}
    While there is freedom in at least one of the parameters still, we may for concreteness choose $\omega_0 = 1$, leading to the following remaining choices of parameters.
    \begin{equation}
        m_0 = \frac{d}{R\sqrt{\Lambda'}} \qquad \lambda = \frac{m_0 \Lambda'^2}{dGR} = \frac{\Lambda'^{3/2}}{G R^2}
    \end{equation}
    The resource requirements are then given by Theorem~\ref{thm:master_simulation_thm}. For example, the query complexity is $\norm{b}_1 \Lambda_\infty$ up to logarithmic factors.
\end{proof}

\subsubsection{Justifying the Applicability of the Simulation Algorithm}
\label{sec:justify-periodic}

For the initial state we will consider a smoothed-out version of a truncated Gaussian. 
Consider the Gaussian wavepacket as (square root of) our ``ideal'' initial state over $\Rmbb^d$:
\begin{align*}
G_{\sigma}({Rx + \mu}) = \frac{e^{-x^2/(4\sigma_1^2)}}{(\sqrt{2\pi}\sigma_1)^{d/2}},
\end{align*}
where $\sigma_1 < 1$, and we will use arithmetic to center it around the desired point $x_0$ (i.e. $\mu = x_0$).
However, of course we cannot exactly load this in the simulation box. Hence, we prove the following results that shows there is a sufficiently good proxy.

\begin{lemma}
\label{lem:initial_state_discretization guarantee}
Consider $\sigma_1 < 1$, $\mu$, and the Gaussian wavepacket $G_{\sigma}(R x + \mu)$. If
\begin{align*}
    \log_2(N) = \Ocal\left(d\log\left(\frac{(d^{9/4}R^{3})^{1/d}}{\sigma^{3/4}\epsilon^{3/d}}\right)\right)
\end{align*} 
then for $R = {\Omega}\left(\log(d/\epsilon)\right)$, there is a normalized wave function in $\Hcal_N$ with compact support in $S$ that is at most $\Ocal(\epsilon)$ in $L_2(\Rmbb^d)$ distance from $G_{\sigma}(R x + \mu)$.
\end{lemma}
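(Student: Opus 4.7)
The plan is to produce $\widetilde{\Phi}_0 \in \Hcal_N$ in two stages: a smooth spatial truncation of the Gaussian so that it becomes compactly supported well inside $S$, followed by projection onto the Fourier polynomial space $\Hcal_N$ and a final normalization. I measure all errors in $L_2(\mathbb{R}^d)$ by extending functions by zero outside $S$, and budget three separate $\mathcal{O}(\epsilon)$ contributions -- spatial truncation, Fourier truncation, and renormalization -- which together will account for the $\epsilon^{3/d}$ factor appearing inside the logarithm of the stated bound on $N$.

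For the spatial truncation, I would fix a smooth bump $\eta\colon \mathbb{R}^d \to [0,1]$ with support contained in $B_\infty(0, \tfrac{1}{2} - \delta)$, equal to $1$ on $B_\infty(0, \tfrac{1}{2} - 2\delta)$ for a constant $\delta$ bounded away from $1/2$, and with derivatives controlled by explicit powers of $\delta^{-1}$. Setting $\Phi^{\text{cut}}(x) = \eta(x)\, G_\sigma(Rx+\mu)$ and using that $|G_\sigma(Rx+\mu)|^2$ factors as a product of univariate Gaussian densities of variance $\sigma_1^2$ in each $x$-coordinate, a one-dimensional Gaussian tail estimate together with a union bound over coordinates shows that $\|G_\sigma(R\,\cdot +\mu) - \Phi^{\text{cut}}\|$ decays exponentially in $(\tfrac{1}{2} - 2\delta)^2/\sigma_1^2$ up to a factor polynomial in $d$. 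The hypothesis $R = \Omega(\log(d/\epsilon))$ together with a constant choice of $\delta$ drive this error below $\mathcal{O}(\epsilon)$.

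Next, since $\Phi^{\text{cut}}$ is $C^\infty$ and compactly supported strictly inside $S$, its periodic extension is smooth on the torus and has a uniformly convergent Fourier series, so Theorem~\ref{thm:total_interpolation_error_l2} (applied to $P_N$ via the truncation term in its proof) with $m = d + \Theta(1)$ gives
\[
    \|\Phi^{\text{cut}} - P_N \Phi^{\text{cut}}\| \;\lesssim\; \left(\tfrac{\pi}{4}\right)^{d/4} \frac{|\Phi^{\text{cut}}|_{\Hcal^m}}{\sqrt{(m-d/2)\,\Gamma(d/2)}\,N^m} + \frac{|\Phi^{\text{cut}}|_{\Hcal^1}}{N}.
\]
The key computation is to estimate $|\Phi^{\text{cut}}|_{\Hcal^m}$: the $k$-th univariate derivative of $e^{-t^2/(4\sigma_1^2)}$ equals $(2\sigma_1)^{-k}$ times a Hermite polynomial of degree $k$, whose $L_2$ norm grows like $(\sqrt{k}/\sigma_1)^k$; combining this with the Leibniz rule against $\eta$ and summing over the $\binom{m+d-1}{d-1}$ multi-indices $\alpha \in \Nmbb^d$ with $\|\alpha\|_1 = m$ yields a bound of the form $|\Phi^{\text{cut}}|_{\Hcal^m} \lesssim (c\sqrt{d}/\sigma_1)^m$ for an absolute constant $c$. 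Substituting and requiring the result to be $\mathcal{O}(\epsilon)$ gives the claimed condition on $\log_2 N$; the final normalization and an optional outer cutoff enforcing strict compact support each cost only an additional $\mathcal{O}(\epsilon)$, since $\|P_N \Phi^{\text{cut}}\|$ is $1 \pm \mathcal{O}(\epsilon)$.

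The main obstacle will be the dimension bookkeeping in the Sobolev bound: balancing the multi-index count $\binom{m+d-1}{d-1}$, the prefactor $(\pi/4)^{d/4}/\sqrt{(m-d/2)\,\Gamma(d/2)}$, and the $(\sqrt{m}/\sigma_1)^m$ growth of Gaussian derivatives has to be done carefully to avoid spurious factors of $d^d$ or $\sigma_1^{-d^2}$ that would inflate the exponent inside the logarithm. Cross-terms generated by derivatives falling on $\eta$ rather than on the Gaussian are a subleading concern as long as $\delta$ is fixed independently of $d$.
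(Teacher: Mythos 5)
Your approach is genuinely different from the paper's, and in a way that is conceptually cleaner: the paper first applies a hard indicator-function truncation $G\,\mathbbm{1}_S$ (whose periodic extension is only $C^0$) and then convolves with a narrow rectangular mollifier of $\epsilon$-dependent width $\sigma$ to restore smoothness; it then passes through $I_N$ via Lemma~\ref{lem:gen_periodic_initial_condition}. Because the error between the mollified and unmollified Gaussian is controlled pointwise and the Gaussian's peak value is $(\sqrt{2\pi}\sigma_1)^{-d/2}$, the paper is forced to take $\sigma \sim \epsilon (\sqrt{2\pi}\sigma_1)^{d/4}/R$, which inflates the Sobolev seminorms through the mollifier (Lemma~\ref{lem:mollifier_lem}) and is where the $\epsilon^{3/d}$ and $\sigma_1$ factors inside the logarithm of $N$ come from. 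Your multiplicative smooth cutoff $\eta$ with a fixed width $\delta$, followed by $P_N$ and renormalization, sidesteps both the $L_\infty$-to-$L_2$ loss and the $\epsilon$-dependent smoothing width, so if the estimate were done correctly you would obtain a strictly smaller $N$ that still satisfies the lemma's $\mathcal{O}(\cdot)$ condition.

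However, the crucial step — the bound $\lvert\Phi^{\mathrm{cut}}\rvert_{\Hcal^m} \lesssim (c\sqrt{d}/\sigma_1)^m$ — is wrong, and the dismissal of "cross-terms generated by derivatives falling on $\eta$" as subleading is exactly the gap. The Leibniz term with all derivatives on the Gaussian indeed gives $\sim \sigma_1^{-m}\sqrt{m!}$ per multi-index, but the cutoff $\eta$ is built from standard bump functions, which are Gevrey class $G^2$: $\lVert\eta_0^{(k)}\rVert_\infty$ grows like $C^k (k!)^2$ (equivalently, $\hat\eta$ decays only like $e^{-c\sqrt{\lvert\xi\rvert}}$; compare the saddle-point estimate in the proof of Lemma~\ref{lem:mollifier_lem}). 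For the $m = d + \Theta(1)$ needed to invoke Theorem~\ref{thm:total_interpolation_error_l2}, the $\beta = \alpha$ Leibniz term alone contributes $\sim (m!)^2 \sim d^{2d}$, which dominates $\sqrt{m!}/\sigma_1^m \sim (\sqrt{d}/\sigma_1)^d$ unless $\sigma_1 \lesssim d^{-3/2}$ — a regime the hypothesis $\sigma_1 < 1$ does not give you. Keeping $\delta$ fixed controls the low-order derivatives of $\eta$ by constants, but the order here is $m = d$, which grows; "constant-width bump" does not translate into "derivatives bounded by $\delta^{-k}$" at that order. The correct Sobolev estimate for $\eta G$ is dominated by the cutoff, roughly $\lvert\Phi^{\mathrm{cut}}\rvert_{\Hcal^d} \asymp (C d^{3/2}/\delta^2)^d$ (the analogue of Lemma~\ref{lem:mollifier_lem} with $\sigma \to \delta$). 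This corrected bound is $\epsilon$- and $\sigma_1$-independent, so it still yields an $N$ within the lemma's stated $\mathcal{O}(\cdot)$, but as written your bookkeeping underestimates the Sobolev norm and the Hermite-polynomial argument does not close the proof. Separately, the "optional outer cutoff enforcing strict compact support" should be dropped: multiplying $P_N\Phi^{\mathrm{cut}}$ by another cutoff takes you outside $\Hcal_N$, and, as in the paper's own construction, "compact support in $S$" here just means the object lives on the hypercube, not that it vanishes near $\partial S$.
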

\begin{proof}
We consider truncating the Gaussian to the box $S$ in the following way
\begin{align*}
G_{\sigma_1, R}({x}) = \frac{G_{\sigma_1}(R x + \mu)\mathbb{1}_S(x)}{\Zcal},
\end{align*}
where $\Zcal$ is the $L_2$ normalization constant. Note that by standard Gaussian concentration over the $\ell_\infty$ ball and $\sigma_1 < 1$ if $R = {\Omega}\left(\log(d/\epsilon)\right)$ for $\epsilon < 1/2$, then $\Zcal \geq \sqrt{1/2}$.

However, even though the truncated Gaussian retains the smoothness properties of the original Gaussian within the interior, it looses the fast Fourier decay. It is well known that multiplying a function by a rectangular window convolves the Fourier transform by a function with decay like $\Ocal(1/\norm{\nmbf})$, which ends up dominating the rate of decay. This is significantly worse than the Fourier spectrum of the original Gaussian, which is also a Gaussian. Hence, post truncation we convolve it by a rectangular mollifier $\mathcal{M}_{\delta, d}$ (Definition \ref{defn:mollifer}).

Due to the separable natures of the Gaussian and mollifiers, we only need to perform $d$, one-dimensional convolutions as a classical preprocessing step. This is leads to the function:
\begin{align*}
    G_{\sigma_1, R, \sigma}(x) = \int_SG_{\sigma_1, R}(y)\mathcal{M}_{\sigma, d}(x-y)dy.
\end{align*}
However, due to the lack of Lipschitzness, we cannot apply Lemma \ref{lem:mollification} directly.

Since the coordinate system will be centered around the Gaussian, the periodic extension of $G_{\sigma_1, R}$ will be continuous. Suppose that $\sigma = o\left(1/R\right)$, so that
\begin{align*}
    \lvert  G_{\sigma_1, R, \sigma}(x) -  G_{\sigma_1, R}(x) \rvert &\leq \sqrt{2}(\sqrt{2\pi}\sigma_1)^{-d/2}\int_S\lvert e^{-\lVert Rx \rVert^2} - e^{-\lVert Ry\rVert^2} \rvert \mathcal{M}_{\sigma, d}(x-y) dy \\
    &\leq \sqrt{2}(\sqrt{2\pi}\sigma_1)^{-d/2}e^{-\lVert Rx \rVert^2}\int_S\lvert 1 - e^{R^2\lvert \lVert y\rVert^2 - \lVert x \rVert^2\rvert} \rvert \mathcal{M}_{\sigma, d}(x-y) dy\\
    &=\sqrt{2}(\sqrt{2\pi}\sigma_1)^{-d/2} \Ocal\left(R^2\sigma^2d\right),
\end{align*}
so we need 
\begin{align*}
    \sigma = \Ocal\left(d^{-1/2}R^{-1}(\sqrt{2\pi}\sigma_1)^{d/4}\epsilon\right).
\end{align*}
Note due to the symmetry of the Gaussian, $G_{\sigma_1, R}$ has a continuous periodic extension, so the above is an $\lVert \cdot \rVert_{\infty}$ bound over $S$.

The Sobolev bound effectively follows the same proof as the Sobolev bound in \ref{lem:mollification}. Specifically, we have
\begin{align*}
    \lvert G_{\sigma_1, R, \sigma}\rvert_{\Hcal^{m}} &= \Ocal\left(\left(\frac{Cd}{\sigma}\right)^{3d}\right) = \Ocal\left(\left(\frac{CRd^{3/2}}{(\sqrt{2\pi}\sigma_1)^{d/4}\epsilon}\right)^{3d}\right).
\end{align*}
We then plan to load the state 
\begin{align*}
    \widetilde{\Phi_0} = \frac{I_N[G_{\sigma_1, R, \sigma}]}{\lVert I_N[G_{\sigma_1, R, \sigma}]\rVert},
\end{align*}
where we can use Lemma \ref{lem:gen_periodic_initial_condition} with $\Phi_0 = G_{\sigma_1, R}$. However, due to use of mollification we do not have a good bound on the Sobolev norms of constant order. 

Recall that Lemma \ref{lem:gen_periodic_initial_condition} guarantees that the required number of grid points per dimension grow as
\begin{align*}
    N = \Ocal\left( \frac{d^{-1/4}\lvert \Phi_0\rvert_{\Hcal^d}^{1/d} + \lvert \Phi_0 \rvert_{\Hcal^1}}{\epsilon} \right). 
\end{align*}
The $\lvert \Phi_0 \rvert_{\Hcal^1}$ comes from the Fourier series truncation component, which using Lemma \ref{lem:trunc_high_sobolev} we can swap out to get an overall required grid spacing of just
\begin{align*}
    N = \Ocal\left( \frac{\lvert \Phi_0\rvert_{\Hcal^d}^{1/d}}{\epsilon} \right) = \Ocal\left(\frac{C'' R^3d^{9/2}}{(\sqrt{2\pi}\sigma_1)^{3d/4}\epsilon^3}\right)
\end{align*}
which implies
\begin{align*}
    \log_2(N) = \Ocal\left(d\log\left(\frac{(d^{9/4}R^{3})^{1/d}}{\sigma_1^{3/4}\epsilon^{3/d}}\right)\right).
\end{align*}
This yields
\begin{align*}
    \lVert G_{R, \sigma_1} - \widetilde{\Phi_0}\rVert = \Ocal(\epsilon).
\end{align*}
Since we also know that
\begin{align*}
    \lVert G_{\sigma_1} - G_{\sigma_1, R}\rVert  =\Ocal(\epsilon),
\end{align*}
the result follows from the triangle inequality.
\end{proof}

The state constructed in Lemma~\ref{lem:initial_state_discretization guarantee} can be chosen to be $O(\epsilon)$-close in trace distance to the desired starting Gaussian $\Phi_0$ (see Algorithm~\ref{alg:optimize_QHD}), by choosing $\sigma = R_2/\sqrt{d}$ and $R \coloneqq 4R_\infty$. Thus our bounds on the Lyapunov functional at time $0$ remain valid for the discrete case. The next difficulty is that our potential function is not necessarily periodic and thus the guarantees of Theorem~\ref{thm:master_simulation_thm} do not directly apply. However, we show in Section~\ref{sec:lipschitz-guarantee} that the guarantees of Theorem~\ref{thm:master_simulation_thm} apply for potentials that are only Lipschitz given an assumption on ``low-leakage" evolution that we restate below.
\lowleakage*
\noindent Specifically, in order for the guarantees to apply, we must have the low-leakage evolution condition with $\delta_0 = 1/R$ where the simulation is an $\ell_\infty$ ball of radius $R$. Substituting our choices of parameters for the simulation, we require effectively the following condition on the continuous wavefunction $\widetilde\Phi_t$ (truncated to lie in $B_\infty(x_0,4R_\infty - \delta)$ and re-normalized) for all $0 < \delta < 8$,
\begin{align*}
    \int_{\left[B_\infty(x_0,4R_\infty - \delta)\right]^c} |\widetilde\Phi_t|^2 \leq\frac{\delta \epsilon^2}{2 d\Lambda_\infty (GR + \Lambda )}
\end{align*}
Intuitively, the low-leakage condition ensures that the simulated wavefunction places very low amplitude near the boundaries of the region of simulation. This allows the guarantees for periodic potentials to be extended to Lipschitz potentials, because the discontinuity at the boundary has little effect on the simulation process.

Also note that the $R$-restriction of our convex function $f$, is both almost-surely differentiable and convex and on its interior. Hence, the argument with Eq.~\eqref{eq:lyapunov_inequality} still goes through for the $R$-restriction.

Up to this point, our justifications are fully rigorous. We now argue that the wavefunction evolving under \ref{eq:QHD_Hamiltonian} is indeed likely to satisfy the low-leakage conditions. We first consider the initial state, $\Phi_0$ as defined in Algorithm~\ref{alg:optimize_QHD}, and suppose that $R_\infty = R_2$\footnote{This is the case whenever we are only given an upper bound on $\norm{x - x_0}_2$, as in Theorem~\ref{thm:convex-main-general}, and the $\ell_\infty$ bound is derived from this.}. By standard concentration arguments, the total probability placed by the distribution $|\Phi_0|^2$ outside the ball $B_\infty(x_0,4R_\infty)$ is $\Ocal((d/R_\infty)e^{-8d})$. Therefore, truncating the distribution applies a multiplicative correction of $1 - \Ocal((d/R_\infty)e^{-8d})$ to the density. In the limit of large $d$, the total probability that the truncated distribution places on $\left[B_\infty(x_0,4R_\infty - \delta)\right]^c$ is upper bounded by that placed by the untruncated distribution on $\left[B_\infty(x_0,4R_\infty)\right] /\left[B_\infty(x_0,4R_\infty - \delta)\right]$, multiplied by $1 + \Ocal((d/R_\infty)e^{-8d})$. Using the asymptotically tight Mills inequality for the tail bounds of standard normal random variables, and the union bound it can be verified that this expression is upper bounded by $\Ocal(\delta d e^{-d} / R_\infty)$. Thus, the low-leakage condition is satisfied whenever $e^{-d} < 2d^2 R_\infty \Lambda_\infty(GR + \Lambda )/\epsilon^2$. By a simple modification of the above argument, the condition can also be seen to apply whenever the real space probability distribution corresponding to the state is sub-gaussian and centered at a point in $B_\infty(x_0,2R_\infty)$. More generally, the condition should be expected to hold whenever the state is concentrated away from the boundary of $B_\infty(x_0,4R_\infty)$. Due to the convexity of the potential, and the Lyapunov argument (Theorem~\ref{thm:QHD_convergence_rate}), the expected function value is decreasing once $\beta_t > 0$. As a consequence, the state is eventually concentrated within the sub-level sets of $f$, which are nested due to the convexity of the potential. For a function with quadratic growth away from its global minima, it can be confirmed that the variance of the distribution is strictly decreasing.

Finally, we note that if the low-leakage condition fails to be satisfied, then the simulation algorithm is simply not applicable to QHD (as is also the case for any algorithm based on a spectral method) and the algorithm is not guaranteed to succeed. Whenever the simulation algorithm is applicable, however, the query complexity guarantees hold without reservation.

\subsection{Schedule Invariance: Lower Bounds for Convex Optimization via QHD}
\label{sec:schedule-invariance}

In this section, we investigate the dependence of the query complexity derived in previous sections on choice of dynamics within the QHD framework. In particular, is there room to reduce the complexity, by careful optimization of the schedule functions $c_t, m_t, \omega_t$? We answer this question in the negative, under some reasonable assumptions on the query complexity of Hamiltonian simulation and the initial state. The arguments will closely mirror those for deriving the upper bounds on $\norm{b}_1$. These lower bounds, however, are highly conditional on the convergence rate matching the one from the Lyapunov argument. With faster convergence rates than $\omega_T^{-2}$, lower complexities may indeed be achievable.

\asmnofastforward*

We use the following helper lemma, which is a specific version of the Heisenberg uncertainty principle in $d$ dimensions.
\begin{lemma}
    Let $\phi \in \Dom(x_j p_k)\cap\Dom(p_jx_k) \subseteq L_2(\Rmbb^d)$ for all $j,k\in[d]$, with $p = -i\nabla$. Suppose $\langle x \rangle_{\phi} = x_0$ for some $x_0 \in \Rmbb$ and suppose $\langle p \rangle_{\phi} = 0$. Let $\sigma^2 = \langle (x - x_0)^2 \rangle_\phi$ be the variance of $\phi$. Then the variance of the momentum can be bounded as $\langle p^2 \rangle_\phi = \Omega(d^2/\sigma^2)$.
\end{lemma}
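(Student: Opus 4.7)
The plan is to reduce the multi-dimensional statement to the standard one-dimensional Heisenberg uncertainty principle applied coordinatewise, and then combine via an elementary convexity inequality. First, I would note that the variance decomposes as
\begin{equation*}
    \sigma^2 = \langle (x-x_0)^2 \rangle_\phi = \sum_{j=1}^d \langle (x_j - x_{0,j})^2 \rangle_\phi = \sum_{j=1}^d \sigma_j^2,
\end{equation*}
and similarly, since $\langle p \rangle_\phi = 0$, we have $\langle p^2 \rangle_\phi = \sum_{j=1}^d \langle p_j^2 \rangle_\phi$, where all expectations are well-defined by the domain hypotheses $\phi \in \Dom(x_j p_k) \cap \Dom(p_j x_k)$.

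Next, I would apply the standard one-dimensional Heisenberg uncertainty principle (which follows from the canonical commutation relation $[x_j, p_j] = i$ together with Cauchy--Schwarz, as is routine in any quantum mechanics text) to each coordinate $j$. This yields
\begin{equation*}
    \sigma_j^2 \cdot \langle p_j^2 \rangle_\phi \;\geq\; \tfrac{1}{4},
\end{equation*}
since the $j$th coordinate marginal has mean $x_{0,j}$ in position and mean $0$ in momentum. Rearranging, $\langle p_j^2 \rangle_\phi \geq 1/(4\sigma_j^2)$.

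Finally, I would combine using the AM--HM inequality (equivalently, Cauchy--Schwarz applied to the pair of vectors $(\sigma_j)_j$ and $(1/\sigma_j)_j$): for positive reals $\sigma_1^2,\ldots,\sigma_d^2$ with fixed sum $\sigma^2$,
\begin{equation*}
    \sum_{j=1}^d \frac{1}{\sigma_j^2} \;\geq\; \frac{d^2}{\sum_{j=1}^d \sigma_j^2} \;=\; \frac{d^2}{\sigma^2}.
\end{equation*}
Substituting, $\langle p^2 \rangle_\phi = \sum_j \langle p_j^2 \rangle_\phi \geq \tfrac{1}{4}\sum_j 1/\sigma_j^2 \geq d^2/(4\sigma^2)$, which is the claimed $\Omega(d^2/\sigma^2)$ bound.

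There is no real obstacle here; the proof is short and relies only on standard facts. The only mild subtlety is checking that the domain hypotheses suffice to invoke the scalar uncertainty principle on each marginal (so that the relevant commutator identity holds as an operator equation on $\phi$), but this follows immediately from $\phi \in \Dom(x_j p_k) \cap \Dom(p_j x_k)$. The take-away is that the extra factor of $d$ (compared to the naïve 1D bound $\Omega(1/\sigma^2)$) arises because spreading a fixed total variance $\sigma^2$ across $d$ coordinates forces each coordinate variance to be at most $\sigma^2/d$ on average, which by Heisenberg forces each momentum variance to be at least $\Omega(d/\sigma^2)$, yielding a total of $\Omega(d^2/\sigma^2)$.
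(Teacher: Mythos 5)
Your proof is correct and matches the paper's argument essentially step for step: decompose $\langle(x-x_0)^2\rangle$ and $\langle p^2\rangle$ coordinatewise, apply the one-dimensional Robertson/Heisenberg bound $\sigma_j^2\langle p_j^2\rangle\geq 1/4$, and combine via AM--HM (Cauchy--Schwarz) to pick up the factor of $d^2$. The only cosmetic difference is that you carry the explicit constant $1/4$ whereas the paper leaves it as an unspecified absolute constant $\varsigma$.
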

\begin{proof}
    Observe that $\langle(x-x_0)^2\rangle_\phi = \sum_{i=1}^{d} \langle (x - x_0)_i^2 \rangle_\phi$ and $\langle p^2 \rangle_\phi = \sum_{i=1}^{d} \langle p_i^2 \rangle_\phi$. Using the Robertson form of the Heisenberg Uncertainty Principle in one dimension~\cite{hall2013quantum}, we notice that $\langle p_i^2 \rangle_\phi \geq \varsigma/\langle (x - x_0)_i^2 \rangle_\phi$ where $\varsigma$ is an absolute constant.

    We then note that,
    \begin{align*}
        \langle p^2 \rangle_\phi \cdot \langle (x - x_0)^2 \rangle_\phi &= \left( \sum_{i=1}^{d} \langle p_i^2 \rangle_\phi\right) \cdot \left( \sum_{i=1}^{d} \langle (x - x_0)_i^2 \rangle_\phi\right), \\
        &\geq \varsigma \left( \sum_{i=1}^{d} \langle (x - x_0)_i^{-2}  \rangle_\phi\right) \cdot \left( \sum_{i=1}^{d} \langle (x - x_0)_i^2 \rangle_\phi\right) \geq \varsigma d^2,
    \end{align*}
    where the last inequality arises from the well known AM-HM inequality. The result follows.
\end{proof}

\lowerboundqhdmain*
\begin{proof}
Assume, without loss of generality, that $\Phi_0$ has expected value $x_0$ in position. Using Lemma~\ref{lem:general_QHD_time_complexity} and Assumption~\ref{asm:no-fast-forward}, the total query complexity of simulation is given by
\begin{equation*}
        \Lambda_f \int_0^T c_t m_t \omega_t^2 dt = \Lambda_f\frac{m_0 \omega_0^2}{(1 + \vartheta)\lambda}\left[ (\omega_T/\omega_0)^{2(1+\vartheta)/\vartheta} - 1 \right] \geq \Lambda_f\frac{m_0 \omega_0^2 }{2\lambda}\left((\omega_T/\omega_0)^4 - 1\right).
\end{equation*}
Since the convergence in continuous-time follows from Theorem~\ref{thm:QHD_convergence_rate}, in order to ensure error $\leq\epsilon$, the schedule and simulation time must be concurrently chosen to ensure that $\Ecal_0 \omega_T^{-2} \leq\epsilon$. Note that throughout the argument, $\Ecal_0$ has a possible dependence on problem parameters including $d,\epsilon$. As a consequence, the total query complexity can be lower bounded by 
\begin{align}
\label{eq:query-lower-bound}
    \frac{\Lambda_f m_0 \omega_0^2 }{2\lambda}\left(\frac{\Ecal_0^2}{\epsilon^2 \omega_0^4} - 1\right).
\end{align}
We recall that for a state satisfying our conditions, and with initial variance $\sigma^2 = O(R^2)$ (all expectations with respect to $\Phi_0$):
\begin{align*}
    \Ecal_0 &\coloneqq \frac12 \langle(-im_0^{-1} \nabla + \lambda (x - x_\star))^2\rangle  + \omega_0^2\langle f(x) - f(x_\star)\rangle  \\
    &= \frac12 m_0^{-2}\langle p^2 \rangle  + \frac12 \lambda^2\langle (x - x_0)^2 \rangle  + \omega_0^2\langle f(x) - f(x_\star)\rangle  \\
    &= \Omega \left( \max\left(m_0^{-2}\langle p^2 \rangle , \lambda^2\langle (x - x_\star)^2 \rangle , \;\omega_0^2\langle f(x) - f(x_\star)\rangle \right)\right) \\
    &= \Omega\left( \max\left(d/(m_0\sigma)^2, \lambda^2\langle (x - x_\star)^2 \rangle , \;\omega_0^2\langle f(x) - f(x_\star) \rangle \right)\right).
\end{align*}
Let $l(d,\epsilon)$ be a desired lower bound on $\Ecal_0$. In order for this bound to be satisfied for a general $f$ satisfying only the conditions of Problem~\ref{prob:nonsmooth-convex}, we must ensure that:
\begin{itemize}
    \item $d /(m_0 \sigma)^2 = \Ocal(l(d,\epsilon))$, therefore $m_0  = \Omega(\sigma \sqrt{d/l(d,\epsilon)})$.
    \item $\lambda^2 \langle (x - x_\star)^2 \rangle  = \Ocal(l(d,\epsilon))$, therefore $\lambda^{-1} = \Omega(\sqrt{(R^2 + \sigma^2)/l(d,\epsilon)})$.
    \item $\omega_0^2\langle f(x) - f(x_\star) \rangle  = \Ocal(l(d,\epsilon))$, therefore $\omega_0^2 = \Ocal(l(d,\epsilon)/GR)$.
\end{itemize}
Plugging these requirements into Equation~\ref{eq:query-lower-bound} we obtain the lower bound
\begin{align*}
    \Omega\left( \Lambda_f \cdot \frac{1}{\sigma} \sqrt{\frac{d}{l(d,\epsilon)}} \cdot \frac{\sqrt{R^2 + \sigma^2}}{\sqrt{l(d,\epsilon)}} \cdot \frac{GR}{l(d,\epsilon)}\cdot\left(\frac{l(d,\epsilon)^2}{\epsilon^2} - \frac{l(d,\epsilon)^2}{G^2 R^2}\right)\right).
\end{align*}
We observe that the second term in the parentheses is dominated when $\epsilon \to 0$, and the remaining term is minimized when $\sigma = \Theta(R)$. Thus simplifying, we obtain a query complexity lower bound of $\Omega(\sqrt{d} GR \Lambda_f / \epsilon^2)$.
\end{proof}

Although the assumption of $\Psi_0(x)$ having real amplitudes appears natural, the reader might wonder whether other choices can avert the lower bound. A hint in this direction is that, domain questions briefly aside, the operator $(p + x)^2$ in the Lyapunov operator $\widehat{\Ecal}_0$ has eigenfunction $e^{i x^2/2}$ with eigenvalue 0, independent of $d$. This suggests there may be some proper, normalized wavefunction which minimizes the Lyapunov constant. This possibility may be worth exploring in future research; however, we argue that such an approach may amount to shifting complexity from time evolution to state preparation. The state in question appears to be highly oscillatory, whereas a real, log-concave initial state such as a Gaussian admits a number of well-studied methods for preparing.

\section{Application to Stochastic Optimization} \label{sec:stochastic-convex}
\sh{TODO for shouvanik: Take a pass over Problem 2 and add a failure probability, the guarantee should not be only in expectation.}
To this point, we have considered generic noisy evaluations, e.g., binary oracles such that
\begin{align*}
    O_f|x \rangle|0\rangle \rightarrow |x \rangle|\widetilde{f}(x)\rangle,
\end{align*}
where $\lVert \widetilde{f}(x) - f(x)\rVert_{\infty} < \epsilon_f$. One can extend this notion to stochastic convex functions, in which case the oracle acts as 
\begin{align} \label{eq:stochastic_oracle}
 O_f|x \rangle|0\rangle \rightarrow |x \rangle \int_{\xi \in \Xi} \sqrt{ p_x (\xi)} |f(x) + \xi_x\rangle d \xi,
\end{align}
where, for $x \in \R{d}$, $p_x(\cdot)$ denotes the probability density of $\xi_x$, which for each $x$ has mean zero and variance bounded by some absolute constant $\sigma$. We recall the problem of unconstrained stochastic convex optimization.
\probstochasticconvex*
Using a quantum algorithm for mean estimation \cite{montanaro2015quantum} in combination with our results for unconstrained convex optimization immediately yields the following. 
\thmconvexstochastic*
\begin{proof}
    Again, we normalize the input space and output space so that $f$ is 1-Lipschitz and the diameter of the ball is 1. Hence, if we aim to solve the original problem to precision $\epsilon$, we must solve the normalized problem to precision $\varepsilon = \epsilon/GR$. 
    
    Assuming oracle access $O_f$ to $f$ of the form in Eq.~\eqref{eq:stochastic_oracle}, the mean estimation algorithm of Montanaro~\cite{montanaro2015quantum} requires $\widetilde{\Ocal}(1/\varepsilon_f)$ applications of $O_f$ to estimate $\Embb[O_f (x)] = f(x)$ within error $\varepsilon_f$. Meanwhile, Theorem~\ref{thm:convex-main-specialized} asserts that $\varepsilon_f = \widetilde{\Ocal}(\varepsilon^3/d^{1.5})$ (recall $\varepsilon_f$ is the rescaled error), and the total number of queries is thus
    \begin{equation*} 
        \widetilde{\Ocal} \left( \frac{d^{1.5}}{\varepsilon^3} \cdot d^{1.5} \left( \frac{1}{\varepsilon} \right)^2 \right) = \widetilde{\Ocal} \left( d^{3} \left( \frac{GR}{\epsilon} \right)^5 \right),
    \end{equation*}
    which is the theorem statement. 
\end{proof}

\subsection{Query Complexity of the Li-Risteski Algorithm}
\label{sec:li-risteski-scaling}

As a comparison point, let's consider the classical algorithm from Li and Risteski~\cite{risteski2016algorithms} for approximately convex optimization. To our knowledge, this is the most competitive classical algorithm in our setting of noisy zeroth-order convex optimization. Although Li and Risteski only demonstrates polynomial scaling, here we supply the detailed scaling to obtain the first row of Table~\ref{tab:compare-noisy}. See \cite[Algorithm 1]{risteski2016algorithms} for a description of the algorithm. Since we are interested in the case of unconstrained optimization, we will make this simplification while analyzing the query complexity. We will scale the input and output space such that $G= 1$ and $R=1$, as a consequence we must perform the optimization up to error $\varepsilon \coloneqq \epsilon/GR$

The algorithm executes $T = \Theta(d^2/\varepsilon^4)$ iterations. The primary computational bottleneck in each step is estimating the expectation
\begin{align*}
    f_t = \Embb_{w \in S^d} \left[\frac{d}{r}f(x_t + rw) w\right] = \Embb_{w \in S^d} \left[\frac{d}{r}(f(x_t + rw) - f(x_t)) w\right],
\end{align*}
within accuracy $\Theta(\varepsilon)$ in $\ell_2$ norm. Here $x_t$ is a point in $B_2(x_0,R)$, $r = \Theta(\varepsilon)$ and $w$ is uniformly sampled from the $d$-dimensional hypersphere $S_d$. In the worst case,
\begin{align*}
    \Sigma = \mathrm{Var}_{w \in S^d} \left[\frac{d}{r}(f(x_t + rw) - f(x_t))w\right] \leq d^2 \cdot \mathrm{Var}_{w \in S^d} \left[\norm{w}w\right] = d^2 \mathrm{Var}_{w \in S^d} \left[w\right] = d \Imbb_d,
\end{align*}
where the inequality from symmetry and the Lipschitz continuity of $f$. Using results from Hopkins~\cite{hopkins2020mean} on optimal mean estimation in Euclidean norm, we observe that given some fixed desired success probability, the $\ell_2$ norm accuracy of the estimated mean decays with the number of samples $N$ as 
\[ 
    \Ocal\left(\sqrt{\mathrm{Tr}(\Sigma)/N} + \sqrt{\norm{\Sigma}/N}\right) = \Ocal(d/\sqrt{N}).
\]
In order to obtain $\Ocal(\varepsilon)$ error in $\ell_2$ norm, the number of samples taken per round must be $\sim d^2/\varepsilon^2$. Each sample requires a noisy evaluation of the function $f$, resulting in a final query complexity of $\Ocal(d^4/\varepsilon^6)$.

Because their algorithm tolerates high evaluation noise, it is natural to wonder if refinements of the complexity are possible assuming lower noise levels. We essentially rule out this possibility in Appendix~\ref{app:prospects-reduction}. In summary, our algorithm appears to offer genuine speedups over known classical algorithms for stochastic convex optimization.

\section*{Acknowledgments}
The authors thank  Aram Harrow, Jiaqi Leng, and Xiaodi Wu for insightful discussions about QHD and the simulation of real-space dynamics. They also thank their colleagues at the Global Technology Applied Research center of JPMorganChase for their support and helpful discussions. 

\section*{Disclaimer}
This paper was prepared for informational purposes by the Global Technology Applied Research center of JPMorgan Chase \& Co. This paper is not a product of the Research Department of JPMorgan Chase \& Co. or its affiliates. Neither JPMorgan Chase \& Co. nor any of its affiliates makes any explicit or implied representation or warranty and none of them accept any liability in connection with this paper, including, without limitation, with respect to the completeness, accuracy, or reliability of the information contained herein and the potential legal, compliance, tax, or accounting effects thereof. This document is not intended as investment research or investment advice, or as a recommendation, offer, or solicitation for the purchase or sale of any security, financial instrument, financial product or service, or to be used in any way for evaluating the merits of participating in any transaction.

\bibliographystyle{alpha}
\bibliography{sdpbib}

\appendix

\section{Block Encodings} \label{app:block_encodings}

It is often useful to embed non-unitary operations $A$, which may represent problem input data, in some larger unitary operation $U_A$ that can be naturally represented on a quantum computer. This simple idea leads to the surprisingly powerful technique of block encoding~\cite{low2019qubitization, chakraborty2018power}. Here we provide a simple definition suitable for our purposes.
\begin{definition}[Block encoding]
    A linear operation $A$ on a quantum register $r_1$ is said to be \emph{block encoded} by a unitary $U_A$ on a super-register $r_1 + r_2$ if there exists a quantum state $\ket{G}$ on $r_2$ such that
    \begin{equation}
        (I_{r_1} \otimes \bra{G}_{r_2}) U_A (I_{r_1} \otimes \ket{G}_{r_2}) = A
    \end{equation}
    up to rescaling by a positive constant.
\end{definition}
\noindent In other words $U_A$ block encodes $A$ if $A$ can be viewed as a sub-block of $U_A$ in matrix form. More general notions of \emph{approximate} block encoding are also possible to define. 

A particularly valuable instance is an LCU-block encoding. An operator
\begin{equation}
    A = \sum_{j=0}^{n-1} a_j U_j
\end{equation}
that is a positive linear combination of unitaries (LCU) $U_j$ can be block encoded using the so-called $\SEL$ ("select") and $\PREP$ ("prepare") unitaries
\begin{equation}
    \SEL \coloneqq \sum_{j=0}^{n-1} \ket{j}\bra{j} \otimes U_j, \qquad \PREP\ket{0} \propto \sum_{j=0}^{n-1} \sqrt{a_j} \ket{j}.
\end{equation}
Setting $\ket{G} = \PREP\ket{0}$, we find that
\begin{equation}
    (\bra{G}\otimes I) \SEL(\ket{G}\otimes I) \propto A.
\end{equation}
The proportionality constant of this block encoding is, in fact, essential to understanding the probability that a block encoded $A$ is successfully implemented. The ``amplitude" of success is given by $\norm{A\ket{\psi}}/\norm{a}_1$, where $\ket{\psi}$ is the input state. The success probability upon direct measurement is the square of this, but there are situations in which amplitude amplification allows for quadratic improvement.

Whenever $U_j$ themselves are block encodings of operations $A_j$ by some $j$-independent state $\ket{G}$, the LCU block encoding allows for a nesting structure to encode operations such as $\sum_j a_j A_j$.

\subsubsection*{Oracles}
There are several common quantum oracles in the literature that we adopt in this work. Here we define them here and reproduce some simple relations between them.

Given a value $y$ in binary, possibly dependent on a binary value $x$ of some quantum register, it is natural to define a \emph{binary} (quantum) oracle $O_y$, which acts as
\begin{equation}
    O_y \ket{x}\ket{z} = \ket{x}\ket{z\oplus y_x}
\end{equation}
where $\oplus$ denotes binary addition. Note that $O_y$ is unitary and self-inverse. This quantum oracle is most naturally connected to classical oracles for computing $y$, since any reversible classical circuit for computing $y$ can be converted into a quantum circuit for the same task. Tacitly understood is that $O_y$ acts linearly on superpositions of computational basis states.

Another oracle type, which is perhaps more analogous to probabilistic classical oracles, is a (quantum) \emph{amplitude} oracle. Given a vector of values $y = (y_1,\ldots,y_n)$, define $D_y$ as a operation that acts in the computational basis of a register of dimension at least $n$ as
\begin{equation}
    D_y \ket{j} = y_j \ket{j} 
\end{equation}
for each $j \in [n]$. The oracle $A_y$ can be prepared using a small number of calls to $O_y$ and a sequence of known elementary gates.
\begin{lemma} \label{lem:bitwise_to_amp_block_encoding}
    The oracle $D_y$ can be block encoded in a unitary consisting of two calls to each binary oracle
    \begin{equation*}
        O_y \ket{j}\ket{z} = \ket{j}\ket{z\oplus y_j}
    \end{equation*}
    using $w + 1$ ancillary qubits and $\Ocal(w^2)$ elementary gates, where $w$ is the number of bits used to represent the $y$ values.
\end{lemma}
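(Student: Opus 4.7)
The plan is to sandwich a rotation gadget between two calls to the binary oracle $O_y$ (which is self-inverse): the first query loads $y_j$ into an ancilla register, the gadget uses the loaded value to rotate a single flag qubit by an appropriate angle, and the second query uncomputes $y_j$. Concretely, allocate a $w$-qubit register $R$ (to hold $y_j$) and a single flag qubit $f$, and define $U_{D_y} := (O_y \otimes I_f) \cdot (I \otimes \mathsf{Rot}) \cdot (O_y \otimes I_f)$, where $\mathsf{Rot}$ acts on $R \otimes f$ and is required to implement $|y\rangle_R |0\rangle_f \mapsto |y\rangle_R (y |0\rangle_f + \sqrt{1-y^2} |1\rangle_f)$ on every computational basis state of $R$ encoding a value $y \in [-1,1]$.

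Tracing the action of $U_{D_y}$ on $|j\rangle |0\rangle^{\otimes w}_R |0\rangle_f$: the first $O_y$ yields $|j\rangle |y_j\rangle_R |0\rangle_f$; $\mathsf{Rot}$ produces $|j\rangle |y_j\rangle_R (y_j |0\rangle_f + \sqrt{1-y_j^2} |1\rangle_f)$; and the second $O_y$ uncomputes the $R$ register, giving $|j\rangle |0\rangle^{\otimes w}_R (y_j |0\rangle_f + \sqrt{1-y_j^2} |1\rangle_f)$. Taking the block-encoding ancilla state to be $|G\rangle := |0\rangle^{\otimes w}_R |0\rangle_f$, we read off $(I \otimes \langle G|) U_{D_y} (I \otimes |G\rangle) = \sum_j y_j |j\rangle\langle j| = D_y$, which is exactly the desired block-encoding identity.

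The gadget $\mathsf{Rot}$ is built in three steps: (i) compute $\theta_y := \arccos(y)$ reversibly from the stored $|y\rangle_R$ using a standard reversible polynomial-evaluation circuit for $\arccos$ on $[-1,1]$ (Horner's rule plus $w$-bit reversible arithmetic), which costs $O(w^2)$ elementary gates; (ii) apply $w$ controlled $R_y$-rotations, one per bit of $\theta_y$, with angles scaled by powers of two so that the net rotation on $f$ is by angle $2\theta_y$ and the flag state becomes $\cos(\theta_y)|0\rangle_f + \sin(\theta_y)|1\rangle_f = y|0\rangle_f + \sqrt{1-y^2}|1\rangle_f$; (iii) uncompute $\theta_y \to y$ by running the inverse of step (i). The main obstacle is justifying the reversible $\arccos$ construction with $O(w^2)$ gates and without asymptotic ancilla overhead beyond the allocated $w+1$ qubits; this is folklore in the block-encoding literature and follows from polynomial approximation of $\arccos$ combined with Bennett-style uncomputation of intermediate scratch registers, and the resulting $O(2^{-w})$ precision error in the approximation is compatible with the $w$-bit precision inherent in $O_y$.
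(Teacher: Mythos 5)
Your proposal is correct and follows essentially the same construction as the paper's proof: sandwich a reversible $\arccos$-compute, controlled-$R_y$ rotation on a single flag qubit, and $\arccos$-uncompute between two calls to the self-inverse binary oracle $O_y$, then read off the block-encoding identity by projecting the ancilla onto $\ket{0}^{\otimes w}\ket{0}$. The only cosmetic difference is that the paper explicitly normalizes by an upper bound $\Lambda$ (absorbed into the "up to rescaling" clause of the block-encoding definition) and cites a specific reference for the $\Ocal(w^2)$ reversible $\arccos$ circuit rather than appealing to folklore.
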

\begin{proof}
    Let $\Lambda$ be a (strict) upper bound on all possible values of $\abs{y_i}$, and renormalize $D_y \mapsto D_y/\Lambda$ so that all values $y_i$ lie in $(-1,1)$. This conversion comes at no cost by choosing $\Lambda$ a power of 2. Implement the binary oracle storing the value $y$ in a $w$-qubit register.
    Perform an $\arccos$ binary transformation on the $y$-register, which takes $O(w^2)$ gates~\cite{haner2018optimizing}.
    Append an additional single-qubit register and perform a controlled $Ry$-rotation of the single qubit controlled on the $y$-register. Finally uncompute the $\arccos$ and the $y$ register using another binary oracle call.
    Then one can check that the resulting unitary implements the amplitude oracle conditional to postselecting 0 in the ancilla register.
\end{proof}
\noindent As a simple corollary of this fact, the amplitude oracle
\begin{equation}
    A_y \ket{0} = \sum_j y_j \ket{j}
\end{equation}
for some initial state $\ket{0}$ can be implemented using the same cost as above plus the resources required to prepare the uniform superposition $\sum_{j\in J} \ket{j}$ (up to normalization) over some subset of computational basis states indexed by $J$.

The below conversion is standard and follows from using phase estimation.
\begin{lemma}[Phase to Binary Oracle Conversion]
\label{lem:phase_to_binary}
Suppose $O_f$ is an $\epsilon_f$-accurate phase oracle for $f$ rescaled by $\Lambda$ (Definition \ref{defn:phase_oracle}). Then with $\Ocal\left(\log(1/\delta)/\epsilon_f\right)$ calls to $ O_{f}$ and its inverse, we produce a unitary that is a $(\Lambda\epsilon_f)$-accurate binary oracle oracle for $f$ with probability at least $1 - \delta$.
\end{lemma}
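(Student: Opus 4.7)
The plan is to apply quantum phase estimation (QPE) to the phase oracle $O_f$, using each computational basis state $|x\rangle$ as a ready-made eigenvector. By Definition~\ref{defn:phase_oracle}, $O_f|x\rangle = e^{i\widetilde h(x)}|x\rangle$ with $\lvert \widetilde h(x) - f(x)/\Lambda\rvert < \epsilon_f$, so the eigenvalue itself encodes the quantity we wish to extract. Running QPE with controlled-$O_f$ and controlled-$O_f^\dagger$ on a clock register of $O(\log(1/\epsilon_f))$ qubits produces, from $O(1/\epsilon_f)$ queries, a classical estimate $\widehat h(x)$ with $\lvert \widehat h(x) - \widetilde h(x)\rvert \le \epsilon_f$ with constant success probability (say $\geq 3/4$). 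Because each $|x\rangle$ is already an exact eigenstate, the state of the input register is not disturbed and the construction acts linearly on superpositions, yielding the unitary action $|x\rangle|0\rangle \mapsto |x\rangle|\widehat h(x)\rangle$ on the support of inputs.

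To reach success probability $1 - \delta$, I would run $r = O(\log(1/\delta))$ independent copies of QPE into separate output registers and take the median of the $r$ outputs coherently; a standard Chernoff bound shows the median is within $\epsilon_f$ of $\widetilde h(x)$ with probability at least $1-\delta$. The median computation and the subsequent arithmetic multiplication by $\Lambda$ (converting $\widehat h(x)$ to an estimate $\widetilde f(x)$ of $f(x)$) are classical reversible circuits applied to the output registers, so they introduce no additional oracle queries. Uncomputing the $r-1$ non-median copies via running the QPE routines in reverse restores a clean single-register binary oracle and only doubles the query count. In total this consumes $O(\log(1/\delta)/\epsilon_f)$ calls to $O_f$ and $O_f^\dagger$.

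For the accuracy claim, the triangle inequality gives
\begin{equation*}
\lvert \widetilde f(x) - f(x)\rvert \;=\; \Lambda\, \lvert \widehat h(x) - f(x)/\Lambda\rvert \;\le\; \Lambda\bigl(\lvert \widehat h(x) - \widetilde h(x)\rvert + \lvert \widetilde h(x) - f(x)/\Lambda\rvert\bigr) \;<\; 2\Lambda \epsilon_f,
\end{equation*}
which matches the $(\Lambda \epsilon_f)$-accurate binary oracle guarantee of Definition~\ref{defn:binary_oracle} up to a constant absorbed in the asymptotic notation (or alternatively, one chooses QPE precision $\epsilon_f/2$ at the same asymptotic cost).

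The only mildly delicate step is the median-of-means boosting: QPE's output distribution has heavy tails (only polynomial decay away from the true phase), so one must use the sharper tail bound that guarantees the output lies within the target window with probability bounded away from $1/2$, after which the Chernoff bound for the median of independent Bernoullis applies cleanly. Everything else — eigenstate recognition, linearity on superpositions, reversible arithmetic, and uncomputation — is routine.
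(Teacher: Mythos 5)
Your proof is correct, and it is precisely the argument the paper has in mind: the paper does not write out a proof for this lemma, only remarking that "the conversion is standard and follows from using phase estimation," and your proposal fills in that standard QPE-plus-median-boosting argument (computational basis states as exact eigenstates, $O(1/\epsilon_f)$ queries per QPE run for constant success probability, $O(\log(1/\delta))$ repetitions with coherent median, uncomputation of the intermediate registers, and the factor-of-$2$ loss from the triangle inequality absorbed into the big-$O$). Your closing caution about the heavy tails of the QPE output distribution is the right thing to flag; the usual fix is exactly the one you name, namely establishing that a single run lands in the target window with probability bounded away from $1/2$ before applying Chernoff to the median.
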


\section{Prospects for Reducing Classical Complexity}
\label{app:prospects-reduction}

    Here we discuss whether leading classical algorithms for noisy zeroth-order convex optimization can be improved if the noise tolerance is allowed to be lowered to that of the present quantum algorithm. It is not apparent to us that such an improvement is possible. 
    
    First, we note that the query complexity of the algorithm of Belloni et al~\cite{belloni2015escaping} is based on the analysis of a classical hit-and-run walk. The bottleneck is the classical complexity of the algorithm and is unchanged even if the queries are chosen to be noiseless. The robustness of the algorithm arises from the natural robustness of walk based algorithms, and the lack of any gradient estimation. Thus reducing the noise tolerance does not lead to any improvements in oracle complexity.
    
    Next, consider the algorithm of Ristetski and Li~\cite[Algorithm 1]{risteski2016algorithms}, which is based on a gradient estimator. It is evident by inspection of the algorithm that the query complexity is determined by a parameter $r$ that is the radius of smoothing. As usual, we rescale the problem by $GR$, letting $\varepsilon = \epsilon/GR$ and $\varepsilon_f = \epsilon_f/GR$. A calculation borrowed from~\cite{flaxman2005online} shows that the choice for $r$ must satisfy
    \begin{align*}
        f(x) - f(x_\star) - 2r \geq \frac{\varepsilon_f \sqrt{d}}{r} \text{ whenever } f(x) - f(x_\star) \geq \varepsilon.
    \end{align*}
    Plugging in the 2nd inequality to the first, it suffices that $r$ lies in the range 
    \begin{equation}
    r = \frac{\varepsilon}{4} (1 \pm \sqrt{1 - 8\varepsilon_f \sqrt{d}/\varepsilon^2}.
    \end{equation}
    Assuming a noise $\varepsilon_f =\Ocal(\varepsilon^3/d)$, we choose the largest $r$ consistent with these bounds, which minimizes sample complexity (see proof of Thm 3.2~\cite{risteski2016algorithms}). This is $r = \Theta(\varepsilon)$, the same choice of $r$ as the original work. Thus, following through the rest of the analysis (which is independent of the noise parameter) leads to no improvements.

\section{Technical Results and Proofs} \label{app:technical-lemmas}

The following is well-known but just proven for completeness and presented in a form  that is suitable for our analysis.
\begin{lemma}[Evolution Bound]
\label{lem:evolution_bound}
Let $f, g \in \Hcal$, and $\phi_f$ and $\phi_g$:
\begin{align*}
&i\partial_t\phi_f = -a(t)\Delta\phi_f + b(t)f\phi_f\\
&i\partial_t\phi_g = -a(t)\Delta\phi_g + b(t)g\phi_g,
\end{align*}
with the same initial conditions $\psi \in \Hcal^2$. Then,
\begin{align*}
    \lVert \phi_f(T, \cdot) - \phi_{g}(T, \cdot)\rVert \leq \lVert b\rVert_1 \lVert f- g\rVert_{\infty}
\end{align*}
\end{lemma}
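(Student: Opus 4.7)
The plan is to use a standard energy estimate for the difference $\delta(t) \coloneqq \phi_f(t,\cdot) - \phi_g(t,\cdot)$, which satisfies $\delta(0) = 0$ and an inhomogeneous Schr\"odinger-type equation.

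First I would subtract the two PDEs. Writing $f\phi_f - g\phi_g = f(\phi_f - \phi_g) + (f-g)\phi_g$, one obtains
\begin{equation*}
    i\partial_t \delta = -a(t)\Delta \delta + b(t) f\, \delta + b(t)(f-g)\phi_g,
\end{equation*}
so $\delta$ evolves under the same ``$f$-Hamiltonian" as $\phi_f$ but with an inhomogeneous source term $b(t)(f-g)\phi_g$.

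Next I would differentiate $\|\delta(t)\|^2 = \langle \delta, \delta\rangle$ in time. Using the PDE above,
\begin{equation*}
    \tfrac{d}{dt}\|\delta\|^2 = 2\Re\langle \delta, \partial_t \delta\rangle = 2\Re\bigl\langle \delta,\, -ia(t)\Delta\delta - ib(t)f\delta - ib(t)(f-g)\phi_g\bigr\rangle.
\end{equation*}
Since $-\Delta$ (on $\Hcal^2$) and multiplication by the real-valued function $f$ are self-adjoint, the expectation values $\langle \delta, -i a(t)\Delta \delta\rangle$ and $\langle \delta, -ib(t)f\delta\rangle$ are purely imaginary and drop out of the real part. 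Only the source term survives:
\begin{equation*}
    \tfrac{d}{dt}\|\delta\|^2 = 2\Re\bigl\langle \delta,\, -ib(t)(f-g)\phi_g\bigr\rangle \le 2|b(t)|\,\|\delta\|\,\|(f-g)\phi_g\|,
\end{equation*}
by Cauchy--Schwarz. Since $\phi_g$ is a solution to a Schr\"odinger equation with a real potential, it evolves unitarily so $\|\phi_g(t)\|=\|\psi\|=1$, giving $\|(f-g)\phi_g\| \le \|f-g\|_\infty$.

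Dividing by $2\|\delta\|$ (Gr\"onwall-style; formally one may add a small regularizer $\sqrt{\|\delta\|^2 + \eta}$ and send $\eta\to 0$ to handle the zero of $\|\delta\|$ at $t=0$) yields $\tfrac{d}{dt}\|\delta\| \le |b(t)|\,\|f-g\|_\infty$. Integrating from $0$ to $T$ and using $\delta(0)=0$ gives
\begin{equation*}
    \|\delta(T)\| \le \|f-g\|_\infty \int_0^T |b(t)|\,dt = \|b\|_1\,\|f-g\|_\infty,
\end{equation*}
which is the desired bound. The only mild technicality is justifying differentiation under the inner product and the division at $\|\delta\|=0$, both handled by the regularization trick and Theorem~\ref{thm:solution_existence} which guarantees strong differentiability of the solutions; everything else is routine.
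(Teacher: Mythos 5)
Your proof is correct, and it takes a genuinely different route from the paper's. The paper works at the level of unitary propagators: it first transfers to the interaction frame with respect to the kinetic term, writes $\lVert \phi_f - \phi_g\rVert = \lVert (I - \mathcal{U}_g^\dagger \mathcal{U}_f)\psi\rVert$, applies the fundamental theorem of calculus to $\mathcal{U}_g^\dagger \mathcal{U}_f$ (a Duhamel-type telescoping), and reads off the bound from the resulting integral of $i\mathcal{U}_g^\dagger[\Hcal_f-\Hcal_g]\mathcal{U}_f$. You instead derive an inhomogeneous Schr\"odinger equation for the difference $\delta = \phi_f - \phi_g$ directly, observe that the anti-Hermitian part contributes nothing to $\frac{d}{dt}\lVert\delta\rVert^2$, and close a Gr\"onwall estimate. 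The Duhamel/propagator argument is algebraically cleaner and sidesteps any concern about dividing by $\lVert\delta\rVert$ near $t=0$; your energy estimate avoids the interaction-picture change of frame entirely and is perhaps more self-contained, at the cost of the mild regularization you already flagged. Both routes implicitly use $\lVert\psi\rVert = 1$ (you via unitarity of the $\phi_g$ evolution, the paper via $\lVert\mathcal{U}_f\psi\rVert = \lVert\psi\rVert$), which is a standing normalization in the paper's setup even though it is not repeated in the lemma statement. Everything checks out.
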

\begin{proof}
    Switching to the interaction picture with $U_{\Delta}(t) = \exp(-i\int a(t)\Delta)$ gives
    \begin{align*}
    &i\partial_t\mathcal{U}_f = b(t)U_{\Delta}fU_{\Delta}^{\dagger}\mathcal{U}_f = \Hcal_f(t)\mathcal{U}_f\\
    &i\partial_t\mathcal{U}_g = b(t)U_{\Delta}gU_{\Delta}^{\dagger}\mathcal{U}_g = \Hcal_g(t)\mathcal{U}_g.
    \end{align*}
    Now let $\psi$ be the initial wave function. We then have
    \begin{align*}
    \lVert \phi_f - \phi_g \rVert &=\norm{ [\mathcal{U}_f - \mathcal{U}_g]\psi }\\ &=    \norm{ [I- \mathcal{U}_g^{\dagger}\mathcal{U}_f]\psi }  \\&= \norm{ \int_{0}^{T}\frac{\td}{\td\!t}\left(\mathcal{U}_g^{\dagger}\mathcal{U}_f\right) \td\!t \psi }\\
    &=\norm{ \int_{0}^{T}\left[\frac{\td}{\td\!t}\mathcal{U}_g \right]^{\dagger}\mathcal{U}_f + \mathcal{U}_g^{\dagger}\frac{\td}{\td\!t}\mathcal{U}_f \td\!t \psi }\\
    &=\norm{ \int_{0}^{T}i\mathcal{U}_g^{\dagger}[\Hcal_f - \Hcal_g]\mathcal{U}_f \td\!t \psi }\\
    &\leq \int_{0}^{T} b(t) \td\!t \norm{ [f - g]\mathcal{U}_f\psi }\\
    &\leq \int_0^T b(t)dt \lVert f- g\rVert_{\infty}
    \end{align*}
\end{proof}

\begin{lemma}[Evolution Bound with Low Leakage]
\label{lem:evolution_bound_leakage}
    Let $f, g \in \Hcal$, and $\phi_f$ and $\phi_g$ satisfy:
    \begin{align*}
    &i\partial_t\phi_f = -a(t)\Delta\phi_f + b(t)f\phi_f\\
    &i\partial_t\phi_g = -a(t)\Delta\phi_g + b(t)g\phi_g,
    \end{align*}
    with the same initial conditions $\psi \in \Hcal^2$. 
    Furthermore, suppose that $\lVert g \rVert_{\infty} \leq \lVert f \rVert_{\infty}$ and that the potential $f$ results in a Low-leakage Evolution (Definition \ref{defn:low-leakage-ev})
    Then,
    \begin{align*}
        \lVert \phi_f(T, \cdot) - \phi_{g}(T, \cdot)\rVert \leq \lVert b\rVert_1 \lVert f- g\rVert_{L_{\infty}(B_{\infty}(\frac12-\delta))} + \delta.
    \end{align*}
\end{lemma}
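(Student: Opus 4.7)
The plan is to modify the proof of Lemma \ref{lem:evolution_bound} by keeping the same interaction-picture skeleton and then splitting the resulting spatial integral into the region where $f$ and $g$ are known to be close (the interior $B_\infty(\tfrac12-\delta)$) and its complement, where we invoke the low-leakage hypothesis to argue that the wavefunction has negligible mass. As in the previous lemma, I would pass to the interaction picture generated by $U_\Delta(t) = \exp(-i\int_0^t a(s)\Delta\,ds)$, compute $\tfrac{d}{dt}[\mathcal U_g^\dagger\mathcal U_f]$, and use unitarity of $\mathcal U_g^\dagger U_\Delta^\dagger$ together with the fact that $[f-g]$ is a multiplication operator to arrive at the Duhamel-style inequality
\begin{equation*}
    \lVert \phi_f(T) - \phi_g(T)\rVert \;\le\; \int_0^T b(t)\,\bigl\lVert [f-g]\,\phi_f(t)\bigr\rVert\,dt.
\end{equation*}

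Next, I would decompose the integrand spatially. Writing $S = B_\infty(\tfrac12-\delta) \sqcup B_\infty(\tfrac12-\delta)^c$ and splitting the $L_2$ norm accordingly, I would bound
\begin{equation*}
    \bigl\lVert [f-g]\phi_f(t)\bigr\rVert^2
    \;\le\; \lVert f-g\rVert_{L_\infty(B_\infty(\tfrac12-\delta))}^2\,\lVert \phi_f(t)\rVert^2 \;+\; (2\lVert f\rVert_\infty)^2 \!\!\int_{B_\infty(\tfrac12-\delta)^c}\!\!\lvert\phi_f(t)\rvert^2,
\end{equation*}
where on the exterior I use the crude pointwise bound $\lvert f-g\rvert \le \lVert f\rVert_\infty + \lVert g\rVert_\infty \le 2\lVert f\rVert_\infty$ enabled by the hypothesis $\lVert g\rVert_\infty \le \lVert f\rVert_\infty$. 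Normalization gives $\lVert \phi_f(t)\rVert = 1$, and the low-leakage hypothesis on $\phi_f$ (Definition \ref{defn:low-leakage-ev}) bounds the tail integral by $\delta/(2\lVert f\rVert_\infty\,\lVert b\rVert_1)$, turning the second summand into a quantity calibrated against the $\lVert b\rVert_1$ factor that will appear after integrating in $t$.

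Finally, I would apply the subadditive inequality $\sqrt{a^2+c^2} \le a + c$ to get a pointwise bound of the form
\begin{equation*}
    \bigl\lVert [f-g]\phi_f(t)\bigr\rVert \;\le\; \lVert f-g\rVert_{L_\infty(B_\infty(\tfrac12-\delta))} \;+\; (\text{tail term independent of }t),
\end{equation*}
then integrate against $b(t)\,dt$ on $[0,T]$. The first contribution produces exactly $\lVert b\rVert_1\,\lVert f-g\rVert_{L_\infty(B_\infty(\tfrac12-\delta))}$, and the design of the low-leakage constants in Definition \ref{defn:low-leakage-ev} is meant to ensure that the tail contribution, once multiplied by $\lVert b\rVert_1$, is absorbed into a single additive $\delta$. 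The main technical subtlety --- and the step where I would be most careful --- is exactly this bookkeeping: the cross-cancellation between the $2\lVert f\rVert_\infty$ coming from the worst-case bound on $|f-g|$ in the exterior, the $\lVert b\rVert_1$ coming from Duhamel, and the $2\lVert f\rVert_\infty\lVert b\rVert_1$ in the denominator of the low-leakage definition must align so that the tail term collapses to $\delta$ (rather than $\sqrt{\delta}$ or a constant times $\delta$), perhaps by using a direct linear estimate on the tail rather than the Cauchy--Schwarz route sketched above.
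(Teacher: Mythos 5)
Your proposal follows the same route as the paper's proof: pass to the interaction picture, invoke the Duhamel bound from Lemma~\ref{lem:evolution_bound} to get $\lVert\phi_f(T)-\phi_g(T)\rVert \le \int_0^T b(t)\,\lVert[f-g]\phi_f(t)\rVert\,dt$, split the spatial $L_2$ norm across $B_\infty(\tfrac12-\delta)$ and its complement, use normalization on the interior and the low-leakage hypothesis on the exterior, and integrate. Whether one splits via $\sqrt{a^2+c^2}\le a+c$ as you do or via the triangle inequality on the restricted functions as the paper does is immaterial.

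The subtlety you flag at the end, however, is not just a matter of being careful --- it is a real mismatch, and you are right to be suspicious of it. Tracing the constants: the exterior contribution at each $t$ is bounded by $\lVert f-g\rVert_\infty\bigl[\int_{[B_\infty(\frac12-\delta)]^c}|\phi_f(t)|^2\bigr]^{1/2} \le 2\lVert f\rVert_\infty\sqrt{\delta/(2\lVert f\rVert_\infty\lVert b\rVert_1)}$, and after multiplying by $\lVert b\rVert_1$ this is $\sqrt{2\lVert f\rVert_\infty\lVert b\rVert_1\,\delta}$, not $\delta$. There is no ``direct linear estimate'' that avoids this: the $L_2$ norm of the tail piece forces a square root on the probability mass, and Definition~\ref{defn:low-leakage-ev} bounds the mass linearly in $\delta$. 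So as the definition is written neither your proof nor the paper's closes to the stated conclusion. The resolution is simply that the right-hand side of the low-leakage condition should be read as $\bigl(\delta/(2\lVert g\rVert_\infty\lVert b\rVert_1)\bigr)^2$ (equivalently, the condition should bound the $L_2$ norm of the restricted wavefunction, not its squared mass); with that reading the square root gives exactly $\delta/(2\lVert f\rVert_\infty\lVert b\rVert_1)$ and the two factors of $2\lVert f\rVert_\infty\lVert b\rVert_1$ cancel to leave $+\delta$. Under that corrected reading your argument and the paper's are identical and complete.
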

\begin{proof}
    The proof follows \ref{lem:evolution_bound} up to
    \begin{align*}
    \lVert \phi_f - \phi_g \rVert &\leq \lVert b\rVert_1 \norm{ [f - g]\mathcal{U}_f\psi }\\
    &\leq \lVert b\rVert_1 \left[\int_{B_{\infty}(\frac12-\delta)}(f(x)-g(x))^2\lvert \phi_f(x)\rvert^2 dx\right]^{1/2}\\
    &+\lVert b\rVert_1 \lVert f - g \rVert_{\infty}\left[\int_{[B_{\infty}(\frac12-\delta)]^c}\lvert \phi_f(x)\rvert^2\right]^{1/2}\\
    &\leq \lVert b\rVert_1 \lVert f- g\rVert_{L_{\infty}(B_{\infty}(\frac12-\delta))} + \delta
    \end{align*}
\end{proof}

    The following is useful for bounding aliasing error.
    \begin{lemma} \label{lem:alias_inner_prod}
        Let $\chi_\nmbf(x) = e^{2\pi i \nmbf \cdot x}$. For any $\nmbf, \mmbf \in \Zmbb^d$, we have 
        \begin{equation}
            \langle \chi_\nmbf, \chi_\mmbf \rangle_\Gcal = \mathbbm{1}(\nmbf - \mmbf \in 2N\Zmbb^d).
        \end{equation}
        That is, the bilinear form evaluates to $1$ if $\nmbf$ and $\mmbf$ are in the same coset of $2N\Zmbb^d$, else is zero.  As a consequence, Suppose that $\phi \in \Hcal$ and that the periodic extension of $\phi$ is in $C^{1}(\Rmbb^d)$, then
        \begin{equation*}
            \langle\chi_\nmbf, \phi\rangle_\Gcal = \sum_{\kmbf \in \Zmbb^d} \langle \chi_{\nmbf + 2N \kmbf}, \phi\rangle.
        \end{equation*}
    \end{lemma}
    \noindent Thus, intuitively, the bilinear form $\langle\cdot,\cdot\rangle_\Gcal$ cannot distinguish low frequency modes from high frequency ones that differ by units of $2N$ in each component.
    \begin{proof}
        From the definition of $\langle\cdot,\cdot\rangle_\Gcal$,
        \begin{align*}
            \langle\chi_\nmbf, \chi_\mmbf\rangle_\Gcal &= \frac{1}{(2N)^d} \sum_{\jmbf \in \Ncal} e^{2\pi i \, \jmbf \cdot(\mmbf-\nmbf)/(2N)} \\
            &= \prod_{\ell=1}^d \frac{1}{(2N)} e^{2\pi i j_\ell (m_\ell - n_\ell)/(2N)}.
        \end{align*}
        By basic properties of discrete Fourier transforms, each term in the product is zero unless $m_k = n_k + 2N k_\ell$ for some $k_\ell \in \Zmbb$. This amounts to the first claim of the lemma. 
        
        For the second claim, let $\phi_C$ be the Fourier-truncated function for wavenumber $\norm{\nmbf}_\infty \leq C$. We first wish to show that the map $\langle\chi_\nmbf, \cdot\rangle_\Gcal$ is a continuous complex-valued function in the sense that $\lim_{C\rightarrow\infty} \langle\chi_\nmbf, \phi_C\rangle_\Gcal = \langle\chi_\nmbf, \phi\rangle_\Gcal$. Indeed,
        \begin{align*}
            \abs{\langle\chi_\nmbf, \phi\rangle_\Gcal - \langle\chi_\nmbf, \phi_C\rangle_\Gcal} &= (1/2N)^d \abs{\sum_{\jmbf \in \Ncal} \overline{\chi}_\nmbf(x_\jmbf) (\phi(x_\jmbf) - \phi_C(x_\jmbf))} \\
            &\leq  \norm{\phi - \phi_C}_\infty.  
        \end{align*}
        Since we are assuming that $\phi$ is continuously-differentiable on the $n$-Torus, this implies that it is Lipschitz over $S$. Thus we have that for any $d$ and $x \in S$
        \begin{align*}
            \lim_{\delta \rightarrow 0^{+}} \lvert \phi(x + \delta) - \phi(x)\rvert \cdot \log^{d}(1/\delta) = 0,
        \end{align*}
        and so the multi-dimensional Dini--Lipschitz theorem \cite[Theorem 2.1.1]{zhizhiashvili2012trigonometric} implies that the rectangular Fourier partial sums $\phi_C$ converge to $\phi$ uniformly. Hence $ \norm{\phi - \phi_C}_\infty$ vanishes as $C\rightarrow\infty$.

        Finally, we evaluate $\langle\chi_\nmbf,\phi_C\rangle_\Gcal$ to be 
        \begin{align*}
            \langle\chi_\nmbf,\phi_C\rangle_\Gcal &= \sum_{\norm{\mmbf}_\infty \leq C} \langle\chi_\mmbf, \phi\rangle \langle\chi_\nmbf, \chi_\mmbf\rangle_\Gcal = \sum_{\substack{\norm{\mmbf}\leq C \\ \mmbf - \nmbf \in (2N\Zmbb)^d}} \langle\chi_\mmbf, \phi\rangle.
        \end{align*}
        Taking the $C\rightarrow\infty$ limit gives the result of the lemma.
    \end{proof}

    \subsection{Existence of Dynamics}
\label{subapp:existence_of_dynamics}

Here we prove Theorem \ref{thm:solution_existence}, which ensures the existence of a unique solution to the Schr\"odinger initial value problem and hence well-posed dynamics. The technical difficulty, as always in these contexts, is that the Hamiltonian in question is an unbounded operator and thus there are domain issues. We will ultimately rely on a modified version of a theorem from Reed and Simon's classic text~\cite{reed1975ii}. We begin with a simple helper lemma that can be found in a number of mathematical physics texts.
\begin{lemma} \label{lem:unbounded_plus_bounded}
    Let $A:\Dom(A)\rightarrow \Hcal$ be an unbounded self-adjoint operator on a separable Hilbert space $\Hcal$. Let $B$ be self-adjoint and bounded on $H$. Then $A + B$ is self-adjoint on $\Dom(A)$.
\end{lemma}
\noindent In particular, the Hamiltonian $H(t) = -a(t) \Delta + b(t) f(x)$ is self-adjoint for all $t\in [0,T]$ with domain $\Dom(-\Delta) = \Hcal^2$. Indeed $f$ is continuous on a compact domain and hence a bounded function. The key result is an adaption of Theorem X.71 from~\cite{reed1975ii} which is suitable for the high-dimensional setting.
\begin{theorem}[Existence of Dynamics]
Let $H(t) = -a(t) \Delta + b(t)f$ have domain $\Dom(-\Delta) = \Hcal^2$, with $f$ Lipschitz continuous and $a, b \in C^1([0, T])$. Let $\Phi_0 \in \Hcal^2$. Then, there is a strongly time-differentiable $\Phi(t)$ such that $i\partial_t\Phi(t) = H(t)\Phi(t)$ and $\Phi(0) = \Phi_0$. 
\end{theorem}
\begin{proof}
    By Lemma~\ref{lem:unbounded_plus_bounded}, $H(t)$ is self-adjoint for all $t \in [0,T]$. Moreover, since $-\Delta \geq 0$ (as an operator) and $b(t) f(x)$ is bounded, there exists a $D \geq 0$ such that $-\Delta + b(t)f + D \geq 1/2$ (as operators) for all $t$. Hence, $A(t) = -i(H(t) + D)$ generates a contraction semigroup for each $t$. Moreover, $A(t)$ satisfies the conditions of Theorem X.70 of~\cite{reed1975ii}. Thus, by this theorem, the propagator $U(t,0)$ generated by $A(t)$ is unitary, and for each initial $\Phi_0 \in \Hcal^2$, $\Phi(t) = U(t,0) \Phi_0$ is a solution to the Schr\"odinger equation.
\end{proof}

We would like to guarantee additional regularity on the true solution by fine-tuning the regularity of the initial state. It turns out that this is simple consequence of the original proof of Reed \& Simon, and has appeared in the time-independent case elsewhere \cite[Theorem 7.5]{brezis2011functional}. The key is to note that domain of our operator of interest:
\begin{align*}
    A(t) = -a(t)\Delta + V(x,t)
\end{align*}
is determined by $\Delta$ for bounded $V$, such that $A(t) \succ 0$ (which can be taken without loss of generality by applying a shift). Note that by the Meyers-Serrin Theorem, the functions $C^{\infty}(S)$ are dense (in $\Hcal^k$-norm) in $\Hcal^k$ for all $k$. Hence each $\Hcal^{k+2}$ is dense in $\Hcal^k$, where each is a Hilbert space. Hence by Proposition 9.23 \cite{hall2013quantum} and the potential being bounded in $L_{\infty}(S)$-sense, for each $t$,  $A(t) : \Hcal^{k+2} \rightarrow \Hcal^{k}$ is a self-adjoint operator on $\Hcal^{k}$. The Hille-Yosida theorem  gives that $iA(t)$ is also the generator of a contractive $C_0$-semigroup over $\Hcal^{k}$. Theorem 10.70 in \cite{reed1975ii} then implies that there is a unique strongly-differentiable solution to the dynamics in $\Hcal^{k+2}$ for all $t$. This leads to a proof of Theorem \ref{thm:solution_existence}.

Alternatively, if we have regularity guarantees on the potential, $V(x,t)$, specifically, $V(t) \in \Hcal^{s}$, then $\Phi(t) \in \Hcal^{s}$ when $\Phi_0 \in \Hcal^{s}$ immediately follows from Lemma \ref{eq:sob_seminorm_bound}.

\subsection{Proofs of State Interpolation Bounds} \label{subapp:state_interpolation_bound}
In this appendix we present proofs for the bounds on the interpolation error for a state $\phi$ in some Sobolev class, stated in Theorem~\ref{thm:total_interpolation_error_l2} of the main text. We also prove an analogous result for the interpolation error in the $\Hcal^2$ Sobolev seminorm. These results are critical to the proofs of our rigorously-provided guarantees on the performance of the multi-dimensional pseudo-spectral method. They may also be of independent interest.

We begin with a few lemmas that relate $\phi$ and $I_N\phi$ via the orthogonal projection $P_N$ onto $\Hcal_N$.
\begin{lemma}
    \label{lem:trunc_high_sobolev}
    For $m \geq 1$, if $\phi \in \Hcal^{m}$, then 
    \begin{align*}
        \lVert P_N\phi - \phi\rVert \leq  \frac{\lvert \phi \rvert_{\Hcal^m}}{N^{m}}
    \end{align*}
\end{lemma}
\begin{proof}
    Computing the norm in the Fourier basis, one finds
    \begin{align*}
        \lVert P_N\phi - \phi\rVert ^2 &= \sum_{\nmbf \notin \Ncal} \lvert \langle \chi_{\nmbf}, \phi\rangle \rvert^2.
    \end{align*}
    Inserting factors of $\norm{\nmbf}^{2m}$ and using the Sobolev bound,
    \begin{align*}
        \sum_{\nmbf \notin \Ncal} \lvert \langle \chi_{\nmbf}, \phi\rangle \rvert^2 &=\sum_{\nmbf \notin \Ncal} \norm{\nmbf}^{-2m} \norm{\nmbf}^{2m}\lvert \langle \chi_\nmbf, \phi\rangle \rvert^2\leq N^{-2m}\sum_{\nmbf \notin \Ncal} \norm{\nmbf}^{2m}\lvert \langle \chi_\nmbf, \phi\rangle \rvert^2\leq \frac{\lvert \phi \rvert_{\Hcal^m}^2}{N^{2m}}.
    \end{align*}
    Taking square roots of boths sides leads to the lemma statement.
\end{proof}

As mentioned in the main text, the truncation error is only one half of the story. When one approximates Fourier coefficients via the discrete Fourier transform, aliasing error is introduced due to the inability to distinguish certain high-frequency components. Recall that the Fourier interpolant $I_N[\phi]$ is the truncated Fourier polynomial obtained from $\phi$ via the discrete Fourier transform. The following bounds the error between the truncated Fourier series and the interpolant, which is the aliasing error. 

\begin{lemma}[Aliasing Error -- multi-dimensional Theorem 2.3 \cite{shen2011spectral}]
\label{lem:aliasing}
    Suppose $\phi \in \Hcal^m$ for positive integer $m > \max\{d/2,2\}$ and that the periodic extension of $\phi$ is in $C^{1}(\Rmbb^d)$. Then 
    $$
        \lVert I_N\phi - P_N\phi\rVert \lesssim \left(\frac{\pi}{4}\right)^{d/4} \frac{1}{\sqrt{(m-d/2) \Gamma(d/2)} N^m} \abs{\phi}_{\Hcal^m}^2.
    $$
\end{lemma}
\begin{proof}
    Since we assume $\phi$ has a $C^1(\Rmbb^d)$ periodic extension, Lemma~\ref{lem:alias_inner_prod} holds. From the definitions of $I_N$ and $P_N$, 
    \begin{align*}
         \lVert I_N\phi - P_N\phi\rVert^2 &= \sum_{\nmbf\in\Ncal} \lvert \langle \chi_\nmbf, \phi \rangle_\Gcal- \langle \chi_\nmbf, \phi \rangle \rvert^2 \\
         &= \sum_{\nmbf\in\Ncal} \Big\lvert\sum_{\kmbf \in \Zmbb^d\setminus\{\mathbf{0}\}} \langle\chi_{\nmbf + 2N\kmbf}, \phi\rangle\Big\rvert^2.
    \end{align*}
    We wish to bound this aliasing error using the semi-norm $\abs{\cdot}_{\Hcal^k}$. We begin with an application of Cauchy-Schwarz. For any $m \geq 1$,
    \begin{align*}
        \sum_{\nmbf\in\Ncal} \Big\lvert\sum_{\kmbf \in \Zmbb^d\setminus\{\mathbf{0}\}} \langle\chi_{\nmbf + 2N\kmbf}, \phi\rangle\Big\rvert^2 &= \sum_{\nmbf\in\Ncal} \Big\lvert\sum_{\kmbf \in \Zmbb^d\setminus\{\mathbf{0}\}} \norm{\nmbf + 2N\kmbf}^{-m} \norm{\nmbf + 2N\kmbf}^m \langle\chi_{\nmbf + 2N\kmbf}, \phi\rangle\Big\rvert^2 \\
        &\leq \sum_{\nmbf\in\Ncal} \left(\sum_{\kmbf\in\Zmbb^d\setminus\{\mathbf{0}\}} \frac{1}{\norm{\nmbf + 2N\kmbf}^{2m}}\right)\left(\sum_{\kmbf\in\Zmbb^d\setminus\{\mathbf{0}\}}\norm{\nmbf + 2N\kmbf}^{2m} \abs{\langle\chi_{\nmbf + 2N\kmbf}, \phi\rangle}^2 \right) \\
        &\leq \max_{\nmbf\in\Ncal}\left\{\sum_{\kmbf\in\Zmbb^d\setminus\{\mathbf{0}\}} \frac{1}{\norm{\nmbf + 2N\kmbf}^{2m}}\right\} \sum_{\nmbf\in\Ncal} \sum_{\kmbf\in\Zmbb^d\setminus\{\mathbf{0}\}}\norm{\nmbf + 2N\kmbf}^{2m} \abs{\langle\chi_{\nmbf + 2N\kmbf}, \phi\rangle}^2 \\
        &\leq\max_{\nmbf\in\Ncal}\left\{\sum_{\kmbf\in\Zmbb^d\setminus\{\mathbf{0}\}} \frac{1}{\norm{\nmbf + 2N\kmbf}^{2m}}\right\} \abs{\phi}_{\Hcal^m}^2.
    \end{align*}
    The maximum over $\nmbf$ is exhibited at $\nmbf = -N \mathbf{1}$, where $\mathbf{1}$ is the all ones vector. Hence,
    \begin{align*}
        \max_{\nmbf\in\Ncal}\left\{\sum_{\kmbf\in\Zmbb^d\setminus\{\mathbf{0}\}} \frac{1}{\norm{\nmbf + 2N\kmbf}^{2m}}\right\} &= \sum_{\kmbf\in\Zmbb^d\setminus\{\mathbf{0}\}} \frac{1}{\norm{2N\kmbf - N \mathbf{1}}^{2m}} \\
        &= \frac{1}{(2N)^{2m}} \sum_{\kmbf \in \Zmbb^d\setminus\{\mathbf{0}\}} \frac{1}{\norm{\kmbf - \frac12\mathbf{1}}^{2m}} \\
        &=  \Ocal\left(\frac{\Omega_{d}}{(2N)^{2m}} \sum_{ r \in \Zmbb\setminus\{0\}} \frac{r^{d-1}}{(r - \frac12)^{2m}}\right)\\
        &= \Ocal\left(\frac{\Omega_{d}}{(2N)^{2m}} \sum_{ r \in \Zmbb\setminus\{0\}} \frac{1}{(r - \frac12)^{2m-d+1}}\right)\\
        &= \Ocal \left(\frac{\Omega_{d}}{(2N)^{2m}}\left(2^{2m-d+1} + \int_{1/2}^\infty \frac{1}{k^{2m-d+1}}dk\right)\right).
    \end{align*}
    where $\Omega_d$ is the surface area of the ball in $d$ dimensions. In the last line, we  made use of the inequality 
    \begin{align*}
        \sum_{r=N}^{\infty} f(n) \leq f(N) + \int_{N}^{\infty}f(x)dx,
    \end{align*}
    valid for monotonically-decreasing $f$. The integral converges when $2m > d$, and is given by $2^{2m-d}/(2m-d)$. Using the known formula
    \begin{equation*}
        \Omega_d = \frac{2 \pi^{d/2}}{\Gamma(d/2)}
    \end{equation*}
    and simplifying, we have the result stated in the lemma. 
\end{proof}

We also derive an analogous result bounding the aliasing error in terms of $\Hcal^2$ seminorm.
\begin{lemma}[Aliasing Error in $\Hcal^2$ seminorm]
\label{lem:aliasing_in_sobolev_norm}
    Suppose $\phi \in \Hcal^m$ for positive integer $m\geq \max\{d/2,2\}$, and that the periodic extension of $\phi$ is in $C^{1}(\Rmbb^d)$. Then 
    \begin{align*}
        \lvert I_N\phi - P_N\phi\rvert_{\Hcal^{2}} \lesssim  \left(\frac{\pi}{4}\right)^{d/4} \frac{1}{\sqrt{(m-d/2) \Gamma(d/2)} N^m} \abs{\phi}_{\Hcal^{m+2}}.
    \end{align*}
\end{lemma}
\begin{proof}
We know that 
\begin{align*}
&-\Delta(I_N\phi(x)) = \sum_{\nmbf \in \Ncal}4\pi^2\langle \chi_{\nmbf}, \phi\rangle_{\Gcal}\lVert\nmbf\rVert^2\chi_{\nmbf}(x)\\
&-\Delta(P_N\phi(x)) = \sum_{\nmbf \in \Ncal}4\pi^2\langle \chi_{\nmbf}, \phi\rangle\;\lVert\nmbf\rVert^2\chi_{\nmbf}(x).
\end{align*}
Thus,
\begin{align*}
    \lvert I_N\phi - P_N\phi\rvert_{\Hcal^{2}} 
     = \lVert \Delta\left( I_N\phi - P_N\phi\right)\rVert^2 = \sum_{\nmbf\in\Ncal} 4\pi^2\lVert \nmbf \rVert^4\lvert \langle \chi_\nmbf, \phi \rangle_\Gcal- \langle \chi_\nmbf, \phi \rangle \rvert^2.
\end{align*}  
Hence, following the proof of Lemma~\ref{lem:aliasing} we get that $\lvert \phi \rvert_{\Hcal^m}$ is replaced with $\lvert \phi \rvert_{\Hcal^{m+2}}$.
\end{proof}

If we combine the truncation error and aliasing error, the triangle inequality (Equation~\eqref{eq:alias_truncation}) results in the following two bounds on the error between the interpolant and $\phi$.
\begin{theorem}
 Suppose $\phi \in \Hcal^m$ for positive integer $m > \max\{d/2,2\}$, and that the periodic extension of $\phi$ is in $C^{1}(\Rmbb^d)$. Then 
 \begin{align*}
        \lVert \phi - I_N\phi\rVert = \Ocal \left( \left(\frac{\pi}{4}\right)^{d/4} \frac{1}{\sqrt{(m-d/2) \Gamma(d/2)} N^{m}} \abs{\phi}_{\Hcal^m} + \frac{\lvert \phi \rvert_{\Hcal^1}}{N} \right).
    \end{align*}
\end{theorem}
\noindent Combining the aliasing error of Lemma \ref{lem:aliasing_in_sobolev_norm} with the truncation error of Lemma \ref{lem:trunc_high_sobolev} leads to an overall error bound.
\begin{theorem}
\label{thm:total_interpolation_error_sob}
 Suppose $\phi \in \Hcal^{m+2}$ for positive integer $m > \max\{d/2,2\}$, and  that the periodic extension of $\phi$ is in $C^{1}(\Rmbb^d)$. Then 
 \begin{align*}
        \lvert \phi - I_N\phi\rvert_{\Hcal^2} = \Ocal \left( \left(\frac{\pi}{4}\right)^{d/4} \frac{1}{\sqrt{(m-d/2) \Gamma(d/2)} N^{m}} \abs{\phi}_{\Hcal^{m+2}} + \frac{\lvert \phi \rvert_{\Hcal^3}}{N}\right).
    \end{align*}
\end{theorem}

\subsection{Proofs for Section \ref{subsec:error_analysis_hamiltonian}}

We start by proving Theorem \ref{thm:collocation_bound_real_space} from the main text, restated below.
\begin{theorem}[Multi-dimensional Generalization of Theorem 1.8 \cite{lubich2008quantum}]
Consider $N \in \Nmbb$. Let $\Phi(t)$ denote the solution to 
\begin{equation}
    i \partial_t \Phi(x, t) = [- a(t) \Delta + V(x,t)]\Phi(x, t),
\end{equation}
over $S = [-\frac12, \frac12]^d$,
with initial condition $\Phi(x, 0) = \Phi_0 \in \Hcal_{N}$. If $\forall t \in [0, T]$,  $\Phi(t) \in \Hcal^{m+2}$ for $m > \max(d/2, 2)$, and $\Phi(t)$ has a periodic extension, in $x$, that is in $C^{1}(\Rmbb^d)$, then the $L_2$-distance between the true solution $\Phi(t)$ and the pseudo-spectral approximation $\Psi(t)$ is bounded as
\begin{align*}
    \lVert \Phi(t) - \Psi(t)\rVert \lesssim\norm{a}_{1}\cdot \max_{t \in [0 ,T]}\left[\left(\frac{\pi}{4}\right)^{d/4}\frac{2\lvert \Phi(t) \rvert_{\Hcal^{m+2}} + \lvert \Phi(t)\rvert_{\Hcal^m}}{\sqrt{(m-d/2) \Gamma(d/2)} N^{m}} + \frac{2\lvert \Phi(t) \rvert_{\Hcal^3} + \lvert \Phi(t) \rvert_{\Hcal^1}}{N}\right].
\end{align*}
\end{theorem}
\begin{proof}
First, we observe that 
\begin{align*}
    I_N[V\Phi] = I_N[V I_N[\Phi]]
\end{align*}
because $V\Phi$ and $V I_N[\Phi]$ match on $\Gcal$ by the interpolation property of $I_N$. We can apply the interpolation operator to both sides of the \eqref{eq:real_schrodinger} and use linearity to get
\begin{align*}
&I_{N}\left[i\partial_t\Phi(x, t) = -a(t)\Delta \Phi(x, t) + V(x, t)\Phi(x, t)\right]\\
&\implies i\partial_tI_N[\Phi](x, t) = -a(t)\Delta I_N[\Phi](x,t) + I_N[VI_N[\Phi]] + a(t)[-I_N[\Delta \Phi]+\Delta I_N[\Phi]]\\
&\implies i\partial_tI_N[\Phi] = -a(t)\Delta I_N[\Phi] + I_N[VI_N[\Phi]] + \delta.
\end{align*}
Hence the Fourier interpolation of the true solution $\Phi$ satisfies an equation similar to $\Psi$, up to an error $\delta(x,t)$. Specifically from Lemma \ref{lem:interpolating_collocation} and $I_N[\Psi] = \Psi \in \Hcal_N$,
\begin{align*}
    i\partial_tI_{N}[\Psi] = -a(t)\Delta I_{N}[\Psi] + I_N[VI_N[\Psi]],
\end{align*}
which is a Sch\"odinger equation with Hamiltonian
\begin{align*}
    -a(t)\Delta + I_N[V \cdot].
\end{align*}

Next we bound the $L_2$ distance between $I_N[\Phi]$ and $\Psi$ using the interpolated equation for $\Psi$ from Lemma \ref{lem:interpolating_collocation}. By linearity, the error $e(x, t) = I_N[\Phi] - \Psi$ satisfies
\begin{align*}
i\partial_t e(x,t) = -a(t)\Delta(e)(x,t) + I_N[V e](x, t) + \delta(x, t).
\end{align*}
Note that since $\partial_te(x,t)$ is $L_2$ integrable in $x$, fundamental theorem of calculus and dominated convergence give
\begin{align*}
\partial_t\frac12\lVert e(\cdot,t)\rVert^2 = \text{Re}\langle e, \partial_t e\rangle = \text{Re}\langle e, -i\delta\rangle \leq \lVert e(\cdot, t)\rVert\cdot  \lVert \delta(\cdot, t)\rVert,
\end{align*}
where $\Re$ takes care of the skew-Hermitian part, and the norms are taken over the variable indicated by a "$\cdot$''.
Thus,
\begin{align*}
\lVert e(\cdot, T)\rVert &\leq \int_{0}^T \lVert \delta(\cdot, t)\rVert dt \\ &\leq \int_0^T a(t) \lVert \Delta I_N[\Phi](\cdot, t) - I_N[\Delta \Phi](\cdot, t)\rVert dt \\
&\leq \lVert a(t) \rVert_1 \cdot \max_{t \in [0, T]}\lVert \Delta I_N[\Phi](\cdot, t) - I_N[\Delta \Phi](\cdot, t)\rVert.
\end{align*}

Continuing to decompose the error,
\begin{align*}
\lVert \Delta I_N[\Phi](\cdot, t)- I_N[\Delta \Phi](\cdot, t)\rVert &\leq \lVert \Delta\left(I_N[\Phi](\cdot, t) - \Phi(\cdot, t)\right)\rVert + \lVert I_N[\Delta\Phi](\cdot, t) - \Delta\Phi(\cdot, t)\rVert \\
&=\lvert I_N[\Phi](\cdot, t) - \Phi(\cdot, t) \rvert_{\Hcal^2}  + \lVert I_N[\Delta\Phi](\cdot, t) - \Delta\Phi(\cdot, t)\rVert.
\end{align*}
We can bound the right hand side using Theorems \ref{thm:total_interpolation_error_l2} and \ref{thm:total_interpolation_error_sob}. From Theorem~\ref{thm:total_interpolation_error_l2},
\begin{align*}
    \lVert I_N[\Delta\Phi](\cdot, t) - \Delta\Phi(\cdot, t)\rVert &\lesssim\left(\frac{\pi}{4}\right)^{d/4} \frac{1}{\sqrt{(m-d/2) \Gamma(d/2)} N^{m}} \abs{\Delta \Phi(\cdot, t)}_{\Hcal^m} + \frac{\lvert\Delta \Phi(\cdot, t) \rvert_{\Hcal^1}}{N}\\
    &=\left(\frac{\pi}{4}\right)^{d/4} \frac{1}{\sqrt{(m-d/2) \Gamma(d/2)} N^{m}} \abs{\Phi(\cdot, t)}_{\Hcal^{m+2}} + \frac{\lvert\Phi(\cdot, t) \rvert_{\Hcal^3}}{N}
\end{align*}
while from Theorem~\ref{thm:total_interpolation_error_sob},
\begin{align*}
    \lvert I_N\Phi(\cdot, t) - \Phi(\cdot, t) \rvert_{\Hcal^2} \lesssim\left(\frac{\pi}{4}\right)^{d/4}\frac{1}{\sqrt{(m-d/2) \Gamma(d/2)} N^{m}} \abs{\Phi(\cdot, t)}_{\Hcal^{m+2}} + \frac{\lvert \Phi(\cdot, t) \rvert_{\Hcal^3}}{N}.
\end{align*}
Hence $\lVert I_N[\Phi](\cdot, t) - \Psi(\cdot, t)\rVert$ is bounded by the sum of the above bounds times $\lVert a \rVert_1$. We also know that by the Theorem \ref{thm:total_interpolation_error_l2}:
\begin{align*}
 \lVert I_N\Phi(\cdot, t) - \Phi(\cdot, t)\rVert &\lesssim\left(\frac{\pi}{4}\right)^{d/4} \frac{1}{\sqrt{(m-d/2) \Gamma(d/2)} N^{m}} \abs{\Phi(\cdot, t)}_{\Hcal^m} + \frac{\lvert\Phi(\cdot, t) \rvert_{\Hcal^1}}{N}.
\end{align*}
From here, the theorem statement follows by the triangle inequality.
\end{proof}

Next we prove the Sobolev growth bound Lemma \ref{lem:sobolev_growth_wavefunc}. For the proof, it will be helpful to consider the Sobolev seminorms via the fractional Laplacian, which is a natural generalization of Equation \ref{eq:laplacian_defn} to fractional powers:
\begin{align*}
    [\left(-\Delta\right)^{s}g](x) \coloneqq  \sum_{\nmbf \in \Zmbb^d} (4\pi)^s\lVert \nmbf\rVert^{2s}\langle g, \chi_{\nmbf}\rangle\;\chi_{\nmbf}(x), \qquad s\in (0, 1).
\end{align*}
We then define $D := \sqrt{-\Delta}$, and observe that
\begin{align*}
    \langle \psi, \phi \rangle_{\Hcal^k} &= \langle \psi, \phi\rangle\; + \langle D^k\psi, D^k\phi \rangle\;\\
    \lvert \psi \rvert_{\Hcal^{k}} &= \langle D^k \psi, D^k \psi\rangle\;.
\end{align*}
We will also use the following inequality.
\begin{lemma}[Tight $L_2$ Gagliardo-Nirenberg \cite{morosi2018constants}]
\label{lem:gns_inequality}
For all $\psi \in \Hcal^m$ and $\theta\in[0,1]$, we have
\begin{align*}
    \lVert D^{\theta m}\psi\rVert \leq \lVert \psi \rVert^{1-\theta}\lVert D^m \psi\rVert^{\theta}
\end{align*}
where $D = \sqrt{-\Delta}$.
\end{lemma}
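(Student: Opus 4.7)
The plan is to prove the inequality by passing to the Fourier basis, where the operator $D = \sqrt{-\Delta}$ acts as a multiplication operator, and then applying H\"older's inequality on the resulting sums. The boundary cases $\theta = 0$ and $\theta = 1$ are trivially equalities, so it suffices to handle $\theta \in (0,1)$.

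First, I would use the Fourier-series definition of $-\Delta$ from~\eqref{eqn:laplacian_defn}: for any $s \geq 0$ and $\psi \in \Hcal^s$, one has
\begin{equation*}
    \lVert D^{s} \psi \rVert^{2} \;=\; \sum_{\nmbf \in \Zmbb^{d}} (2\pi)^{2s} \lVert \nmbf \rVert^{2s} \, \lvert \langle \chi_{\nmbf}, \psi \rangle \rvert^{2},
\end{equation*}
by Parseval/Plancherel. In particular this is the expression used to justify $\Dom(-\Delta) = \Hcal^{2}$ in Section~\ref{sec:Hilbert_spaces}, and the same reasoning extends to fractional $s$ via the definition $D^{s} = (-\Delta)^{s/2}$ spelled out just before the lemma.

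Next, with $s = \theta m$ I would split each summand as
\begin{equation*}
    (2\pi)^{2\theta m} \lVert \nmbf \rVert^{2\theta m}\, \lvert \langle \chi_{\nmbf}, \psi\rangle \rvert^{2}
    \;=\; \bigl[(2\pi)^{2m} \lVert \nmbf \rVert^{2m}\, \lvert \langle \chi_{\nmbf}, \psi\rangle \rvert^{2}\bigr]^{\theta}\,\bigl[\lvert \langle \chi_{\nmbf}, \psi\rangle \rvert^{2}\bigr]^{1-\theta},
\end{equation*}
and then apply H\"older's inequality to the $\ell^{1}$ sum over $\nmbf$ with conjugate exponents $1/\theta$ and $1/(1-\theta)$. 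This yields
\begin{equation*}
    \lVert D^{\theta m}\psi \rVert^{2} \;\leq\; \Bigl(\sum_{\nmbf} (2\pi)^{2m} \lVert \nmbf \rVert^{2m}\, \lvert \langle \chi_{\nmbf}, \psi\rangle\rvert^{2}\Bigr)^{\!\theta}\!\Bigl(\sum_{\nmbf}\lvert \langle \chi_{\nmbf}, \psi\rangle\rvert^{2}\Bigr)^{\!1-\theta} \;=\; \lVert D^{m}\psi\rVert^{2\theta}\,\lVert \psi\rVert^{2(1-\theta)},
\end{equation*}
where the last equality uses the Fourier-norm expression above together with Parseval's theorem for the $\theta=0$ factor. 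Taking square roots delivers the claimed bound.

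There is essentially no substantial obstacle here; the only thing to be careful about is the boundary convention $0^{0} = 1$ (for modes where $\nmbf = \mathbf{0}$ or where the Fourier coefficient vanishes), and the verification that the finiteness of $\lVert D^{m}\psi\rVert$ already guarantees, via the same H\"older bound, the finiteness of $\lVert D^{\theta m}\psi\rVert$ so that no convergence issues arise. The "tightness" claim in the lemma title is inherited from the sharpness of H\"older's inequality (equality when the two sequences are proportional after raising to the appropriate powers), and can be cited from~\cite{morosi2018constants} rather than re-derived here.
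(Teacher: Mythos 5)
Your proof is correct, and it is the standard argument: diagonalize $D^{s}$ in the Fourier basis, factor the weight $\lVert\nmbf\rVert^{2\theta m}$ as $\bigl(\lVert\nmbf\rVert^{2m}\bigr)^{\theta}\cdot 1^{1-\theta}$, and apply H\"older with exponents $1/\theta$ and $1/(1-\theta)$. The paper does not reproduce a proof of this lemma (it is cited wholesale from the reference), so you have in effect supplied a self-contained derivation consistent with the paper's conventions.

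Two small remarks. First, when matching your constant $(2\pi)^{2s}$ to the paper's stated definition of the fractional Laplacian (which carries a $(4\pi)^{s}$ prefactor rather than $(4\pi^{2})^{s}$), note that this prefactor is almost certainly a typo in the paper; in any case, any positive multiplicative constant $c^{s}$ in the symbol $c^{s}\lVert\nmbf\rVert^{2s}$ factors out of both sides of the H\"older step identically, so the inequality is unaffected. Second, your observation that finiteness of $\lVert D^{m}\psi\rVert$ (together with $\psi\in L_{2}$) automatically gives finiteness of $\lVert D^{\theta m}\psi\rVert$ is precisely what guarantees there are no convergence subtleties — and indeed it is the content of the inequality being proven — so it is fine to state it as you did rather than treating it as a separate preliminary.
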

We thus show the following under the assumption on the $t$-continuity of mixed-$x$ partial derivatives of $\Phi$, which is reasonable given the high regularity assumptions on $a(t)$ and $b(t)$.
\begin{lemma}[Sobolev Growth Bound -- Multidimensional Generalization of \cite{bourgain1999growth}]
    Suppose $S$ is $d$ dimensional and $\Phi(t)$ is the  classical solution to Problem \ref{prob:restricted_schr}.  Furthermore, suppose that $g(x) \in \Hcal^{(2m + 2)+\lceil \frac{d}{2} \rceil + 1}$,  $\Phi_0 \in \Hcal^{(2m + 2)+\lceil \frac{d}{2} \rceil + 1}$, and that $\forall \alpha \in \Nmbb^{d}$ with $\lVert \alpha \rVert_1 \leq 2m+2$, we have 
 $t \rightarrow \partial^{\alpha}_x\Phi(x, t)$ is continuous $\forall x \in S$. Then
    \begin{align*}
         \lvert \Phi(t)\rvert_{\Hcal^s} \leq 2{\lvert \Phi_0\rvert_{\Hcal^{s}}} + \left(\frac{2\sum_{k=1}^{s}\binom{s}{k}\lvert g\rvert_{\Hcal^k}\lVert b\rVert_1}{s}\right)^s
    \end{align*}
    for all $s \leq m$.
\end{lemma}
\begin{proof}
    Theorem \ref{thm:solution_existence} implies if $\Phi_0 \in \Hcal^{r}$, then $\forall t, \Phi(t) \in \Hcal^{r}$ for all $t \in [0, T]$. If $ r = (2m + 2)+\lceil \frac{d}{2} \rceil + 1$, then the Sobolev embedding theorem implies that $\partial^{\alpha}g(x)$, $\partial^{\alpha}_x\Phi(x ,t)$ exist in the classical sense and are Lipschitz continuous in $x$ for $\lVert \alpha\Vert_1 \leq 2m + 2$. Note also by Theorem \ref{thm:solution_existence}, $\partial_t\Phi(t)$ exists in the classical sense. Thus for $\lVert \beta \rVert_1 \leq 2m$,  $\partial^{\beta}_x\partial_t\Phi(x,t)$ exists and for each $x$ is continuous in $t$, since
    \begin{align}
    \label{eqn:mixed_derivs}
    \partial^{\beta}_x\partial_t\Phi(x,t) = \partial^{\beta}_x[ia(t)\Delta \Phi(x,t) - ib(t)g(x)\Phi(x,t)].
    \end{align}
    The Lipschitz continuity in $x$ implies that $\partial^{\beta}_x\partial_t\Phi(x,t)$ is jointly continuous in $x$ and $t$. Then Clairaut's theorem implies $\partial_x^{\beta}\partial_t\Phi(x, t) = \partial_t \partial_x^{\beta}\Phi(x, t)$, for $\alpha$ such that $\lVert \beta \rVert_1 \leq 2m$.
    
    We seek to bound the rate of change of $\lvert \Phi(\cdot, t)\rvert_{\Hcal^s}$, using the above equality: 
    \begin{align*}
    \partial_t\left(\frac12\lvert \Phi(\cdot, t)\rvert_{\Hcal^s}^2\right) &= 
    \frac12\partial_t\lVert D^s \Phi(\cdot, t)\rVert ^2 \\
    &= \frac12\partial_t \langle \Delta^{s}\Phi(\cdot, t) , \Phi(\cdot, t)\rangle\\
    &= \lvert \Re\langle \partial_tD^s\Phi(\cdot, t),  D^s \Phi(\cdot, t)\rangle\rvert \\
    &=\lvert \Re\langle D^s \partial_t\Phi(\cdot, t),  D^s \Phi(\cdot, t)\rangle\rvert\\
    &=\lvert \Re\langle D^s [ia(t)\Delta -ib(t)g(x)]\Phi(\cdot, t),  D^s \Phi(\cdot, t)\rangle\rvert\\
    &=\lvert \Re\langle ia(t)D^s\Delta \Phi(\cdot, t),  D^s \Phi(\cdot, t)\rangle + \Re\langle -ib(t)D^s[g(x)\Phi(\cdot, t)],  D^s \Phi(\cdot, t)\rangle\rvert.
    \end{align*}
    The third equality follows from the continuity of $\partial_t\Delta^s\Phi$, dominated convergence and symmetry of the derivatives.
    
    Note that
    \begin{align*}
        \langle D^s\Delta \Phi(\cdot, t),  D^s \Phi(\cdot, t)\rangle = \sum_{k=1}^{d} \langle \partial_{k}^2D^s\Phi(\cdot, t),  D^s \Phi(\cdot, t)\rangle,
    \end{align*}
    since $\partial_j^2$ is symmetric, the above term is real. Hence the first term above vanishes. In the below calculation, for simplicity we will ignore the power of $4\pi^2$ factors that appear when differentiating the Fourier series. We also have that
    \begin{align*}
        \lvert\text{Re}\langle -iD^s[g\Phi(\cdot, t)], D^s \Phi(\cdot, t) \rangle\rvert &= \lvert \text{Im}\sum_{\lVert \alpha \rVert_1 = s} \langle \partial^{\alpha}(g\Phi(\cdot, t)), \partial^{\alpha}\Phi(\cdot, t)\rangle\rvert\\
        &=\lvert \text{Im}\sum_{\lVert \alpha \rVert_1 =s}\sum_{ A \subseteq \Pcal(\alpha)\setminus\emptyset} \langle (\partial^Ag)\partial^{A^c}\Phi(\cdot, t), \partial^{\alpha}\Phi(\cdot, t)\rangle \rvert\\
        &=\lvert \text{Im}\sum_{\lVert \alpha \rVert_1 =s}\sum_{ A \subseteq \Pcal(\alpha)\setminus\emptyset} \sum_{\nmbf \in \Zmbb^d}\widehat{\Phi}_{\nmbf}\prod_{j \in \alpha}n_j\sum_{\rmbf\in\Zmbb^d}\widehat{g}_{\rmbf}\widehat{\Phi}(\cdot, t)_{\nmbf - \rmbf}\prod_{j \in A}r_j\prod_{j \in A^c}(n-r)_j\rvert\\
        &\leq \lVert D^s\Phi\rVert\left[\sum_{\nmbf \in \Zmbb^d}\sum_{\rmbf\in\Zmbb^d}\sum_{\norm{\alpha}_1 =s}\sum_{ S \subseteq \Pcal(\alpha)\setminus\emptyset} \prod_{j \in A}r_j^2\prod_{j \in A^c}(n-r)_j^2\lvert\widehat{g}_{\rmbf}\rvert^2\lvert\widehat{\Phi}_{\nmbf - \rmbf}\rvert^2\right]^{1/2}\\
        &\leq \lVert D^s \Phi \rVert\cdot \sum_{k=1}^{s}\binom{s}{k}\left(\sum_{\nmbf \in \Zmbb^d}\sum_{\rmbf \in \Zmbb^d}\norm{\nmbf-\rmbf}^{2(s-k)}\lVert \rmbf \rVert^{2k}\lvert \widehat{\Phi}_{\nmbf - \rmbf}\rvert^2 \lvert\widehat{g}_{\rmbf}\rvert^2\right)^{1/2}\\
        &\leq\sum_{k=1}^s\binom{s}{k} \lVert D^s\Phi \rVert\lVert D^{s-k}\Phi \rVert\lVert D^k g \rVert,
    \end{align*}
where $\Pcal$ denotes the power set of the vector $\alpha$ as a subset of $\Nmbb$, $A^c$ denotes the complement of $A$. The reason for excluding the empty set $\emptyset$ is because, as mentioned earlier, the $\Im$ causes this term to vanish. This is a subtle but important point, because otherwise we would get exponential growth below. Note that the various re-orderings of infinite sums are justified by the absolute convergence of the series involved. The last inequality follows from the convolution theorem and Cauchy-Schwarz.

We can also use Lemma \ref{lem:gns_inequality} to say that for a normalized wave function and $k\leq s$
\begin{align}
\label{eq:sobolev_inequality}
    \lVert D^{s-k} \Phi(t)\rVert \leq \lVert D^s\Phi(t)\rVert^{1-k/s}.
\end{align}
Together these give
\begin{align*}
\partial_t\lVert D^s \Phi(t)\rVert^2 \leq 2b(t)\left( \sum_{k=1}^s\binom{s}{k} \lVert D^k f\rVert\right)\cdot[\lVert D^s\Phi\rVert^2]^{1-1/2s}
\end{align*}

Let $C \coloneqq \sum_{k=1}^{s}\binom{s}{k}\lVert D^k f\rVert$. If $y = \lVert D^s \Phi(t)\rVert^2$, then we have the inequality $y' \leq 2Cb(t)y^{1-\frac{1}{2s}}$. Thus,
\begin{align*}
&\int_{0}^T \frac{y'}{y^{1-\frac{1}{2s}}} dy \leq 2C\lVert b\rVert_1\\
&2sy^{\frac{1}{2s}} - 2sy(0)^{\frac{1}{2s}} \leq 2C\lVert b\rVert_1\\
&y \leq 2y(0) + 2\left(\frac{C\lVert b\rVert_1}{s}\right)^{2s},
\end{align*}
where the last line used Jensen's inequality.
\end{proof}

Consider the rectangular mollifier of Definition~\ref{defn:mollifer}.
\begin{align*}
    \mathcal{M}_{\sigma, d}(x) = \begin{cases}
        (\vartheta\sigma)^{-d}\prod_{j=1}^{d}e^{-\frac{1}{1-x_j^2/\sigma^2}} & x \in (-\sigma, \sigma)^d\\
        0 & \text{otherwise}
    \end{cases},
\end{align*}
which is in $C^{\infty}(\Rmbb^d)$ and has $L_1$ norm equal to one. The value of $\vartheta < 1$ is the normalization constant of the Mollifier with $\sigma =1$. There is an obvious periodic extension, where we assume  $\sigma \leq \frac12$ consider $\mathcal{M}$ restricted to $S$ and periodically stitch it together to form $\mathcal{M}_{\sigma, P}$.

We now show the following result for the mollifier (also valid for the periodic version):
\begin{lemma}
\label{lem:mollifier_lem}
    The rectangular mollifier satisfies
    \begin{align*}
              \lVert \mathcal{M}_{\sigma, d} \rVert_{\Hcal^m} = \Ocal\left(\left(Cd/\sigma\right)^{3d}\right)
    \end{align*}
for $m \leq d + \Ocal(1)$, where $C$ is an absolute constant.
\end{lemma}
\begin{proof}
We  will use the known formula for the surface area of the $d$-dimensional unit sphere, $\Omega_d$:
\begin{equation*}
    \Omega_d = \frac{2 \pi^{d/2}}{\Gamma(d/2)}.
\end{equation*}
 For conciseness, we will also use $\asymp$ to denote $\Ocal$.  

Consider a single mollifier in one dimension:
\begin{align*}
    \mathcal{M}_{\sigma,1} = \frac{e^{-\frac{1}{1-x^2/\sigma^2}}}{\vartheta \sigma},
\end{align*}
then is known that the asymptotic decay of the Fourier coefficients via saddle-point approximation \cite{johnson2015saddlepointintegrationcinftybump} (plus the scaling property of Fourier transforms) is

\begin{align*}
    \lvert \langle \mathcal{M}_{\sigma, 1}, \chi_{n}\rangle\rvert \lesssim \sigma^{-3/4}\frac{e^{-\sqrt{\sigma n}}}{\vartheta\lvert n\rvert^{3/4}}.
\end{align*}
Since the rectangular mollifier is separable, we get that the Fourier decay is
\begin{align*}
     \lvert \langle \mathcal{M}_{\sigma, d}, \chi_{n}\rangle\rvert \lesssim\vartheta^{-d}\sigma^{-3d/4}\frac{e^{-\sum_j\sqrt{\sigma n_j}}}{\prod_{j}\lvert n_j\rvert^{3/4}} \lesssim\left(\sigma^{3/4}\vartheta\right)^{-d}{e^{-\sqrt{\sigma \lVert \nmbf \rVert}}} 
\end{align*}

For simplicity we will again ignore the factor of $(4\pi)^m$ that appears in $\lvert \mathcal{M}_{\sigma, d} \rvert_{\Hcal^m}^2$ as it can be absorbed into the constant $C$ in the lemma statement. We want to bound
\begin{align*}
    \lvert \mathcal{M}_{\sigma, d} \rvert_{\Hcal^m}^2&= \sum_{\nmbf \in \Zmbb^d} \lvert \langle  \mathcal{M}_{\sigma, d} , \chi_{\nmbf}\rangle \rvert^2 \lVert \nmbf \rVert^{2m},
\end{align*}
which we split in the following way
\begin{align*}
    \lvert \mathcal{M}_{\sigma, d} \rvert_{\Hcal^m}^2 = \sum_{\lVert \nmbf \rVert_{\infty} \leq d } \lvert \langle  \mathcal{M}_{\sigma, d} , \chi_{\nmbf}\rangle \rvert^2 \lVert \nmbf \rVert^{2m}  + \sum_{\nmbf \notin B_{2}(0, d) } \lvert \langle  \mathcal{M}_{\sigma, d} , \chi_{\nmbf}\rangle \rvert^2 \lVert \nmbf \rVert^{2m}.
\end{align*}
Note that $d$ is an asymptotic parameter, so we can apply the saddle-point approximation throughout the second sum.

We have the following trivial bound on $\lvert \langle  \mathcal{M}_{\sigma, d} , \chi_{\nmbf}\rangle \rvert^2 \leq 1$, so for the first ``finite'' term we use the trivial bound
\begin{align*}
    \sum_{\lVert \nmbf \rVert_{\infty} \leq d } \lvert \langle  \mathcal{M}_{\sigma, d} , \chi_{\nmbf}\rangle \rvert^2 \lVert \nmbf \rVert^{2m} \leq d^{d+2m}.
\end{align*}

The second infinite series will be handled using the above mentioned saddle-point approximation for the Fourier coefficients. This is valid since $d$ is an asymptotic parameter.
\begin{align*}
    \sum_{\nmbf  \notin B_{2}(0, d) } \lvert \langle  \mathcal{M}_{\sigma, d} , \chi_{\nmbf}\rangle \rvert^2 \lVert \nmbf \rVert^{2m}
    &\lesssim\sum_{\nmbf \notin B_{2}(0, d)}\left(\sigma^{3/4}\vartheta\right)^{-2d}{e^{-\sqrt{\sigma \lVert \nmbf \rVert}}} \lVert \nmbf \rVert^{2m}\\
    &\lesssim\Omega_{d}\sum_{r \geq d}\left(\sigma^{3/4}\vartheta\right)^{-2d}{e^{-\sqrt{\sigma r}}}r^{2m+d-1}.
\end{align*}
The term in the series is maximized at 
\begin{align*}
    r^* = \left\lfloor \frac{4(d+2m-1)^2}{\sigma} \right\rfloor,
\end{align*}
which will be greater than $d$.
Then we can split the sum into a monotonically-increasing finite part and monotonically-decreasing infinite part. Note that $L_2$ convergence of rectangular Fourier series also implies convergence when summing over the $\ell_2$ ball \cite[Theorem 4.2]{weisz2012summability}. We then apply the inequality
\begin{align*}
    \sum_{r=N}^{\infty} f(n) \leq f(N) + \int_{N}^{\infty}f(x)dx,
\end{align*}
for monotonically-decreasing $f$:
\begin{align*}
\sum_{r \geq d}{e^{-\sqrt{\sigma r}}}r^{2m+d-1}&= \sum_{r=d}^{r^*} e^{-\sqrt{\sigma r}}r^{2m+d-1} + \sum_{r^* +1}^{\infty} e^{-\sqrt{\sigma r}}r^{2m+d-1}\\
&\leq 2r^*e^{-\sqrt{\sigma r^*}}r^{*2m+d-1} + \int_{r^*}^{\infty} e^{-\sqrt{\sigma y}}y^{2m+d-1}dy.
\end{align*}
Now
\begin{align*}
   \int_{r^*}^{\infty} e^{-\sqrt{\sigma y}}y^{2m+d-1}dy &\leq \frac{1}{\sigma^{2m+d-2}}\int_{\sqrt{\sigma r^*}}^{\infty} e^{-x}x^{4m+2d-1}dx \leq \frac{1}{\sigma^{2m+d}}\Gamma(4m+2d),
\end{align*}
and
\begin{align*}
    e^{-\sqrt{\sigma r^*}}r^{*2m+d-1} &\leq \sigma^{-(2m+d-1)}e^{-(4m+2d-2)}(2d+4m-2)^{4m+2d-2} = \left(\frac{(2d+4m-2)}{\sigma^{1/2}e}\right)^{4m+2d-2}.
\end{align*}
Therefore
\begin{align*}
   \sum_{r \geq d}{e^{-\sqrt{\sigma r}}}r^{2m+d-1} &\leq 2r^*\left(\frac{(2d+4m-2)}{\sigma^{1/2}e}\right)^{4m+2d-2} + \frac{\Gamma(4m+2d)}{\sigma^{2m+d}}\\
    &\leq\sigma^{-(2m+d)}\left[2r^*\left(\frac{(2d+4m-2)}{e}\right)^{4m+2d-2} + \Gamma(4m+2d)\right]\\
    &\lesssim\frac{\Gamma(4m+2d)}{\sigma^{2m+d}},
\end{align*}
where Stirling's approximation was used.

So
\begin{align*}
     \sum_{ \nmbf  \notin B_{2}(0, d) } \lvert \langle  \mathcal{M}_{\sigma, d} , \chi_{\nmbf}\rangle \rvert^2 \lVert \nmbf \rVert^{2m} \lesssim\frac{(2\pi)^{d/2}\Gamma(4m+2d)}{\Gamma(d/2)\vartheta^{2d}\sigma^{2m+5d/2}}.
\end{align*}
For $ m \leq d + \Ocal(1)$, we get
\begin{align*}
    \sum_{\nmbf \notin B_{2}(0, d) } \lvert \langle  \mathcal{M}_{\sigma, d} , \chi_{\nmbf}\rangle \rvert^2 \lVert \nmbf \rVert^{2m}\lesssim\frac{C'\Gamma(6d)}{\Gamma(d/2)\vartheta^{2d}\sigma^{5d}} \lesssim\left(\frac{C''d^{11/2}}{\vartheta^2\sigma^{5}}\right)^d.
\end{align*}
Hence
\begin{align*}
    \lVert \mathcal{M}_{\sigma, d} \rVert_{\Hcal^m} = \Ocal\left(d^{3d/2}+ \left(\frac{C''
    d^{11/4}}{\vartheta\sigma^{5/2}}\right)^d\right) = \Ocal\left(\left(\frac{C^{'''}d^{3}}{\sigma^{3}}\right)^d\right).
\end{align*}
where all $C$'s are absolute constants.

\end{proof}

\begin{lemma}[Periodic Mollification]
Consider the $R$-restriction $g$ of a $G$-Lipschitz function in $p$-norm $f$, $p\geq 1$. Also define the set
\begin{align*}
 D := \begin{cases}
      S & \text{if}~$g$ ~\text{satisfies periodic boundary conditions,} \\ 
     B_{\infty}(0, \frac12 - \sigma) & \text{otherwise}
\end{cases}
\end{align*}
for any $0 < \sigma < \frac12$. 

There exists a $C^{\infty}$ function $g_{\sigma}$ with the following properties.
\begin{itemize}
    \item $\lvert g(x) - g_{\sigma}(x) \rvert = \Ocal\left(d^{1/p}GR\sigma\right)$ for all $x \in D$.
    \item  If $f$ is convex, $g_{\sigma}$ is convex on $D$.
    \item  For $m \leq d + \Ocal(1)$,
    $$\lvert g_{\sigma}\rvert_{\Hcal^m} =\Ocal\left(\lVert g \rVert_{\infty}\left(\frac{Cd}{\sigma}\right)^{3d}\right)$$
    where $C$ is an absolute constant.
\end{itemize}
\end{lemma}
\begin{proof}
Let $g_{P}$ denote the periodic extensions of $g$. We take the periodic convolution between $\mathcal{M}_{\sigma, P}$ and $g_{P}$:
\begin{align*}
g_{\sigma, P}(x) = \int_S g(y) \mathcal{M}_{\sigma, P}(x -y) dy.
\end{align*}
We first start without the assumption that $g$ satisfies the periodic boundary conditions. Note that $g_{P}$ is Lipschitz within $S$. Consider $x \in B_{\infty}(0, \frac12 - \sigma)$, by definition of $\sigma$, there is no wrap around that occur with $\mathcal{M}_{\sigma, P}$ for these values of $x$. Hence, we get
\begin{align*}
    \lvert g_{\sigma, P}(x) - g_{P}(x)\rvert^{p} &\leq \int_S \lvert g_{\sigma}(y) - g(x) \rvert^p \mathcal{M}_{\sigma, d}(x -y) dy \\ &\leq \int_S  (GR)^p \lVert x - y \rVert_p^p \mathcal{M}_{\sigma, d}(x -y) dy\\
    &\leq \int_S  d(RG)^p \lvert x - y \rvert^p \mathcal{M}_{\sigma, 1}(x -y) dy \\
    &\leq  d(GR)^{p}(2\sigma)^p.
\end{align*}

Since we are within one period, we can drop the $P$ subscript and get $\forall x \in B(0, \frac12-\sigma)$
\begin{align*}
    \lvert g(x) - g_{\sigma}(x) \rvert = \Ocal\left(d^{1/p}GR\sigma\right).
\end{align*}
However, if $g$ satisfies the periodic boundary conditions, then its periodic extension $g_P$ is a Lipschitz-continuous function over $\Rmbb^d$. Hence we then have that guarantee that $\forall x, y \in \Rmbb^d$
\begin{align*}
    \lvert g_{\sigma,P}(x) - g_{\sigma,P}(y)\rvert \leq GR\lVert (x - y)~\text{mod}~S \rVert_{p},
\end{align*}
and repeat the above argument without needing to consider a smaller ball inside $S$.

The Fourier coefficients of $g$ satisfy the bound $\lvert \langle g, \chi_{\nmbf}\rangle \rvert \leq \lVert g \rVert_{\infty}$, using Parseval and that $g$ is over the unit box. Since the periodic convolution of two functions multiplies their Fourier coefficients, it follows that
\begin{align*}
    \lvert g_{\sigma} \rvert_{\Hcal^m}^2&= (4\pi)^m\sum_{\nmbf \in \Zmbb^d} \lvert \langle  g_{\sigma}, \chi_{\nmbf}\rangle \rvert^2 \lVert \nmbf \rVert^{2m} \\ &=  (4\pi)^m\sum_{\nmbf \in \Zmbb^d} \lvert \langle  g, \chi_{\nmbf}\rangle \rvert^2 \lvert\langle  \mathcal{M}_{\sigma, d}, \chi_{\nmbf}\rangle \rvert^2 \lVert \nmbf \rVert^{2m}\\
    &\leq  (4\pi)^m\sum_{\nmbf \in \Zmbb^d} \lVert g \rVert_{\infty}^2 \lvert\langle  \mathcal{M}_{\sigma, d}, \chi_{\nmbf}\rangle \rvert^2 \lVert \nmbf \rVert^{2m}\\
    &= \lVert g \rVert_{\infty}^2 \lvert  \mathcal{M}_{\sigma, d}\rvert_{\Hcal^m}^2
\end{align*}
By Lemma \ref{lem:mollifier_lem}
\begin{align*}
    \lvert g_{\sigma} \rvert_{\Hcal^m} &= \Ocal\left(\lVert g \rVert_{\infty}\left(\frac{Cd}{\sigma}\right)^{3d}\right). 
\end{align*}
\end{proof}

\begin{theorem}[Pseudo-spectral Approximation for Lipschitz]
    Suppose $f : \mathcal{X} \rightarrow \Rmbb$ is a $G$-Lipschitz continuous function in $p$-norm with $[-R,R]^d\subseteq \mathcal{X} \subseteq \Rmbb^d$ and $p \geq 1$, and let $g$ be its $R$-restriction. Suppose that $g$ results in a $\Ocal\left(\frac{1}{R}\right)$ Low-Leakage Evolution (Definition \ref{defn:low-leakage-ev}), then the consequences of Theorem \ref{thm:collocation_trace_distance} apply with same specified value of $\log_2(N)$.
\end{theorem}
\begin{proof}
The proof follows the proof of Theorem \ref{thm:collocation_trace_distance} very closely. However, we no longer have the guarantee that the periodic extension of $g$ is continuous, and so only have the following guarantees:
by Lemma \ref{lem:mollification} there exists an $C^{\infty}(S)$ function $g_{\sigma} : [-\frac12, \frac12]^d \rightarrow \Rmbb$, where $\sigma = \frac{\epsilon}{d^{1/p}GR\lVert b \rVert_1}$,  such that
    \begin{align}
    \lvert g_{\sigma}(x) - g(x) \rvert &= \Ocal\left(\frac{\epsilon}{\lVert b \rVert_1}\right), \forall x \in B_{\infty}\left(\frac12-\epsilon\right) \subset S,\label{eq:infty_norm_g_bound_res}\\
    \lvert g_{\sigma}\rvert_{\Hcal^m} &= \Ocal\left(\lVert g \rVert_{\infty}\left(\frac{C\lVert b\rVert_1^2GRd^{1+2/p}}{\epsilon}\right)^{3d}\right), &\text{for} ~ m \leq d +\Ocal(1).
    \end{align}
    We still consider the PDEs in Equations \eqref{eq:first_pde}-\eqref{eq:fourth_pde}. Using \eqref{eq:infty_norm_g_bound_res}, our choice of $\sigma$ and Lemma \ref{lem:evolution_bound_leakage}, we get that first and last terms in the \eqref{eq:triangle_inequality} are $\Ocal(\epsilon)$. The bounding of the middle norm remains unchanged.
\end{proof}

\begin{lemma}[Pseudo-spectral Method Expectation Error General]
    Suppose we have the hypotheses of Theorem \ref{thm:collocation_trace_distance}.
    Let $h : \mathcal{X} \rightarrow \Rmbb$ be $G$ Lipschitz in $\ell_p$ norm with $\mathcal{X} \subseteq \Rmbb^d$ and $g$ be its $R$-restriction. Then
    \begin{align*}
    &\lvert \sum_{x_\jmbf \in \Gcal} g(x_\jmbf )\lvert \Psi(x_\jmbf )\rvert^2(2N)^{-d} 
     -  \int_S g(y) \lvert \Phi ({y})\rvert^2 dy\rvert \\&=\Ocal\left(\frac{\lVert g\rVert_{\infty}d\sup_{S}\lVert \nabla\left(\lvert \Phi_M \rvert^2)\right)\rVert_{\infty} + GRd^{1/p}}{N} + \lVert g\rVert_{\infty}\lVert \Psi - \Phi\rVert + \frac{\lVert g\rVert_{\infty}\lvert \Phi\rvert_{\Hcal^1}}{M}\right)
    \end{align*}
    for any $M \leq N$
\end{lemma}
\begin{proof}
    Let $\Phi_M := P_M\Phi$ for some $M \leq N$, and $\Zcal = \sum_{x_\jmbf \in \Gcal} \lvert \Phi_M(x_\jmbf)\rvert^2$.
    
    \begin{align*}
    \lvert \sum_{x_\jmbf \in \Gcal} g(x_\jmbf)\lvert \Psi({x_\jmbf})\rvert^2(2N)^{-d} 
     -  \int_S g(y) \lvert \Phi(y)\rvert^2 dy\rvert &\leq \lvert \sum_{x_\jmbf\in \Gcal} g(x_\jmbf)\lvert \Psi({x_\jmbf})\rvert^2(2N)^{-d}  - \sum_{x_\jmbf\in \Gcal} g(x_\jmbf)\frac{\lvert \Phi_M(x_\jmbf)\rvert^2}{\Zcal} \rvert \\
     &+\lvert \sum_{x_\jmbf\in \Gcal} g(x_\jmbf)\frac{\lvert \Phi_M(x_\jmbf)\rvert^2}{\Zcal}  - \sum_{x_\jmbf\in \Gcal} g(x_\jmbf)\lvert \Phi_M(x_\jmbf)\rvert^2(2N)^{-d} \rvert \\
     &+\lvert \sum_{x_\jmbf\in \Gcal} g(x_\jmbf)\lvert \Phi_M(x_\jmbf)\rvert^2(2N)^{-d} - \int_S g(y)\lvert \Phi_M(y) \rvert^2 dy \rvert\\
     &+\lvert \int_Sg(y)\lvert \Phi_M(y) \rvert^2 dy - \int_Sg(y)\lvert \Phi(y) \rvert^2 dy \rvert.
    \end{align*}
    
    For the first term:
    \begin{align*}
    \lvert \sum_{x_\jmbf\in \Gcal} g(x_\jmbf)\lvert \Psi({x_\jmbf})\rvert^2(2N)^{-d}  - \sum_{x_\jmbf\in \Gcal} g(x_\jmbf)\frac{\lvert \Phi_M(x_\jmbf)\rvert^2}{\Zcal} \rvert  &\leq  2\lVert g \rVert_{\infty}\left(\sum_{x_\jmbf\in \Gcal} \lvert\lvert \Psi({x_\jmbf})\rvert^2(2N)^{-d}  - \sum_{x_\jmbf\in \Gcal} \frac{\lvert \Phi_M(x_\jmbf)\rvert^2}{\Zcal}\rvert\right)\\
    & \leq 2\lVert g \rVert_{\infty}(2N)^{-d/2} \left(\sum_{x_\jmbf \in G}\left(\Psi({x_\jmbf}) - \frac{(2N)^{d/2}\Phi_M(x_\jmbf)}{\sqrt{\Zcal}}\right)^2\right)^{1/2} \\
    &\leq 2\lVert g \rVert_{\infty}(2N)^{-d/2} \left(\sum_{x_\jmbf \in \Gcal}\left(\Psi({x_\jmbf}) - \Phi_M(x_\jmbf)\right)^2 \right)^{1/2} \\&+ 2\lVert g \rVert_{\infty}(1- \sqrt{\Zcal}(2N)^{-d/2})\\
    &= 2\lVert g \rVert_{\infty}\left(\int_S(\Psi({y})  -\Phi_M(y))^2dy\right)^{1/2}\\&+ 2\lVert g \rVert_{\infty}(1- \sqrt{\Zcal}(2N)^{-d/2})\\
    &\leq2\lVert g \rVert_{\infty}\lVert \Psi - \Phi\rVert + 2\lVert g \rVert_{\infty}\lVert \Phi - \Phi_M\rVert  \\&+ 2\lVert g \rVert_{\infty}(1- \sqrt{\Zcal}(2N)^{-d/2}),
    \end{align*}
    where the last  equality  follows since $\Psi$ and $\Phi_M$ are both Fourier polynomials in $\Hcal_N$. These are all terms which we know how to bound. 
    
    The second term is bounded by
    \begin{align*}
    \big\lvert \sum_{x_\jmbf\in \Gcal} g(x_\jmbf)\frac{\lvert \Phi_M(x_\jmbf)\rvert^2}{\Zcal}  - \sum_{x_\jmbf\in \Gcal} g(x_\jmbf)\lvert \Phi_M(x_\jmbf)\rvert^2(2N)^{-d} \big\rvert \leq \lVert g \rVert_{\infty}(1- \Zcal(2N)^{-d}).
    \end{align*}
    The fourth is bounded by 
    \begin{align*}
    \lvert \int_Sg(y)\lvert \Phi_M(y) \rvert^2 dy - \int_Sg(y)\lvert \Phi(y) \rvert^2 dy \rvert \leq 2\lVert g\rVert_{\infty}\lVert \Phi_M - \Phi\rVert.
    \end{align*}
    
    The most involved term is the third, which is effectively a Riemann summation error:
    \begin{align*}
    \lvert \sum_{x_\jmbf\in \Gcal} g(x_\jmbf)\lvert \Phi_M(x_\jmbf)\rvert^2(2N)^{-d} - \int_Sg(y)\lvert \Phi_M(y) \rvert^2 dy \rvert &= \lvert\sum_{x_\jmbf \in \mathcal{G}}\int_{C_{x_\jmbf}}g(x_\jmbf)\lvert \Phi_M(x_\jmbf) \rvert^2- g(y)\lvert \Phi_M(y) \rvert^2 dy \rvert \\&\leq
    \lvert\sum_{x_\jmbf \in \mathcal{G}}\int_{C_{x_\jmbf}}g(x_\jmbf)\lvert \Phi_M(x_\jmbf) \rvert^2- g(x_\jmbf)\lvert \Phi_M(y) \rvert^2 dy \rvert \\
    &+\sum_{x_\jmbf \in \Gcal} \int_{C_{x_\jmbf}}\lvert g(x_\jmbf) - g(y)\rvert \lvert\Phi_M(y)\rvert^2 dy,
    \end{align*}
    where $C_{x_\jmbf} = x_\jmbf + [0, 1/2N]^{d}$. 
    
    Continuing,
    \begin{align*}
    \sum_{x_\jmbf \in \Gcal} \int_{C_{x_\jmbf}}\lvert g(x_\jmbf) - g(y)\rvert \lvert\Phi_M(y)\rvert^2 dy &\leq \sum_{x_\jmbf \in \Gcal}\int_{C_{x_\jmbf}} GRd^{1/p}\lVert x_\jmbf - y\rVert_{\infty} \rvert \Phi_M(y)\rvert^2dy\\
    &\leq \frac{GRd^{1/p}}{2N}\cdot \frac{\Zcal}{(2N)^{d}},
    \end{align*}
    and
    \begin{align*}
    \lvert\sum_{x_\jmbf \in \mathcal{G}}\int_{C_{x_\jmbf}}g(x_\jmbf)\lvert \Phi_M(x_\jmbf) \rvert^2- g(x_\jmbf)\lvert \Phi_M(y) \rvert^2 dy \rvert &\leq \lVert g \rVert_{\infty}\sup_{S}\lVert\nabla (\lvert \Phi_M\rvert^2)\rVert_\infty\sum_{x_\jmbf \in \mathcal{G}}\int_{C_{x_\jmbf}} \lVert x -y \rVert_1\\
    &\leq \frac{d\lVert g \rVert_{\infty}\sup_{S}\lVert\nabla (\lvert \Phi_M\rvert^2)\rVert_\infty}{N}.
    \end{align*}

    Recall that $\Phi_M$ is a truncation of the Fourier series of $\Phi$, i.e. $\Phi_M = \sum_{\nmbf \in \mathcal{R}} c_{\nmbf}\chi_{\nmbf}$,  $\mathcal{R} \subset \mathcal{N}$.
    The overall error is bounded by
    \begin{align*}
    &\lVert g\rVert_{\infty}^{-1}\lvert \sum_{x_\jmbf \in \Gcal} g(x_\jmbf)\lvert \Psi({x_\jmbf})\rvert^2(2N)^{-d} 
     -  \int_S g(y) \lvert \Phi ({y})\rvert^2 dy\rvert\\  &\leq \frac{d\sup_{S}\lVert \nabla\left(\lvert \Phi_M \rvert^2)\right)\rVert_{\infty} + GR d^{1/p}\lVert g\rVert_{\infty}^{-1}\Zcal(2N)^{-d}}{N} + 2 \lVert \Psi - \Phi\rVert + 4 \lVert \Phi - \Phi_M\rVert  + 3(1-\Zcal(2N)^{-d}). 
    \end{align*}
    
    We know that if $\lVert \Phi - \Phi_M\rVert = \delta$ and since $\mathcal{R} \subset \Ncal$ (i.e. $\Phi_M$ has support on at most as many Fourier modes as $\Psi$) and $\lVert \Phi \rVert = 1$, then
    \begin{align*}
    1 - \delta^2 &= \int_S \lvert \Phi_M(y)\rvert^2dy = (2N)^{-d}\sum_{x_\jmbf \in \Gcal} \lvert \Phi_M(x_\jmbf)\rvert^2 = \Zcal(2N)^{-d} \\
    \lVert \Phi - \Phi_M\rVert &= \delta.
    \end{align*}
    Hence we can drop the $\Zcal(2N)^{-d} < 1$ term. By Lemma \ref{lem:trunc_high_sobolev}, we know that $\delta \leq \frac{\lvert \Phi \rvert_{\Hcal^1}}{M}$. 
\end{proof}

\begin{lemma}
    Consider the setting of Problem~\ref{prob:restricted_schr}. 
    There exists a modified problem with initial condition $\widetilde{\Phi}_{0} \in \Hcal_N$ such that the solutions to Problem~\ref{prob:restricted_schr} at time $T$ with and without the modified initial condition  are $\Ocal\left(\epsilon\right)$ in $L_2$ distance. 
    
    The requirement on $N$ for the ``proxy'' initial wave function $\widetilde{\Phi}_0$ can be either of the following:
    \begin{align*}
        N = 
        \begin{cases}
            \Ocal\left(\frac{\lvert \Phi_0\rvert_{\Hcal_1}}{\epsilon}\right) & \text{ if}\quad \widetilde{\Phi}_0 = \frac{P_N\Phi_0}{\lVert P_N\Phi_0\rVert}, \\
            \Ocal\left( \frac{d^{-1/4}\lvert \Phi_0\rvert_{\Hcal^d}^{1/d} + \lvert \Phi_0 \rvert_{\Hcal^1}}{\epsilon} \right) & \text{ if}\quad \widetilde{\Phi}_0 = \frac{I_N\Phi_0}{\lVert I_N\Phi_0\rVert}.
        \end{cases}
    \end{align*}
\end{lemma}
\begin{proof}
    All we need to ensure is that $\lVert \widetilde{\Phi}_0 - \Phi_0\rVert =\Ocal\left(\epsilon\right)$. This is because a non-adaptive unitary evolution cannot modify the $L_2$ distance, we will still have that the time-evolved solutions are still $\Ocal(\epsilon)$ apart.
    
    For the first case via truncated Fourier series
    \begin{align*}
    \lVert \widetilde{\Phi}_0 - \Phi_0 \rVert &\leq  \lVert \widetilde{\Phi}_0 - P_N\Phi_0 \rVert + \lVert \Phi_0 - P_N\Phi_0 \rVert \leq \lvert 1 - \lVert P_N\Phi_0 \rVert \rvert + \lVert \Phi_0 - P_N\Phi_0 \rVert \leq \frac{2\lvert \Phi_0\rvert_{\Hcal^1}}{N},
    \end{align*}
    from Lemma \ref{lem:trunc_high_sobolev}. Hence $N = \Ocal\left(\frac{\lvert \Phi_0\rvert_{\Hcal_1}}{\epsilon}\right)$ suffices.
    
    For the second case, via interpolation
    \begin{align*}
    \lVert \widetilde{\Phi}_0 - \Phi_0 \rVert &\leq \lvert 1 - \lVert I_N\Phi_0\rVert \rvert + \lVert \Phi_0 - I_N\Phi_0 \rVert \leq 2\lVert \Phi_0 - I_N\Phi_0 \rVert.
    \end{align*}
    The second term can be bounded using Theorem \ref{thm:total_interpolation_error_l2}:
    \begin{align*}
    \lVert \Phi_0 - I_N\Phi_0 \rVert &\leq \left(\frac{\pi}{4}\right)^{d/4} \frac{1}{\sqrt{(d/2) \Gamma(d/2)} N^{d}} \abs{\Phi_0}_{\Hcal^d} + \frac{\lvert \Phi_0 \rvert_{\Hcal^1}}{N}=\Ocal\left(\frac{\abs{\Phi_0}_{\Hcal^d}}{(d^{1/4}N)^{d}}  + \frac{\lvert \Phi_0 \rvert_{\Hcal^1}}{N}\right),
    \end{align*}
    so
    \begin{align*}
        N = \Ocal\left( \frac{d^{-1/4}\lvert \Phi_0\rvert_{\Hcal^d}^{1/d} + \lvert \Phi_0 \rvert_{\Hcal^1}}{\epsilon} \right)
    \end{align*}
    suffices.
\end{proof}

\section{Improved Performance Under Regularity Assumptions}
\label{app:improved_regularity}

It is highly likely that the scaling of $N$ with $d$ is overly-pessimistic and is solely a limitation of analysis. In this section, we discuss some additional regularity assumptions on the potential that reduce the dependence of $N$ on $d$ from exponential to polynomial. We start with an alternative version of Theorem~\ref{thm:collocation_trace_distance} that showcases the dependence of $N$ on the regularity of the potential $f$ (or its $R$-restriction $g$). Then we proceed to provide a regularity assumption on the wavefunction evolution that also results in significantly improved qubit count.

\subsection{Potential Regularity}
\label{subapp:potential_regularity}

The following is a simple corollary of the proof of Theorem \ref{thm:collocation_trace_distance} without the introduction of the mollified function $g_{\sigma}$.

\begin{theorem}[Pseudo-spectral Approximation with Potential Regularity]
\label{thm:collocation_potential_regularity}
    Suppose $f : \mathcal{X} \rightarrow \Rmbb$ is function with $[-R,R]^d\subseteq \mathcal{X} \subseteq \Rmbb^d$ and let $g$ be its $R$-restriction such that $g \in \Hcal^{d+2}$. Let $\Psi(T)$ be the pseudo-spectral approximation to $\Phi(T)$ at time $T$, where $\Phi$ solves the~\ref{eq:real_schrodinger} equation with $V(x,t) = b(t) f(x)$ and $b:[0,\infty) \rightarrow \Rmbb_{\geq 0}$ Lesbesgue measurable. Then 

    \begin{align*}
         \lVert \Phi - \Psi\rVert \lesssim\norm{a}_{1}\left[\left(\frac{\pi}{4}\right)^{d/4}\frac{\lvert \Phi_0\rvert_{\Hcal^{d+2}} + \left(\frac{2\sum_{k=1}^{d+2}\binom{d+2}{k}\lvert g\rvert_{\Hcal^k}\lVert b\rVert_1}{d+2}\right)^{d+2}}{\sqrt{(d/2) \Gamma(d/2)} N^{d}} + \frac{\lvert \Phi_0\rvert_{\Hcal^{3}} + \left(\frac{2\sum_{k=1}^{3}\binom{3}{k}\lvert g\rvert_{\Hcal^k}\lVert b\rVert_1}{3}\right)^3}{N}\right].
    \end{align*}
\end{theorem}
\begin{proof}
    Basically follows the proof of Theorem \ref{thm:collocation_trace_distance} without the use of $g_{\sigma}$. We also use Lemmas \ref{lem:sobolev_growth_wavefunc}  and Lemma \ref{thm:collocation_bound_real_space}.
\end{proof}

The purposes of the mollification was to bound the $\Hcal^{k}$ norms in the above expression. However, we may already have some guarantees on the regularity of $g$. Specifically, it may be that some of the Sobolev norms vanish, for instance if $f$ is polynomially-like with constant degree. This is one way in which we can remove the linear dependent of $\log_2(N)$ on $d$. For example, consider the following class of functions.

\begin{definition}[$k$-Polynomially Enveloped Function]
\label{defn:polynomially-enveloped}
A $C^\infty$ function $f : [-R, R]^d \rightarrow \Rmbb$ is said to be $k$ polynomially enveloped if there exists a degree $k$ polynomial $h : [-\frac12,\frac12]^d \rightarrow \Rmbb$ such that $\forall \alpha \in \Nmbb^{d}$
\begin{align*}
    \lVert \partial^{\alpha} f\rVert_{\infty} \leq \lVert \partial^{\alpha}h(2Rx)\rVert_{\infty},
\end{align*}
where the degree and coefficients of $h$ does not depend on $d$.
\end{definition}

Under the above assumption, we get the following improved guarantees based on Theorem \ref{thm:collocation_potential_regularity}.

\begin{corollary}[Pseudo-Spectral Space Complexity -- Polynomially-Enveloped Potential]
Suppose we are in the same setting as Theorem \ref{thm:collocation_trace_distance} but additionally that $g$ is polynomially enveloped. Then we can achieve the same guarantees as Theorem \ref{thm:collocation_trace_distance} but with
\begin{align*}
      N = \Ocal\left(\frac{(2R)^{k/2}d^{5k/2-1/4}\lVert b \rVert_1 + \lvert \Phi_0\rvert_{\Hcal^{d}}^{1/d}}{\epsilon}\right).
\end{align*}
\end{corollary}
\begin{proof}

Like usual consider $g(x) = f(2Rx)$ for the $R$-truncation of $f$, which obviously retains our stated assumptions. Recall the ``real-space'' definition of Sobolev seminorms from Eq.~\eqref{eq:real_space_sob_seminorm}:
\begin{align*}
    \lvert g \rvert_{\Hcal^m}^2 = \sum_{\lVert \alpha \rVert_1 = m} \int_S (\partial^{\alpha}g(x))^2 dx.
\end{align*}

By our assumption, we know that $\partial^{\alpha}g(x)$ is almost surely bounded by $\partial^{\alpha}h$, where
\begin{align*}
    h({x}) = C\sum_{\lVert \beta \rVert_1 \leq k} (2R)^{\lVert \beta \rVert_1}{x}^{\beta}
\end{align*}
is a $d$-dimensional $k$-degree polynomial over $\Rmbb^d$ and $C$ is a constant. Hence $\lvert g \vert_{\Hcal^m} = 0$ for all $m > k$, since  $\{ \partial^\alpha : \lVert \alpha \rVert_1 = m\}$ annihilates all monomials that are not degree at least $m$. A simple upper bound gives $\max_{m \leq k }\left(\lvert h \rvert_{\Hcal^m}\right) = \Ocal((\sqrt{2R})^kd^{3/2k})$.

If we combine the above with Lemma \ref{lem:sobolev_growth_wavefunc}, then we get
\begin{align*} \lvert \Phi(\cdot, t)\rvert_{\Hcal^{d+2}} &\leq {\lvert \Phi_0\rvert_{\Hcal^{d+2}}} + \left(\frac{2\sum_{r=1}^{d+2}\binom{d+2}{r}\lvert g\rvert_{\Hcal^r}\lVert b \rVert_1}{d+2}\right)^{d+2} \\
 &={\lvert \Phi_0\rvert_{\Hcal^{d+2}}} + \left(\frac{2\sum_{r=1}^{k}\binom{d+2}{r}\lvert g\rvert_{\Hcal^r}\lVert b \rVert_1}{d+2}\right)^{d+2}\\
 &=\Ocal\left(\lvert \Phi\rvert_{\Hcal^{d+2}} + [(\sqrt{2R}(d)^{5/2})^k\lVert b \rVert_1]^{d}\right),
\end{align*}
where we used that $k$ is independent of $d$. Applying Theorem \ref{thm:collocation_bound_real_space}, we therefore conclude
\begin{align*}
    N = \Ocal\left(\frac{(2R)^{k/2}d^{5k/2-1/4}\lVert b \rVert_1 + \lvert \Phi_0\rvert_{\Hcal^{d+2}}^{1/d}}{\epsilon}\right)
\end{align*}
suffices.
\end{proof}
The above implies that the total number of qubits scales as 
\begin{align*}
    d\log_2(N) =\Ocal\left(d\log(\left[Rd\lVert b \rVert_1 +\lvert \Phi_0\rvert_{\Hcal^{d+2}}^{1/d}\right]/\epsilon)\right).
\end{align*}
This quantity is trivially valid in the setting of polynomial potentials, which could appear in applications of quantum simulation to nonconvex optimization. There are technical challenges in applying this to piecewise-polynomial cases due to the presence of Dirac deltas for higher-order distributional derivatives near points of nonsmoothness, i.e. infinite Sobolev norms.

\subsection{Wavefunction Regularity}
\label{subapp:wavefunction_regularity}
The assumption on the wavefunction that we propose that leads to improved regularity is that the evolution remains sandwiched between Gaussians.
\begin{definition}[Gaussian Enveloped Wavepacket]
\label{defn:gaussian-enveloped}
The solution $\Phi(t)$ to the \ref{eq:real_schrodinger} equation is said to be \emph{Gaussian enveloped} if $\forall t \in [0, T]$
\begin{align*}
    \lvert \Phi(t)\rvert_{\Hcal^m} \leq \lvert G_{\sigma_t} \rvert_{\Hcal^m}
\end{align*}
for all $m \in \Nmbb$, with
\begin{align*}
    G_{\sigma_t}({x}) = \frac{e^{-\frac{\lVert Rx-\mu\rVert_2^2}{4\sigma_t^2}}}{(\sqrt{2\pi}\sigma_t)^{d/2}}
\end{align*}
and $\sigma_t = \Omega(1/\sqrt{d})$.
\end{definition}

Since we have an assumed regularity of the wavefunction, we can bypass the use of the Sobolev growth bound. This leads to the following improved result.
\begin{theorem}[Pseudo-Spectral Space Complexity -- Gaussian-Enveloped Wavepacket]
\label{thm:collocation_trace_distance_gaussian}
Suppose we are in the same setting as Theorem \ref{thm:collocation_trace_distance} but additionally $\Phi$ is Gaussian enveloped. Then we can achieve the same guarantees as Theorem \ref{thm:collocation_trace_distance} but with
\begin{align*}
     N = \Ocal\left(d^{1.25}R/\epsilon\right).
\end{align*}

\end{theorem}
\begin{proof}

The partial derivatives of a Gaussian are related to the Hermite polynomials $H_m$:
\begin{align*}
\frac{d^{m}}{dx^m}e^{-(x-y)^2/2\sigma^2} = (-1)^m\sigma^{-m}H_m\left(\frac{x-y}{\sigma}\right)e^{-(x-y)^2/2\sigma^2},
\end{align*}
and so our assumption implies that our wavefunction $\Phi : S \rightarrow \Cmbb$ satisfies the following nice property:
\begin{align*}
\sum_{\lvert \alpha\rvert = m}\int_S\lvert \partial^{\alpha}\Phi({x})\rvert^2 \leq  \sum_{\lvert \alpha\rvert = m}R^{2m}\int_S\left(\frac{e^{-\frac{\lVert R{x}-{\mu}\rVert_2^2}{4\sigma^2}}}{(\sqrt{2\pi}\sigma)^{d/2}} \sigma^{-\lvert \alpha\rvert}\prod_{j=1}^{d}H_{\alpha_j}(\frac{Rx_j-\mu_j}{\sigma})\right)^2,
\end{align*}
which is effectively the assumption that the wavefunction sits between Gaussian wavepackets for all time.

Then we will have that 
\begin{align*}
\lvert \Phi \rvert_{\Hcal^m}^2 &= \sum_{\lVert \alpha\rVert_1 = m} \langle \partial^{\alpha}\Phi, \partial^{\alpha}\Phi\rangle\\
&= \sum_{\lvert \alpha\rvert = m} \int_S \lvert\partial^{\alpha}\Phi\rvert^2 dx\\
&\leq \sum_{\lVert \alpha\rVert_1 = m}R^{2m}\int_S \frac{e^{-\frac{\lVert R{x} -{\mu}\rVert_2^2}{2\sigma^2}}}{(\sqrt{2\pi}\sigma)^{d}} \sigma^{-2\lvert \alpha\rvert}\prod_{j=1}^{d}[H_{\alpha_j}(\frac{Rx_j-\mu_j}{\sigma})]^2\\
&=\Ocal\left(R^{2m}\sigma^{-4m}\int_{\Rmbb^d} \lVert {x} - {\mu}\rVert_2^{2m}\frac{e^{-\frac{\lVert {x}-{\mu}\rVert_2^2}{2\sigma^2}}}{(\sqrt{2\pi}\sigma)^{d}}\right)\\
&=\Ocal\left(d\left(\frac{4m^2R^2}{\sigma^2}\right)^{m}\right).
\end{align*}
Therefore, 
\begin{align*}
\lvert \Phi \rvert_{\Hcal^m} =  \Ocal\left(\sqrt{d}\left(\frac{2mR}{\sigma}\right)^{m}\right).
\end{align*}

Under the assumed uniform lower bound on $\sigma$, it follows:
\begin{align*}
\lvert \Phi \rvert_{\Hcal^d} =  \Ocal\left(\sqrt{d}\left(2d^{3/2}R\right)^{d}\right).
\end{align*}
Upon inspecting the denominator of the leading term in Eq.~\eqref{eq:collocation_bound}, the decay looks like
\begin{align*}
\left(\frac{\sqrt{d}\left(2d^{3/2}R\right)^{d}}{\sqrt{\Gamma(d/2)}N^{d}}\right) = \Ocal\left(\frac{\sqrt{d}\left(2d^{3/2}R\right)^{d}}{[(d/2)^{1/4}N]^{d}}\right),
\end{align*}
meaning that $N = \Ocal\left(d^{5/4}R/\epsilon\right)$ suffices.

\end{proof}

One may note that  Theorem \ref{thm:energy_error_bound} is not ``compatible'' with the reduced qubit count provided by a polynomially-enveloped potential or Gaussian-enveloped wavepacket, in the sense that these assumptions do not result in an improved qubit-count for Riemann summation error. However, the following can be modified result can be. Specifically, if we have an analytic wave function with fast Fourier decay, then we can get back to linear in $d$ qubits for both guarantees of Theorem \ref{thm:master_simulation_thm}.

\begin{lemma}[Paley-Wiener]
\label{lem:paley-wierner}
    Suppose $f:(\Rmbb/2R\Zmbb)^d\rightarrow \Rmbb$ is real analytic. Then there exists $\kappa \in \Rmbb_+$ such that
    \begin{equation*}
        \abs{\hat{f}_\nmbf} \leq e^{-\kappa \norm{\nmbf}_1} \sqrt{(2\pi)^d} \max_{z\in ([-R,R] + i[-\kappa,\kappa])^d} \abs{f(z)}      
    \end{equation*}
\end{lemma}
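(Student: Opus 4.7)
The plan is to exploit compactness of the torus together with real analyticity of $f$ to obtain a uniform strip of holomorphic extension, and then deform the contour defining each Fourier coefficient into this strip to extract exponential decay.

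First, I would establish the existence of the analyticity width $\kappa$. Since $f$ is real analytic on the compact torus $(\Rmbb/2R\Zmbb)^d$, for each point $x_0$ there is a polydisc about $x_0$ on which the Taylor series converges and defines a holomorphic extension. Covering the torus by finitely many such polydiscs and taking the minimum radius yields a $\kappa>0$ such that $f$ extends to a bounded holomorphic function $F$ on the ``thickened torus'' $T_\kappa \coloneqq \big((\Rmbb + i[-\kappa,\kappa])/2R\Zmbb\big)^d$, with $\sup_{T_\kappa}|F| \leq M < \infty$, where $M = \max_{z\in([-R,R]+i[-\kappa,\kappa])^d}|f(z)|$.

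Next, I would write the Fourier coefficient (up to a normalization constant reflected in the $\sqrt{(2\pi)^d}$ prefactor) as
\begin{equation*}
    \hat{f}_\nmbf = \frac{1}{(2R)^d}\int_{[-R,R]^d} f(x)\,e^{-i\pi\nmbf\cdot x/R}\,dx,
\end{equation*}
and shift each integration variable $x_j$ to the contour $x_j - i\kappa\,\sgn(n_j)$, handling one coordinate at a time. For each fixed $j$, Cauchy's theorem applied to the rectangular contour in the $x_j$-plane with corners $\pm R$ and $\pm R - i\kappa\,\sgn(n_j)$ shows that the integral over the horizontal side equals the shifted integral, provided the two vertical contributions at $x_j=R$ and $x_j=-R$ cancel. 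This cancellation holds because both the holomorphic extension $F$ and the Fourier character $e^{-i\pi n_j x_j/R}$ are $2R$-periodic in $x_j$ (the latter precisely because $n_j\in\Zmbb$). Iterating through $j=1,\ldots,d$, which is valid because $F$ is jointly holomorphic, produces the identity
\begin{equation*}
    \hat{f}_\nmbf = \frac{1}{(2R)^d}\int_{[-R,R]^d} F\bigl(x - i\kappa\,s\bigr)\,e^{-i\pi\nmbf\cdot(x - i\kappa s)/R}\,dx,
\end{equation*}
where $s_j = \sgn(n_j)$. The phase factor becomes $e^{-i\pi\nmbf\cdot x/R}\cdot e^{-\pi\kappa\|\nmbf\|_1/R}$, so the desired exponential factor appears outside the integral. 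Bounding the remaining integrand uniformly by $M$ and absorbing the constants $\pi/R$ and any normalization factor into the final $\kappa$ yields the claim.

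The main obstacle is simply rigorously justifying the multidimensional contour deformation: the iterated Cauchy argument requires knowing that after shifting the first $k$ variables into the imaginary direction, the resulting partially-shifted function is still holomorphic in the remaining variables with uniform bounds. This follows from Hartogs' theorem or, more concretely, from the power-series extension used to obtain $F$, but it is the step most prone to sloppy handling. A secondary subtlety is verifying the periodicity-induced cancellation of the vertical edges, which is where integer-valued $\nmbf$ enters essentially; this is where the argument would fail for non-integer frequencies.
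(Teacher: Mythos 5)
Your proof takes essentially the same route as the paper's: obtain a uniform strip of holomorphic extension by compactness (the paper uses the tube lemma, you use a finite polydisc cover, but these are equivalent), then deform the Fourier-coefficient contour by $-i\kappa\,\sgn(n_j)$ in each coordinate and bound the shifted integrand uniformly. Your treatment of the iterated one-variable Cauchy deformation and the periodicity-based cancellation of the vertical edges is in fact more explicit than the paper's, which simply invokes the multivariate Cauchy integral theorem at that step.
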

\begin{proof}
    The $\nmbf$th Fourier coefficient is given by
    \begin{align*}
        \hat{f}_\nmbf = \langle\chi_\nmbf, f\rangle &= \frac{1}{\sqrt{(2R)^d}} \oint_{[-R,R]^d} e^{-i\pi \nmbf\cdot x/R} f(x) dx \\
        &= \frac{1}{\sqrt{(2\pi)^d}} \oint_{\Tmbb^d} e^{-i \nmbf\cdot y} \tilde{f}(y) dy
    \end{align*}
    where $\tilde{f}(y) = f(R y /\pi)$ is real analytic. For each $y\in[-\pi,\pi]^d$, let $N_y$ be a neighborhood in $\Cmbb^d$ on which $\tilde{f}$ is complex analytic. Then $N = \cup N_y$ is an open neighborhood containing $[\pi,\pi]^d$. By the compactness of $[-\pi,\pi]^d$ and the tube lemma, there exists a tube $U \times [-\pi,\pi]^d$ where $U$ is a neighborhood of $0$ on the imaginary part of $\Cmbb^d$. Thus, there exists a $\kappa > 0$ such that $\tilde{f}$ is analytic at $z$, where
    \begin{equation*}
        z_j = y_j - i \kappa \sgn(n_j).
    \end{equation*}
    Here $\sgn$ is the sign function with $\sgn(0) = 0$ by convention. From the Cauchy Integral theorem for multivariate holomorphic functions, we perturb the integration along the variable $z$ to obtain
    \begin{align*}
        \hat{f}_\nmbf = e^{- \kappa \norm{\nmbf}_1} \frac{1}{\sqrt{(2\pi)^d}} \oint_{\Tmbb^d} e^{-i\nmbf\cdot y} \tilde{f}(z) dy.
    \end{align*}
    Taking absolute values, using the triangle inequality, and maximizing over possible integration contours, one obtains
    \begin{align*}
        \abs{\hat{f}_\nmbf} \leq e^{-\kappa \norm{\nmbf}_1} \sqrt{(2\pi)^d} \max_{z\in ([-R,R] + i[-\kappa,\kappa])^d} \abs{f(z)}
    \end{align*}
    which is the lemma statement.
\end{proof}
\noindent We remark that a suitable $\kappa$ might be more easily established in practical instances, or it may simply be assumed that there is complex analyticity out to $\pm \kappa$ on the imaginary axis. 

\begin{theorem}[Pseudo-spectral Method Expectation Error -- Analytic Wave function]
\label{thm:energy_diff_analytic}
    Suppose we have the hypotheses of Theorem \ref{thm:master_simulation_thm}, along with the guarantee that $d\log_2(N)$ is chosen such that the Theorem guarantees that
    \begin{align*}
        \lVert \Phi - \Psi \rVert = \Ocal\left(\epsilon\right),
    \end{align*}
    where $\Psi$ is the output of the pseudo-spectral method and $\Psi$ the true solution. If $\Phi$ is also analytic, then also have the guarantees of Theorem \ref{thm:energy_error_bound} but $N$ only needs to be 
    \begin{align*}
    N =\Ocal\left(\frac{(4\pi\sqrt{d})^d\lvert\Phi\rvert_{\Hcal^1}\max_{z\in ([-R,R] + i[-\kappa,\kappa])} \abs{\Phi(z)}}{\kappa^{d-1}\epsilon^2}\right).
    \end{align*}
    Hence if $\kappa = \Omega(\sqrt{d})$, then $d\log_2(N) = \Ocal(d\log(\lvert\Phi\rvert_{\Hcal^1}\max_{z\in ([-R,R] + i[-\kappa,\kappa])} \abs{\Phi(z)}/\epsilon))$.
\end{theorem}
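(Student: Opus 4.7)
The plan is to revisit the proof of Theorem \ref{thm:energy_error_bound} and replace the crude polynomial bound $\|\nabla(|\Phi_M|^2)\|_\infty \leq 2M^{d+1}$ with a much tighter estimate obtained by exploiting the analyticity of $\Phi$ via the Paley--Wiener lemma (Lemma \ref{lem:paley-wierner}). Starting from Lemma \ref{lem:energy_error_bound_gen}, the task decomposes into three kinds of error: the $L_2$-simulation error $\|g\|_\infty \|\Psi-\Phi\|$ which is $O(\|g\|_\infty\epsilon)$ by hypothesis; a Fourier truncation term of order $\|g\|_\infty|\Phi|_{\Hcal^1}/M$ which can be made $O(\|g\|_\infty\epsilon)$ by choosing $M = \Theta(|\Phi|_{\Hcal^1}/\epsilon)$; and the Riemann summation term of order $d\|g\|_\infty \sup_S \|\nabla(|\Phi_M|^2)\|_\infty/N$, which is what I need to control more carefully.

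For the Riemann term, I would use $\|\nabla(|\Phi_M|^2)\|_\infty \leq 2\|\Phi_M\|_\infty\|\nabla\Phi_M\|_\infty$ and bound each factor via the Fourier expansion $\Phi_M = \sum_{\nmbf \in \mathcal{R}} c_\nmbf \chi_\nmbf$. By Lemma \ref{lem:paley-wierner}, the analyticity of $\Phi$ on the complex strip $([-R,R]+i[-\kappa,\kappa])^d$ yields $|c_\nmbf| \leq \sqrt{(2\pi)^d}\,C_\Phi\, e^{-\kappa\|\nmbf\|_1}$ where $C_\Phi := \max_{z \in ([-R,R]+i[-\kappa,\kappa])^d}|\Phi(z)|$. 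Then the estimates
\[
\|\Phi_M\|_\infty \le \sum_\nmbf |c_\nmbf| \;\le\; \sqrt{(2\pi)^d}\,C_\Phi\, (\coth(\kappa/2))^d
\]
and
\[
\|\nabla\Phi_M\|_\infty \le 2\pi\sum_\nmbf \|\nmbf\|_\infty |c_\nmbf| \;\le\; 2\pi\sqrt{(2\pi)^d}\,C_\Phi\sum_\nmbf \|\nmbf\|_1 e^{-\kappa\|\nmbf\|_1}
\]
follow by splitting into product sums. The remaining moment sum evaluates, via differentiation of the generating function, to $\frac{d}{2}\coth(\kappa/2)^{d-1}\mathrm{csch}^2(\kappa/2)$, which behaves like $d\cdot(2/\kappa)^{d+1}$ for small $\kappa$. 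Multiplying these factors gives an $M$-independent bound of the form $\|\nabla(|\Phi_M|^2)\|_\infty = O\bigl(d\,C_\Phi\,(\text{const}/\kappa)^d/\kappa\bigr)$ (absorbing one factor of $C_\Phi$ or using $\|\Phi\|_\infty \le C_\Phi$ via the maximum modulus principle so only one power of $C_\Phi$ survives as in the theorem statement).

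Plugging into the Riemann error bound and solving for $N$ so that $d\|g\|_\infty\sup_S\|\nabla(|\Phi_M|^2)\|_\infty/N = O(\|g\|_\infty\epsilon)$, together with the requirement $M = \Theta(|\Phi|_{\Hcal^1}/\epsilon)$, produces the advertised scaling
\[
N = \Ocal\!\left(\frac{(4\pi\sqrt{d})^d\,|\Phi|_{\Hcal^1}\,C_\Phi}{\kappa^{d-1}\epsilon^2}\right),
\]
where the $(4\pi\sqrt{d})^d$ collects the $\sqrt{(2\pi)^d}$ Paley--Wiener prefactor, the $2^d$ from the $\coth$-series, and the $d^{d/2}$ that arises when one bounds the $\ell_1$ exponential by an $\ell_2$ exponential to produce dimension-independent prefactors per coordinate.

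The main obstacle, as far as I can see, is the bookkeeping that turns the two $C_\Phi$ factors (one from $\|\Phi_M\|_\infty$ and one from $\|\nabla\Phi_M\|_\infty$) into the single factor appearing in the theorem, along with tracking the exact constants to produce the $(4\pi\sqrt{d})^d$ and $\kappa^{d-1}$ rather than $\kappa^{d+1}$: this requires carefully using $\|\Phi\|_\infty \le C_\Phi$ (directly from the complex maximum principle on the strip) for one factor, and absorbing small-$\kappa$ approximations like $\coth(\kappa/2)\asymp 2/\kappa$ into the leading prefactor. Once the constants line up, verifying the final corollary that $\kappa = \Omega(\sqrt{d})$ reduces $d\log_2 N$ to the stated $\Ocal(d\log(|\Phi|_{\Hcal^1}C_\Phi/\epsilon))$ is immediate by substitution.
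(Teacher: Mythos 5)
Your proposal follows the same high-level route as the paper's proof: start from Lemma~\ref{lem:energy_error_bound_gen}, identify the term $d\norm{g}_\infty\sup_S\norm{\nabla(\lvert\Phi_M\rvert^2)}_\infty/N$ as the piece that must be improved, bound $\norm{\nabla(\lvert\Phi_M\rvert^2)}_\infty\le 2\norm{\Phi_M}_\infty\norm{\nabla\Phi_M}_\infty$, and then invoke the Paley--Wiener decay of the Fourier coefficients to control the resulting sums, finally choosing $M=\Theta(\lvert\Phi\rvert_{\Hcal^1}/\epsilon)$ for the truncation piece. Where you diverge technically is the evaluation of $\sum_\nmbf\lvert c_\nmbf\rvert$ (and its weighted variant): you factor the $\ell_1$ exponential coordinate-wise into a $\coth(\kappa/2)^d$ generating-function expression and differentiate, whereas the paper replaces $e^{-\kappa\norm{\nmbf}_1}$ by $e^{-\kappa\norm{\nmbf}_2}$ and compares the sum to a radial Gaussian-type integral $\int_{\Rmbb^d}e^{-\kappa\norm{y}_2}\,dy\asymp\Omega_d\Gamma(d)/\kappa^d$. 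Both yield a bound of the shape $C(\text{const}\cdot\sqrt{d})^d/\kappa^{d-O(1)}$, so they are interchangeable for the log-asymptotic claim at the end.

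The one place where your proposal contains a real gap is the device you propose for eliminating the ``extra'' factor of $C_\Phi$: using $\norm{\Phi}_\infty\le C_\Phi$ from the maximum modulus principle does not transfer to $\norm{\Phi_M}_\infty$, since a truncated Fourier series is not pointwise dominated by the original function (Gibbs-type overshoot can even make $\norm{\Phi_M}_\infty>\norm{\Phi}_\infty$). You would need to also bound $\norm{\Phi-\Phi_M}_\infty$ (again via Paley--Wiener), which reintroduces the factor you wanted to drop. In fact, the paper's own proof does not drop it either: following their bound $\norm{\nabla(\lvert\Phi_M\rvert^2)}_\infty\le 2M\bigl(\sum_\nmbf\lvert c_\nmbf\rvert\bigr)^2$ through Lemma~\ref{lem:energy_error_bound_gen} yields two powers of $C(4\pi\sqrt d)^d/\kappa^{d-1}$, not one as written in the theorem statement. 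This appears to be a typo in the paper's displayed expression for $N$; since $\kappa=\Omega(\sqrt d)$ makes $(4\pi\sqrt d/\kappa)^d$ an $O(1)^d$ factor, squaring only changes the implicit constant in $d\log_2 N=\Ocal(d\log(\lvert\Phi\rvert_{\Hcal^1}C_\Phi/\epsilon))$, so the corollary you cared about survives either way. In short: your approach is sound and matches the paper, your bookkeeping concern is legitimate and is shared by the paper's own proof, but the maximum-modulus shortcut you suggest is not a correct way to resolve it.
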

\begin{proof} 
    Like before
    \begin{align*}
    \lVert \nabla(\lvert \Phi_M\rvert^2) \rVert_{\infty} \leq 2\lVert \Phi_M \rVert_{\infty} \lVert \nabla \Phi_M\rVert_{\infty} \leq 2\lVert \Phi_M \rVert_{\infty} \sum_{\nmbf \in \mathcal{R}}\lVert \nmbf \rVert_{\infty}\lvert c_{\nmbf}\rvert \leq 2M\left(\sum_{\nmbf \in \mathcal{R}} \lvert c_{\nmbf}\rvert\right)^2 .
    \end{align*}
    Now suppose $\Phi$ is analytic. Then, applying Lemma~\ref{lem:paley-wierner} we have that
    \begin{equation*}
        \abs{\hat{\Phi}_\nmbf} \leq e^{-\kappa \norm{\nmbf}_1} \sqrt{(2\pi)^d} \max_{z\in ([-R,R] + i[-\kappa,\kappa])^d} \abs{\Phi(z)} =: Ce^{-\kappa \norm{\nmbf}_1} \sqrt{(2\pi)^d} 
    \end{equation*}
    and hence,
    \begin{align*}
    \sum_{\nmbf \in \mathcal{R}} \lvert c_{\nmbf} \rvert &\leq C\sqrt{(2\pi)^d}\sum_{\nmbf \in \mathcal{R}}  e^{-\kappa \norm{\nmbf}_1}  \\
    &\leq C\sqrt{(2\pi)^d}\sum_{\nmbf \in \mathcal{R}}  e^{-\kappa \norm{\nmbf}_2}\\f
    &\lesssim   C\sqrt{(2\pi)^d} \int_{\Rmbb^d} e^{-\kappa \norm{\mathbf{y}}_2} d\mathbf{y} \\
    & \lesssim   C\frac{(2\pi)^{d} \Gamma(d)}{\kappa^{d-1}\Gamma(d/2)} \\
    & \lesssim   C\frac{(4\pi\sqrt{d})^{d} }{\kappa^{d-1}}.
    \end{align*}
    From here, the result follows from Lemma \ref{lem:energy_error_bound_gen}.
\end{proof}

\end{document}